\newtheorem{theorem}{Theorem}[section]
\newtheorem{proposition}[theorem]{Proposition}
\newtheorem{definition}[theorem]{Definition}
\newtheorem{lemma}[theorem]{Lemma}
\newtheorem{example}[theorem]{Example}
\newtheorem{corollary}[theorem]{Corollary}
\def\pushright#1{{\parfillskip=0pt\widowpenalty=10000
\displaywidowpenalty=10000\finalhyphendemerits=0\leavevmode\unskip
\nobreak\hfil\penalty50\hskip.2em\null\hfill{#1}\par}}
\def\qed{\xqed\global\SuppressEndOfProoftrue}
\newif\ifSuppressEndOfProof\SuppressEndOfProoffalse
\def\xqed{\pushright\markendofproof}
\def\markendofproof{\rule{1.3217ex}{2ex}}
\newcommand{\lsem}{\mbox{$\lbrack\!\lbrack$}}
\newcommand{\rsem}{\mbox{$\rbrack\!\rbrack$}}
\begin{document}

\title{Extensional Higher-Order Logic Programming\footnote{An early
version of this paper appears in: A. Charalambidis, K. Handjopoulos,
P. Rondogiannis, W. W. Wadge. Extensional Higher-Order Logic
Programming. Proceedings of the {\em 12th European Conference on
Logics in Artificial Intelligence} (JELIA). LNCS 6341, Springer,
pages 91-103, 2010.}}

\author{A. Charalambidis\footnote{The research of A. Charalambidis has been co-financed by the European Union (European Social Fund - ESF)
and Greek national funds through the Operational Program ``Education
and Lifelong Learning'' of the National Strategic Reference
Framework (NSRF) - Research Funding Program: Heracleitus II.
Investing in knowledge society through the European Social Fund. },
K. Handjopoulos, P. Rondogiannis\\
Department of Informatics \& Telecommunications\\
University of Athens\\
Panepistimiopolis, 157 84 Athens, Greece\\
\\
\and
William W. Wadge\\
Department of Computer Science\\
University of Victoria\\
Victoria, BC, Canada V8W 3P6}
%
%
\maketitle
\thispagestyle{empty}
\begin{abstract}
We propose a purely extensional semantics for higher-order logic
programming. In this semantics program predicates denote sets of
ordered tuples, and two predicates are equal iff they are equal as
sets. Moreover, every program has a unique minimum Herbrand model
which is the greatest lower bound of all Herbrand models of the
program and the least fixed-point of an immediate consequence
operator. We also propose an SLD-resolution proof procedure which is
proven sound and complete with respect to the minimum model
semantics. In other words, we provide a purely extensional
theoretical framework for higher-order logic programming which
generalizes the familiar theory of classical (first-order) logic
programming.
%
%
\end{abstract}

\pagestyle{plain}

\date{}

\section{Introduction}\label{intro-section}
The two most prominent declarative paradigms, namely logic and
functional programming, differ radically in an important aspect:
logic programming is traditionally first-order while functional
programming encourages and promotes the use of higher-order
functions and constructs. One problem is that even second-order
logic fails in terms of vital properties such as completeness and
compactness. It would seem, on the face of it, that there would be
no hope of finding a complete resolution proof procedure for
higher-order logic programming.

The initial attitude of logic programmers towards higher-order logic
programming was somewhat skeptical: it was often argued (see for
example~\cite{W82-441}) that there exist ways of encoding or
simulating higher-order programming inside Prolog itself. However
ease of use is a primary criterion for a programming language, and
the fact that higher-order features can be simulated or encoded does
not mean that it is practical to do so.

Eventually extensions with genuine higher-order capabilities were
introduced  - roughly speaking, extensions which allow predicates to
be applied but also passed as parameters. The two most prominent
such languages are $\lambda$Prolog~\cite{MN86,Nad87} and
HiLog~\cite{CKW89-37,CKW93-187}. These two systems share a common
idea, namely they are both {\em intensional}. Intuitively speaking,
an intensional language places almost no restraints on the way in
which a predicate can be passed and used. In an intensional language
it is possible that two co-extensional predicates are not considered
equal. In other words, a predicate in such a language is more than
just the set of arguments for which it is true.

However, for many applications intensionality appears to be
appropriate. Suppose, for example, we have a database of
professions, both of their membership and their status. We might
have rules such as:
\[
\begin{array}{l}
\mbox{\tt engineer(tom).}\\
\mbox{\tt engineer(sally).}\\
\mbox{\tt programmer(harry).}
\end{array}
\]
with {\tt engineer} and {\tt programmer} used as predicates. But in
intensional higher-order logic programming we could also have rules
in which these are arguments, eg:
\[
\begin{array}{l}
\mbox{\tt profession(engineer).}\\
\mbox{\tt profession(programmer).}
\end{array}
\]
Now suppose {\tt tom} and {\tt sally} are also avid users of
Twitter. We could have rules:
\[
\begin{array}{l}
\mbox{\tt tweeter(tom).}\\
\mbox{\tt tweeter(sally).}
\end{array}
\]
In the absence of other rules, it is clear that the tweeters are
exactly the engineers; but the query:
\[
\begin{array}{l}
\mbox{\tt ?-profession(tweeter).}
\end{array}
\]
fails. This failure contradicts the extensionality principle, which
holds that predicates that succeed for exactly the same instances
are equal. However, in this context the failure of extensionality
does not seem unnatural.

Nevertheless, the failure of extensionality cannot in general be
taken lightly. It means that we cannot use our mathematicians'
intuitions of relations, intuitions based on hundreds of years of
mathematical development. It raises doubts that rules like those
just given can have a simple declarative meaning. Moreover, there
are many applications that call for higher-order logic (predicates
used as arguments) but do not involve intensional notions. As a
simple example, consider the predicate {\tt allmembers(L,P)} which
asserts that all elements of the list {\tt L} have property {\tt P}.
Predicate {\tt allmembers} raises no foundational issues and the
corresponding rules seem obvious:
\[
\begin{array}{l}
\mbox{\tt allmembers([],P).} \\
\mbox{\tt allmembers([H|T],P):-P(H),allmembers(T,P).}
\end{array}
\]
However this is not legitimate Prolog, and to write these rules we
currently have no choice but to use an intensional higher-order
language, even though the logic behind {\tt allmembers} is entirely
extensional. For example, if the query:
\[
\begin{array}{l}
\mbox{\tt ?-allmembers([a,b,c],p).}
\end{array}
\]
succeeds and {\tt q} is co-extensional with {\tt p}, we can be sure
that the query
\[
\begin{array}{l}
\mbox{\tt ?-allmembers([a,b,c],q).}
\end{array}
\]
will also succeed.

Are there more modest higher-order extensions of logic programming
that do not entail intensionality? After all, higher-order
extensions of functional programming are almost all extensional.
This question was first raised by W. W. Wadge in~\cite{Wa91a} and
answered in the affirmative. Wadge discovered a simple syntactic
restriction which, though it limited the applicability of the
language, ensured that compliant programs have an extensional
declarative reading. The restriction forbids a user-defined
predicate to appear as an argument in the head of a clause. For
example, of the rules cited already,
\[
\begin{array}{l}
\mbox{\tt profession(engineer).}
\end{array}
\]
violates the restriction, but
\[
\begin{array}{l}
\mbox{\tt allmembers([],P).}
\end{array}
\]
complies with it. Roughly speaking, the restriction says that rules
about predicates can state general principles but cannot pick out a
particular predicate for special treatment. Wadge gave several
examples of useful extensional higher-order programs and outlined
the proof of a minimum-model result. He also showed that in this
model the denotations of program predicates are monotonic and
continuous. Continuity, in this context, is a kind of finitaryness.
For example, if {\tt foo(p)} succeeds and {\tt foo} is continuous,
it means there is a finite set of arguments $\{{\tt a}_1,\ldots,{\tt
a}_k\}$ for which $\mbox{\tt p(a}_1\mbox{\tt )},\ldots,\mbox{\tt
p(a}_k\mbox{\tt )}$ all succeed; moreover, if $\mbox{\tt
q(a}_1\mbox{\tt )},\ldots,\mbox{\tt q(a}_k\mbox{\tt )}$ also
succeed, then {\tt foo(q)} succeeds.

\noindent{\bf Contributions:} In this paper we extend the study
initiated in~\cite{Wa91a} and derive the first, to our knowledge,
complete theoretical framework for extensional higher-order logic
programming, both from a semantic as-well-as from a proof theoretic
point of view.

Our first contribution is the development of a novel extensional
semantics for higher-order  logic programming that is based on {\em
algebraic lattices} (see for example~\cite{Gra78}), a subclass of
the familiar complete lattices that have traditionally been used in
the theory of first-order logic programming. For every predicate
type of our language, algebraic lattices single out a subset of
``finite'' objects of that type. In other words, the proposed
semantics reflects in a direct way the finitary nature of continuity
that is implicit in~\cite{Wa91a}. The benefit of the new approach
compared to that of~\cite{Wa91a} is that all basic properties and
results of classical logic programming are now transferred in the
higher-order setting in a natural way. Moreover, the new semantics
leads to a relatively simple sound and complete proof procedure (see
below) even for a language that is genuinely more powerful than the
one considered in~\cite{Wa91a}. More details on the connections
between the two approaches will be given
in~\ref{comparison-with-Wadge}.

Our second contribution fixes a major shortcoming of Wadge's
language by allowing clause bodies and program goals to have
uninstantiated higher-order variables. To understand the importance
of this extension, consider the following axiom for {\tt bands}
(musical ensembles):
\[
\begin{array}{l}
\mbox{\tt
band(B):-singer(S),B(S),drummer(D),B(D),guitarist(G),B(G).}
\end{array}
\]
This says that a band is a group that has at least a singer, a
drummer, and a guitarist. Suppose that we also have a database of
musicians:
\[
\begin{array}{l}
\mbox{\tt singer(sally).}\\
\mbox{\tt singer(steve).}\\
\mbox{\tt drummer(dave).}\\
\mbox{\tt guitarist(george).}\\
\mbox{\tt guitarist(grace).}
\end{array}
\]
Our extensional higher-order language allows the query {\tt
?-band(B)}. At first sight a query  like this is impractical if not
impossible to implement. Since a band is a set, bands can be very
large and there can be many, possibly uncountably infinitely many of
them. In existing intensional systems such queries fail since the
program does not provide any information about any particular band.

However, in an extensional context the finitary behavior of the
predicates of our language, saves us. If the predicate {\tt band}
declares a relation to be a band, then (due to the finitaryness
described above) it must have examined only finitely many members of
the relation. Therefore we can enumerate the bands by enumerating
{\em finite} bands, and this collection is countable (in this
particular example it is actually finite). Actually, as we are going
to see, this enumeration can be performed in a careful way so as
that it avoids producing all finite relations one by one (see the
discussion in Section~\ref{intuitive-overview} that follows).

Our final contribution, and not the least, is a relatively simple
proof procedure for extensional higher-order logic programming,
which extends classical SLD-resolution. We demonstrate that the new
proof procedure is sound and complete with respect to the proposed
semantics. In particular, the derived completeness theorems
generalize the well-known such theorems for first-order logic
programming. This result may, at first sight, also seem paradoxical,
given the well-known failure of completeness for even second-order
logic. But the paradox is resolved by recalling that we are dealing
with a restricted subset of higher-order logic and that the
denotations of the types of our language are not arbitrary sets but
instead algebraic lattices (which have a much more refined
structure).

One very important benefit of the proof procedure is that it gives
us an operational semantics for our language. This means in turn
that we could probably extend it with cut, negation and other
operational features not easily specified in terms of model theory
alone.

The rest of the paper is organized as follows:
Section~\ref{intuitive-overview} presents in a more detailed manner
the basic ideas developed in this paper.  Section~\ref{syntax-of-H}
introduces the syntax of the higher-order logic programming language
${\cal H}$. Section~\ref{alglat-section} introduces the key
lattice-theoretic notions that will be needed in the development of
the semantics. Sections~\ref{semantics-of-H} and~\ref{min-Herbrand}
develop the semantics and the minimum Herbrand model semantics of
${\cal H}$; the main properties of the semantics are also
established. Section~\ref{proof-procedure-section} introduces an
SLD-resolution proof procedure for ${\cal H}$ and establishes its
soundness and completeness. Section~\ref{rw-section} presents a
brief description of related approaches to higher-order logic
programming. Section~\ref{future-work} briefly discusses
implementation issues and presents certain interesting topics for
future work. The lengthiest among the proofs have been moved to
corresponding appendices in order to enhance the readability of the
paper.

\section{The Proposed Approach: an Intuitive Overview}\label{intuitive-overview}
As discussed in the previous section, the purpose of this paper is
to develop a purely extensional theoretical framework for
higher-order logic programming which will generalize the familiar
theory of first-order logic programming. The first problem we
consider is to bypass one important restriction of~\cite{Wa91a},
namely the inability to handle programs in which clauses contain
uninstantiated predicate variables. The following example
illustrates these ideas:
\begin{example}\label{the-ultimate-example}
Consider the following higher-order program written in an extended
Prolog-like syntax:
\[
\begin{array}{l}
\mbox{\tt p(Q):-Q(0),Q(s(0)).} \\
\mbox{\tt nat(0).}\\
\mbox{\tt nat(s(X)):-nat(X).}
\end{array}
\]
The Herbrand universe of the program is the set of natural numbers
in successor notation. According to the semantics of~\cite{Wa91a},
the least Herbrand model of the program assigns to predicate {\tt p}
a continuous relation which is true of {\em all} unary relations
that contain at least {\tt 0} and {\tt s(0)}. Consider now the
query:
\[
\begin{array}{l}
\mbox{\tt ?-p(R).}
\end{array}
\]
which asks for all relations that satisfy {\tt p}. Such a query
seems completely unreasonable, since there exist uncountably many
relations that must be substituted and tested in the place of {\tt
R}.\qed
\end{example}

The above example illustrates why uninstantiated predicate variables
in clauses were disallowed in~\cite{Wa91a}. From a theoretical point
of view, one could extend the semantics to cover such cases, but the
problem is mainly a practical one: ``how can one implement such
programs and queries?''.

In more formal terms, the least Herbrand model of a higher-order
program under the semantics of~\cite{Wa91a} is in general an {\em
uncountable set}; in our example, this is evidenced by the fact that
there exists an uncountable number of unary relations over the
natural numbers that contain both {\tt 0} and {\tt s(0)}. This
observation comes in contrast with the semantics of first-order
logic programming in which the least Herbrand model of a program is
a countable set. How can one define a proof procedure that is sound
and complete with respect to this semantics? The key idea for
bypassing these problems was actually anticipated in the concluding
section of~\cite{Wa91a}:
\begin{quote}
{\em Our higher order predicates, however, are continuous: if a
relation satisfies a predicate, then some finite subset satisfies
it. This means that we have to examine only finite relations.}
\end{quote}
In the above example, despite the fact that there exists an infinite
number of relations that satisfy {\tt p}, all these relations are
supersets of the finite relation $\{\mbox{\tt 0},\mbox{\tt s(0)}\}$.
In some sense, this finite relation {\em represents} all the
relations that satisfy {\tt p}. But how can we make the notion of
``finiteness'' more explicit? In order to define a sound and
complete proof procedure for an interesting extensional higher-order
logic programming language, our semantics must in some sense reflect
the above ``finitary'' concepts more explicitly.

An idea that springs to mind is to define an alternative semantics
in which variables (like {\tt Q} in
Example~\ref{the-ultimate-example}) range over finite relations (and
not over arbitrary relations as in~\cite{Wa91a}). Of course, the
notion of ``finite'' should be appropriately defined for every
predicate type. But then an immediate difficulty appears to arise.
Given again the program in Example~\ref{the-ultimate-example}, and
the query
\[
\begin{array}{l}
\mbox{\tt ?-p(nat).}
\end{array}
\]
it is not immediately obvious what the meaning of the above is.
Since we have assumed that {\tt Q} ranges over finite relations, how
can {\tt p} be applied to an infinite one? To overcome this problem,
observe that in order for the predicate {\tt p} to succeed for its
argument {\tt Q}, it only has to examine a ``finite number of
facts'' about {\tt Q} (namely whether {\tt Q} is true of {\tt 0} and
{\tt s(0)}). This remark suggests that the meaning of {\tt p(nat)}
can be established following a non-standard interpretation of
application: we apply the meaning of ${\tt p}$ to all the ``finite
approximations'' of the meaning of {\tt nat}, ie., to all finite
subsets of the set $\{\mbox{\tt 0},\mbox{\tt s(0)},\mbox{\tt
s(s(0))},\ldots\}$. In our case {\tt p(nat)} will be true since
there exists a finite subset of the meaning of ${\tt nat}$ for which
the meaning of ${\tt p}$ is true (namely the set $\{\mbox{\tt
0},\mbox{\tt s(0)}\}$).

Notice that the new semantical approach outlined above, heavily
relies on the idea that the meaning of predicates (like {\tt nat})
can be expressed as the least upper bound of a set of simpler (in
this case, finite) relations. Actually, this is an old and
well-known assumption in the area of denotational semantics, as the
following excerpt from~\cite{Sto77}[page 98] indicates:
\begin{quote}
{\em So we may reasonably demand of all the value spaces in which we
hope to compute that they come equipped with a particular countable
subset of elements from which all the other elements may be built
up.}
\end{quote}
As we are going to demonstrate, the meaning of {\em every} predicate
defined in our language possesses the property just mentioned, and
this allows us to use the new non-standard semantics of application.
In fact, as we are going to see, for every predicate type of our
language, the set of possible meanings of this type forms an {\em
algebraic lattice}~\cite{Gra78}; then, the above property is nothing
more than the key property which characterizes algebraic lattices
(see Proposition~\ref{characterization-of-al}), namely that ``{\em
every element of an algebraic lattice is the least upper bound of
the compact elements of the lattice that are below it}''. More
importantly, for the algebraic lattices we consider, it is
relatively easy to identify these compact elements and to enumerate
them one by one. Based on the above semantics, we are able to derive
for higher-order logic programs many properties that are either
identical or generalize the familiar ones from first-order logic
programming (see Section~\ref{min-Herbrand}).

The new semantics allows us to introduce a relatively simple, sound
and complete proof procedure which applies to programs and queries
that may contain uninstantiated predicate variables. This is due to
the fact that the set of ``finite'' relations is now countable, and
as we are going to see, there exist interesting ways of producing
and enumerating them. The key idea can be demonstrated by continuing
Example~\ref{the-ultimate-example}. Given the query:
\[
\begin{array}{l}
\mbox{\tt ?-p(R).}
\end{array}
\]
one (inefficient and tedious) approach would be to enumerate all
possible finite relations of the appropriate type over the Herbrand
universe. Instead of this, we use an approach which is based on what
we call {\em basic templates}:  a basic template for {\tt R} is
(intuitively) a finite set whose elements are individual variables.
This saves us from having to enumerate all finite sets consisting of
ground terms from the Herbrand universe. For example\footnote{The
notation we use for representing basic templates will slightly
change in Section~\ref{proof-procedure-section}.}, assume that we
instantiate {\tt R} with the template $\{\mbox{\tt X},\mbox{\tt
Y}\}$. Then, the resolution proceeds as follows:
\[
\begin{array}{l}
 \mbox{\tt ?-p(R)}\\
 \mbox{\tt ?-p(\{X},\mbox{\tt Y\})}\\
 \mbox{\tt ?-\{X},\mbox{\tt Y\}(0)},\mbox{\tt \{X},\mbox{\tt Y\}(s(0))}\\
 \mbox{\tt ?-\{0},\mbox{\tt Y\}(s(0))}\\
 \Box
\end{array}
\]
and the proof procedure will return the answer ${\tt R} = \{{\tt
0},\mbox{\tt s(0)}\}$. The proof procedure will also return other
finite solutions, such as ${\tt R} = \{{\tt 0},\mbox{\tt s(0)},
\mbox{\tt Z}_1\}$, ${\tt R} = \{{\tt 0},\mbox{\tt s(0)}, \mbox{\tt
Z}_1, \mbox{\tt Z}_2\}$, and so on. However, a slightly optimized
implementation (see Section~\ref{future-work}) can be created that
returns only the answer ${\tt R} = \{{\tt 0},\mbox{\tt s(0)}\}$,
which represents all the finite relations produced by the proof
procedure. The intuition behind the above answer is that the given
query succeeds for all unary relations that contain at least {\tt 0}
and {\tt s(0)}. Similarly, for the {\tt band} example of
Section~\ref{intro-section}, the implementation will systematically
assemble all the minimal three-member bands from the talents
available.

\section{The Higher-Order Language ${\cal H}$: Syntax}\label{syntax-of-H}
In this section we introduce the higher-order logic programming
language ${\cal H}$, which extends classical first-order logic
programming to a higher-order setting. The language ${\cal H}$ is
based on a simple type system that supports two base types: $o$, the
boolean domain, and $\iota$, the domain of individuals (data
objects). The composite types are partitioned into three classes:
functional (assigned to function symbols), predicate (assigned to
predicate symbols) and argument (assigned to parameters of
predicates).
\begin{definition}
A type can either be \emph{functional}, \emph{argument}, or
\emph{predicate}:
\[
\begin{array}{lll}
\sigma & := & \iota \mid (\iota \rightarrow  \sigma)\\
\rho   & := & \iota \mid \pi\\
\pi    & := & o \mid (\rho \rightarrow \pi)
\end{array}
\]
We will use $\tau$ to denote an arbitrary type (either functional,
argument or predicate one).
\end{definition}

As usual, the binary operator $\rightarrow$ is right-associative. A
functional type that is different than $\iota$ will often be written
in the form $\iota^n \rightarrow \iota$, $n\geq 1$. Moreover, it can
be easily seen that every predicate type $\pi$ can be written in the
form $\rho_1 \rightarrow \cdots \rightarrow \rho_n \rightarrow o$,
$n\geq 0$ (for $n=0$ we assume that $\pi=o$).

We can now proceed to the definition of ${\cal H}$, starting from
its alphabet:
\begin{definition}
The \emph{alphabet} of the higher-order language ${\cal H}$ consists
of the following:
\begin{enumerate}
\item {\em Predicate variables} of every predicate type $\pi$
      (denoted by capital letters such as
      $\mathsf{P,Q,R,\ldots}$).

\item {\em Predicate constants} of every predicate type $\pi$
      (denoted by lowercase letters such as
      $\mathsf{p,q,r,\ldots}$).

\item {\em Individual variables} of type $\iota$
      (denoted by capital letters such as
      $\mathsf{X,Y,Z,\ldots}$).

\item {\em Individual constants} of type $\iota$ (denoted by lowercase
      letters such as $\mathsf{a,b,c,\ldots}$).

\item {\em Function symbols} of every functional type $\sigma \neq \iota$
      (denoted by lowercase letters such as $\mathsf{f,g,h,\ldots}$).

\item The following logical constant symbols: the propositional constants
      $\mathsf{0}$ and $\mathsf{1}$ of type $o$; the equality  constant $\approx$
      of type $\iota \rightarrow \iota \rightarrow o$; the generalized disjunction
      and conjunction constants $\bigvee_{\pi}$ and $\bigwedge_{\pi}$ of type
      $\pi \rightarrow \pi \rightarrow o$, for every predicate type $\pi$;
      the generalized inverse implication constants $\leftarrow_{\pi}$, of type
      $\pi \rightarrow \pi \rightarrow o$, for every predicate type
      $\pi$; the existential quantifier $\exists_{\rho}$, of type
      $(\rho \rightarrow o)\rightarrow o$, for every argument type
      $\rho$.

\item The abstractor $\lambda$ and the parentheses ``$\mathsf{(}$'' and ``$\mathsf{)}$''.
\end{enumerate}
The set consisting of the predicate variables and the individual
variables of ${\cal H}$, will be called the set of {\em argument
variables} of ${\cal H}$. Argument variables will be usually denoted
by $\mathsf{V}$ and its subscripted versions.
\end{definition}
The existential quantifier in higher-order logic is usually
introduced in a different way than in first-order logic. So, for
example, in order to express the quantification of the argument
variable $\mathsf{V}$ of type $\rho$ over the expression
$\mathsf{E}$ one writes $(\exists_{\rho} \,(\lambda
\mathsf{V}.\mathsf{E}))$. For simplicity, we will use in this paper
the more familiar notation $(\exists_{\rho}\mathsf{V}\,\mathsf{E})$.

We proceed by defining the set of {\em positive expressions} of
${\cal H}$:
\begin{definition}
The set of positive expressions of the higher-order language ${\cal
H}$ is recursively defined as follows:
\begin{enumerate}
\item Every predicate variable (respectively, predicate constant)
      of type $\pi$ is a positive expression of type $\pi$; every
      individual variable (respectively, individual constant) of
      type $\iota$ is a positive expression of type $\iota$; the
      propositional constants $\mathsf{0}$ and $\mathsf{1}$ are
      positive expressions of type $o$.

\item If $\mathsf{f}$ is an $n$-ary function symbol and
      $\mathsf{E}_1,\ldots,\mathsf{E}_n$ are positive expressions of type
      $\iota$, then $(\mathsf{f}\,\,\mathsf{E}_1 \cdots \mathsf{E}_n)$ is
      a positive expression of type $\iota$.

\item If $\mathsf{E}_1$ is a positive expression of type $\rho \rightarrow
      \pi$ and $\mathsf{E}_2$ is a positive expression of type $\rho$, then
      $\mathsf{(}\mathsf{E}_1\mathsf{E}_2\mathsf{)}$ is a positive
      expression of type $\pi$.

\item If $\mathsf{V}$ is an argument variable of type $\rho$ and $\mathsf{E}$
      is a positive expression of type $\pi$, then
      $\mathsf{(\lambda V.E)}$ is a positive expression of type $\rho \rightarrow \pi$.

\item If $\mathsf{E}_1,\mathsf{E}_2$ are positive expressions of type $\pi$,
      then  $(\mathsf{E}_1 \bigwedge_{\pi} \mathsf{E}_2)$
      and $(\mathsf{E}_1 \bigvee_{\pi} \mathsf{E}_2)$ are
      positive expressions of type $\pi$.

\item If $\mathsf{E}_1,\mathsf{E}_2$ are positive expressions of type
      $\iota$, then $(\mathsf{E}_1 \approx\mathsf{E}_2)$ is a
      positive expression of type $o$.

\item If $\mathsf{E}$ is an expression of type $o$ and $\mathsf{V}$ is
      an argument variable of any type $\rho$, then $(\exists_{\rho}
      \mathsf{V}\,\mathsf{E})$ is a positive expression of
      type $o$.
\end{enumerate}
The notions of \emph{free} and \emph{bound} variables of an
expression are defined as usual. An expression is called
\emph{closed} if it does not contain any free variables.
\end{definition}
The set of {\em clausal expressions} of ${\cal H}$ can now be
specified:
\begin{definition}\label{clausal-expressions-definition}
The set of {\em clausal expressions} of the higher-order language
${\cal H}$ is defined as follows:
\begin{enumerate}
\item If $\mathsf{p}$ is a predicate constant of type $\pi$ and
      $\mathsf{E}$ is a closed positive expression of type $\pi$ then
      $\mathsf{p} \leftarrow_{\pi} \mathsf{E}$ is a clausal expression of
      type $o$ of ${\cal H}$, also called a {\em program clause}.

\item If $\mathsf{E}$ is a positive expression of type $o$, then
      $\mathsf{0}\leftarrow_{o} \mathsf{E}$ (usually denoted by
      $\leftarrow_{o} \mathsf{E}$ or just $\leftarrow \mathsf{E}$) is
      a clausal expression of type $o$ of ${\cal H}$, also called a
      {\em goal clause}.
\end{enumerate}
\end{definition}
Notice that (following the tradition of first-order logic
programming) we will often talk about the ``empty clause'' which is
denoted by $\Box$ and is equivalent to the propositional constant
$\mathsf{1}$.

The union of the sets of positive and clausal expressions of ${\cal
H}$ will be called the set of {\em expressions} of ${\cal H}$. To
denote that an expression $\mathsf{E}$ has type $\tau$, we will
often write $\mathsf{E}:\tau$; additionally, we write
$type(\mathsf{E})$ to denote the type of expression $\mathsf{E}$.
Expressions of type $\iota$ will be called \emph{terms} and of type
$o$ will be called \emph{formulas}. We will write $\leftarrow$,
$\wedge$ and $\vee$ instead of $\leftarrow_{o}$, $\bigwedge_o$ and
$\bigvee_o$. Moreover, instead of $\exists_{\rho}$ we will often
write $\exists$. When writing an expression, in order to avoid the
excessive use of parentheses, certain usual conventions will be
adopted (such as for example the usual priorities between logical
constants, the convention that application is left-associative and
that lambda abstraction extends as far to the right as possible, and
so on). Given an expression $\mathsf{E}$, we denote by
$FV(\mathsf{E})$ the set of all free variables of $\mathsf{E}$. By
overloading notation, we will also write $FV(S)$, where $S$ is a set
of expressions.

Notice that in Definition~\ref{clausal-expressions-definition}
above, a goal clause may contain two types of occurrences of
variables that serve a similar purpose, namely free argument
variables and argument variables that are existentially quantified.
From a semantic point of view, these two types of variables are
essentially the same. However, in a later section we will
distinguish them from an operational point of view: the free
argument variables that appear in a goal are the ones for which an
answer is sought for by the proof procedure; the argument variables
that are existentially quantified are essentially free variables for
which an answer is not sought for (something like the underscored
variables in Prolog systems). This distinction is not an important
one, and we could have proceeded in a different way (eg. by
disallowing existentially quantified variables from goals).
\begin{definition}
A program of ${\cal H}$ is a set of program clauses of ${\cal H}$.
\end{definition}
\begin{example}\label{closure-example}
The following is a higher-order program that computes the closure of
its input binary relation {\tt R}. The type of {\tt closure} is $\pi
= (\iota \rightarrow \iota \rightarrow o) \rightarrow \iota
\rightarrow \iota \rightarrow o$.
\[
\begin{array}{l}
{\tt closure} \leftarrow_{\pi} \mbox{\tt $\lambda$R.$\lambda$X.$\lambda$Y.(R X Y)}\\
{\tt closure} \leftarrow_{\pi} \mbox{\tt
$\lambda$R.$\lambda$X.$\lambda$Y.$\exists$Z((R X Z)$\wedge$(closure
R Z Y))}
\end{array}
\]
or even more compactly:
\[
\begin{array}{l}
{\tt closure} \leftarrow_{\pi} \mbox{\tt
($\lambda$R.$\lambda$X.$\lambda$Y.(R X Y)) $\bigvee_\pi$
($\lambda$R.$\lambda$X.$\lambda$Y.$\exists$Z((R X Z)$\wedge$(closure
R Z Y)))}
\end{array}
\]
A possible query could be: $\leftarrow \mbox{\tt (closure R a b)}$
(which intuitively requests for those binary relations such that the
pair $({\tt a},{\tt b})$ belongs to their transitive closure).
In a  Prolog-like extended syntax, the above program would have been
written as:
\[
\begin{array}{l}
\mbox{\tt closure(R,\,X,\,Y)\,:-\,R(X,\,Y).}\\
\mbox{\tt closure(R,\,X,\,Y)\,:-\,R(X,\,Z),\,closure(R,\,Z,\,Y).}
\end{array}
\]
and the corresponding query as $\leftarrow \mbox{\tt
closure(R,\,a,\,b)}$.\qed
\end{example}
\begin{example}\label{ordered-example}
We define a predicate {\tt ordered} which checks whether its second
argument (a list) is ordered according to its first argument (a
binary relation). The type of {\tt ordered} is $\pi = (\iota
\rightarrow \iota \rightarrow o) \rightarrow \iota \rightarrow o$
(notice that the type of a list is also $\iota$ since a list is
nothing more than a term). In Prolog-like syntax, the program is the
following:
\[
\begin{array}{l}
\mbox{\tt ordered(R,\,[\,]).}\\
\mbox{\tt ordered(R,\,[X]).}\\
\mbox{\tt ordered(R,\,[X,Y|T])\,:-\,R(X,\,Y),\,ordered(R,\,[Y|T]).}
\end{array}
\]
In the syntax of ${\cal H}$ (slightly extended with the standard
notation for lists), the above program can be written as follows:
\[
\begin{array}{l}
{\tt ordered} \leftarrow_{\pi} \mbox{\tt $\lambda$R.$\lambda$L.(L$\approx$[\,])}\\
{\tt ordered} \leftarrow_{\pi} \mbox{\tt $\lambda$R.$\lambda$L.($\exists$X(L$\approx$[X]))}\\
{\tt ordered} \leftarrow_{\pi} \mbox{\tt
$\lambda$R.$\lambda$L.($\exists$X$\exists$Y$\exists$T((L$\approx$[X,Y|T])$\wedge$(R\,
X\, Y)$\wedge$(ordered\, R\, [Y|T])))}
\end{array}
\]
Assume that we have also defined a binary relation {\tt less} which
succeeds if its first argument (eg. a natural number) is less than
the second one. Then, the query $\leftarrow \mbox{\tt
ordered\,\,less\,\,[1,4,7,10]}$ is expected to succeed. On the other
hand, the query $\leftarrow \mbox{\tt ordered\,\,R\,\,[a,b,c,d]}$
requests for all binary relations under which the list {\tt
[a,b,c,d]} is ordered. As it will become clear in the subsequent
sections of the paper, this is a meaningful question which can
obtain a reasonable answer.\qed
\end{example}

\section{Algebraic Lattices}\label{alglat-section}
In order to develop the semantics of ${\cal H}$, we first need to
introduce certain lattice-theoretic concepts. As it is well-known,
the standard semantics of classical (first-order) logic programming,
is based on {\em complete lattices} (see for example~\cite{lloyd}).
As we are going to see, the development of the semantics of ${\cal
H}$ is based on a special class of complete lattices, namely {\em
algebraic lattices} (see for example~\cite{Gra78}). An algebraic
lattice is a complete lattice in which every element can be created
by using certain {\em compact} (intuitively, ``simple'') elements of
the lattice. In our setting, these compact elements will be the ones
that the proof procedure will generate in order to answer queries
that involve uninstantiated predicate variables. We should mention
at this point that algebraic partially ordered sets are widely used
in domain theory (see for example~\cite{AJ94}).

We start by introducing some mathematical preliminaries concerning
lattice theory. Since the bibliography on partially ordered sets is
huge, certain results appear in one form or another in various
contexts, and they are often hard to locate in the exact form
needed. Propositions~\ref{diagonal},
\ref{monotonic-functions-make-a-complete-lattice} and
\ref{characterization-of-al} fall into this category; for reasons of
completeness, we have included short proofs for them. On the other
hand, Lemma~\ref{algebraicity-composition} is, to our knowledge,
new. We start with some basic definitions:
\begin{definition}
A set $P$ with a binary relation $\sqsubseteq_P$ is called a
\emph{partially ordered set} or \emph{poset} if $\sqsubseteq_P$ is
reflexive, transitive and antisymmetric.
\end{definition}
Usually, the subscript $P$ in $\sqsubseteq_P$ will be omitted when
it is obvious from context.
\begin{definition}
Let $P$ be a poset. An element $x \in P$ is called an \emph{upper
bound} for a subset $A \subseteq P$, if for every $y \in A$, $y
\sqsubseteq x$. If the set of upper bounds of $A$ has a least
element, then this element is called the \emph{least upper bound}
(or \emph{lub}) of $A$ and is denoted by $\bigsqcup A$.
Symmetrically, one can define the notions of \emph{lower bound} and
\emph{greatest lower bound} (or \emph{glb}) of $A$ (this last notion
denoted by $\bigsqcap A$).
\end{definition}
The following proposition (see for example~\cite{AJ94}[Proposition
2.1.4]) will prove useful later in the paper:
\begin{proposition}\label{set-of-sets}
Let $P$ be a poset and let $A, B, (A_i)_{i \in I}$ be subsets of
$P$. Then, the following statements hold (provided the glbs and lubs
occurring in the formulas exist):
\begin{enumerate}
\item $A \subseteq B$ implies $\bigsqcup A \sqsubseteq \bigsqcup B$.
\item If $A= \bigcup_{i \in I} A_i$, then $\bigsqcup A = \bigsqcup_{i \in
I} (\bigsqcup A_i)$.
%
\end{enumerate}
\end{proposition}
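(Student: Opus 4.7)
The plan is to reduce both claims directly to the defining universal property of the least upper bound: $\bigsqcup S$ is by definition the smallest element of $P$ that dominates every member of $S$. Both parts are essentially bookkeeping once this property is invoked correctly, so the main task is to arrange the ``upper bound'' and ``leastness'' steps in the right order and keep track of existence assumptions.

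For part 1, I would observe that any upper bound of $B$ is automatically an upper bound of $A$, because $A \subseteq B$ means every element of $A$ is also an element of $B$. Applied to $\bigsqcup B$ in particular, this yields that $\bigsqcup B$ is an upper bound for $A$. The definition of $\bigsqcup A$ as the \emph{least} upper bound of $A$ then immediately gives $\bigsqcup A \sqsubseteq \bigsqcup B$.

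For part 2, I would prove the two inequalities separately and conclude by antisymmetry. For the direction $\bigsqcup_{i \in I}(\bigsqcup A_i) \sqsubseteq \bigsqcup A$, I note that each $A_i \subseteq A$, so part 1 gives $\bigsqcup A_i \sqsubseteq \bigsqcup A$ for every $i \in I$; hence $\bigsqcup A$ is an upper bound of the set $\{\bigsqcup A_i : i \in I\}$, and the leastness of the lub on the left finishes this inequality. For the reverse direction, I would pick an arbitrary $x \in A$; since $A = \bigcup_{i \in I} A_i$, there is some $j \in I$ with $x \in A_j$, whence $x \sqsubseteq \bigsqcup A_j \sqsubseteq \bigsqcup_{i \in I}(\bigsqcup A_i)$. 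Thus the right-hand side is an upper bound for $A$, and leastness of $\bigsqcup A$ yields $\bigsqcup A \sqsubseteq \bigsqcup_{i \in I}(\bigsqcup A_i)$.

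I do not anticipate any genuine obstacle: the argument uses only the defining property of least upper bounds together with antisymmetry of $\sqsubseteq$. The only point worth an explicit mention in the write-up is that the hypothesis grants existence of every lub that occurs (namely $\bigsqcup A$, $\bigsqcup B$, each $\bigsqcup A_i$, and the outer $\bigsqcup_{i \in I}(\bigsqcup A_i)$), so no ambient completeness of $P$ needs to be assumed.
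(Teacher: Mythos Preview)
Your argument is correct and is exactly the standard one: both parts follow directly from the universal property of least upper bounds, with part~2 using part~1 for one inequality and a direct element-chase for the other. The paper itself does not supply a proof of this proposition; it merely cites \cite{AJ94}[Proposition~2.1.4], so there is nothing further to compare against.
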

\begin{definition}
Let $P$ be a poset. A subset $A$ of $P$ is \emph{directed}, if it is
nonempty and each pair of elements of $A$ has an upper bound in $A$.
\end{definition}
\begin{definition}
Let $P$ and $Q$ be posets. A function $f: P \rightarrow Q$ is called
\emph{monotonic} if for all $x,y\in P$ with $x\sqsubseteq_P y$, we
have $f(x) \sqsubseteq_Q f(y)$. The set of all monotonic functions
from $P$ to $Q$ is denoted by $[P \stackrel{m}{\rightarrow} Q]$.
\end{definition}

Notice that monotonicity can be generalized in the obvious way for
functions $f: P^n \rightarrow Q$, $n>0$, since $P^n$ is also a poset
(where the partial order in this case is defined in a point-wise
way).

We are particularly interested in one type of posets, namely
\emph{complete lattices}:
\begin{definition}
A poset $L$ in which every subset has a least upper bound and a
greatest lower bound, is called a \emph{complete lattice}.
\end{definition}
In fact, there is a symmetry here: the existence of all least upper
bounds suffices to prove that a poset is indeed a complete lattice,
a fact that we will freely use throughout the paper.
\begin{proposition}\label{diagonal}
Let $P$ be a poset, $L$ a complete lattice and let $f:P\times P
\rightarrow L$ be a monotonic function. Then, $\bigsqcup_{x \in P,y
\in P} f(x,y) = \bigsqcup_{x \in P} f(x,x)$.
\end{proposition}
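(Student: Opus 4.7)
The plan is to establish the equality by proving the two inequalities $\sqsubseteq$ and $\sqsupseteq$ separately.

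The direction $\bigsqcup_{x \in P} f(x,x) \sqsubseteq \bigsqcup_{x,y \in P} f(x,y)$ is the easy one and is essentially an application of Proposition~\ref{set-of-sets}(1): the diagonal set $\{f(x,x) : x \in P\}$ is a subset of $\{f(x,y) : x,y \in P\}$, so the lub of the smaller set is dominated by the lub of the larger one.

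The content of the proposition lies in the reverse inequality. Here I would proceed pointwise: it suffices to show that for every fixed pair $x,y \in P$, we have $f(x,y) \sqsubseteq \bigsqcup_{z \in P} f(z,z)$. For this I would use directedness of $P$ (which is the natural hypothesis under which this statement is true, and the setting in which the result will be invoked: any two elements $x, y$ admit a common upper bound $z \in P$). Given such a $z$ with $x \sqsubseteq z$ and $y \sqsubseteq z$, monotonicity of $f$ gives $f(x,y) \sqsubseteq f(z,z) \sqsubseteq \bigsqcup_{z' \in P} f(z',z')$. Since this bound is independent of the particular $x,y$, taking the lub over all pairs on the left yields the desired inequality.

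The only real obstacle is the use of directedness — without it, two incomparable elements $x,y$ could produce an $f(x,y)$ that is not bounded by any $f(z,z)$, so the proposition would fail. This is why the result is invoked in contexts (typically involving directed suprema in algebraic lattices) where the underlying index set is directed; monotonicity of $f$ then does the rest, and no further machinery is needed beyond Proposition~\ref{set-of-sets}.
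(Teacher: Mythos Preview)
Your proof is correct and follows the standard argument for this ``diagonal'' lemma (which is essentially what the paper's cited reference \cite{Ten91} does as well --- the paper itself only gives a pointer rather than a detailed argument). You are also right to flag the missing hypothesis: as stated with $P$ merely a poset, the proposition is false (e.g., take $P=\{a,b\}$ discrete, $L=\{0,1\}$, and $f$ zero on the diagonal and one off it); directedness of $P$ is needed for your upper-bound step, and indeed in the one place the paper invokes this result (the proof of Lemma~\ref{continuity-of-semantics}) the index set is a directed set of interpretations.
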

\begin{proof}
An easy proof using basic properties of posets (see for example the
corresponding proof for {\em domains}~\cite{Ten91}[Lemma 5.3, page
92]).\qed
\end{proof}
\begin{definition}\label{continuous-function}
Let $L$ be a complete lattice. A function $f:L \rightarrow L$ is
called \emph{continuous} if it is monotonic and for every directed
subset $A$ of $L$, we have $f(\bigsqcup A) = \bigsqcup f(A)$.
\end{definition}
We will write $\perp_L$ for the greatest lower bound of a complete
lattice $L$ (called the {\em bottom element of $L$}). A very useful
tool in lattice theory, is Kleene's fixpoint theorem:
\begin{theorem}
Let $L$ be a complete lattice. Then, every continuous function $f: L
\rightarrow L$ has a least fixpoint $\mathit{lfp}(f)$ given by
$\mathit{lfp}(f)=\bigsqcup_{n <\omega} f^n(\perp_L)$.
\end{theorem}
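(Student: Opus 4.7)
The plan is to follow the classical Kleene construction: build the ascending chain of iterates of $f$ starting from $\perp_L$, take its least upper bound, and verify it is both a fixpoint and the least one.

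First I would establish that the sequence $(f^n(\perp_L))_{n < \omega}$ is an ascending chain. The base step $\perp_L \sqsubseteq f(\perp_L)$ holds because $\perp_L$ is the bottom element of $L$. For the inductive step, assuming $f^n(\perp_L) \sqsubseteq f^{n+1}(\perp_L)$, monotonicity of $f$ (which is part of the definition of continuity) gives $f^{n+1}(\perp_L) \sqsubseteq f^{n+2}(\perp_L)$. Since this chain is totally ordered and nonempty, it is in particular a directed subset of $L$, so we are entitled to apply the continuity hypothesis to it.

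Next, set $x = \bigsqcup_{n < \omega} f^n(\perp_L)$; this lub exists because $L$ is a complete lattice. I would verify $f(x) = x$ by the computation
\[
f(x) = f\!\left(\bigsqcup_{n < \omega} f^n(\perp_L)\right) = \bigsqcup_{n < \omega} f^{n+1}(\perp_L) = \bigsqcup_{n < \omega} f^n(\perp_L) = x,
\]
where the second equality uses continuity on the directed chain, and the third equality absorbs the missing $n=0$ term using the fact that $\perp_L$ is the least element of $L$ (so prepending it to a chain does not change the lub).

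Finally, for minimality, let $y \in L$ satisfy $f(y) = y$. A straightforward induction on $n$ shows $f^n(\perp_L) \sqsubseteq y$: the base case follows since $\perp_L \sqsubseteq y$, and the inductive step uses monotonicity of $f$ together with $f(y) = y$. Hence $y$ is an upper bound of the chain, so $x = \bigsqcup_n f^n(\perp_L) \sqsubseteq y$, establishing that $x$ is the least fixpoint. No step here is particularly delicate; the only point that deserves care is noting that the ascending chain is directed so that the continuity hypothesis (stated for directed sets in Definition~\ref{continuous-function}) is actually applicable.
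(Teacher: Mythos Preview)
Your proof is correct and is the standard Kleene argument. The paper does not actually supply a proof of this theorem; it is stated as a well-known result from lattice theory, so there is nothing to compare against beyond noting that your argument is the classical one.
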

Let $A$ be an arbitrary set and $L$ be a complete lattice. Then, a
partial order can be defined on $A\rightarrow L$: for all $f,g \in A
\rightarrow L$, we write $f \sqsubseteq_{A \rightarrow L} g$ if for
all $a\in A$, $f(a) \sqsubseteq_L g(a)$. We will often use the
following proposition:
\begin{proposition}\label{monotonic-functions-make-a-complete-lattice}
Let $A$ be a poset, $L$ a complete lattice and let $F\subseteq [A
\stackrel{m}{\rightarrow} L]$. Then, for all $a \in A$, $(\bigsqcup
F)(a) = \bigsqcup_{f \in F} f(a)$ and $(\bigsqcap F)(a) =
\bigsqcap_{f \in F} f(a)$. Therefore, $[A \stackrel{m}{\rightarrow}
L]$ is a complete lattice.
\end{proposition}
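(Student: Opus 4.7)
The plan is to construct the lub and glb explicitly in a pointwise fashion and then verify the three things one needs: (i) that the pointwise formula yields a monotonic function (i.e.\ actually lives in $[A \stackrel{m}{\rightarrow} L]$), (ii) that it is an upper (resp.\ lower) bound of $F$ in the pointwise order, and (iii) that it is the least (resp.\ greatest) such bound. Completeness of $[A \stackrel{m}{\rightarrow} L]$ will then follow from the symmetry remark made just before the statement (existence of all lubs suffices).

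For the lub claim, define $h : A \to L$ by $h(a) = \bigsqcup_{f \in F} f(a)$; this is well-defined because $L$ is a complete lattice. To see that $h$ is monotonic, let $a \sqsubseteq_A a'$. Each $f \in F$ is monotonic, so $f(a) \sqsubseteq_L f(a') \sqsubseteq_L h(a')$; taking the lub over $f \in F$ on the left gives $h(a) \sqsubseteq_L h(a')$. Next, for each $f \in F$ and each $a \in A$, $f(a) \sqsubseteq_L h(a)$ by construction, so $f \sqsubseteq_{A \to L} h$, i.e.\ $h$ is an upper bound of $F$ in $[A \stackrel{m}{\rightarrow} L]$. Finally, if $g \in [A \stackrel{m}{\rightarrow} L]$ is any upper bound of $F$, then for every $a \in A$ the element $g(a)$ is an upper bound in $L$ of $\{f(a) : f \in F\}$, hence $h(a) = \bigsqcup_{f \in F} f(a) \sqsubseteq_L g(a)$; therefore $h \sqsubseteq_{A \to L} g$. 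This proves both that $\bigsqcup F$ exists in $[A \stackrel{m}{\rightarrow} L]$ and that it coincides pointwise with $h$, giving the stated formula.

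The glb claim is entirely dual: define $h'(a) = \bigsqcap_{f \in F} f(a)$; monotonicity of $h'$ follows because $a \sqsubseteq_A a'$ implies $h'(a) \sqsubseteq_L f(a) \sqsubseteq_L f(a')$ for every $f \in F$, and then taking the glb on the right yields $h'(a) \sqsubseteq_L h'(a')$. That $h'$ is a lower bound of $F$ and the greatest such in the pointwise order is checked exactly as above with the inequalities reversed. Having established that every $F \subseteq [A \stackrel{m}{\rightarrow} L]$ admits both a lub and a glb, we conclude that $[A \stackrel{m}{\rightarrow} L]$ is a complete lattice.

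There is no real obstacle in this argument; the only point that requires any thought is the verification that the pointwise lub (resp.\ glb) is itself monotonic, which is immediate from the monotonicity of the members of $F$ together with Proposition~\ref{set-of-sets}(1). Everything else is a routine unpacking of the pointwise order on $A \to L$.
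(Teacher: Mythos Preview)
Your proof is correct and follows essentially the same approach as the paper's: define the pointwise lub (resp.\ glb), verify monotonicity using the monotonicity of each $f\in F$, and check it is the least upper (resp.\ greatest lower) bound in the pointwise order. The only cosmetic differences are that the paper establishes monotonicity after the lub property and omits the glb details as ``symmetrical,'' whereas you spell out the dual argument explicitly.
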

\begin{proof}
We give the proof for $\bigsqcup$ (the proof for $\bigsqcap$ is
symmetrical).  Let $h \in A \rightarrow L$ such that $h(a) =
\bigsqcup_{f \in F} f(a)$, where the least upper bound is
well-defined because $L$ is a complete lattice. Notice that $h$ is
obviously an upper bound of $F$. Now let $g$ be an arbitrary upper
bound of $F$. For each $a \in A$, it holds that $g(a)$ is an upper
bound of $\{f(a) \mid f \in F\}$, thus $\bigsqcup_{f \in F} f(a)
\sqsubseteq g(a)$ which means that $h \sqsubseteq g$. Therefore,
$h=\bigsqcup F$.

It remains to show that $h$ is monotonic. Consider $x, y \in A$ such
that $x \sqsubseteq y$. For all $f \in F$ we have $f(x) \sqsubseteq
f(y)$ due to the monotonicity of $f$. Since $\bigsqcup_{f \in F}
f(y)$ is an upper bound of $\{f(y) \mid f \in F\}$, it is also an
upper bound of $\{f(x) \mid f \in F\}$. Therefore, $\bigsqcup_{f \in
F} f(x) \sqsubseteq \bigsqcup_{f \in F} f(y)$ and consequently $h$
is monotonic.\qed
\end{proof}
We will be interested in a certain type of complete lattices in
which every element can be ``created'' by using a set of {\em
compact} (intuitively, ``simple'') elements of the lattice:
\begin{definition}
Let $L$ be a complete lattice and let $c \in L$. Then $c$ is called
\emph{compact} if for every $A\subseteq L$ such that $c \sqsubseteq
\bigsqcup A$, there exists finite $A'\subseteq A$ such that $c
\sqsubseteq \bigsqcup A'$.
The set of all compact elements of $L$ is denoted by ${\cal K}(L)$.
\end{definition}
We can now define the notion of algebraic lattice (see for
example~\cite{Gra78}), which will prove to be the key
lattice-theoretic concept applicable to our context.
\begin{definition}\label{algebraic-lattice}
A complete lattice $L$ is called \emph{algebraic} if every element
of $L$ is the least upper bound of a set of compact elements of $L$.
\end{definition}
The name ``algebraic lattice'' is due to G. Birkhoff~\cite{Bir67}
(who did not assume completeness at that time). In the literature,
algebraic lattices are also called {\em compactly generated
lattices}.
\begin{example}\label{zero-one}
The set $L=\{0,1\}$ under the usual numerical ordering is an
algebraic lattice with ${\cal K}(L) = \{0,1\}$.

Let $S$ be a set. Then, $2^S$, the set of all subsets of $S$, forms
a complete lattice under set inclusion. It is easy to see that this
is an algebraic lattice whose compact elements are the finite
subsets of $S$.\qed
\end{example}
Let $P$ be a poset. Given $B \subseteq P$ and $x \in P$, we write
$B_{[x]} = \{ b \in B \mid b \sqsubseteq_P x\}$. We have the
following easy proposition:
\begin{proposition}\label{characterization-of-al}
Let $L$ be an algebraic lattice. Then, for every $x \in L$, $x=
\bigsqcup {\cal K}(L)_{[x]}$.
\end{proposition}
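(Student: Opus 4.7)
The plan is a short two-sided inequality argument, using only the definition of algebraic lattice (Definition~\ref{algebraic-lattice}), the monotonicity of $\bigsqcup$ (part~1 of Proposition~\ref{set-of-sets}), and antisymmetry of $\sqsubseteq$. I would not expect any real obstacle here; the statement is almost a direct unfolding of the definition, and the only thing one has to see is that the particular representing set of compacts provided by algebraicity is contained in ${\cal K}(L)_{[x]}$.

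For the first direction, I would fix $x \in L$ and invoke Definition~\ref{algebraic-lattice} to obtain a set $A \subseteq {\cal K}(L)$ with $x = \bigsqcup A$. Every $a \in A$ satisfies $a \sqsubseteq \bigsqcup A = x$, so $A \subseteq {\cal K}(L)_{[x]}$. By the monotonicity clause of Proposition~\ref{set-of-sets}, $x = \bigsqcup A \sqsubseteq \bigsqcup {\cal K}(L)_{[x]}$.

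For the reverse direction, the key observation is that $x$ is itself an upper bound of ${\cal K}(L)_{[x]}$, since by definition every $c \in {\cal K}(L)_{[x]}$ satisfies $c \sqsubseteq x$. Hence $\bigsqcup {\cal K}(L)_{[x]} \sqsubseteq x$. Combining the two inequalities with antisymmetry of $\sqsubseteq$ yields $x = \bigsqcup {\cal K}(L)_{[x]}$, as required. The only subtlety worth flagging is that the lub $\bigsqcup {\cal K}(L)_{[x]}$ exists, which follows because $L$ is a complete lattice; no appeal to directedness of ${\cal K}(L)_{[x]}$ is needed at this point.
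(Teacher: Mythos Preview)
Your proposal is correct and essentially identical to the paper's own proof: both directions are handled the same way, using Definition~\ref{algebraic-lattice} to obtain $A \subseteq {\cal K}(L)$ with $x = \bigsqcup A$, noting $A \subseteq {\cal K}(L)_{[x]}$, and applying Proposition~\ref{set-of-sets}(1), while the reverse inequality is immediate since $x$ bounds ${\cal K}(L)_{[x]}$. The only cosmetic difference is the order in which you treat the two inequalities.
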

\begin{proof}
Obviously it holds that $\bigsqcup {\cal K}(L)_{[x]} \sqsubseteq x$.
We show that $x \sqsubseteq \bigsqcup {\cal K}(L)_{[x]}$. By
Definition~\ref{algebraic-lattice}, there exists $A \subseteq {\cal
K}(L)$ such that $x = \bigsqcup A$. Obviously, $A \subseteq {\cal
K}(L)_{[x]}$. Therefore, by Proposition~\ref{set-of-sets},
$\bigsqcup A \sqsubseteq \bigsqcup {\cal K}(L)_{[x]}$ and
consequently $x \sqsubseteq \bigsqcup {\cal K}(L)_{[x]}$.\qed
\end{proof}
Given an algebraic lattice $L$, the set ${\cal K}(L)$ will be called
\emph{the basis of $L$}. If additionally, ${\cal K}(L)$ is
countable, then $L$ will be called an \emph{$\omega$-algebraic
lattice}.

In the rest of this section we will define a particular class of
algebraic lattices that arise in our semantics of higher-order logic
programming. This class will be characterized by
Lemma~\ref{algebraicity-composition} that follows. We first need to
define the notion of ``step functions'' (see for
example~\cite{AJ94}) which are used to build the compact elements of
our algebraic lattices.
\begin{definition}\label{step-functions}
Let $A$ be a poset and L be an algebraic lattice. For each $a \in A$
and $c \in {\cal K}(L)$, we define the function $(a \searrow c) : A
\rightarrow L$ as
\[ (a \searrow c)(x) = \left\{\begin{array}{ll}
                       c,  &\mbox{if $a \sqsubseteq_{A} x$}\\
                       \perp_{L}, &\mbox{otherwise}
                       \end{array}
                       \right.
\]
The functions of the above form will be called the \emph{step
functions} of $A \rightarrow L$.
\end{definition}
\begin{example}\label{example-of-step-functions}
Consider a non-empty set $A$ equipped with the trivial partial order
that relates every element of $A$ to itself, ie., $a \sqsubseteq_A
a$ for all $a$. Moreover, let $L=\{0,1\}$ (which by
Example~\ref{zero-one} is an algebraic lattice). Then, for every $a
\in A$, $(a \searrow 1)$ is the function that returns 1 iff its
argument is equal to $a$. In other words $(a \searrow 1)$ is the
singleton set $\{a\}$. On the other hand, for every $a$, $(a
\searrow 0)$ corresponds to the empty set.

As a second example, assume that $A$ is the set of finite subsets of
$\mathbb{N}$ and that $L=\{0,1\}$. Then, for any finite set $a\in
A$, $(a \searrow 1)$ is the function that given any finite set $x$
such that $x\supseteq a$, $(a \searrow 1)(x) = 1$. In other words,
$(a \searrow 1)$ is a set consisting of $a$ and all its (finite)
supersets. On the other hand, for every $a$, $(a \searrow 0)$ is the
function that given any finite set $x$, $(a \searrow 0)(x) = 0$,
ie., it corresponds to the empty set (of sets).\qed
\end{example}
The following lemma (which we have not seen explicitly stated
before) identifies a class of algebraic lattices that will play the
central role in the development of the semantics of higher-order
logic programming. An important characteristic of these lattices is
that they have a simple characterization of their basis. The proof
of the lemma is given in
Appendix~\ref{appendix-algebraicity-composition}.
\begin{lemma}\label{algebraicity-composition}
Let $A$ be a poset and $L$ be an algebraic lattice. Then, $[A
\stackrel{m}{\rightarrow} L]$ is an algebraic lattice whose basis is
the set of all least upper bounds of finitely many step functions
from $A$ to $L$. If, additionally, $A$ is countable and $L$ is an
$\omega$-algebraic lattice then $[A \stackrel{m}{\rightarrow} L]$ is
an $\omega$-algebraic lattice.
\end{lemma}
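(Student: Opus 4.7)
The plan is to exploit the fact, given by Proposition~\ref{monotonic-functions-make-a-complete-lattice}, that $[A \stackrel{m}{\rightarrow} L]$ is already a complete lattice, and then verify Definition~\ref{algebraic-lattice} by explicitly producing the basis. Concretely, I would first show that step functions are compact, then show that finite lubs of step functions are compact, then show that every monotonic $f$ is the directed lub of finite lubs of step functions below it (so the lattice is algebraic), and finally show the converse to pin down the basis exactly.

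First, I would check that each $(a \searrow c)$ is actually monotonic (if $x \sqsubseteq y$ and $a \sqsubseteq x$, then $a \sqsubseteq y$ by transitivity, and the case analysis is immediate). Next, to see that $(a \searrow c)$ is compact, suppose $(a \searrow c) \sqsubseteq \bigsqcup F$ for some $F \subseteq [A \stackrel{m}{\rightarrow} L]$. Evaluating at $a$ and using Proposition~\ref{monotonic-functions-make-a-complete-lattice}, we get $c = (a \searrow c)(a) \sqsubseteq \bigsqcup_{f \in F} f(a)$. Since $c \in \mathcal{K}(L)$, a finite $F' \subseteq F$ already satisfies $c \sqsubseteq \bigsqcup_{f \in F'} f(a)$; monotonicity of each $f \in F'$ lifts this to $(a \searrow c) \sqsubseteq \bigsqcup F'$ pointwise on $A$. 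A standard argument then shows that finite lubs of compact elements are compact in any complete lattice, so every finite lub of step functions belongs to $\mathcal{K}([A \stackrel{m}{\rightarrow} L])$.

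The main step is density: for every $f \in [A \stackrel{m}{\rightarrow} L]$, define
\[
G_f = \{(a \searrow c) \mid a \in A,\; c \in \mathcal{K}(L),\; c \sqsubseteq_L f(a)\}.
\]
Each element of $G_f$ lies below $f$: if $a \sqsubseteq x$ then monotonicity of $f$ gives $c \sqsubseteq f(a) \sqsubseteq f(x)$, otherwise $(a \searrow c)(x) = \perp_L$. Hence $\bigsqcup G_f \sqsubseteq f$. Conversely, at any $x \in A$, the functions $(x \searrow c)$ for $c \in \mathcal{K}(L)_{[f(x)]}$ all belong to $G_f$ and evaluate to $c$ at $x$, so Proposition~\ref{monotonic-functions-make-a-complete-lattice} together with Proposition~\ref{characterization-of-al} yields $(\bigsqcup G_f)(x) \sqsupseteq \bigsqcup \mathcal{K}(L)_{[f(x)]} = f(x)$. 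Thus $f = \bigsqcup G_f$, and since the collection of finite lubs drawn from $G_f$ is directed with the same supremum, $f$ is the lub of a set of compact elements, proving that $[A \stackrel{m}{\rightarrow} L]$ is algebraic.

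To identify the basis precisely, take an arbitrary compact $k \in \mathcal{K}([A \stackrel{m}{\rightarrow} L])$. The directed set of finite lubs from $G_k$ has lub $k$, so by compactness $k$ is dominated by a single finite lub $g_1 \sqcup \cdots \sqcup g_n$ with each $g_i \in G_k \subseteq \{h \mid h \sqsubseteq k\}$; the reverse inequality is immediate, so $k = g_1 \sqcup \cdots \sqcup g_n$ is a finite lub of step functions. For the countable refinement, if $A$ is countable and $\mathcal{K}(L)$ is countable then $A \times \mathcal{K}(L)$ is countable, hence so is the set of step functions, and therefore also the set of its finite lubs.

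The point I expect to require the most care is the density argument: one must combine monotonicity of $f$ (to ensure each step function actually lies below $f$) with algebraicity of $L$ (to approximate each $f(x)$ from below by compacts) and with the pointwise characterization of lubs in $[A \stackrel{m}{\rightarrow} L]$ to recover $f$ exactly; any weakening of these three ingredients breaks the argument.
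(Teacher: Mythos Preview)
Your proposal is correct and follows essentially the same approach as the paper: show step functions are compact, show every $f$ is the supremum of the step functions below it, and deduce that the basis consists of finite lubs of step functions. The only organizational difference is that the paper factors out the last part into a standalone auxiliary result (if $B\subseteq\mathcal{K}(L)$ and every $x$ equals $\bigsqcup B_{[x]}$, then $\mathcal{K}(L)$ is exactly the finite lubs from $B$), whereas you carry out that argument inline; the content is the same.
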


We can now outline the reasons why algebraic lattices play such an
important role in our context. As we have already mentioned, one of
the contributions of the paper is that it allows the treatment of
queries with uninstantiated predicate variables. The results
of~\cite{Wa91a} indicate that (due to continuity), if a relation
satisfies a predicate, then some ``finite representative'' of this
relation also satisfies it. This gives the idea of defining a
semantics which makes these ``finite representatives'' more
explicit. Intuitively, these finite representatives are the compact
elements of an algebraic lattice. From an operational point of view,
restricting attention to the compact elements allows us to answer
queries with uninstantiated variables: if the set of compact
elements is enumerable then we can try them one by one examining in
each case whether the query is satisfied.

More formally, since our lattices are algebraic and satisfy the
conditions of Lemma~\ref{algebraicity-composition}, we have a
relatively easy characterization of their sets of compact elements
(as suggested by Lemma~\ref{algebraicity-composition}). Moreover, as
we are going to see, if we restrict attention to Herbrand
interpretations (see Section~\ref{min-Herbrand}), then the lattices
that we have to consider are all $\omega$-algebraic and therefore
their sets of compact elements are countable. For these lattices it
turns out that we can devise an effective procedure for enumerating
their compact elements which leads us to an effective proof
procedure for our higher-order language.

\section{The Semantics of ${\cal H}$}\label{semantics-of-H}
The semantics of ${\cal H}$ is built upon the notion of algebraic
lattice. Recall that an algebraic lattice is a complete lattice $L$
with the additional property that every element $x$ of $L$ is the
least upper bound of ${\cal K}(L)_{[x]}$.

\subsection{The Semantics of Types}
Before specifying the semantics of expressions of ${\cal H}$ we need
to provide the set-theoretic meaning of the types of expressions of
${\cal H}$ with respect to a set $D$ (where $D$ is later going to be
the domain of our interpretations). The fact that a given type
$\tau$ denotes a set $\lsem \tau \rsem_D$ will mean that an
expression of type $\tau$ denotes an element of $\lsem \tau\rsem_D$.
In other words, the semantics of types help us understand what are
the meanings of the expressions of our language. In the following
definition we define simultaneously and recursively two things: the
semantics $\lsem \tau \rsem_D$ of a type $\tau$ and the
corresponding partial order $\sqsubseteq_{\tau}$\footnote{Notice
that we are writing $\sqsubseteq_{\tau}$ instead of the more
accurate $\sqsubseteq_{\lsem \tau\rsem_D}$. In the following, for
brevity reasons we will often use the former (simpler) notation.
Similarly, we will often write $\perp_{\pi}$ instead of
$\perp_{\lsem \pi\rsem_D}$.}.
\begin{definition}\label{semantics-of-types}
Let $D$ be a non-empty set. Then:
\begin{itemize}
\item $\lsem \iota \rsem_D = D$, and $\sqsubseteq_\iota$
is the trivial partial order such that $d \sqsubseteq_\iota d$, for
all $d \in D$.
\item $\lsem \iota^n \rightarrow \iota \rsem_D = D^n\rightarrow D$.
A partial order for this case will not be needed.
\item $\lsem o \rsem_D = \{0, 1\}$, and $\sqsubseteq_o$
is the partial order defined by the numerical ordering on $\{0,
1\}$.
\item $\lsem \iota \rightarrow \pi \rsem_D = D \rightarrow \lsem \pi
\rsem_D$, and $\sqsubseteq_{\iota \rightarrow \pi}$ is the partial
order defined as follows: for all $f,g \in \lsem \iota \rightarrow
\pi \rsem_D$, $f\sqsubseteq_{\iota \rightarrow \pi} g$ if and only
if $f(d) \sqsubseteq_{\pi} g(d)$, for all $d \in D$.

\item $\lsem \pi_1 \rightarrow \pi_2 \rsem_D =
[{\cal K}(\lsem \pi_1 \rsem_D) \stackrel{m}{\rightarrow} \lsem \pi_2
\rsem_D]$, and $\sqsubseteq_{\pi_1 \rightarrow \pi_2}$ is the
partial order defined as follows: for all $f,g \in \lsem \pi_1
\rightarrow \pi_2 \rsem_D$, $f\sqsubseteq_{\pi_1 \rightarrow \pi_2}
g$ if and only if $f(d) \sqsubseteq_{\pi_2} g(d)$, for all $d \in
{\cal K}(\lsem \pi_1 \rsem_D)$.
\end{itemize}
\end{definition}

It is not immediately obvious that the last case in the above
definition is well-defined. More specifically, in order for the
quantity ${\cal K}(\lsem \pi_1 \rsem_D)$ to make sense, $\lsem \pi_1
\rsem_D$ must be a complete lattice. This is ensured by the
following lemma:
\begin{lemma}\label{pi-is-algebraic-lattice}
Let $D$ be a non-empty set. Then, for every $\pi$, $\lsem \pi
\rsem_D$ is an algebraic lattice ($\omega$-algebraic if $D$ is
countable).
\end{lemma}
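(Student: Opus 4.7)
The plan is to proceed by structural induction on the predicate type $\pi$, where the single nontrivial tool is Lemma~\ref{algebraicity-composition}. The statement to be strengthened in the induction is that $\lsem\pi\rsem_D$ is an algebraic lattice whose partial order coincides with the one declared in Definition~\ref{semantics-of-types}, and is moreover $\omega$-algebraic when $D$ is countable.

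For the base case $\pi=o$, I would simply invoke Example~\ref{zero-one}: $\lsem o\rsem_D=\{0,1\}$ is an algebraic lattice with ${\cal K}(\lsem o\rsem_D)=\{0,1\}$, which is finite and hence the lattice is trivially $\omega$-algebraic.

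For the inductive step $\pi=\rho\rightarrow\pi'$ I would split on the form of $\rho$. If $\rho=\iota$, I would view $D$ as a poset under the trivial order declared for $\sqsubseteq_\iota$; then every function in $D\rightarrow\lsem\pi'\rsem_D$ is vacuously monotonic, so $D\rightarrow\lsem\pi'\rsem_D=[D\stackrel{m}{\rightarrow}\lsem\pi'\rsem_D]$, and the pointwise ordering from Definition~\ref{semantics-of-types} matches the ordering carried by this monotone function space. Since $\lsem\pi'\rsem_D$ is an algebraic lattice by the inductive hypothesis, Lemma~\ref{algebraicity-composition} yields that $[D\stackrel{m}{\rightarrow}\lsem\pi'\rsem_D]$ is algebraic; if $D$ is countable, then so is the trivially ordered $D$, and by induction $\lsem\pi'\rsem_D$ is $\omega$-algebraic, so the lemma also gives $\omega$-algebraicity. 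If instead $\rho=\pi_1$, the inductive hypothesis makes $\lsem\pi_1\rsem_D$ an algebraic lattice, so ${\cal K}(\lsem\pi_1\rsem_D)$ is a well-defined poset (with the order inherited from $\lsem\pi_1\rsem_D$); together with the algebraic lattice $\lsem\pi'\rsem_D$ (again by induction) a second application of Lemma~\ref{algebraicity-composition} shows that $[{\cal K}(\lsem\pi_1\rsem_D)\stackrel{m}{\rightarrow}\lsem\pi'\rsem_D]=\lsem\pi_1\rightarrow\pi'\rsem_D$ is an algebraic lattice whose pointwise order over ${\cal K}(\lsem\pi_1\rsem_D)$ is exactly $\sqsubseteq_{\pi_1\rightarrow\pi_2}$. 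For the countable case, the inductive hypothesis says ${\cal K}(\lsem\pi_1\rsem_D)$ is countable and $\lsem\pi'\rsem_D$ is $\omega$-algebraic, so Lemma~\ref{algebraicity-composition} again produces an $\omega$-algebraic lattice.

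I do not expect a significant obstacle: the real work has been packaged into Lemma~\ref{algebraicity-composition}. The one small point that warrants explicit mention is the identification $D\rightarrow\lsem\pi'\rsem_D=[D\stackrel{m}{\rightarrow}\lsem\pi'\rsem_D]$ under the trivial order, so that the full function space in the definition of $\lsem\iota\rightarrow\pi'\rsem_D$ is in fact a monotone function space in the sense of the lemma; this is what lets us apply the lemma uniformly in both sub-cases of the induction.
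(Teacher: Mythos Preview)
Your proposal is correct and follows essentially the same approach as the paper: structural induction on $\pi$, with the base case $\pi=o$ handled via Example~\ref{zero-one} and both inductive cases ($\rho=\iota$ and $\rho=\pi_1$) reduced to Lemma~\ref{algebraicity-composition}, including the key observation that the trivial order on $D$ makes $D\rightarrow\lsem\pi'\rsem_D=[D\stackrel{m}{\rightarrow}\lsem\pi'\rsem_D]$. You are slightly more explicit than the paper in checking that the orders match and in spelling out the $\omega$-algebraic reasoning, but the argument is the same.
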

\begin{proof}
The proof is by a simple induction on the structure of $\pi$. The
basis case is for $\pi = o$ and holds trivially (see
Example~\ref{zero-one}). For the induction step, we distinguish two
cases.  The first case is for $\pi = \iota \rightarrow \pi_1$. Then,
$\lsem \iota \rightarrow \pi_1 \rsem_D = D \rightarrow \lsem \pi_1
\rsem_D$. Notice now that $D$ is partially ordered by the trivial
partial order $\sqsubseteq_{\iota}$, and it holds that $D
\rightarrow \lsem \pi_1\rsem_D = [D \stackrel{m}{\rightarrow} \lsem
\pi_1 \rsem_D]$ (monotonicity is trivial in this case). By the
induction hypothesis and Lemma~\ref{algebraicity-composition} it
follows that $\lsem \pi \rsem_D$ is an algebraic lattice
($\omega$-algebraic if $D$ is countable). The second case is for
$\pi = \pi_1 \rightarrow \pi_2 $, and the result follows by the
induction hypothesis and Lemma~\ref{algebraicity-composition}.\qed
\end{proof}
The following definition gives us a convenient shorthand when we
want to refer to an object that is either a compact element or a
member of the domain $D$ of our interpretations. This shorthand will
be used in various places of the paper.
\begin{definition}\label{definition-basic-elements}
Let $D$ be a non-empty set and let $\rho$ be an argument type.
Define:
\[ {\cal F}_D(\rho) = \left\{\begin{array}{ll}
                       D,  &\mbox{if $\rho = \iota$}\\
                       {\cal K}(\lsem \rho \rsem_D), &\mbox{otherwise}
                       \end{array}
                       \right.
\]
The set ${\cal F}_D(\rho)$ will be called the \emph{set of basic
elements of type $\rho$} (with respect to the set $D$).
\end{definition}
\begin{example}\label{example-of-domains}
Consider the type $\iota \rightarrow o$ (a first-order predicate
with one argument has this type). By
Definition~\ref{semantics-of-types}, $\lsem \iota \rightarrow o
\rsem_D$ is the set of all functions from $D$ to $\{0,1\}$ (or
equivalently, of arbitrary subsets of $D$).

As a second example, consider the type $(\iota \rightarrow
o)\rightarrow o$. This is the type of a predicate which takes as its
only parameter another predicate which is first-order; for example,
{\tt p} in Example~\ref{the-ultimate-example} has this type. Then,
it can be verified using Lemma~\ref{algebraicity-composition} and
Example~\ref{example-of-step-functions} that the set ${\cal K}(\lsem
\iota \rightarrow o\rsem_D)$ is the set of all {\em finite}
functions from $D$ to $\{0,1\}$ (or equivalently, of finite subsets
of $D$). By Definition~\ref{semantics-of-types}, $\lsem (\iota
\rightarrow o) \rightarrow o\rsem_D$ is the set of all monotonic
functions from finite subsets of $D$ (ie., elements of ${\cal
K}(\lsem \iota \rightarrow o\rsem_D)$) to $\{0,1\}$. In other words,
in the semantics of ${\cal H}$, a predicate of type $(\iota
\rightarrow o)\rightarrow o$ will denote a monotonic function from
finite subsets of $D$ to $\{0,1\}$. The role that monotonicity plays
in this context can be intuitively explained by considering again
Example~\ref{the-ultimate-example}: if {\tt p} is true of a finite
set, then this set must contain both {\tt 0} and {\tt s(0)}. But
then, {\tt p} will also be true for every superset of this set
(since every superset also contains both {\tt 0} and {\tt s(0)}). As
we are going to see, the meaning of all the higher-order predicates
that are defined in a program will possess the monotonicity
property.\qed
\end{example}
It should be noted at this point that the semantics of types of our
language is in some sense a finitary version of the one given
in~\cite{Wa91a}, where the denotation of a type of the form $\pi_1
\rightarrow \pi_2$ is the set of all continuous functions from the
denotation of $\pi_1$ to the denotation of $\pi_2$ (more details on
the connections between the two approaches will be given in
Section~\ref{comparison-with-Wadge}). Notice now that in our
interpretation of types, only monotonicity is required; actually,
continuity is not applicable in our interpretation: given a type
$\pi_1\rightarrow \pi_2$, it would be meaningless to talk about the
continuous functions from ${\cal K}(\lsem \pi_1 \rsem_D)$ to $\lsem
\pi_2 \rsem_D$ because ${\cal K}(\lsem \pi_1 \rsem_D)$ is not in
general a complete lattice\footnote{To see this, take $\pi_1=\iota
\rightarrow o$ and let $D$ be an infinite set. Then, ${\cal K}(\lsem
\iota \rightarrow o \rsem_D)$ consists of all finite subsets of $D$
and is not a complete lattice (since the least upper bound of a set
of finite sets can itself be infinite).} as required by the
definition of continuity. However, as we are going to see,
monotonicity suffices in order to establish that the immediate
consequence operator of every program is continuous
(Lemma~\ref{T_P_continuous}) and therefore has a least fixed-point.

As a last remark, we should mention that the interpretation of types
given in Definition~\ref{semantics-of-types}, does not apply to the
inverse implication operator $\leftarrow_{\pi}$ of ${\cal H}$, whose
denotation is not monotonic (for example, notice that negation can
be implicitly defined with the use of implication). However, since
the use of $\leftarrow_{\pi}$ is not allowed inside positive
expressions, the non-monotonicity of $\leftarrow_{\pi}$ does not
create any semantic problems.

\subsection{The Semantics of Expressions}\label{subsection-semantics-of-expressions}
We can now proceed to give meaning to the expressions of ${\cal H}$.
This is performed by first defining the notions of
\emph{interpretation} and \emph{state} for ${\cal H}$:
\begin{definition}
An interpretation $I$ of ${\cal H}$ consists of:
\begin{enumerate}\label{definition-of-interpretation}
\item a nonempty set $D$, called the \emph{domain} of $I$

\item an assignment to each individual constant symbol $\mathsf{c}$,
      of an element $I(\mathsf{c})\in D$

\item an assignment to each predicate constant $\mathsf{p}$ of type $\pi$, of an
      element $I(\mathsf{p})\in \lsem \pi \rsem_D$

\item an assignment to each function symbol $\mathsf{f}$ of type $\iota^n
      \rightarrow \iota$, of a function $I(\mathsf{f})\in D^n \rightarrow D$.
\end{enumerate}
\end{definition}
\begin{definition}
Let $D$ be a nonempty set. Then, a \emph{state $s$ of ${\cal H}$
over $D$} is a function that assigns to each argument variable
$\mathsf{V}$ of type $\rho$ of ${\cal H}$ an element $s(\mathsf{V})
\in {\cal F}_D(\rho)$.
\end{definition}

In the following, $s[d/\mathsf{V}]$ is used to denote a state that
is identical to $s$ the only difference being that the new state
assigns to $\mathsf{V}$ the value $d$.

Before we proceed to formally define the semantics of expressions of
${\cal H}$, a short discussion on the semantics of {\em application}
is needed. The key technical difficulty we have to confront can be
explained by reconsidering Example~\ref{the-ultimate-example} in the
more formal context that we have now developed.
\begin{example}
Consider again the program from Example~\ref{the-ultimate-example}:
\[
\begin{array}{l}
\mbox{\tt p(Q):-Q(0),Q(s(0)).} \\
\mbox{\tt nat(0).}\\
\mbox{\tt nat(s(X)):-nat(X).}
\end{array}
\]
Consider also the query {\tt $\leftarrow$ p(nat)}. The type of {\tt
p} is $(\iota \rightarrow o)\rightarrow o$, while the type of {\tt
nat} is $\iota \rightarrow o$. Let $I$ be an interpretation  with
underlying domain $D$. Then, according to
Definition~\ref{definition-of-interpretation}, $I({\tt p})$ must be
a monotonic function from ${\cal K}(\lsem \iota \rightarrow
o\rsem_D)$ to $\{0,1\}$. Moreover, according to
Example~\ref{example-of-domains}, ${\cal F}_D(\iota \rightarrow o)$
consists of all the {\em finite} sets of elements of $D$. But
$I({\tt nat})$ is a member of $\lsem \iota \rightarrow o\rsem_D$ and
can therefore be an infinite set. How can we apply $I({\tt p})$ to
$I({\tt nat})$? To overcome this problem, observe that in order for
the predicate {\tt p} to succeed for its argument {\tt Q}, it only
has to examine a ``finite number of facts'' about {\tt Q} (namely
whether {\tt Q} is true of {\tt 0} and {\tt s(0)}). This remark
suggests that the meaning of {\tt p(nat)} can be established
following a non-standard interpretation of application: we apply
$I({\tt p})$ to all the ``finite approximations'' of $I({\tt nat})$,
ie., to all elements of ${\cal K}(\lsem \iota \rightarrow
o\rsem_D)_{[I({\tt nat})]}$, and then take the least upper bound of
the results. Notice that our approach heavily relies on the fact
that our semantic domains are algebraic lattices: every element of
such a lattice (like $I({\tt nat})$ in our example) is the least
upper bound of the compact elements of the lattice that are below it
(the finite subsets of $I({\tt nat})$ in our case).\qed
\end{example}

We can now proceed to present the semantics of ${\cal H}$:
\begin{definition}\label{definition-semantics-of-expressions}
Let $I$ be an interpretation of ${\cal H}$, let $D$ be the domain of
$I$, and let $s$ be a state over $D$. Then, the semantics of
expressions of ${\cal H}$ with respect to $I$ and $s$, is defined as
follows:
\begin{enumerate}
\item $\lsem \mathsf{0} \rsem_s (I) = 0$

\item $\lsem \mathsf{1} \rsem_s (I) = 1$

\item $\lsem \mathsf{c} \rsem_s (I) = I(\mathsf{c})$, for every
      individual constant $\mathsf{c}$

\item $\lsem \mathsf{p} \rsem_s (I) = I(\mathsf{p})$, for every
      predicate constant $\mathsf{p}$

\item $\lsem \mathsf{V} \rsem_s (I) = s(\mathsf{V})$, for every
      argument variable $\mathsf{V}$

\item $\lsem (\mathsf{f}\,\,\mathsf{E}_1\cdots \mathsf{E}_n) \rsem_s (I) =
      I(\mathsf{f})\,\,\lsem \mathsf{E}_1\rsem_s (I) \cdots \lsem \mathsf{E}_n\rsem_s (I)$,
      for every $n$-ary function symbol $\mathsf{f}$

\item $\lsem \mathsf{(}\mathsf{E}_1\mathsf{E}_2\mathsf{)} \rsem_s
      (I)= \bigsqcup_{b \in B} (\lsem \mathsf{E}_1 \rsem_s (I)(b))$,
      where $B = {\cal F}_D(type(\mathsf{E}_2))_{[\lsem \mathsf{E}_2\rsem_s(I)]}$

\item $\lsem \mathsf{(\lambda V.E)} \rsem_s (I) =\lambda d.\lsem
      \mathsf{E}\rsem_{s[d/\mathsf{V}]}(I)$, where $d$ ranges over
      ${\cal F}_D(type(\mathsf{V}))$

\item $\lsem (\mathsf{E}_1 \bigvee_{\pi} \mathsf{E}_2)\rsem_s (I) =
      \bigsqcup_{\pi}\{\lsem \mathsf{E}_1\rsem_s(I),\lsem \mathsf{E}_2\rsem_s(I)\}$, where
      $\bigsqcup_{\pi}$ is the least upper bound function on $\lsem \pi
      \rsem_D$

\item $\lsem (\mathsf{E}_1 \bigwedge_{\pi} \mathsf{E}_2)\rsem_s (I) =
      \bigsqcap_{\pi}\{\lsem \mathsf{E}_1\rsem_s(I),\lsem \mathsf{E}_2\rsem_s(I)\}$, where
      $\bigsqcap_{\pi}$ is the greatest lower bound function on $\lsem \pi
      \rsem_D$

\item $\lsem (\mathsf{E}_1 \mathsf{\approx} \mathsf{E}_2)\rsem_s (I) = \left\{\begin{array}{ll}
                                               1, & \mbox{if $\lsem \mathsf{E}_1 \rsem_s
                                                    (I) = \lsem \mathsf{E}_2 \rsem_s
                                                    (I)$}\\
                                               0, & \mbox{otherwise}
                                                   \end{array} \right. $

\item $\lsem (\exists \mathsf{V}\, \mathsf{E}) \rsem_s (I)=
                                 \left\{\begin{array}{ll}
                                      1, & \mbox{if there exists $d \in {\cal F}_D(type(\mathsf{V}))$
                                                 such that $\lsem \mathsf{E}
                                                 \rsem_{s[d/\mathsf{V}]}
                                                 (I)=1$}\\
                                      0, & \mbox{otherwise}
                                         \end{array} \right. $

\item $\lsem (\mathsf{p} \leftarrow_{\pi} \mathsf{E})\rsem_s (I)=
       \left\{\begin{array}{ll}
         1, & \mbox{if $\lsem \mathsf{E} \rsem_s (I)
         \sqsubseteq_{\pi} I(\mathsf{p})$}\\
         0, & \mbox{otherwise}
              \end{array} \right. $

\item $\lsem (\leftarrow \mathsf{E}) \rsem_s (I)= \left\{\begin{array}{ll}
                                               1, & \mbox{if $\lsem \mathsf{E} \rsem_s
                                                    (I) = 0$}\\
                                               0, & \mbox{otherwise}
                                                   \end{array} \right. $

\end{enumerate}
\end{definition}

For closed expressions $\mathsf{E}$ we will often write $\lsem
\mathsf{E} \rsem(I)$ instead of $\lsem \mathsf{E} \rsem_s(I)$
(since, in this case, the meaning of $\mathsf{E}$ is independent of
$s$).

We need to demonstrate that the semantic valuation function $\lsem
\cdot \rsem$ assigns to every expression of ${\cal H}$ an element of
the corresponding semantic domain. More formally, we need to
establish that for every interpretation $I$ with domain $D$, for
every state $s$ over $D$ and for all expressions $\mathsf{E}:\rho$,
it holds that $\lsem \mathsf{E} \rsem_s(I) \in \lsem \rho \rsem_D$.
In order to prove this, the following definition is needed:
\begin{definition}\label{partial-order-on-states}
Let ${\cal S}_{{\cal H},D}$ be the set of states of ${\cal H}$ over
the nonempty set $D$. We define the following partial order on
${\cal S}_{{\cal H},D}$: for all $s_1,s_2 \in {\cal S}_{{\cal
H},D}$, $s_1 \sqsubseteq_{{\cal S}_{{\cal H},D}} s_2$ iff for every
argument variable $\mathsf{V}:\rho$ of ${\cal H}$, $s_1(\mathsf{V})
\sqsubseteq_{\rho} s_2(\mathsf{V})$.
\end{definition}

The following lemma states that
Definition~\ref{definition-semantics-of-expressions} assigns to
expressions elements of the corresponding semantic domain. Notice
that in order to establish this, we must also prove simultaneously
that the meaning of positive expressions is monotonic with respect
to states.
\begin{lemma}\label{monotonicity-of-semantics-wrt-state}
Let $\mathsf{E}:\rho$ be an expression of ${\cal H}$ and let $D$ be
a nonempty set. Moreover, let $s,s_1,s_2$ be states over $D$ and let
$I$ be an interpretation over $D$. Then:
\begin{enumerate}
\item $\lsem \mathsf{E} \rsem_s(I) \in \lsem \rho \rsem_D$.

\item If $\mathsf{E}$ is positive and $s_1 \sqsubseteq_{{\cal S}_{{\cal H},D}} s_2$
      then $\lsem \mathsf{E} \rsem_{s_1}(I) \sqsubseteq_{\rho} \lsem \mathsf{E} \rsem_{s_2}(I)$.
\end{enumerate}
\end{lemma}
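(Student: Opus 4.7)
The plan is to carry out a simultaneous structural induction on $\mathsf{E}$, proving statements 1 and 2 in parallel. A simultaneous induction is essential because statement 1 for a $\lambda$-abstraction over a predicate variable requires that the body's denotation be \emph{monotonic} in the bound variable, and this is precisely statement 2 applied to the body of the abstraction under two states that differ only at the bound variable.

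For the base cases, statement 1 is immediate: predicate/individual constants get their interpretation (which lies in $\lsem\pi\rsem_D$ or $D$ by Definition~\ref{definition-of-interpretation}), the propositional constants go to $\{0,1\}=\lsem o\rsem_D$, and for a variable $\mathsf{V}:\rho$ we have $s(\mathsf{V})\in{\cal F}_D(\rho)\subseteq\lsem\rho\rsem_D$ (the inclusion holds because compact elements of an algebraic lattice are elements of the lattice, and for $\rho=\iota$ the set ${\cal F}_D(\iota)=D=\lsem\iota\rsem_D$). Statement 2 for variables is the defining condition of $\sqsubseteq_{{\cal S}_{{\cal H},D}}$, and for constants the value does not depend on the state.

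For the inductive step I would treat each syntactic construct of Definition~\ref{definition-semantics-of-expressions}. The function application case $(\mathsf{f}\,\mathsf{E}_1\cdots\mathsf{E}_n)$ is trivial since all arguments have type $\iota$ with discrete order. For higher-order application $(\mathsf{E}_1\mathsf{E}_2)$ with $\mathsf{E}_1:\rho\rightarrow\pi$, by IH (statement 1) $\lsem\mathsf{E}_1\rsem_s(I)\in\lsem\rho\rightarrow\pi\rsem_D$, so it accepts every $b\in{\cal F}_D(\rho)$ as input and produces a value in $\lsem\pi\rsem_D$; the lub over $B={\cal F}_D(\rho)_{[\lsem\mathsf{E}_2\rsem_s(I)]}$ then exists in the complete lattice $\lsem\pi\rsem_D$ (Lemma~\ref{pi-is-algebraic-lattice}), giving statement 1. (Note that when $\rho=\iota$ the set $B$ reduces to the singleton $\{\lsem\mathsf{E}_2\rsem_s(I)\}$ because the order on $\iota$ is trivial, so this collapses to ordinary application.) Statement 2 for application follows by combining monotonicity in the state of both $\mathsf{E}_1$ and $\mathsf{E}_2$ (IH) with the intrinsic monotonicity of $\lsem\mathsf{E}_1\rsem_s(I)$ as a member of $[{\cal F}_D(\rho)\stackrel{m}{\rightarrow}\lsem\pi\rsem_D]$, using Proposition~\ref{set-of-sets} to compare the two lubs taken over $B_1\subseteq B_2$.

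The abstraction case $(\lambda\mathsf{V}.\mathsf{E})$ is where the simultaneous nature of the induction matters. For $\mathsf{V}:\iota$ every function from $D$ is admissible, so statement 1 is immediate from the IH applied pointwise. For $\mathsf{V}:\pi'$, we must verify that $d\mapsto\lsem\mathsf{E}\rsem_{s[d/\mathsf{V}]}(I)$ is monotonic on ${\cal K}(\lsem\pi'\rsem_D)$; given $d_1\sqsubseteq d_2$ in that basis, $s[d_1/\mathsf{V}]\sqsubseteq s[d_2/\mathsf{V}]$, so statement~2 of the IH applied to $\mathsf{E}$ (which is positive, as a subexpression of a positive expression) yields the required inequality. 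Statement 2 for the abstraction itself follows by applying statement 2 of the IH pointwise. The remaining connectives are routine: disjunction and conjunction use the monotonicity of binary lubs/glbs; equality uses that the order on $\iota$ is discrete, so the values of $\mathsf{E}_i$ are literally unchanged when the state increases; existential quantification is handled by noting that any $d$ witnessing success at $s_1$ also witnesses success at $s_2$, via IH applied to $s_1[d/\mathsf{V}]\sqsubseteq s_2[d/\mathsf{V}]$. Finally, for clausal expressions only statement 1 is required, and it is immediate from $\lsem o\rsem_D=\{0,1\}$.

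The main obstacle is thus isolating the $\lambda$-abstraction case and verifying that the body's monotonicity in the state, supplied by statement 2 of the induction hypothesis, exactly matches the monotonicity condition built into the semantics of types $\pi_1\rightarrow\pi_2$. A secondary subtlety is carefully separating the $\rho=\iota$ subcase (trivial order, $B$ collapses to a singleton, no monotonicity requirement) from the $\rho=\pi'$ subcase in the application and abstraction clauses.
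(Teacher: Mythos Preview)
Your proposal is correct and follows essentially the same approach as the paper: a simultaneous structural induction on $\mathsf{E}$ in which the $\lambda$-abstraction case for statement~1 (monotonicity of the denoted function) is discharged via statement~2 of the induction hypothesis applied to the body, and the application case for statement~2 combines $B_1\subseteq B_2$ (from the IH on $\mathsf{E}_2$) with the monotonicity of $\lsem\mathsf{E}_1\rsem_{s_i}(I)$ (from statement~1). The paper's proof singles out exactly the same two cases as the interesting ones and treats the remaining constructors as routine, just as you do.
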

The proof of the lemma is given in
Appendix~\ref{appendix-monotonicity-state}.

We can now define the important notion of a \emph{model} of a set of
formulas:
\begin{definition}
Let $S$ be a set of formulas of ${\cal H}$ and let $I$ be an
interpretation of ${\cal H}$. We say that $I$ is a \emph{model} of
$S$ if for every $\mathsf{F} \in S$ and for every state $s$ over the
domain of $I$, $\lsem \mathsf{F} \rsem_s (I) = 1$.
\end{definition}
We close this section with the definitions of the notions of
unsatisfiability and of logical consequence of a set of formulas.
\begin{definition}
Let $S$ be a set of formulas  of ${\cal H}$. We say that $S$ is
\emph{unsatisfiable} if no interpretation of ${\cal H}$ is a model
for $S$.
\end{definition}
\begin{definition}
Let $S$ be a set of formulas and $\mathsf{F}$ a formula of ${\cal
H}$. We say that $\mathsf{F}$ is a \emph{logical consequence} of $S$
if, for every interpretation $I$ of ${\cal H}$, $I$ is a model of
$S$ implies that $I$ is a model of $\mathsf{F}$.
\end{definition}

\subsection{A Comparison with the Continuous
Semantics}\label{comparison-with-Wadge}
In this subsection we give a brief comparison of the proposed
semantics with the semantics introduced in~\cite{Wa91a}. A complete
presentation of such a comparison would require a detailed
presentation of the approach introduced in~\cite{Wa91a} and its
adaptation to the richer language ${\cal H}$ introduced in this
paper. We avoid such an extensive comparison by outlining the main
points in an intuitive way.

As already mentioned, the source language considered in~\cite{Wa91a}
is restricted compared to ${\cal H}$. However, the semantics
of~\cite{Wa91a} can be appropriately extended to apply to ${\cal H}$
as well. Given a non-empty set $D$, let us denote by $\lsem \rho
\rsem_D^*$ the semantics of an argument type $\rho$ in $D$ under the
approach of~\cite{Wa91a}. Then, the semantics of types is defined as
follows:
\begin{itemize}
\item $\lsem \iota \rsem_D^* = D$.
\item $\lsem \iota^n \rightarrow \iota \rsem_D^* = D^n\rightarrow D$.
\item $\lsem o \rsem_D^* = \{0, 1\}$.
\item $\lsem \iota \rightarrow \pi \rsem_D^* = D \rightarrow \lsem \pi
\rsem_D^*$.

\item $\lsem \pi_1 \rightarrow \pi_2 \rsem_D^* =
[\lsem \pi_1 \rsem_D^* \stackrel{c}{\rightarrow} \lsem \pi_2
\rsem_D^*]$.
\end{itemize}
where by $[A \stackrel{c}{\rightarrow} B]$ we denote the set of
continuous functions from $A$ to $B$. The corresponding partial
orders can be easily defined as in
Definition~\ref{semantics-of-types}. The semantics of expressions
can be defined in an analogous way as in
Definition~\ref{definition-semantics-of-expressions}, the main
difference being that the semantics of application is the standard
one. Roughly speaking, one can say that the semantics
of~\cite{Wa91a} is the logic programming analogue of the standard
denotational semantics of functional programming
languages~\cite{Ten91}. In the following, we will refer to the
semantics of~\cite{Wa91a} as the ``{\em continuous semantics}''.

It is relatively easy to show that for every argument type $\rho$ of
${\cal H}$ there is a bijection between the sets $\lsem \rho
\rsem_D$ and $\lsem \rho \rsem_D^*$. Similarly, there is a bijection
between the set of interpretations of ${\cal H}$ under the proposed
semantics and the set of interpretations of ${\cal H}$ under the
continuous semantics. Then, the following proposition can be
established:
\begin{proposition}
Let $\mathsf{P}$ be a program and let $\mathsf{F}$ be a formula of
${\cal H}$. Then, $\mathsf{F}$ is a logical consequence of
$\mathsf{P}$ under the proposed semantics iff $\mathsf{F}$ is a
logical consequence of $\mathsf{P}$ under the continuous semantics.
\end{proposition}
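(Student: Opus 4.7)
The plan is to upgrade the claimed bijections into a full semantic isomorphism: for every type $\rho$, define an explicit bijection $\phi_\rho : \lsem \rho \rsem_D \to \lsem \rho \rsem_D^*$ which lifts to a bijection $\Phi$ on interpretations, and then show by induction on the structure of expressions that $\lsem \mathsf{E} \rsem_s(I)$ and $\lsem \mathsf{E} \rsem_{s^*}^*(I^*)$ correspond under $\phi_{type(\mathsf{E})}$. Once this semantic equivalence is established, the biconditional follows from the definitions of model and logical consequence.

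First I would construct the bijections $\phi_\rho$ by induction on $\rho$. For $\rho = \iota$ and $\rho = o$, take $\phi_\rho$ to be the identity. For $\rho = \iota \to \pi$, define $\phi_{\iota \to \pi}(f)(d) = \phi_\pi(f(d))$. The interesting case is $\rho = \pi_1 \to \pi_2$: given $f \in [{\cal K}(\lsem \pi_1 \rsem_D) \stackrel{m}{\to} \lsem \pi_2 \rsem_D]$, define
\[
\phi_{\pi_1 \to \pi_2}(f)(y) \;=\; \phi_{\pi_2}\!\left(\bigsqcup\nolimits_{c \in {\cal K}(\lsem \pi_1 \rsem_D)_{[\phi_{\pi_1}^{-1}(y)]}} f(c)\right),
\]
and conversely, for continuous $g$, define $\phi_{\pi_1 \to \pi_2}^{-1}(g)(c) = \phi_{\pi_2}^{-1}(g(\phi_{\pi_1}(c)))$. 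Using Proposition~\ref{characterization-of-al} (every element of an algebraic lattice is the lub of its compact approximations) together with continuity, one verifies that these maps are mutually inverse order-isomorphisms. This lifts pointwise to a bijection $\Phi$ between interpretations: for each constant and function symbol one applies the appropriate $\phi$. Similarly, states correspond: a state $s$ maps to $s^*$ by setting $s^*(\mathsf{V}) = \phi_\rho(s(\mathsf{V}))$ when $\mathsf{V}:\rho$ (noting that basic elements of argument type are either members of $D$ or compact elements, and both embed into $\lsem \rho \rsem_D^*$ via $\phi_\rho$).

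Next I would prove the key semantic correspondence lemma: for every expression $\mathsf{E}:\tau$, state $s$, and interpretation $I$,
\[
\phi_\tau\bigl(\lsem \mathsf{E} \rsem_s(I)\bigr) \;=\; \lsem \mathsf{E} \rsem^{*}_{s^*}(I^*),
\]
by induction on the structure of $\mathsf{E}$. Most cases (variables, constants, $\bigvee_\pi$, $\bigwedge_\pi$, $\approx$, $\exists$, $\lambda$) are routine once the base bijections are in place. The critical case is application $(\mathsf{E}_1 \mathsf{E}_2)$ of type $\pi$, where the proposed semantics takes a lub over the compact approximations of $\lsem \mathsf{E}_2 \rsem_s(I)$, while the continuous semantics evaluates standardly. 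Here the definition of $\phi_{\pi_1 \to \pi_2}$ above is precisely designed so that
\[
\phi_\pi\!\left(\bigsqcup\nolimits_{b \in B} \lsem \mathsf{E}_1 \rsem_s(I)(b)\right) \;=\; \phi_{\pi_1 \to \pi_2}(\lsem \mathsf{E}_1 \rsem_s(I))\bigl(\phi_{\pi_1}(\lsem \mathsf{E}_2 \rsem_s(I))\bigr),
\]
and the induction hypothesis on $\mathsf{E}_1$ and $\mathsf{E}_2$ closes the case. I expect this to be the main obstacle, because it requires delicately aligning the non-standard application clause of Definition~\ref{definition-semantics-of-expressions} with standard function application, and requires that $\phi$ preserves least upper bounds of directed sets of compacts.

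Finally, specializing to closed formulas $\mathsf{F}$ of type $o$, where $\phi_o$ is the identity, we get $\lsem \mathsf{F} \rsem(I) = \lsem \mathsf{F} \rsem^*(I^*)$. Since $\Phi$ is a bijection on interpretations and states correspond bijectively, $I$ is a model of $\mathsf{P}$ under the proposed semantics iff $I^*$ is a model of $\mathsf{P}$ under the continuous semantics, and analogously for $\mathsf{F}$. The biconditional for logical consequence is then immediate: $\mathsf{P} \models \mathsf{F}$ under one semantics forces it under the other by transporting any would-be counterexample interpretation through $\Phi$ (or $\Phi^{-1}$).
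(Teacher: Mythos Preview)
Your plan is precisely the detailed execution of what the paper only gestures at: the paper itself does not prove this proposition, it merely remarks that ``for every argument type $\rho$ \ldots\ there is a bijection between the sets $\lsem \rho \rsem_D$ and $\lsem \rho \rsem_D^*$'' and between interpretations, and says the result ``can be established.'' Your explicit construction of $\phi_\rho$ (extending a monotonic function on compacts to a continuous function by the formula $y \mapsto \bigsqcup_{c\,\text{compact},\,c \le \phi^{-1}(y)} f(c)$, with inverse by restriction) and your structural-induction correspondence lemma are exactly the right way to fill this in, and the application case is indeed where the two semantics are reconciled.

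There is one imprecision worth flagging. You write that ``states correspond bijectively,'' but in the proposed semantics a state assigns to a predicate variable an element of ${\cal K}(\lsem \pi \rsem_D)$, whereas in the continuous semantics states range over all of $\lsem \pi \rsem_D^*$. Your map $s \mapsto s^*$ is therefore only an embedding. This matters for the direction ``proposed $\Rightarrow$ continuous'': once you know $\lsem \mathsf{F} \rsem^*_{s^*}(I^*)=1$ for every $s^*$ in the image of the embedding, you still need it for arbitrary $s^*$. For positive formulas this follows from monotonicity of $\lsem \cdot \rsem^*$ in the state (compare any $s^*$ with the compact-valued state agreeing on individual variables and sending predicate variables to $\bot$). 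For goal clauses $\leftarrow \mathsf{E}$ you need that $\lsem \mathsf{E} \rsem^*_{s^*}(I^*)=0$ for all $s^*$, which follows from \emph{continuity} of $\lsem \mathsf{E} \rsem^*$ in the state together with algebraicity of $\lsem \pi \rsem_D^*$. Both facts are standard for the continuous semantics, but you should state and use them rather than appeal to a bijection that does not exist.
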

In other words, the two semantics, despite their differences, are
closely related. The key advantage of the proposed semantics is that
it is much closer to the SLD-resolution proof procedure that will be
introduced in Section~\ref{proof-procedure-section}. More
specifically:
\begin{itemize}
\item The compact elements of our algebraic lattices correspond to
      the {\em basic expressions} that are a vital characteristic of the
      proposed proof procedure (see
      Subsection~\ref{subsection-basic-expressions}).

\item The notion of {\em answer} and {\em correct answer} for a query
      (see Definitions~\ref{answer} and~\ref{correct-answer}) can
      now be accurately defined. Notice that the notion of correct
      answer must be quite close to that of {\em computed answer} in
      order to be able to state the main completeness theorem.

\end{itemize}
In conclusion, the proposed semantics allows us to define an
SLD-resolution proof procedure and it helps us formalize and prove
its completeness. It is unclear to us whether (and how) this could
have been accomplished by relying on the continuous semantics.

\section{Minimum Herbrand Model Semantics}\label{min-Herbrand}
Herbrand interpretations constitute a special form of
interpretations that have proven to be a cornerstone of first-order
logic programming. Analogously, we have:
\begin{definition}
The Herbrand universe $U_{\cal H}$ of ${\cal H}$ is the set of all
terms that can be formed out of the individual constants and the
function symbols of ${\cal H}$.
\end{definition}
\begin{definition}
A Herbrand interpretation $I$ of ${\cal H}$ is an interpretation
such that:
\begin{enumerate}
\item The domain of $I$ is the Herbrand universe $U_{\cal H}$ of ${\cal
      H}$.

\item For every individual constant $\mathsf{c}$, $I(\mathsf{c}) =
      \mathsf{c}$.

\item For every predicate constant $\mathsf{p}$ of type $\pi$,
      $I(\mathsf{p}) \in \lsem \pi \rsem_{U_{\cal H}}$.

\item For every $n$-ary function symbol $\mathsf{f}$ and all terms
      $\mathsf{t}_1,\ldots,\mathsf{t}_n$ of $U_{\cal H}$,
      $I(\mathsf{f})\, \mathsf{t}_1 \cdots \mathsf{t}_n =
      \mathsf{f}\, \mathsf{t}_1 \cdots \mathsf{t}_n$.
\end{enumerate}
\end{definition}
Since all Herbrand interpretations have the same underlying domain,
we will often refer to a ``Herbrand state $s$'', meaning a state
whose underlying domain is $U_{\cal H}$. As it is a standard
practice in logic programming, we will often refer to an
``interpretation of a set of formulas $S$'' rather than of the
underlying language ${\cal H}$. In this case, we will implicitly
assume that the set of individual constants and function symbols are
those that appear in $S$. Under this assumption, we will often talk
about the ``Herbrand universe $U_S$ of a set of formulas $S$''.

We should also note that since the Herbrand universe is a countable
set, by Lemma~\ref{pi-is-algebraic-lattice}, for every predicate
type $\pi$, $\lsem \pi \rsem_{U_{\cal H}}$ is an $\omega$-algebraic
lattice (ie., it has a countable basis).

We can now proceed to examine properties of Herbrand
interpretations. In the following we denote the set of Herbrand
interpretations of a program $\mathsf{P}$ with ${\cal
I}_{\mathsf{P}}$ .
\begin{definition}
Let $\mathsf{P}$ be a program. We define the following partial order
on ${\cal I}_{\mathsf{P}}$: for all $I,J \in {\cal I}_\mathsf{P}$,
$I \sqsubseteq_{{\cal I}_\mathsf{P}} J$ iff for every $\pi$ and for
every predicate constant $\mathsf{p}:\pi$ of $\mathsf{P}$,
$I(\mathsf{p}) \sqsubseteq_{\pi} J(\mathsf{p})$.
\end{definition}
\begin{lemma}\label{herbrand-interpretations-make-a-lattice}
Let $\mathsf{P}$ be a program and let ${\cal I} \subseteq {\cal
I}_{\mathsf{P}}$. Then, for every predicate $\mathsf{p}$ of
$\mathsf{P}$, $(\bigsqcup {\cal I})(\mathsf{p}) = \bigsqcup_{I\in
{\cal I}} I(\mathsf{p})$ and $(\bigsqcap {\cal I})(\mathsf{p}) =
\bigsqcap_{I\in {\cal I}} I(\mathsf{p})$. Therefore, ${\cal
I}_\mathsf{P}$ is a complete lattice under $\sqsubseteq_{{\cal
I}_\mathsf{P}}$.
\end{lemma}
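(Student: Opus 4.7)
The plan is to define the candidate least upper bound $J$ pointwise on predicate constants and then verify (i) that $J$ is a legitimate Herbrand interpretation, (ii) that it is an upper bound of ${\cal I}$, and (iii) that it is the least one. Concretely, for each predicate constant $\mathsf{p}:\pi$ of $\mathsf{P}$, set
\[
J(\mathsf{p}) \;=\; \bigsqcup_{I \in {\cal I}} I(\mathsf{p}),
\]
where the least upper bound is taken inside $\lsem \pi \rsem_{U_{\cal H}}$. This is well-defined because, by Lemma~\ref{pi-is-algebraic-lattice}, $\lsem \pi \rsem_{U_{\cal H}}$ is an algebraic lattice, hence in particular a complete lattice, so the lub of the subset $\{I(\mathsf{p}) : I \in {\cal I}\}$ exists. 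On individual constants and function symbols, $J$ is forced to agree with the Herbrand convention, so $J \in {\cal I}_\mathsf{P}$.

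To check that $J$ is an upper bound, fix any $I_0 \in {\cal I}$ and any predicate constant $\mathsf{p}:\pi$; then $I_0(\mathsf{p}) \sqsubseteq_\pi \bigsqcup_{I \in {\cal I}} I(\mathsf{p}) = J(\mathsf{p})$ by definition of the type-level lub, and hence $I_0 \sqsubseteq_{{\cal I}_\mathsf{P}} J$ by the definition of the order on ${\cal I}_\mathsf{P}$. For minimality, suppose $K \in {\cal I}_\mathsf{P}$ is another upper bound of ${\cal I}$. Then for every $\mathsf{p}:\pi$ and every $I \in {\cal I}$ we have $I(\mathsf{p}) \sqsubseteq_\pi K(\mathsf{p})$, so $K(\mathsf{p})$ is an upper bound of $\{I(\mathsf{p}) : I \in {\cal I}\}$ in $\lsem \pi \rsem_{U_{\cal H}}$. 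By the universal property of the lub at type $\pi$, $J(\mathsf{p}) \sqsubseteq_\pi K(\mathsf{p})$ for every $\mathsf{p}$, i.e., $J \sqsubseteq_{{\cal I}_\mathsf{P}} K$. This establishes the formula $(\bigsqcup {\cal I})(\mathsf{p}) = \bigsqcup_{I \in {\cal I}} I(\mathsf{p})$.

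The statement about $\bigsqcap$ follows by an entirely symmetric pointwise argument, again using that each $\lsem \pi \rsem_{U_{\cal H}}$ is a complete lattice and hence carries all glbs. Since every subset of ${\cal I}_\mathsf{P}$ has a least upper bound under $\sqsubseteq_{{\cal I}_\mathsf{P}}$ (equivalently, a greatest lower bound), ${\cal I}_\mathsf{P}$ is a complete lattice, as announced. I do not anticipate a genuine obstacle here: the argument is a routine ``product of complete lattices indexed by predicate constants'' construction, mirroring Proposition~\ref{monotonic-functions-make-a-complete-lattice} but without any monotonicity condition since the index set (predicate constants of $\mathsf{P}$) carries no nontrivial order. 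The only subtlety worth stating explicitly is the appeal to Lemma~\ref{pi-is-algebraic-lattice} to ensure that pointwise lubs and glbs actually exist at each predicate type.
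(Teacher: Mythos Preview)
Your proof is correct and follows essentially the same approach as the paper's: define the candidate $J$ pointwise via $J(\mathsf{p}) = \bigsqcup_{I\in{\cal I}} I(\mathsf{p})$, invoke completeness of $\lsem \pi \rsem_{U_{\mathsf{P}}}$ to ensure this is well-defined, then verify upper-boundedness and minimality directly from the definitions. The paper's proof is slightly terser (it does not explicitly cite Lemma~\ref{pi-is-algebraic-lattice} nor mention the Herbrand conventions for constants and function symbols), but the argument is the same.
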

\begin{proof}
We give the proof for $\bigsqcup$; the proof for $\bigsqcap$ is
symmetrical and omitted. Let $J\in {\cal I}_\mathsf{P}$ such that
for every $\mathsf{p}:\pi$ in $\mathsf{P}$, $J(\mathsf{p})=
\bigsqcup_{I\in {\cal I}} I(\mathsf{p})$. Notice that
$\bigsqcup_{I\in {\cal I}} I(\mathsf{p})$ is well-defined since
$\lsem \pi\rsem_{U_{\mathsf{P}}}$ is a complete lattice. Notice also
that $J$ is an upper-bound for ${\cal I}$ because for every $I \in
{\cal I}$, $I \sqsubseteq_{{\cal I}_{\mathsf{P}}} J$. Let $J'$ be an
arbitrary upper bound of ${\cal I}$. Then, for every
$\mathsf{p}:\pi$, it holds that $J'(\mathsf{p})$ is an upper bound
of $\{I(\mathsf{p}) \mid I \in {\cal I}\}$, and therefore
$\bigsqcup_{I \in {\cal I}}I(\mathsf{p}) \sqsubseteq_{\pi}
J'(\mathsf{p})$, which implies that $J \sqsubseteq_{{\cal
I}_{\mathsf{P}}} J'$.\qed
\end{proof}
In the following we denote with $\perp_{{\cal I}_\mathsf{P}}$ the
greatest lower bound of ${\cal I}_\mathsf{P}$, ie., the
interpretation which for every $\pi$, assigns to each predicate
$\mathsf{p}:\pi$ of $\mathsf{P}$ the element $\perp_{\pi}$.

The properties of monotonicity and continuity of the semantic
valuation function will prove vital:
\begin{lemma}[Monotonicity of Semantics]\label{monotonicity-of-semantics}
Let $\mathsf{P}$ be a program and let $\mathsf{E}:\rho$ be a
positive expression of $\mathsf{P}$. Let $I,J$ be Herbrand
interpretations and $s$ a Herbrand state of $\mathsf{P}$. If $I
\sqsubseteq_{{\cal I}_\mathsf{P}} J$ then $\lsem \mathsf{E}
\rsem_s(I) \sqsubseteq_{\rho} \lsem \mathsf{E} \rsem_s(J)$.
\end{lemma}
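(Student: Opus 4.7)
The natural plan is a straightforward structural induction on the positive expression $\mathsf{E}$, mirroring the clauses of Definition~\ref{definition-semantics-of-expressions}. The well-definedness of all the semantic quantities involved is already guaranteed by Lemma~\ref{monotonicity-of-semantics-wrt-state}, so the only work is verifying monotonicity in the interpretation at each constructor. For the atomic cases, $\mathsf{0}$, $\mathsf{1}$ and individual constants are independent of $I$, argument variables depend only on $s$, and for a predicate constant $\mathsf{p}:\pi$ the claim $\lsem \mathsf{p}\rsem_s(I) = I(\mathsf{p}) \sqsubseteq_\pi J(\mathsf{p}) = \lsem \mathsf{p}\rsem_s(J)$ is just the definition of $\sqsubseteq_{{\cal I}_\mathsf{P}}$.

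For expressions of type $\iota$, namely terms built from function symbols, the values depend only on $s$ and on the Herbrand-fixed interpretations of function symbols and constants, so $\lsem \mathsf{E}\rsem_s(I) = \lsem \mathsf{E}\rsem_s(J)$, and the trivial order $\sqsubseteq_\iota$ gives the conclusion. The same observation handles equality: since both sides of $\mathsf{E}_1 \approx \mathsf{E}_2$ are terms, their $I$- and $J$-meanings coincide, and therefore so do the truth values.

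The interesting case is application $(\mathsf{E}_1\mathsf{E}_2)$, where the \emph{set} over which the least upper bound is taken depends on the interpretation. If $\mathsf{E}_2$ has type $\iota$, the previous paragraph collapses the two lubs to the same indexing set, and the induction hypothesis on $\mathsf{E}_1$ (applied pointwise at that one argument) finishes the case. If $\mathsf{E}_2$ has predicate type $\pi_2$, I would argue as follows: by the induction hypothesis, $\lsem \mathsf{E}_2\rsem_s(I) \sqsubseteq_{\pi_2} \lsem \mathsf{E}_2\rsem_s(J)$, so by transitivity
\[
B_I := {\cal F}_D(\pi_2)_{[\lsem \mathsf{E}_2\rsem_s(I)]} \;\subseteq\; {\cal F}_D(\pi_2)_{[\lsem \mathsf{E}_2\rsem_s(J)]} =: B_J.
\]
For each $b \in B_I$ we have $b \in B_J$, hence $\lsem \mathsf{E}_1\rsem_s(I)(b) \sqsubseteq \lsem \mathsf{E}_1\rsem_s(J)(b)$ by the induction hypothesis on $\mathsf{E}_1$, and this is bounded above by $\bigsqcup_{b' \in B_J} \lsem \mathsf{E}_1\rsem_s(J)(b')$. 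Taking the lub over $b \in B_I$ gives the desired inequality. This is the step I expect to be the main obstacle, precisely because both the function and the indexing set of the lub change when $I$ is replaced by $J$; the rescue is that the set can only grow, while the function can only grow pointwise.

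The remaining cases are routine. For $(\lambda \mathsf{V}.\mathsf{E})$ the conclusion reduces pointwise over $d \in {\cal F}_D(type(\mathsf{V}))$ to the induction hypothesis applied to $\mathsf{E}$ under the state $s[d/\mathsf{V}]$. For $(\mathsf{E}_1 \bigvee_\pi \mathsf{E}_2)$ and $(\mathsf{E}_1 \bigwedge_\pi \mathsf{E}_2)$ we use the induction hypothesis componentwise together with the fact that $\bigsqcup_\pi$ and $\bigsqcap_\pi$ on $\lsem \pi\rsem_{U_\mathsf{P}}$ are monotone in both arguments. For $(\exists \mathsf{V}\,\mathsf{E})$, if $\lsem (\exists \mathsf{V}\,\mathsf{E})\rsem_s(I)=1$ then some witness $d \in {\cal F}_D(type(\mathsf{V}))$ satisfies $\lsem \mathsf{E}\rsem_{s[d/\mathsf{V}]}(I)=1$, whence by the induction hypothesis $\lsem \mathsf{E}\rsem_{s[d/\mathsf{V}]}(J)=1$, and the same $d$ witnesses the existential under $J$. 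This exhausts the constructors of a positive expression and completes the induction.
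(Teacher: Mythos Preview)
Your proposal is correct and follows essentially the same structural induction as the paper's proof in Appendix~\ref{appendix-monotonicity}: the paper also singles out application and lambda abstraction as the only nontrivial cases, and your treatment of application (the indexing set can only grow from $B_I$ to $B_J$ while the function increases pointwise by the induction hypothesis on $\mathsf{E}_1$) is exactly the paper's argument. You are simply a bit more explicit about the $\iota$-typed subcases (terms, equality, and the argument of an application), which the paper absorbs into ``straightforward'' or ``omitted''.
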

The proof of the lemma is given in
Appendix~\ref{appendix-monotonicity}.
\begin{lemma}[Continuity of Semantics]\label{continuity-of-semantics}
Let $\mathsf{P}$ be a program and let $\mathsf{E}$ be a positive
expression of $\mathsf{P}$. Let ${\cal I}$ be a directed set of
Herbrand interpretations and $s$ a Herbrand state of $\mathsf{P}$.
Then, $\lsem \mathsf{E} \rsem_s (\bigsqcup {\cal I}) = \bigsqcup_{I
\in {\cal I}} \lsem \mathsf{E} \rsem_s (I)$.
\end{lemma}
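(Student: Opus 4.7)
The plan is to proceed by structural induction on the positive expression $\mathsf{E}$, carried out uniformly in all Herbrand states $s$. Most of the cases are essentially bookkeeping. The propositional constants $\mathsf{0},\mathsf{1}$, individual constants and variables, terms built from function symbols, and equalities $\mathsf{E}_1\approx\mathsf{E}_2$ have meanings that do not depend on $I$ at all in the Herbrand setting, so $\lsem \mathsf{E}\rsem_s(\bigsqcup {\cal I})$ is literally constant over $I$ and trivially equal to the directed join. For a predicate constant $\mathsf{p}$ the claim is exactly the first part of Lemma~\ref{herbrand-interpretations-make-a-lattice}.

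The cases for $\lambda$-abstraction, $\bigvee_\pi$, $\bigwedge_\pi$ and $\exists$ follow from standard lattice-theoretic calculations together with the induction hypothesis. For $(\lambda \mathsf{V}.\mathsf{E})$, Proposition~\ref{monotonic-functions-make-a-complete-lattice} says joins in the function space are pointwise, so the induction hypothesis at each extended state $s[d/\mathsf{V}]$ gives the result. For $\bigvee_\pi$ and $\bigwedge_\pi$, since all semantic predicate types reduce pointwise to the two-element lattice $\{0,1\}$, the question comes down to the fact that $\max$ and $\min$ on $\{0,1\}$ commute with directed joins; for $\min$ one uses the directedness of ${\cal I}$ together with Lemma~\ref{monotonicity-of-semantics} to find one index in ${\cal I}$ that simultaneously witnesses both conjuncts. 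For $\exists \mathsf{V}\,\mathsf{E}$, the equation reads off immediately after swapping the $0/1$-valued existential over $d \in {\cal F}_{U_{\cal H}}(type(\mathsf{V}))$ with the directed join over $I$ provided by the induction hypothesis.

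The main obstacle is the application case $(\mathsf{E}_1\mathsf{E}_2)$, because the index set $B_I = {\cal F}_{U_{\cal H}}(type(\mathsf{E}_2))_{[\lsem \mathsf{E}_2\rsem_s(I)]}$ over which the join defining application is taken itself grows with $I$. Write $u = \lsem \mathsf{E}_1\rsem_s(\bigsqcup {\cal I})$ and $v = \lsem \mathsf{E}_2\rsem_s(\bigsqcup {\cal I})$; by the induction hypothesis (combined with Proposition~\ref{monotonic-functions-make-a-complete-lattice} to evaluate joins in the function space pointwise) these equal $\bigsqcup_{I\in{\cal I}} \lsem \mathsf{E}_1\rsem_s(I)$ and $\bigsqcup_{I\in{\cal I}} \lsem \mathsf{E}_2\rsem_s(I)$ respectively. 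The direction $\sqsupseteq$ is easy: any $b' \in B_I$ satisfies $b' \sqsubseteq \lsem \mathsf{E}_2\rsem_s(I) \sqsubseteq v$ by Lemma~\ref{monotonicity-of-semantics}, so $b'$ is admissible for the join defining $u(v)$, and $\lsem \mathsf{E}_1\rsem_s(I)(b') \sqsubseteq u(b')$ again by monotonicity.

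For the reverse direction, fix a basic element $b \sqsubseteq v$. If $type(\mathsf{E}_2) = \iota$ then $\lsem \mathsf{E}_2\rsem_s$ is already independent of $I$ and the equation is immediate, so the interesting subcase is when $b$ is a compact element of $\lsem type(\mathsf{E}_2)\rsem_{U_{\cal H}}$. The family $\{\lsem \mathsf{E}_2\rsem_s(I)\}_{I \in {\cal I}}$ is directed (by directedness of ${\cal I}$ and Lemma~\ref{monotonicity-of-semantics}) with join $v$, so compactness of $b$ yields a finite subfamily bounding $b$, and a second application of directedness produces a single $I^* \in {\cal I}$ with $b \sqsubseteq \lsem \mathsf{E}_2\rsem_s(I^*)$. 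Now $u(b) = \bigsqcup_I \lsem \mathsf{E}_1\rsem_s(I)(b)$; for each $I$ pick (by directedness a third time) $I' \sqsupseteq I, I^*$, so that $b \in B_{I'}$ and $\lsem \mathsf{E}_1\rsem_s(I)(b) \sqsubseteq \lsem \mathsf{E}_1\rsem_s(I')(b)$ lies inside the inner join at index $I'$ on the right-hand side. Joining over $I$ bounds $u(b)$ by the right-hand side, and finally joining over $b$ closes the induction.
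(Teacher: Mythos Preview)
Your proof is correct and follows essentially the same structural-induction skeleton as the paper's proof, with the same split into an easy $\sqsupseteq$ direction via monotonicity and a case analysis for $\sqsubseteq$. The only noteworthy presentational difference is in the application case: where the paper packages the final collapse of the double-indexed join via Proposition~\ref{diagonal} (the diagonal lemma), you instead invoke directedness a third time to find a common upper bound $I'\sqsupseteq I,I^*$ and argue elementwise; this is exactly what the diagonal lemma is abstracting, so the two arguments are interchangeable.
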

The proof of the lemma is given in
Appendix~\ref{appendix-continuity}.

All the basic properties of first-order logic programming
extend naturally to the higher-order case:
\begin{theorem}[Model Intersection Theorem]
Let $\mathsf{P}$ be a program and ${\cal M}$ a non-empty set of
Herbrand models of $\mathsf{P}$. Then, $\bigsqcap {\cal M}$ is a
Herbrand model for $\mathsf{P}$.
\end{theorem}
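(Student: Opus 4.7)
The plan is to use the partial order structure established in Lemma~\ref{herbrand-interpretations-make-a-lattice} together with the Monotonicity of Semantics lemma (Lemma~\ref{monotonicity-of-semantics}). Let $J = \bigsqcap {\cal M}$, which is a well-defined Herbrand interpretation by Lemma~\ref{herbrand-interpretations-make-a-lattice}, with $J(\mathsf{p}) = \bigsqcap_{I \in {\cal M}} I(\mathsf{p})$ for each predicate constant $\mathsf{p}$ of $\mathsf{P}$. It suffices to show that for every program clause $\mathsf{p} \leftarrow_{\pi} \mathsf{E}$ in $\mathsf{P}$ and every Herbrand state $s$, we have $\lsem \mathsf{p} \leftarrow_{\pi} \mathsf{E} \rsem_s(J) = 1$, which by clause~(13) of Definition~\ref{definition-semantics-of-expressions} is equivalent to $\lsem \mathsf{E} \rsem_s(J) \sqsubseteq_{\pi} J(\mathsf{p})$.

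First, I would observe that for every $I \in {\cal M}$, by definition of greatest lower bound, $J \sqsubseteq_{{\cal I}_{\mathsf{P}}} I$. Since $\mathsf{E}$ is a (closed) positive expression (by Definition~\ref{clausal-expressions-definition}), Lemma~\ref{monotonicity-of-semantics} applies and yields
\[
\lsem \mathsf{E} \rsem_s(J) \sqsubseteq_{\pi} \lsem \mathsf{E} \rsem_s(I)
\]
for every $I \in {\cal M}$. Next, since each $I \in {\cal M}$ is a model of $\mathsf{P}$, the clause $\mathsf{p} \leftarrow_{\pi} \mathsf{E}$ is satisfied by $I$, i.e.\ $\lsem \mathsf{E} \rsem_s(I) \sqsubseteq_{\pi} I(\mathsf{p})$. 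Transitivity of $\sqsubseteq_{\pi}$ then gives $\lsem \mathsf{E} \rsem_s(J) \sqsubseteq_{\pi} I(\mathsf{p})$ for every $I \in {\cal M}$, so $\lsem \mathsf{E} \rsem_s(J)$ is a lower bound of $\{I(\mathsf{p}) \mid I \in {\cal M}\}$.

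Finally, by the characterization of the glb in Lemma~\ref{herbrand-interpretations-make-a-lattice}, $J(\mathsf{p}) = \bigsqcap_{I \in {\cal M}} I(\mathsf{p})$ is the greatest such lower bound, which yields $\lsem \mathsf{E} \rsem_s(J) \sqsubseteq_{\pi} J(\mathsf{p})$, as required. The argument does not require the set ${\cal M}$ to be directed or otherwise structured beyond being non-empty (non-emptiness is only used implicitly to ensure we are not talking about $\bigsqcap \emptyset$, which would be the top element; the statement is phrased this way presumably to match the classical first-order formulation). There is no real obstacle here: the only subtlety is making sure the monotonicity lemma applies, which it does because clause bodies are positive expressions by construction, so the inverse implication operator $\leftarrow_{\pi}$ (whose semantics is not monotonic, as noted after Definition~\ref{semantics-of-types}) does not occur inside $\mathsf{E}$.
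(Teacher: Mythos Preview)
Your proof is correct and uses essentially the same ingredients as the paper: Lemma~\ref{herbrand-interpretations-make-a-lattice} for the well-definedness and pointwise characterization of $\bigsqcap{\cal M}$, and Lemma~\ref{monotonicity-of-semantics} for monotonicity of $\lsem \mathsf{E}\rsem_s$ in the interpretation. The only difference is presentational: the paper argues by contradiction and unpacks the order $\sqsubseteq_\pi$ pointwise (evaluating at $b_1,\ldots,b_n$ and invoking Proposition~\ref{monotonic-functions-make-a-complete-lattice}), whereas you argue directly at the level of the partial order using the defining property of the greatest lower bound; your version is slightly cleaner but not materially different.
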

\begin{proof}
By Lemma~\ref{herbrand-interpretations-make-a-lattice}, $\bigsqcap
{\cal M}$ is well-defined. Assume that $\bigsqcap {\cal M}$ is not a
model for $\mathsf{P}$. Then, there exists a rule $\mathsf{p}
\leftarrow_{\pi} \mathsf{E}$ in $\mathsf{P}$ and $b_1,\ldots,b_n$ of
the appropriate types such that $(\bigsqcap {\cal
M})(\mathsf{p})\,\,b_1\cdots b_n = 0$ while $\lsem \mathsf{E} \rsem
(\bigsqcap {\cal M})\, b_1 \cdots b_n = 1$. Since for every $M \in
{\cal M}$ we have $\bigsqcap {\cal M} \sqsubseteq M$, using
Lemma~\ref {monotonicity-of-semantics} we conclude that for all $M
\in {\cal M}$, $\lsem \mathsf{E} \rsem (M)\, b_1 \cdots b_n = 1$.
Moreover, since $(\bigsqcap {\cal M})(\mathsf{p})\,\,b_1\cdots b_n =
0$, by Lemma~\ref{herbrand-interpretations-make-a-lattice} we get
that  $(\bigsqcap {\cal M}(\mathsf{p}))\,\,b_1\cdots b_n = 0$. By
Proposition~\ref{monotonic-functions-make-a-complete-lattice} we
conclude that for some $M \in {\cal M}$, $M
(\mathsf{p})\,\,b_1\cdots b_n = 0$. But then there exists $M \in
{\cal M}$ that does not satisfy the rule $\mathsf{p}
\leftarrow_{\pi} \mathsf{E}$, and therefore is not a model of
$\mathsf{P}$ (contradiction).\qed
\end{proof}

It is straightforward to check that every higher-order program
$\mathsf{P}$ has at least one Herbrand model $I$, namely the one
which for every predicate constant $\mathsf{p}$ of $\mathsf{P}$ and
for all basic elements $b_1,\ldots,b_n$ of the appropriate types,
$I(\mathsf{p})\,b_1 \cdots b_n = 1$. Notice that this model
generalizes the familiar idea of ``Herbrand Base'' that is used in
the theory of first-order logic programming.

Since the set of models of a higher-order logic program is
non-empty, the intersection ($\mathit{glb}$) of all Herbrand models
is well-defined, and by the above theorem is a model of the program.
We will denote this model by $M_{\mathsf{P}}$.

\begin{definition}
Let $\mathsf{P}$ be a program. The mapping $T_{\mathsf{P}}: {\cal
I}_\mathsf{P} \rightarrow {\cal I}_\mathsf{P}$ is defined as follows
for every $\mathsf{p}:\pi$ in $\mathsf{P}$ and for every $I \in
{\cal I}_\mathsf{P}$:
$$T_{\mathsf{P}}(I)(\mathsf{p}) = \bigsqcup_{(\mathsf{p} \leftarrow_{\pi}
\mathsf{E}) \in \mathsf{P}}\lsem \mathsf{E}\rsem (I)$$
The mapping $T_{\mathsf{P}}$ will be called the \emph{immediate
consequence operator} for $\mathsf{P}$.
\end{definition}
The fact that $T_{\mathsf{P}}$ is well-defined is verified by the
following lemma:
\begin{lemma}
Let $\mathsf{P}$ be a program and let $\mathsf{p}:\pi$ be a
predicate constant of $\mathsf{P}$. Then, for every $I \in {\cal
I}_{\mathsf{P}}$, $T_{\mathsf{P}}(I)(\mathsf{p}) \in \lsem \pi
\rsem_{U_{\mathsf{P}}}$.
\end{lemma}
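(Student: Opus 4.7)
The plan is to unpack the definition of $T_{\mathsf{P}}(I)(\mathsf{p})$ and show that each summand of the displayed least upper bound lies in $\lsem \pi \rsem_{U_{\mathsf{P}}}$, and then invoke completeness of that lattice to conclude the lub itself lies in it.

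First I would observe that for each program clause $\mathsf{p} \leftarrow_{\pi} \mathsf{E}$ in $\mathsf{P}$, by Definition~\ref{clausal-expressions-definition} the body $\mathsf{E}$ is a closed positive expression of type $\pi$. Therefore, applying Lemma~\ref{monotonicity-of-semantics-wrt-state}(1) with any Herbrand state $s$ over $U_{\mathsf{P}}$, we obtain $\lsem \mathsf{E} \rsem_s(I) \in \lsem \pi \rsem_{U_{\mathsf{P}}}$; moreover, since $\mathsf{E}$ is closed this value is independent of $s$, so writing $\lsem \mathsf{E} \rsem(I)$ is justified.

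Next, by Lemma~\ref{pi-is-algebraic-lattice}, $\lsem \pi \rsem_{U_{\mathsf{P}}}$ is an algebraic lattice, in particular a complete lattice, so arbitrary least upper bounds of subsets of $\lsem \pi \rsem_{U_{\mathsf{P}}}$ exist and belong to $\lsem \pi \rsem_{U_{\mathsf{P}}}$. Applying this to the set $\{\lsem \mathsf{E} \rsem(I) \mid (\mathsf{p} \leftarrow_{\pi} \mathsf{E}) \in \mathsf{P}\} \subseteq \lsem \pi \rsem_{U_{\mathsf{P}}}$ gives $T_{\mathsf{P}}(I)(\mathsf{p}) \in \lsem \pi \rsem_{U_{\mathsf{P}}}$, as required.

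There is no real obstacle here: the lemma is essentially a well-typedness check for the operator $T_{\mathsf{P}}$. The only subtle point is remembering that the previous two lemmas referenced do all the heavy lifting (monotonicity of the lattice structure on each semantic type, and the fact that the meaning of a positive expression lands in the right semantic domain). Once both are cited, the conclusion is immediate.
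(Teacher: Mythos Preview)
Your proposal is correct and follows essentially the same approach as the paper: the paper's proof simply cites the definition of $T_{\mathsf{P}}$, Lemma~\ref{monotonicity-of-semantics-wrt-state}, and the fact that $\lsem \pi\rsem_{U_{\mathsf{P}}}$ is a complete lattice, which is exactly what you unpack (with the minor addition that you name Lemma~\ref{pi-is-algebraic-lattice} explicitly as the source of the complete-lattice fact).
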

\begin{proof}
The result follows directly by the definition of $T_{\mathsf{P}}$,
Lemma~\ref{monotonicity-of-semantics-wrt-state} and the fact that
$\lsem \pi\rsem_{U_{\mathsf{P}}}$ is a complete lattice.\qed
\end{proof}
The key property of $T_{\mathsf{P}}$ is that it is continuous:
\begin{lemma}\label{T_P_continuous}
Let $\mathsf{P}$ be a program. Then the mapping $T_{\mathsf{P}}$ is
continuous.
\end{lemma}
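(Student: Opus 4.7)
The plan is to reduce continuity of $T_{\mathsf{P}}$ to the continuity of the semantic valuation function (Lemma~\ref{continuity-of-semantics}) together with the ability to interchange two least upper bounds, as guaranteed by Proposition~\ref{set-of-sets}. Since ${\cal I}_{\mathsf{P}}$ is a complete lattice (Lemma~\ref{herbrand-interpretations-make-a-lattice}) and its order is pointwise on predicate constants (with their denotations in algebraic lattices), it suffices to check the continuity condition pointwise on each predicate constant of $\mathsf{P}$.

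First I would dispatch monotonicity of $T_{\mathsf{P}}$: if $I \sqsubseteq_{{\cal I}_{\mathsf{P}}} J$, then by Lemma~\ref{monotonicity-of-semantics} we have $\lsem \mathsf{E}\rsem(I) \sqsubseteq_{\pi} \lsem \mathsf{E}\rsem(J)$ for every clause body $\mathsf{E}$ of a rule $\mathsf{p} \leftarrow_{\pi} \mathsf{E}$, and taking the lub over clauses gives $T_{\mathsf{P}}(I)(\mathsf{p}) \sqsubseteq_{\pi} T_{\mathsf{P}}(J)(\mathsf{p})$. Then, fixing a directed set ${\cal I} \subseteq {\cal I}_{\mathsf{P}}$ and a predicate $\mathsf{p}:\pi$ of $\mathsf{P}$, I would unfold the definition:
\[
T_{\mathsf{P}}(\textstyle\bigsqcup {\cal I})(\mathsf{p})
= \bigsqcup_{(\mathsf{p} \leftarrow_{\pi} \mathsf{E}) \in \mathsf{P}} \lsem \mathsf{E}\rsem(\bigsqcup {\cal I}).
\]
Applying Lemma~\ref{continuity-of-semantics} inside, each $\lsem \mathsf{E}\rsem(\bigsqcup {\cal I})$ becomes $\bigsqcup_{I \in {\cal I}} \lsem \mathsf{E}\rsem(I)$, giving a double lub indexed over clauses and interpretations.

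The remaining step is to swap these two lubs so as to recover $\bigsqcup_{I \in {\cal I}} T_{\mathsf{P}}(I)(\mathsf{p})$. I would justify this using Proposition~\ref{set-of-sets}(2): letting $A_{\mathsf{E}} = \{\lsem \mathsf{E}\rsem(I) : I \in {\cal I}\}$ for each clause $\mathsf{p}\leftarrow_{\pi}\mathsf{E}$ and $B_I = \{\lsem\mathsf{E}\rsem(I) : (\mathsf{p}\leftarrow_{\pi}\mathsf{E}) \in \mathsf{P}\}$ for each $I \in {\cal I}$, both nested lubs coincide with $\bigsqcup A$ where $A = \bigcup_{\mathsf{E}} A_{\mathsf{E}} = \bigcup_{I} B_{I} = \{\lsem\mathsf{E}\rsem(I) : (\mathsf{p}\leftarrow_{\pi}\mathsf{E}) \in \mathsf{P},\ I \in {\cal I}\}$. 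Hence
\[
T_{\mathsf{P}}(\textstyle\bigsqcup {\cal I})(\mathsf{p})
= \bigsqcup_{I \in {\cal I}} \bigsqcup_{(\mathsf{p} \leftarrow_{\pi} \mathsf{E}) \in \mathsf{P}} \lsem \mathsf{E}\rsem(I)
= \bigsqcup_{I \in {\cal I}} T_{\mathsf{P}}(I)(\mathsf{p}),
\]
and combining with Lemma~\ref{herbrand-interpretations-make-a-lattice} (pointwise lubs on ${\cal I}_{\mathsf{P}}$) one gets $T_{\mathsf{P}}(\bigsqcup {\cal I}) = \bigsqcup_{I \in {\cal I}} T_{\mathsf{P}}(I)$, which together with monotonicity yields continuity.

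The only real content is Lemma~\ref{continuity-of-semantics} itself (which has been established separately in the appendix); the rest is bookkeeping. The potential subtlety worth flagging is that the inner lub after applying continuity of semantics is indexed by the directed set ${\cal I}$, while the outer lub over clauses is an arbitrary (possibly non-directed) lub — but this does not obstruct the swap, because Proposition~\ref{set-of-sets}(2) holds for unrestricted lubs as long as they exist, which they do since each $\lsem \pi\rsem_{U_{\mathsf{P}}}$ is a complete lattice.
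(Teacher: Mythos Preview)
Your argument is correct and is precisely the unfolding of the paper's one-line proof (``Straightforward using Lemma~\ref{continuity-of-semantics}''): you reduce to the continuity of the semantic valuation, work pointwise on predicate constants via Lemma~\ref{herbrand-interpretations-make-a-lattice}, and swap the two lubs using Proposition~\ref{set-of-sets}(2). There is no divergence in approach---you have simply made explicit the bookkeeping that the paper leaves implicit.
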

\begin{proof}
Straightforward using Lemma~\ref{continuity-of-semantics}.\qed
\end{proof}
The following property of $T_{\mathsf{P}}$ generalizes the
corresponding well-known property from first-order logic
programming:
\begin{lemma}
Let $\mathsf{P}$ be a program and let $I \in {\cal I}_{\mathsf{P}}$.
Then $I$ is a model of $\mathsf{P}$ if and only if
$T_{\mathsf{P}}(I) \sqsubseteq_{{\cal I}_\mathsf{P}}I$.
\end{lemma}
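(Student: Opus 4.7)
The plan is to unwind the definitions of ``model'' and of $T_{\mathsf{P}}$ on both sides of the biconditional; the result should fall out directly, with no real obstacles, since it essentially restates the model condition clause by clause in lattice-theoretic terms. The proof parallels exactly the familiar first-order fact that $I \models \mathsf{P}$ iff $T_\mathsf{P}(I) \subseteq I$, and the only mild adjustment is that we are working with arbitrary predicate types rather than just ground atoms.

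For the forward direction, suppose $I$ is a model of $\mathsf{P}$. Fix a predicate constant $\mathsf{p}:\pi$ occurring in $\mathsf{P}$ and an arbitrary program clause $\mathsf{p} \leftarrow_{\pi} \mathsf{E}$ in $\mathsf{P}$. By hypothesis, for every Herbrand state $s$, $\lsem \mathsf{p} \leftarrow_\pi \mathsf{E}\rsem_s(I) = 1$, so by clause~13 of Definition~\ref{definition-semantics-of-expressions} we have $\lsem \mathsf{E}\rsem_s(I) \sqsubseteq_\pi I(\mathsf{p})$. Since $\mathsf{E}$ is closed (the body of a program clause), the value $\lsem \mathsf{E}\rsem(I)$ is independent of $s$, and therefore $I(\mathsf{p})$ is an upper bound of the family $\{\lsem \mathsf{E}\rsem(I) \mid (\mathsf{p}\leftarrow_\pi \mathsf{E}) \in \mathsf{P}\}$. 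Taking the least upper bound and using the definition of $T_{\mathsf{P}}$ gives $T_{\mathsf{P}}(I)(\mathsf{p}) \sqsubseteq_\pi I(\mathsf{p})$, which is exactly $T_{\mathsf{P}}(I) \sqsubseteq_{\mathcal{I}_\mathsf{P}} I$.

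For the converse, assume $T_{\mathsf{P}}(I) \sqsubseteq_{\mathcal{I}_\mathsf{P}} I$, and let $\mathsf{p} \leftarrow_\pi \mathsf{E}$ be any clause in $\mathsf{P}$. Then $\lsem \mathsf{E}\rsem(I)$ is one of the elements whose least upper bound defines $T_{\mathsf{P}}(I)(\mathsf{p})$, so $\lsem \mathsf{E}\rsem(I) \sqsubseteq_\pi T_{\mathsf{P}}(I)(\mathsf{p}) \sqsubseteq_\pi I(\mathsf{p})$. Again by clause~13 of Definition~\ref{definition-semantics-of-expressions}, this yields $\lsem \mathsf{p} \leftarrow_\pi \mathsf{E}\rsem_s(I) = 1$ for every state $s$, so $I$ satisfies every program clause and hence is a model of $\mathsf{P}$.

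There is no real obstacle in this proof; the only thing worth double-checking is the harmless subtlety that the body $\mathsf{E}$ of a program clause is closed (by Definition~\ref{clausal-expressions-definition}), which is what lets us drop the state $s$ when identifying $\lsem \mathsf{E}\rsem_s(I)$ with one of the arguments of the least upper bound defining $T_{\mathsf{P}}(I)(\mathsf{p})$, and dually lets a single inequality on closed-expression denotations imply the model condition for all states.
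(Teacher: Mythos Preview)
Your proof is correct and follows essentially the same approach as the paper's: unwinding the model condition via clause~13 of Definition~\ref{definition-semantics-of-expressions} to the per-clause inequality $\lsem \mathsf{E}\rsem(I)\sqsubseteq_\pi I(\mathsf{p})$, and then using that a least upper bound is below an element iff every member of the indexing family is. The paper compresses this into a single chain of equivalences, while you spell out the two directions separately and make explicit the closedness of clause bodies; the content is the same.
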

\begin{proof}
An interpretation $I \in {\cal I}_{\mathsf{P}}$ is a model of
$\mathsf{P}$ iff $\lsem \mathsf{E} \rsem(I) \sqsubseteq_{\pi}
I(\mathsf{p})$ for every clause $\mathsf{p} \leftarrow_{\pi}
\mathsf{E}$ in $\mathsf{P}$ iff $\bigsqcup_{(\mathsf{p}
\leftarrow_{\pi} \mathsf{E}) \in \mathsf{P}}\lsem \mathsf{E}\rsem
(I) \sqsubseteq_{\pi} I(\mathsf{p})$ iff
$T_{\mathsf{P}}(I)(\mathsf{p})  \sqsubseteq_{\pi}I(\mathsf{p})$.\qed
\end{proof}

Define now the following sequence of interpretations:
\[
\begin{array}{lll}
T_{\mathsf{P}} \uparrow 0  & = & \perp_{{\cal I}_{\mathsf{P}}}\\
T_{\mathsf{P}} \uparrow (n+1)  & = & T_{\mathsf{P}}(T_{\mathsf{P}} \uparrow n)\\
T_{\mathsf{P}} \uparrow \omega & = & \bigsqcup \{ T_{\mathsf{P}} \uparrow n \mid n<\omega\}
\end{array}
\]
We have the following theorem (which is entirely analogous to the one
for the first-order case):
\begin{theorem}
Let $\mathsf{P}$ be a program. Then $M_{\mathsf{P}}
=\mathit{lfp}(T_\mathsf{P})= T_{\mathsf{P}}\uparrow \omega$.
\end{theorem}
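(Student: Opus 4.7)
The plan is to split the claim into two equalities and address them in turn, drawing on the lemmas already established.

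For the second equality $\mathit{lfp}(T_{\mathsf{P}}) = T_{\mathsf{P}}\uparrow \omega$, I would simply invoke Kleene's fixpoint theorem (stated earlier in the excerpt). We have already established in Lemma~\ref{T_P_continuous} that $T_{\mathsf{P}}$ is continuous on the complete lattice ${\cal I}_\mathsf{P}$ (the latter being a complete lattice by Lemma~\ref{herbrand-interpretations-make-a-lattice}), so Kleene applies directly and yields $\mathit{lfp}(T_{\mathsf{P}}) = \bigsqcup_{n<\omega} T_{\mathsf{P}}^n(\perp_{{\cal I}_{\mathsf{P}}})$, which is exactly $T_{\mathsf{P}}\uparrow \omega$ by the definition of the iterates.

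For the first equality $M_{\mathsf{P}} = \mathit{lfp}(T_{\mathsf{P}})$, I would proceed as in the classical first-order case. First, I would show that $\mathit{lfp}(T_{\mathsf{P}})$ is a Herbrand model of $\mathsf{P}$: since it is a fixed point, $T_{\mathsf{P}}(\mathit{lfp}(T_{\mathsf{P}})) = \mathit{lfp}(T_{\mathsf{P}})$, in particular $T_{\mathsf{P}}(\mathit{lfp}(T_{\mathsf{P}})) \sqsubseteq_{{\cal I}_\mathsf{P}} \mathit{lfp}(T_{\mathsf{P}})$, and by the preceding lemma characterising models via the operator, it is a model. Hence $M_{\mathsf{P}} \sqsubseteq_{{\cal I}_\mathsf{P}} \mathit{lfp}(T_{\mathsf{P}})$, since $M_{\mathsf{P}}$ is the glb of all Herbrand models.

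For the reverse inequality, I would take an arbitrary Herbrand model $I$ of $\mathsf{P}$. By the characterisation lemma, $T_{\mathsf{P}}(I) \sqsubseteq_{{\cal I}_\mathsf{P}} I$, so every model is a pre-fixed point. A routine transfinite induction (here just induction on $n<\omega$ is enough, using monotonicity of $T_{\mathsf{P}}$ which is immediate from continuity) shows $T_{\mathsf{P}}\uparrow n \sqsubseteq_{{\cal I}_\mathsf{P}} I$ for every $n$: the base case is trivial since $T_{\mathsf{P}}\uparrow 0 = \perp_{{\cal I}_\mathsf{P}}$, and the inductive step uses $T_{\mathsf{P}}\uparrow (n+1) = T_{\mathsf{P}}(T_{\mathsf{P}}\uparrow n) \sqsubseteq T_{\mathsf{P}}(I) \sqsubseteq I$. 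Taking the supremum over $n$ gives $T_{\mathsf{P}}\uparrow \omega \sqsubseteq_{{\cal I}_\mathsf{P}} I$, and taking the glb over all models $I$ yields $\mathit{lfp}(T_{\mathsf{P}}) = T_{\mathsf{P}}\uparrow\omega \sqsubseteq_{{\cal I}_\mathsf{P}} M_{\mathsf{P}}$, completing the proof.

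No genuine obstacle arises: the whole argument is a direct transcription of the classical Apt--Van Emden--Kowalski proof, made possible by the preceding infrastructure (the algebraic-lattice semantics, the continuity of $T_{\mathsf{P}}$, and the lemma relating models to pre-fixed points of the operator). The only point that deserves a moment's care is verifying that the glb of the empty case is handled correctly, but since we have already observed that every program has at least one Herbrand model, $M_{\mathsf{P}}$ is well-defined and the argument goes through verbatim.
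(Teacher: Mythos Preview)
Your proposal is correct and is precisely the classical Apt--Van Emden--Kowalski argument that the paper invokes by reference (the paper's proof is simply ``Using exactly the same reasoning as in the first-order case''). You have spelled out that reasoning accurately, using the established lemmas (continuity of $T_{\mathsf{P}}$, the pre-fixed-point characterisation of models, and Kleene's theorem) in the expected way.
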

\begin{proof}
Using exactly the same reasoning as in the first-order case (see for
example the corresponding proof in~\cite{lloyd}).\qed
\end{proof}
%

\section{Proof Procedure}\label{proof-procedure-section}
In this section we propose a sound and complete proof-procedure for
${\cal H}$. One important aspect we initially have to resolve, is
how to represent basic elements (see
Definition~\ref{definition-basic-elements}) in our source language.
In the following subsection we introduce a class of positive
expressions, namely {\em basic expressions}, which are the syntactic
analogues of basic elements. Basic expressions will be used in order
to formalize the notion of {\em answer} (to a given query)
as-well-as in our development of the SLD-resolution for ${\cal H}$.

\subsection{Basic Expressions}\label{subsection-basic-expressions}
As we have already seen, basic elements have played an important
role in the development of the semantics of our higher-order logic
programming language. In order to devise a sound and complete proof
procedure for our language, we first need to find a syntactic
representation for basic elements. Since the definition of basic
elements uses the operator $\sqsubseteq$ (see
Lemma~\ref{algebraicity-composition},
Definition~\ref{step-functions} and
Definition~\ref{definition-basic-elements}), it is not immediately
obvious how one can construct a positive expression whose meaning
coincides with a given basic element. Basic expressions introduced
below, solve this apparent difficulty:
\begin{definition}\label{basic-expressions}
The set of \emph{basic expressions of} ${\cal H}$ is recursively
defined as follows. Every expression of ${\cal H}$ of type $\iota$
is a basic expression of type $\iota$. Every predicate variable of
${\cal H}$ of type $\pi$ is a basic expression of type $\pi$. The
propositional constants $\mathsf{0}$ and $\mathsf{1}$ are basic
expressions of type $o$. A non-empty finite union of expressions
each one of which has the following form, is a basic expression of
type $\rho_1 \rightarrow \cdots \rightarrow \rho_n \rightarrow o$
(where $\mathsf{V}_1:\rho_1,\ldots,\mathsf{V}_n:\rho_n$):
\begin{enumerate}
 \item $\lambda \mathsf{V}_1 . \cdots \lambda \mathsf{V}_n . \mathsf{0}$

 \item $\lambda \mathsf{V}_1 . \cdots \lambda \mathsf{V}_n . (\mathsf{A}_1
       \wedge \cdots \wedge \mathsf{A}_n)$, where each $\mathsf{A}_i$ is either
 \begin{enumerate}
  \item $(\mathsf{V}_i \approx \mathsf{B}_i)$, if $\mathsf{V}_i:\iota$
       and $\mathsf{B}_i:\iota$ is a basic expression where
       $\mathsf{V}_j \not\in FV(\mathsf{B}_i)$ for all $j$, or

  \item the constant $\mathsf{1}$ or $\mathsf{V}_i$,
        if $\mathsf{V}_i:o$, or

  \item the constant $\mathsf{1}$ or $\mathsf{V}_i(\mathsf{B}_{11})\cdots(\mathsf{B}_{1r})\wedge \cdots \wedge
       \mathsf{V}_i(\mathsf{B}_{m1})\cdots(\mathsf{B}_{mr})$, $m>0$, if $type(\mathsf{V}_i)\neq \iota, o$ and
       the $\mathsf{B}_{kl}$'s are basic expressions of the
       appropriate types, where $\mathsf{V}_j \not\in FV(\mathsf{B}_{kl})$
       for all $j,k,l$.
 \end{enumerate}
\end{enumerate}
The $\mathsf{B}_i$ and $\mathsf{B}_{kl}$ above will be called the
{\em basic subexpressions} of $\mathsf{B}$.
\end{definition}
The following example illustrates the ideas behind the above
definition.
\begin{example}
We consider various cases of the above definition:
\begin{itemize}
\item  The terms {\tt a}, {\tt f(a,b)}, {\tt X} and {\tt f(X,h(Y))},
       are basic expressions of type $\iota$.

\item Assume ${\tt X}:\rho$. Then, ${\tt \lambda X.0}$ is a basic expression of type $\rho\rightarrow o$.
      Intuitively, it corresponds to the basic element $\perp_{\rho \rightarrow o}$.

\item Assume ${\tt X}:\iota$. Then, {\tt $\lambda$X.(X$\approx$a)}
      is a basic expression of type $\iota\rightarrow o$.  Intuitively,
      it corresponds to the basic element $({\tt a}\searrow{\tt 1})$ or more simply
      to the finite set $\{{\tt a}\}$.

\item Assume ${\tt X}:\iota$ and ${\tt Y}:\iota$. Then, {\tt
      $\lambda$X.$\lambda$Y.(X$\approx$a)$\wedge$(Y$\approx$b)} is a basic
      expression of type $\iota\rightarrow \iota \rightarrow o$.
      Intuitively, it corresponds to the basic element
      $({\tt a}\searrow ({\tt b}\searrow{\tt 1}))$ or more simply to the singleton binary
      relation $\{{\tt (a,b)}\}$.

\item Assume ${\tt X}:\iota$. Then, {\tt ($\lambda$X.(X$\approx$a))
      $\bigvee_{\iota\rightarrow o}$($\lambda$X.(X$\approx$b))}
      is a basic expression of type $\iota \rightarrow o$. It
      corresponds to the basic element
      $\bigsqcup\{({\tt a}\searrow{\tt 1}),({\tt b}\searrow{\tt 1})\}$, or more
      simply to the finite set $\{{\tt a},{\tt b}\}$.

\item Assume ${\tt Q}:\iota\rightarrow o$. Then,
      {\tt $\lambda$Q.(Q(a)$\wedge$Q(b))}is a basic expression of type
      $(\iota \rightarrow o) \rightarrow o$. Intuitively, it corresponds to the
      basic element $(\bigsqcup\{({\tt a}\searrow 1),({\tt
      b}\searrow 1)\})\searrow 1$. More simply, it corresponds to the set
      of all finite sets that contain both {\tt a} and {\tt b}.

\qed
\end{itemize}
\end{example}

The proof procedure that will be developed later in this section,
relies on a special form of basic expressions:
\begin{definition}\label{definition-basic-template}
The set of \emph{basic templates of} ${\cal H}$ is the subset of the
set of basic expressions of ${\cal H}$ defined as follows:
\begin{itemize}
\item The propositional constants $\mathsf{0}$ and $\mathsf{1}$ are
      basic templates.

\item Every non-empty finite union of basic expressions (of the form
      presented in items 1 and 2 of Definition~\ref{basic-expressions})
      in which all the basic subexpressions involved are {\em distinct} variables,
      is a basic template.
\end{itemize}
The variables mentioned above, will be called {\em template
variables}.
\end{definition}
\begin{example}
Assume in the following expressions that ${\tt X},{\tt Y},{\tt
Z},{\tt W}:\iota$, ${\tt Q},{\tt Q}_1,{\tt Q}_2:\iota\rightarrow o$
and ${\tt R}:((\iota \rightarrow o)\rightarrow o)\rightarrow o$. The
expression {\tt $\lambda$X.(X$\approx$Z)} is a basic template of
type $\iota\rightarrow o$. The expression {\tt
$\lambda$X.$\lambda$Y.(X$\approx$Z)$\wedge$(Y$\approx$W)} is a basic
template of type $\iota\rightarrow \iota \rightarrow o$; the
template variables in this case are {\tt Z} and {\tt W}. The
expression {\tt $\lambda$Q.(Q(Z)$\wedge$Q(W))}is a basic template of
type $(\iota \rightarrow o) \rightarrow o$ with template variables
{\tt Z} and {\tt W}. The expression {\tt
$\lambda$R.(R(Q$_1$)$\wedge$R(Q$_2$))} is a basic template of type
$((\iota \rightarrow o)\rightarrow o)\rightarrow o$ with template
variables {\tt Q}$_1$ and {\tt Q}$_2$.\qed
\end{example}
Notice from the above example that the structure of basic templates
is in general much simpler than that of basic expressions (due to
the fact that a template variable can represent an arbitrary basic
expression of the same type). For this reason, basic templates are
much simpler to enumerate than arbitrary basic expressions.

The following two lemmas establish the connections between basic
elements and basic expressions.
\begin{lemma}
For every basic expression $\mathsf{B}:\rho$, for every Herbrand
interpretation $I$ of ${\cal H}$, and for every Herbrand state $s$,
$\lsem \mathsf{B} \rsem_{s}(I) \in {\cal F}_{U_{\cal H}}(\rho)$.
\end{lemma}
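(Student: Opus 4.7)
The plan is to proceed by structural induction on the basic expression $\mathsf{B}$, following the grammar of Definition~\ref{basic-expressions}. The four simple base cases dispose of themselves. For a term $\mathsf{B}:\iota$, a routine sub-induction on term structure shows $\lsem \mathsf{B}\rsem_s(I)\in U_{\cal H}$: constants map to themselves by the definition of a Herbrand interpretation, the Herbrand state sends individual variables into $U_{\cal H}={\cal F}_{U_{\cal H}}(\iota)$, and an application $(\mathsf{f}\,\mathsf{E}_1\cdots\mathsf{E}_n)$ is taken to $\mathsf{f}\,\mathsf{t}_1\cdots\mathsf{t}_n\in U_{\cal H}$ by the Herbrand interpretation of $\mathsf{f}$. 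A predicate variable $\mathsf{V}:\pi$ is handled directly by the definition of a Herbrand state, which places $s(\mathsf{V})$ in ${\cal F}_{U_{\cal H}}(\pi)={\cal K}(\lsem\pi\rsem_{U_{\cal H}})$; and $\mathsf{0},\mathsf{1}$ are the two elements of ${\cal K}(\lsem o\rsem_{U_{\cal H}})=\{0,1\}$.

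The inductive step is where Definition~\ref{basic-expressions} takes finite unions of expressions of the two stated forms. By Lemma~\ref{algebraicity-composition}, applied recursively to the type $\rho_1\to\cdots\to\rho_n\to o$, the basis of the corresponding algebraic lattice is closed under finite joins, since a finite join of finite joins of step functions is again a finite join of step functions. Hence it is enough to show that each atomic disjunct is compact. For a disjunct of the trivial shape $\lambda\mathsf{V}_1.\cdots\lambda\mathsf{V}_n.\mathsf{0}$, the denotation is the bottom element of the corresponding lattice, i.e.\ the empty join of step functions, which is compact.

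The substantive case is $\lambda\mathsf{V}_1.\cdots\lambda\mathsf{V}_n.(\mathsf{A}_1\wedge\cdots\wedge\mathsf{A}_n)$. I would show by a secondary induction on $n$ that its denotation is exactly the step function $(a_1\searrow(a_2\searrow\cdots(a_n\searrow 1)))$, where each $a_i\in{\cal F}_{U_{\cal H}}(type(\mathsf{V}_i))$ is read off $\mathsf{A}_i$ as follows: in case (a) set $a_i=\lsem\mathsf{B}_i\rsem_s(I)\in U_{\cal H}$, which by the outer induction is a legitimate basic element and also the required ``witness'' of equality under the trivial order on $U_{\cal H}$; in case (b) set $a_i\in\{0,1\}$ according as $\mathsf{A}_i$ is $\mathsf{1}$ or $\mathsf{V}_i$; in case (c) with $\mathsf{A}_i=\mathsf{V}_i(\mathsf{B}_{11})\cdots(\mathsf{B}_{1r})\wedge\cdots\wedge\mathsf{V}_i(\mathsf{B}_{m1})\cdots(\mathsf{B}_{mr})$, take $a_i$ to be the finite join of the step functions $(\lsem\mathsf{B}_{k1}\rsem_s(I)\searrow(\cdots(\lsem\mathsf{B}_{kr}\rsem_s(I)\searrow 1)))$, which is compact by the outer induction, Lemma~\ref{algebraicity-composition} and the step-function construction of Definition~\ref{step-functions}; if $\mathsf{A}_i=\mathsf{1}$ in cases (b) or (c), take $a_i=\perp$ so the factor trivializes. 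A step function into an algebraic lattice whose $c$ is compact is itself compact, and this is preserved by iterating via Lemma~\ref{algebraicity-composition}, so the disjunct denotes a compact element as required.

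The main obstacle is verifying that the non-standard application clause in Definition~\ref{definition-semantics-of-expressions} really does collapse to the step-function behaviour described above. Concretely, when $\mathsf{V}_i$ is of higher predicate type and $d_i\in{\cal K}(\lsem type(\mathsf{V}_i)\rsem_{U_{\cal H}})$, one must check that $\lsem\mathsf{V}_i(\mathsf{B}_{kl})\rsem_{s[\cdots]}(I)=1$ precisely when $d_i$ lies above the step function ingredient $(\lsem\mathsf{B}_{kl}\rsem_s(I)\searrow 1)$; this uses crucially that, by the outer induction, $\lsem\mathsf{B}_{kl}\rsem_s(I)$ is \emph{itself} a basic element, so the join $\bigsqcup_{b\in B}d_i(b)$ over compact approximations $B$ collapses to $d_i(\lsem\mathsf{B}_{kl}\rsem_s(I))$ by monotonicity and cofinality of the singleton $\{\lsem\mathsf{B}_{kl}\rsem_s(I)\}$ in $B$. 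The analogous (but simpler) check for the $\iota$-argument case relies on $\sqsubseteq_\iota$ being the identity, so that $\mathsf{V}_i\approx\mathsf{B}_i$ and $\mathsf{V}_i\sqsupseteq\lsem\mathsf{B}_i\rsem_s(I)$ are the same condition. Once this bookkeeping is in hand, the step-function identification goes through, and the inductive case closes.
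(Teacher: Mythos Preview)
Your proposal is correct and follows essentially the same route as the paper: identify, for each syntactic form of a basic expression, the corresponding compact element (bottom for the $\mathsf{0}$-body abstraction, an iterated step function $b_1\searrow(b_2\searrow\cdots(b_n\searrow 1))$ for the conjunction body, and a finite join for a union), and invoke Lemma~\ref{algebraicity-composition} to conclude compactness. Your organization as a structural induction on $\mathsf{B}$ amounts to the same thing as the paper's induction on the type of $\mathsf{B}$, since the recursion in Definition~\ref{basic-expressions} tracks the type structure; and your explicit discussion of why the non-standard application semantics collapses to $d_i(\lsem\mathsf{B}_{kl}\rsem_s(I))$ when $\lsem\mathsf{B}_{kl}\rsem_s(I)$ is already compact is a detail the paper leaves to the reader under ``it can be easily verified.''
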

\begin{proof}
The proof is by induction on the type of $\mathsf{B}$. The basis
case is for basic expressions of type $\iota$ and $o$ and holds
trivially. We demonstrate that the lemma holds for basic expressions
of type $\rho = \rho_1 \rightarrow \cdots \rightarrow \rho_n
\rightarrow o$, assuming that it holds for all basic expressions
that have simpler types than $\rho$. If the basic expression is a
predicate variable, the result is immediate; otherwise, we have to
distinguish the following cases:

\vspace{0.2cm} \noindent {\em Case 1:} $\mathsf{B} = \lambda
\mathsf{V}_1. \cdots \lambda \mathsf{V}_n.\mathsf{0}$. Then, the
corresponding basic element in ${\cal F}_{U_{\cal H}}(\rho)$ is the
bottom element of type $\rho_1 \rightarrow \cdots \rightarrow \rho_n
\rightarrow o$ (ie., $\perp_{\rho_1 \rightarrow \cdots \rightarrow
\rho_n \rightarrow o}$).

\vspace{0.2cm} \noindent {\em Case 2:} $\mathsf{B} = \lambda
\mathsf{V}_1 . \cdots \lambda \mathsf{V}_n . (\mathsf{A}_1 \wedge
\cdots \wedge \mathsf{A}_n)$. Then, the corresponding basic element
is the element $b_1 \searrow (b_2 \searrow \cdots \searrow (b_n
\searrow 1)\cdots)$, where the $b_i$ are defined as follows:
\begin{itemize}
\item If $\mathsf{V}_i:\iota$, then by Definition~\ref{basic-expressions},
      $\mathsf{A}_i=(\mathsf{V}_i\approx \mathsf{B}_i)$.
      In this case, $b_i=\lsem \mathsf{B}_i \rsem_s(I)$.

\item If $\mathsf{V}_i:o$ then $\mathsf{A}_i$ is either equal to $\mathsf{1}$ or to
      $\mathsf{V}_i$; in the former case $b_i=0$ and in the latter case $b_i=1$.

\item If $\mathsf{V}_i$ is of any other type then $\mathsf{A}_i$ is either equal
      to $\mathsf{1}$ or to $\mathsf{V}_i(\mathsf{B}_{11})\cdots(\mathsf{B}_{1r})\wedge \cdots \wedge
      \mathsf{V}_i(\mathsf{B}_{m1})\cdots(\mathsf{B}_{mr})$, where $m>0$. In the
      former case it is $b_i=0$; in the latter case
      $b_i = \bigsqcup_{1 \leq j \leq m} (\lsem \mathsf{B}_{j1} \rsem_s(I)
      \searrow (\lsem \mathsf{B}_{j2}\rsem_s(I)\searrow \cdots \searrow (\lsem \mathsf{B}_{jr} \rsem_{s}(I)
      \searrow 1)\cdots ))$.
\end{itemize}

\noindent {\em Case 3:} $\mathsf{B}$ is a finite union of lambda
abstractions. Then, for each term of the finite union we can create
(as above) a basic element. By taking the finite union of these
elements, we create the basic element that corresponds to
$\mathsf{B}$.
It can be easily verified that for every basic expression
$\mathsf{B}$, $\lsem \mathsf{B} \rsem_s(I)$ coincides with the
corresponding basic element defined as above.\qed
\end{proof}
The converse of the above lemma holds, as the following lemma
demonstrates.
\begin{lemma}\label{for_every_b_exists_B}
Let $\rho$ be any argument type and let $b \in {\cal F}_{U_{\cal
H}}(\rho)$. Then, there exists a closed basic expression
$\mathsf{B}:\rho$ such that for every Herbrand interpretation $I$,
$\lsem \mathsf{B} \rsem(I) = b$.
\end{lemma}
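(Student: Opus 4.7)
The plan is to proceed by structural induction on the argument type $\rho$. The base cases are immediate: if $\rho=\iota$ then $b \in U_{\cal H}$ is itself a ground term, which qualifies as a closed basic expression of type $\iota$ by the first sentence of Definition~\ref{basic-expressions}; if $\rho=o$ then $b \in \{0,1\}$ is represented directly by the propositional constant $\mathsf{0}$ or $\mathsf{1}$.

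For the inductive step, take $\rho = \rho_1 \rightarrow \cdots \rightarrow \rho_n \rightarrow o$ with $n\geq 1$. The compact element $b$ is, by iterated application of Lemma~\ref{algebraicity-composition} together with the easy identity $(a \searrow \bigsqcup_j c_j) = \bigsqcup_j (a \searrow c_j)$ (which lets lubs be pushed inside the outputs of step functions), equal to a finite lub of fully nested step functions
\[
 b \;=\; \bigsqcup_{j=1}^{k}\bigl(b_{j,1} \searrow (b_{j,2} \searrow \cdots \searrow (b_{j,n} \searrow 1)\cdots)\bigr),
\]
with each $b_{j,i} \in {\cal F}_{U_{\cal H}}(\rho_i)$; the value $k=0$ is allowed and corresponds to $b = \perp_\rho$. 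Since each $\rho_i$ is a strict subtype of $\rho$, the induction hypothesis supplies a closed basic expression $\mathsf{B}_{j,i}:\rho_i$ with $\lsem \mathsf{B}_{j,i} \rsem(I) = b_{j,i}$.

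For each $j$ I build a lambda abstraction $\mathsf{B}_j = \lambda \mathsf{V}_1 . \cdots \lambda \mathsf{V}_n . (\mathsf{A}_{j,1} \wedge \cdots \wedge \mathsf{A}_{j,n})$ in which $\mathsf{A}_{j,i}$ encodes the step-function test ``$b_{j,i} \sqsubseteq_{\rho_i} \mathsf{V}_i$'' according to the three shapes permitted by Definition~\ref{basic-expressions}: for $\rho_i = \iota$ set $\mathsf{A}_{j,i} = (\mathsf{V}_i \approx \mathsf{B}_{j,i})$; for $\rho_i = o$ set $\mathsf{A}_{j,i} = \mathsf{1}$ when $b_{j,i}=0$ and $\mathsf{A}_{j,i} = \mathsf{V}_i$ when $b_{j,i}=1$; for any other $\rho_i$, use the induction hypothesis to decompose $b_{j,i}$ itself as a finite lub of nested step functions and encode each summand as an application $\mathsf{V}_i(\mathsf{B}_{j,i,l,1})\cdots(\mathsf{B}_{j,i,l,r_i})$, conjoining the results (or taking $\mathsf{A}_{j,i}=\mathsf{1}$ if that lub is empty). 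The target expression is $\mathsf{B} = \mathsf{B}_1 \bigvee_{\rho} \cdots \bigvee_{\rho} \mathsf{B}_k$, with $\mathsf{B} = \lambda \mathsf{V}_1 . \cdots \lambda \mathsf{V}_n . \mathsf{0}$ in the degenerate case $k=0$. Because every subexpression supplied by the induction hypothesis is closed, the side conditions of Definition~\ref{basic-expressions} requiring $\mathsf{V}_j \not\in FV(\mathsf{B}_i)$ and $\mathsf{V}_j \not\in FV(\mathsf{B}_{kl})$ are automatically satisfied, and $\mathsf{B}$ is a closed basic expression of type $\rho$.

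The identity $\lsem \mathsf{B} \rsem(I) = b$ is then verified by unfolding Definition~\ref{definition-semantics-of-expressions}: each $\mathsf{A}_{j,i}$ is engineered so that its denotation at an argument $x \in {\cal F}_{U_{\cal H}}(\rho_i)$ is $1$ exactly when $b_{j,i} \sqsubseteq_{\rho_i} x$, which is the defining behaviour of the step function $(b_{j,i} \searrow 1)$; the semantics of the outer lambda abstractions and the finite $\bigvee_\rho$ then reassemble these into the displayed lub. The main obstacle I anticipate is not this final unfolding, which is largely routine, but the bookkeeping needed to align the step-function decomposition with the tightly constrained syntactic shape of basic expressions (the fixed outer $\lambda$-prefix, the restricted forms of the $\mathsf{A}_{j,i}$, and the requirement that the basic subexpressions appearing under the binders contain no occurrences of $\mathsf{V}_1,\ldots,\mathsf{V}_n$). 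A secondary point to check carefully is that the identity $(a \searrow \bigsqcup c_j) = \bigsqcup (a \searrow c_j)$ is applicable uniformly at every intermediate type, so that the full unrolling into nested step functions is legitimate.
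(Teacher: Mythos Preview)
Your proposal is correct and follows essentially the same approach as the paper's proof: structural induction on the argument type $\rho$, decomposing the compact element $b$ as a finite lub of fully nested step functions, and representing each summand by a lambda abstraction $\lambda \mathsf{V}_1.\cdots\lambda\mathsf{V}_n.(\mathsf{A}_1\wedge\cdots\wedge\mathsf{A}_n)$ whose conjuncts $\mathsf{A}_i$ encode the tests ``$b_i \sqsubseteq_{\rho_i} \mathsf{V}_i$'' according to the three syntactic shapes of Definition~\ref{basic-expressions}. Your treatment of the $\rho_i = o$ case (taking $\mathsf{A}_{j,i} = \mathsf{V}_i$ when $b_{j,i}=1$) is the right one and matches what Definition~\ref{basic-expressions} permits; the paper's proof contains what appears to be a typo at this point (it writes $\mathsf{A}_i = \mathsf{0}$, which is neither syntactically admissible nor semantically correct). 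Your explicit invocation of the identity $(a \searrow \bigsqcup_j c_j) = \bigsqcup_j (a \searrow c_j)$ to justify the full unrolling into nested step functions is a detail the paper leaves implicit.
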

\begin{proof}
The proof is by induction on the structure of argument types. The
basis case is for argument types $\iota$ and $o$, and holds
trivially. We demonstrate that the lemma holds for type $\rho =
\rho_1 \rightarrow \cdots \rightarrow \rho_n \rightarrow o$,
assuming that it holds for all subtypes of $\rho$. Assume now that
$b$ is a basic element of type $\rho$, consisting of a finite union
of step functions.

If the union is empty, then $\mathsf{B} = \lambda
\mathsf{V}_1.\cdots \lambda \mathsf{V}_n.\mathsf{0}$. Assume now
that the union is non-empty. Then, the basic expression
corresponding to $b$ will simply be the union of the basic
expressions corresponding to the step functions that comprise $b$.

Let $b_1 \searrow (b_2\searrow  \cdots \searrow (b_n \searrow
1)\cdots )$ be one of the step functions that constitute $b$. We
create the basic expression:
$\mathsf{B} = \lambda \mathsf{V}_1. \cdots \lambda
\mathsf{V}_n.(\mathsf{A}_1\wedge \cdots \mathsf{A}_n)$
where each $\mathsf{A}_i$ can be created as follows:
\begin{itemize}
\item If $b_i$ is of type $\iota$ and $b_i = \mathsf{t}\in U_{\cal H}$,
      then $\mathsf{A}_i=(\mathsf{V}_i \approx \mathsf{t})$.

\item If $b_i$ is of type $o$ and $b_i = 0$, then
      $\mathsf{A}_i=\mathsf{1}$.

\item If $b_i$ is of type $o$ and $b_i = 1$, then $\mathsf{A}_i=\mathsf{0}$.

\item Otherwise, $b_i$ is a finite union of $m>0$ basic
      elements of the form $b_{j1} \searrow (b_{j2} \searrow \cdots \searrow (b_{jr}
      \searrow 1)\cdots)$, $1\leq j \leq m$. Then, $\mathsf{A}_i =
      \mathsf{V}_i(\mathsf{B}_{11})\cdots(\mathsf{B}_{1r})\wedge \cdots \wedge
      \mathsf{V}_i(\mathsf{B}_{m1})\cdots(\mathsf{B}_{mr})$,
      where $\mathsf{B}_{j1},\ldots,\mathsf{B}_{jm}$ are the expressions that correspond (by the
      induction hypothesis) to $b_{j1},\ldots, b_{jm}$.
\end{itemize}
It is easy to verify that the resulting basic expression
$\mathsf{B}$ satisfies $\lsem \mathsf{B} \rsem(I) = b$.\qed
\end{proof}
The above two lemmas suggest that basic expressions are the syntactic analogues
of basic elements.
\subsection{Substitutions and Unifiers}
Substitutions are vital in the development of the proof procedure
for ${\cal H}$:
\begin{definition}
A \emph{substitution} $\theta$ is a finite set of the form
$\{\mathsf{V}_1/\mathsf{E}_1,\ldots,\mathsf{V}_n/\mathsf{E}_n\}$,
where the $\mathsf{V}_i$'s are different argument variables of
${\cal H}$ and each $\mathsf{E}_i$ is a positive expression of
${\cal H}$ having the same type as $\mathsf{V}_i$. We write
$dom(\theta) = \{\mathsf{V}_1,\ldots,\mathsf{V}_n\}$ and
$range(\theta) = \{\mathsf{E}_1,\ldots,\mathsf{E}_n\}$. A
substitution is called \emph{basic} if all $\mathsf{E}_i$ are basic
expressions. A substitution is called {\em zero-order}, if
$type(\mathsf{V}_i) = \iota$, for all $i\in \{1,\ldots,n\}$ (notice
that every zero-order substitution is also basic). The substitution
corresponding to the empty set will be called the \emph{identity
substitution} and will be denoted by $\epsilon$.
\end{definition}
We are now ready to define what it means to apply a substitution
$\theta$ to an expression $\mathsf{E}$. Such definitions are usually
complicated by the fact that one has to often rename the bound
variable before applying $\theta$ to the body of a lambda
abstraction. In order to simplify matters, we follow the simple
approach suggested in~\cite{Bar84}[pages 26-27], which consists of
the following two conventions:
\begin{itemize}
\item {\bf The $\alpha$-congruence convention}: Expressions that are
      {\em $\alpha$-congruent} will be considered identical (expression
      $\mathsf{E}_1$ is $\alpha$-congruent with expression $\mathsf{E}_2$ if
      $\mathsf{E}_2$ results from $\mathsf{E}_1$ by a series of changes of bound
      variables). For example, {\tt $\lambda$Q.Q(a)} is $\alpha$-congruent to
      {\tt $\lambda$R.R(a)}.

\item {\bf The variable convention}: If expressions $\mathsf{E}_1,\ldots,\mathsf{E}_n$
      occur in a certain mathematical context (eg., definition, proof), then in these
      expressions all bound variables are chosen to be different from the free
      variables.
\end{itemize}
Using the variable convention, we have the following simple
definition:
\begin{definition}
Let $\theta$ be a substitution and let $\mathsf{E}$ be a positive
expression. Then, $\mathsf{E}\theta$ is an expression obtained from
$\mathsf{E}$ as follows:
\begin{itemize}
\item $\mathsf{E} \theta = \mathsf{E}$, if $\mathsf{E}$ is
$\mathsf{0}$, $\mathsf{1}$, $\mathsf{c}$, or $\mathsf{p}$.
\item $\mathsf{V}\theta =\theta(\mathsf{V})$ if $\mathsf{V}\in
dom(\theta)$; otherwise, $\mathsf{V}\theta =\mathsf{V}$.
\item $\mathsf{(f}\,\, \mathsf{E}_1 \cdots \mathsf{E}_n)\theta =
\mathsf{(f}\,\, \mathsf{E}_1\theta \cdots \mathsf{E}_n\theta)$.
\item $\mathsf{(E}_1\mathsf{E}_2)\theta = \mathsf{(E}_1\theta\,
\mathsf{E}_2\theta)$.
\item $(\lambda \mathsf{V}.\mathsf{E}_1)\theta = (\lambda
\mathsf{V}.(\mathsf{E}_1\theta))$.
\item $\mathsf{(E}_1\bigvee_{\pi} \mathsf{E}_2)\theta = \mathsf{(E}_1\theta
\bigvee_{\pi} \mathsf{E}_2\theta)$.
\item $\mathsf{(E}_1\bigwedge_{\pi} \mathsf{E}_2)\theta = \mathsf{(E}_1\theta
\bigwedge_{\pi} \mathsf{E}_2\theta)$.
\item $\mathsf{(E}_1\approx \mathsf{E}_2)\theta = \mathsf{(E}_1\theta
\approx \mathsf{E}_2\theta)$.
\item $(\exists\mathsf{V}\,\mathsf{E}_1)\theta = (\exists\mathsf{V}\,
(\mathsf{E}_1\theta))$.
\end{itemize}
\end{definition}

Notice that in the case of lambda abstraction (and similarly in the
case of existential quantification), it is not needed to say
``provided $\mathsf{V} \not\in FV(range(\theta))$ and $\mathsf{V}
\not\in dom(\theta)$''. By the variable convention this is the case.
\begin{definition}\label{composition-of-substitutions}
Let $\theta=
\{\mathsf{V}_1/\mathsf{E}_1,\ldots,\mathsf{V}_m/\mathsf{E}_m\}$ and
$\sigma=
\{\mathsf{V}'_1/\mathsf{E}'_1,\ldots,\mathsf{V}'_n/\mathsf{E}'_n\}$
be substitutions. Then the composition $\theta\sigma$ of $\theta$
and $\sigma$ is the substitution obtained from the set
$$\{\mathsf{V}_1/\mathsf{E}_1\sigma,\ldots,\mathsf{V}_m/\mathsf{E}_m\sigma,
\mathsf{V}'_1/\mathsf{E}'_1,\ldots,\mathsf{V}'_n/\mathsf{E}'_n\}$$
by deleting any $\mathsf{V}_i/\mathsf{E}_i\sigma$ for which
$\mathsf{V}_i = \mathsf{E}_i\sigma$ and deleting any
$\mathsf{V}'_j/\mathsf{E}'_j$ for which $\mathsf{V}'_j \in
\{\mathsf{V}_1,\ldots,\mathsf{V}_m\}$.
\end{definition}
The following proposition is easy to establish:
\begin{proposition}
Let $\theta,\sigma$ and $\gamma$ be substitutions. Then:
\begin{enumerate}
\item $\theta \epsilon = \epsilon \theta = \theta$.
\item For all positive expressions $\mathsf{E}$,
      $(\mathsf{E} \theta) \sigma = \mathsf{E} (\theta \sigma)$.
\item $(\theta \sigma) \gamma = \theta (\sigma \gamma)$.
\end{enumerate}
\end{proposition}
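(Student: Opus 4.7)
The plan is to prove the three parts in order, with the main work concentrated in part (2); parts (1) and (3) will then follow easily.

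For part (1), I would simply unfold Definition~\ref{composition-of-substitutions}. For $\theta\epsilon$, applying the empty substitution $\epsilon$ to each $\mathsf{E}_i$ leaves it unchanged, so the first component of the composition is $\{\mathsf{V}_i/\mathsf{E}_i\}_i = \theta$, and the second component (coming from $\epsilon$) is empty; the deletion conditions leave $\theta$ untouched. For $\epsilon\theta$, the first set of pairs is empty, and the second is $\theta$ itself, and again the deletions do nothing. So both equal $\theta$.

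For part (2), I would proceed by structural induction on the positive expression $\mathsf{E}$. The base cases ($\mathsf{0}$, $\mathsf{1}$, constants $\mathsf{c}$, $\mathsf{p}$) are trivial since these are fixed by every substitution. The case $\mathsf{E}=\mathsf{V}$ for a variable is the heart of the argument: if $\mathsf{V}\in dom(\theta)$, then $(\mathsf{V}\theta)\sigma = \mathsf{E}_i\sigma$, and by the construction of $\theta\sigma$ the pair $\mathsf{V}/\mathsf{E}_i\sigma$ occurs in $\theta\sigma$ (either directly, or if $\mathsf{E}_i\sigma=\mathsf{V}$ then both sides equal $\mathsf{V}$), so $\mathsf{V}(\theta\sigma) = \mathsf{E}_i\sigma$ as well. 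If $\mathsf{V}\notin dom(\theta)$, then $(\mathsf{V}\theta)\sigma=\mathsf{V}\sigma$; and by the deletion clause of Definition~\ref{composition-of-substitutions} we have $\mathsf{V}(\theta\sigma)=\mathsf{V}\sigma$, since the $\mathsf{V}'_j/\mathsf{E}'_j$ from $\sigma$ is retained precisely when $\mathsf{V}'_j\notin dom(\theta)$. The inductive cases for application, function symbols, conjunction, disjunction, equality and inverse implication are pure syntactic congruence. The only delicate cases are lambda abstraction $(\lambda\mathsf{V}.\mathsf{E}_1)$ and existential quantification $(\exists\mathsf{V}\,\mathsf{E}_1)$, where one normally has to worry about capture. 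The variable convention adopted in the text ensures that in the context of this proof the bound variable $\mathsf{V}$ is distinct from every variable occurring free in $dom(\theta)\cup range(\theta)\cup dom(\sigma)\cup range(\sigma)$, so the substitutions commute with the binder and we just appeal to the induction hypothesis on $\mathsf{E}_1$.

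For part (3), I would derive it as a corollary of (2). Two substitutions are equal iff they act identically on every argument variable $\mathsf{V}$. Given such a $\mathsf{V}$, observing that $\mathsf{V}$ is itself a positive expression, repeated application of (2) gives
\[
\mathsf{V}\bigl((\theta\sigma)\gamma\bigr) = (\mathsf{V}(\theta\sigma))\gamma = ((\mathsf{V}\theta)\sigma)\gamma = (\mathsf{V}\theta)(\sigma\gamma) = \mathsf{V}\bigl(\theta(\sigma\gamma)\bigr).
\]
Since this holds for all $\mathsf{V}$, the two substitutions coincide.

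The only step I expect to require care is the variable case inside the induction of part (2), specifically the bookkeeping around the two deletion clauses in Definition~\ref{composition-of-substitutions}; once that case is handled cleanly, the remainder is routine syntactic manipulation that is substantially simplified by the variable convention.
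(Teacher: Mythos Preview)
Your proposal is correct and follows exactly the standard route; the paper itself does not give a proof at all, simply stating that the proposition ``is easy to establish'', so your structural-induction argument for (2) with (1) and (3) handled by unfolding and by extensionality on variables is precisely what is expected.
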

We will use the notions of unifier and most general unifier, which
in our case have exactly the same meaning as in the case of
classical (first-order) logic programming:
\begin{definition}
Let $S$ be a set of terms of ${\cal H}$ (ie., expressions of type
$\iota$). A zero-order substitution $\theta$ will be called a
\emph{unifier} of the expressions in $S$ if the set $S\theta =
\{\mathsf{E}\theta \mid \mathsf{E} \in S\}$ is a singleton. The
substitution $\theta$ will be called a \emph{most general unifier of
$S$} (denoted by $\mathit{mgu}(S)$), if for every unifier $\sigma$
of the expressions in $S$, there exists a zero-order substitution
$\gamma$ such that $\sigma = \theta \gamma$.
\end{definition}
We now have the following {\em Substitution Lemma} (see for
example~\cite{Ten91} for a corresponding lemma in the case of
functional programming). The Substitution Lemma shows that given a
basic substitution $\theta$, the meaning of $\mathsf{E}\theta$ is
that of $\mathsf{E}$ in a certain state definable from $\theta$. The
lemma will be later used in the proof of soundness of the proposed
proof procedure.
\begin{lemma}[Substitution Lemma]\label{substitution-lemma}
Let $\mathsf{P}$ be a program, let $I$ be an interpretation of
$\mathsf{P}$ and let $s$ be a state over the domain of $I$. Let
$\theta$ be a basic substitution and $\mathsf{E}$ be a positive
expression. Then, $\lsem \mathsf{E}\theta\rsem_s(I) = \lsem
\mathsf{E}\rsem_{s'}(I)$, where $s'(\mathsf{V}) = \lsem
\theta(\mathsf{V})\rsem_s(I)$ if $\mathsf{V}\in dom(\theta)$ and
$s'(\mathsf{V}) = s(\mathsf{V})$, otherwise.
\end{lemma}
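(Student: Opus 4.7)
The proof will proceed by structural induction on the positive expression $\mathsf{E}$. First, observe that $s'$ is a well-defined state: since $\theta$ is basic, for each $\mathsf{V}\in dom(\theta)$ the expression $\theta(\mathsf{V})$ is basic, and by the first lemma about basic expressions its meaning lies in ${\cal F}_D(type(\mathsf{V}))$, as required. The base cases handle the logical constants $\mathsf{0},\mathsf{1}$, individual and predicate constants (all independent of the state), and argument variables $\mathsf{V}$. For the variable case, if $\mathsf{V}\in dom(\theta)$ then $\mathsf{V}\theta=\theta(\mathsf{V})$ and by construction $\lsem\theta(\mathsf{V})\rsem_s(I) = s'(\mathsf{V}) = \lsem\mathsf{V}\rsem_{s'}(I)$; otherwise $\mathsf{V}\theta=\mathsf{V}$ and $s'(\mathsf{V})=s(\mathsf{V})$, so the claim is immediate.

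The inductive step for function application, for the logical connectives $\bigvee_\pi,\bigwedge_\pi,\approx$, and for equality is a routine application of the induction hypothesis and Definition~\ref{definition-semantics-of-expressions}. The only non-trivial inductive case is application $(\mathsf{E}_1\mathsf{E}_2)$: unfolding via clause~7 of Definition~\ref{definition-semantics-of-expressions}, we have $\lsem(\mathsf{E}_1\mathsf{E}_2)\theta\rsem_s(I) = \bigsqcup_{b\in B}\lsem\mathsf{E}_1\theta\rsem_s(I)(b)$ where $B={\cal F}_D(type(\mathsf{E}_2\theta))_{[\lsem\mathsf{E}_2\theta\rsem_s(I)]}$. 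Since substitution preserves types, $type(\mathsf{E}_2\theta)=type(\mathsf{E}_2)$, and applying the induction hypothesis to both $\mathsf{E}_1$ and $\mathsf{E}_2$ rewrites this expression as $\bigsqcup_{b\in B'}\lsem\mathsf{E}_1\rsem_{s'}(I)(b)$ with $B'={\cal F}_D(type(\mathsf{E}_2))_{[\lsem\mathsf{E}_2\rsem_{s'}(I)]}$, which is exactly $\lsem\mathsf{E}_1\mathsf{E}_2\rsem_{s'}(I)$.

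The main obstacle is the binding cases, namely $\lambda$-abstraction and existential quantification. For $(\lambda\mathsf{V}.\mathsf{E}_1)\theta = \lambda\mathsf{V}.(\mathsf{E}_1\theta)$, the variable convention ensures $\mathsf{V}\notin dom(\theta)$ and $\mathsf{V}\notin FV(range(\theta))$. For each $d\in{\cal F}_D(type(\mathsf{V}))$ we want to apply the induction hypothesis to $\mathsf{E}_1$ at the state $s[d/\mathsf{V}]$, which requires comparing $\lsem\mathsf{E}_1\theta\rsem_{s[d/\mathsf{V}]}(I)$ with $\lsem\mathsf{E}_1\rsem_{t}(I)$, where $t$ is the ``primed'' state associated with $\theta$ and $s[d/\mathsf{V}]$. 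The key observation is that since $\mathsf{V}$ does not occur free in any $\theta(\mathsf{W})$ for $\mathsf{W}\in dom(\theta)$, a standard coincidence principle yields $\lsem\theta(\mathsf{W})\rsem_{s[d/\mathsf{V}]}(I) = \lsem\theta(\mathsf{W})\rsem_s(I)=s'(\mathsf{W})$; and since $\mathsf{V}\notin dom(\theta)$, the component at $\mathsf{V}$ is simply $d$. Hence $t = s'[d/\mathsf{V}]$ and we conclude $\lsem\lambda\mathsf{V}.(\mathsf{E}_1\theta)\rsem_s(I) = \lambda d.\lsem\mathsf{E}_1\rsem_{s'[d/\mathsf{V}]}(I) = \lsem\lambda\mathsf{V}.\mathsf{E}_1\rsem_{s'}(I)$.

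To make this last step fully rigorous I would first record (or inline) an auxiliary coincidence lemma stating that if two states agree on $FV(\mathsf{E})$ then they assign $\mathsf{E}$ the same meaning; this is an easy side induction on $\mathsf{E}$. The existential quantifier case $(\exists\mathsf{V}\,\mathsf{E}_1)\theta$ is then handled identically to the $\lambda$-case, reducing the value at every $d\in{\cal F}_D(type(\mathsf{V}))$ to the induction hypothesis on $\mathsf{E}_1$ at $s[d/\mathsf{V}]$ and invoking coincidence to identify it with $s'[d/\mathsf{V}]$.
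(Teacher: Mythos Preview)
Your proof is correct and follows exactly the approach the paper indicates: the paper's proof is the single line ``By structural induction on $\mathsf{E}$,'' and you have supplied a careful and accurate fleshing-out of that induction, including the coincidence argument needed for the binder cases. One small technicality: the lemma you invoke to conclude that $s'$ is a well-defined state (that the semantics of a basic expression lies in ${\cal F}_D(\rho)$) is stated in the paper only for \emph{Herbrand} interpretations, whereas the Substitution Lemma is phrased for arbitrary interpretations $I$; the proof of that auxiliary lemma generalizes verbatim to any domain $D$, so this is not a real obstacle, but strictly speaking you are appealing to the general version.
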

\begin{proof}
By structural induction on $\mathsf{E}$.\qed
\end{proof}
The following lemmas, that also involve the notion of substitution,
can be easily demonstrated and will prove useful in the sequel.
\begin{lemma}\label{composition-of-basic}
Let $\theta_1,\ldots,\theta_n$ be basic substitutions. Then,
$\theta_1\cdots \theta_n$ is also a basic substitution.
\end{lemma}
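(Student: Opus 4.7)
The plan is to reduce to the binary case by a straightforward induction on $n$: since composition is associative (as recorded in the proposition preceding Definition of unifiers), it suffices to show that if $\theta$ and $\sigma$ are basic substitutions, then $\theta\sigma$ is a basic substitution.

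Recall from Definition~\ref{composition-of-substitutions} that $\theta\sigma$ consists of two sorts of bindings: those of the form $\mathsf{V}_i/\mathsf{E}_i\sigma$ coming from $\theta$, and those of the form $\mathsf{V}'_j/\mathsf{E}'_j$ taken verbatim from $\sigma$ (for variables not rebound by $\theta$), together with possibly some deletions. The verbatim bindings from $\sigma$ are basic by hypothesis, so the only nontrivial obligation is to prove the following auxiliary claim: \emph{if $\mathsf{B}$ is a basic expression and $\sigma$ is a basic substitution, then $\mathsf{B}\sigma$ is a basic expression of the same type as $\mathsf{B}$.}

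I would prove this claim by structural induction on $\mathsf{B}$ following the clauses of Definition~\ref{basic-expressions}. If $\mathsf{B}$ has type $\iota$, then $\mathsf{B}\sigma$ again has type $\iota$ and so is basic by the very first clause of the definition. If $\mathsf{B}$ is a predicate variable, then either $\mathsf{B}\in dom(\sigma)$, in which case $\mathsf{B}\sigma=\sigma(\mathsf{B})$ is basic because $\sigma$ is basic, or $\mathsf{B}\notin dom(\sigma)$, in which case $\mathsf{B}\sigma=\mathsf{B}$. The cases $\mathsf{B}=\mathsf{0}$ and $\mathsf{B}=\mathsf{1}$ are immediate. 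The interesting case is when $\mathsf{B}$ is a finite union of expressions of the form $\lambda\mathsf{V}_1.\cdots\lambda\mathsf{V}_n.(\mathsf{A}_1\wedge\cdots\wedge\mathsf{A}_n)$. Since substitution commutes with finite unions and lambda abstractions, it suffices to check that each $\mathsf{A}_i\sigma$ still fits one of the three schemata 2(a)--2(c) of Definition~\ref{basic-expressions}. Invoking the variable convention, the bound variables $\mathsf{V}_1,\ldots,\mathsf{V}_n$ can be assumed disjoint from $dom(\sigma)\cup FV(range(\sigma))$, so each $\mathsf{V}_i$ is left unchanged by $\sigma$, and substitution merely replaces each basic subexpression $\mathsf{B}_i$ or $\mathsf{B}_{kl}$ by $\mathsf{B}_i\sigma$ or $\mathsf{B}_{kl}\sigma$; the induction hypothesis guarantees these are basic of the same type, so the shape of each $\mathsf{A}_i$ is preserved.

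The main obstacle, and the reason the variable convention is essential, is the side condition in clauses 2(a) and 2(c) demanding that the $\mathsf{V}_j$ do not occur free in the basic subexpressions. This must be maintained after substitution, but it follows easily: the free variables of $\mathsf{B}_i\sigma$ are contained in $(FV(\mathsf{B}_i)\setminus dom(\sigma))\cup FV(range(\sigma))$, and neither set can contain any $\mathsf{V}_j$---the first because $\mathsf{V}_j\notin FV(\mathsf{B}_i)$ by the original side condition, and the second by the variable convention. Putting everything together yields the claim, and hence the lemma.
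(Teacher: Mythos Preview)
Your proposal is correct and follows essentially the same approach as the paper, which merely states ``By induction on $n$ and using Definitions~\ref{basic-expressions} and~\ref{composition-of-substitutions}.'' You have simply unfolded the routine details that the paper leaves to the reader: the reduction to the binary case, the auxiliary claim that applying a basic substitution to a basic expression yields a basic expression, and the preservation of the side condition $\mathsf{V}_j\notin FV(\mathsf{B}_{kl})$ via the variable convention.
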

\begin{proof}
By induction on $n$ and using Definitions~\ref{basic-expressions}
and~\ref{composition-of-substitutions}.\qed
\end{proof}
\begin{lemma}\label{beta-reduction-lemma}
Let $\mathsf{P}$ be a program, let $I$ be an interpretation of
$\mathsf{P}$ and let $s$ be a state over the domain of $I$. Let
$\lambda\mathsf{V}.\mathsf{E}_1$ and $\mathsf{E}_2$ be positive
expressions of type $\rho \rightarrow \pi$ and $\rho$ respectively.
Then, $\lsem (\lambda \mathsf{V}.\mathsf{E}_1)\mathsf{E}_2\rsem_s(I)
= \lsem \mathsf{E}_1\{\mathsf{V}/\mathsf{E}_2\}\rsem_s(I)$.
\end{lemma}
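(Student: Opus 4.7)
The plan is to prove the identity by structural induction on $\mathsf{E}_1$. Note first that the Substitution Lemma (Lemma~\ref{substitution-lemma}) does not apply directly, since $\{\mathsf{V}/\mathsf{E}_2\}$ need not be a basic substitution when $\mathsf{E}_2$ is an arbitrary positive expression; the price of this genericity is a genuine induction that interacts non-trivially with the non-standard semantics of application. By the clauses for $\lambda$-abstraction and application in Definition~\ref{definition-semantics-of-expressions}, the left-hand side unfolds to $\bigsqcup_{b\in B}\lsem \mathsf{E}_1\rsem_{s[b/\mathsf{V}]}(I)$, where $B = {\cal F}_D(\rho)_{[\lsem \mathsf{E}_2\rsem_s(I)]}$, so the lemma reduces to establishing
\[
\lsem \mathsf{E}_1\{\mathsf{V}/\mathsf{E}_2\}\rsem_s(I) \;=\; \bigsqcup_{b\in B}\lsem \mathsf{E}_1\rsem_{s[b/\mathsf{V}]}(I).
\]

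For the base cases, constants and variables different from $\mathsf{V}$ give a constant-valued family whose lub is the constant (noting that $\perp \in B$, so $B \neq \emptyset$). The case $\mathsf{E}_1 = \mathsf{V}$ reduces to $\bigsqcup B = \lsem \mathsf{E}_2\rsem_s(I)$, which is immediate from Proposition~\ref{characterization-of-al} for predicate types (and trivial when $\rho=\iota$, since $\sqsubseteq_\iota$ is the identity). The cases for function symbols, $\bigvee_\pi$, $\bigwedge_\pi$, $\approx$, $\lambda$-abstraction and $\exists$ are then routine: apply the induction hypothesis and push the lub through using Proposition~\ref{monotonic-functions-make-a-complete-lattice}; the variable convention guarantees that bound variables in $\mathsf{E}_1$ do not occur free in $\mathsf{E}_2$, so that state updates of the form $[d/\mathsf{V}']$ and $[b/\mathsf{V}]$ commute.

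The crux is the application case $\mathsf{E}_1 = (\mathsf{F}\,\mathsf{G})$. Applying the induction hypothesis to $\mathsf{F}$ and to $\mathsf{G}$, together with Proposition~\ref{monotonic-functions-make-a-complete-lattice}, rewrites both sides as iterated lubs of the same quantity $\lsem \mathsf{F}\rsem_{s[b/\mathsf{V}]}(I)(b'')$ over different index sets: the substituted side ranges over pairs $(b,b'')$ with $b\in B$ and $b''\in B'' := {\cal F}_D(type(\mathsf{G}))_{[\bigsqcup_{b\in B}\lsem \mathsf{G}\rsem_{s[b/\mathsf{V}]}(I)]}$, while the unsubstituted side ranges over $b\in B$ and $b''\in B_b := {\cal F}_D(type(\mathsf{G}))_{[\lsem \mathsf{G}\rsem_{s[b/\mathsf{V}]}(I)]}$. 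One direction is trivial since each $B_b \subseteq B''$. The harder direction is the main obstacle, because a given $b''\in B''$ and $b\in B$ need not satisfy $b''\in B_b$. I would resolve this by first observing that $B$ is directed (in an algebraic lattice, the compact elements below any fixed element are closed under finite lubs, and the individual case is trivial), whence by Lemma~\ref{monotonicity-of-semantics-wrt-state} the set $\{\lsem \mathsf{G}\rsem_{s[b/\mathsf{V}]}(I):b\in B\}$ is directed as well. Since $b''$ is compact, it lies below $\lsem \mathsf{G}\rsem_{s[b^*/\mathsf{V}]}(I)$ for some $b^*\in B$; choosing $b^{**}\in B$ above both $b$ and $b^*$, monotonicity of the semantics in the state yields $\lsem \mathsf{F}\rsem_{s[b/\mathsf{V}]}(I)(b'')\sqsubseteq \lsem \mathsf{F}\rsem_{s[b^{**}/\mathsf{V}]}(I)(b'')$ with $b''\in B_{b^{**}}$, so the latter is a term of the unsubstituted lub. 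Thus the application case reduces precisely to algebraicity of the semantic domains, and this is where the bulk of the work in the proof resides.
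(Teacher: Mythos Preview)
Your proposal is correct and follows exactly the approach the paper indicates: structural induction on $\mathsf{E}_1$. The paper's proof is the single line ``By structural induction on $\mathsf{E}_1$'', so you have in fact supplied the details that the paper omits, and you have correctly located the nontrivial step in the application case, where algebraicity (directedness of $B$ together with compactness of $b''$) and monotonicity of the semantics in the state are what make the argument go through. One small remark: in listing the ``routine'' cases you cite Proposition~\ref{monotonic-functions-make-a-complete-lattice} as the device for pushing the lub through, but for $\bigwedge_\pi$ this is not quite the right tool---that case also needs the directedness-plus-monotonicity argument (choose a common upper bound $b_3\in B$ for the two witnesses), just as in your application case but in a simpler form. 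This is a minor imprecision, not a gap.
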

\begin{proof}
By structural induction on $\mathsf{E}_1$.\qed
\end{proof}
\begin{lemma}\label{lemma-transform-state-to-subst}
Let $\mathsf{P}$ be a program, $I$ a Herbrand interpretation of
$\mathsf{P}$ and $s$ a Herbrand state. Let $\mathsf{E}$ be a
positive expression. Then, there exists a basic substitution
$\theta$ such that $\lsem \mathsf{E}\rsem_s(I) = \lsem
\mathsf{E}\theta\rsem_{s'}(I)$ for every Herbrand state $s'$.
\end{lemma}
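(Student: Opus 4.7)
The plan is to build $\theta$ by replacing each free variable of $\mathsf{E}$ with a closed basic expression whose meaning (under $I$) equals the value assigned to that variable by $s$, and then to invoke the Substitution Lemma. Since Herbrand interpretations live over the countable Herbrand universe $U_{\mathsf{P}}$, for each free variable $\mathsf{V}:\rho$ of $\mathsf{E}$ the state $s$ provides a basic element $s(\mathsf{V}) \in \mathcal{F}_{U_{\mathsf{P}}}(\rho)$. Lemma~\ref{for_every_b_exists_B} then yields a closed basic expression $\mathsf{B}_{\mathsf{V}}:\rho$ with $\lsem \mathsf{B}_{\mathsf{V}}\rsem(I) = s(\mathsf{V})$.

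I would then define the basic substitution
\[ \theta = \{ \mathsf{V}/\mathsf{B}_{\mathsf{V}} \mid \mathsf{V} \in FV(\mathsf{E}) \}. \]
This is basic by construction, since every $\mathsf{B}_{\mathsf{V}}$ is a basic expression. Moreover, because each $\mathsf{B}_{\mathsf{V}}$ is closed and $dom(\theta) \supseteq FV(\mathsf{E})$, the expression $\mathsf{E}\theta$ contains no free variables; in particular, $\lsem \mathsf{E}\theta\rsem_{s'}(I)$ does not depend on the choice of $s'$.

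The remainder reduces to an application of the Substitution Lemma (Lemma~\ref{substitution-lemma}). It gives $\lsem \mathsf{E}\theta\rsem_{s'}(I) = \lsem \mathsf{E}\rsem_{s''}(I)$, where $s''(\mathsf{V}) = \lsem \mathsf{B}_{\mathsf{V}}\rsem_{s'}(I)$ for $\mathsf{V} \in dom(\theta)$ and $s''$ agrees with $s'$ elsewhere. Since each $\mathsf{B}_{\mathsf{V}}$ is closed we have $\lsem \mathsf{B}_{\mathsf{V}}\rsem_{s'}(I) = \lsem \mathsf{B}_{\mathsf{V}}\rsem(I) = s(\mathsf{V})$, so $s''$ coincides with $s$ on $FV(\mathsf{E})$. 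Finally, I would invoke the standard coincidence property (a routine structural induction on positive expressions, using Definition~\ref{definition-semantics-of-expressions}) asserting that $\lsem \mathsf{E}\rsem_{s}(I)$ depends only on the values the state assigns to the free variables of $\mathsf{E}$; this yields $\lsem \mathsf{E}\rsem_{s''}(I) = \lsem \mathsf{E}\rsem_{s}(I)$, completing the chain.

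The only genuinely delicate point is ensuring that the closed basic expressions $\mathsf{B}_{\mathsf{V}}$ obtained from Lemma~\ref{for_every_b_exists_B} can be simultaneously substituted without variable clashes with bound occurrences in $\mathsf{E}$; the variable convention in force throughout Section~\ref{proof-procedure-section} handles this automatically, and closedness of each $\mathsf{B}_{\mathsf{V}}$ means no capture can occur in any case. Thus the proof is essentially a bookkeeping assembly of Lemma~\ref{for_every_b_exists_B}, Lemma~\ref{substitution-lemma}, and the coincidence lemma, with no deep new idea required.
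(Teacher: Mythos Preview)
Your proposal is correct and follows essentially the same approach as the paper: both define $\theta$ by choosing, for each $\mathsf{V}\in FV(\mathsf{E})$, a closed basic expression $\mathsf{B}_{\mathsf{V}}$ with $\lsem \mathsf{B}_{\mathsf{V}}\rsem(I)=s(\mathsf{V})$ via Lemma~\ref{for_every_b_exists_B}. The only difference is packaging: the paper says ``the lemma follows by a structural induction on $\mathsf{E}$,'' whereas you factor that induction through the already-proved Substitution Lemma plus a coincidence property, which is arguably tidier but not a genuinely different route.
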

\begin{proof}
Define $\theta$ such that if $\mathsf{V}\in FV(\mathsf{E})$,
$\theta(\mathsf{V}) = \mathsf{B}$, where $\mathsf{B}$ is a closed
basic expression such that $\lsem \mathsf{B}\rsem(I) =
s(\mathsf{V})$ (the existence of such a $\mathsf{B}$ is ensured by
Lemma~\ref{for_every_b_exists_B}). The lemma follows by a structural
induction on $\mathsf{E}$.\qed
\end{proof}
It is important to note that in the rest of the paper, the
substitutions that we will use will be {\em basic} ones (unless
otherwise stated). Actually, the only place where a non-basic
substitution will be needed, is when we perform a {\em
$\beta$-reduction} step (see for example the rule for $\lambda$ in
the forthcoming Definition~\ref{derives-in-one-step}).
\subsection{SLD-Resolution}
We now proceed to define the notions of \emph{answer} and
\emph{correct answer}.
\begin{definition}\label{answer}
Let $\mathsf{P}$ be a program and $\mathsf{G}$ a goal. An
\emph{answer} for $\mathsf{P} \cup \{\mathsf{G}\}$ is a basic
substitution for (certain of the) free variables of $\mathsf{G}$.
\end{definition}
\begin{definition}\label{correct-answer}
Let $\mathsf{P}$ be a program, $\mathsf{G}= \leftarrow \mathsf{A}$ a
goal clause and $\theta$ an answer for $\mathsf{P} \cup
\{\mathsf{G}\}$. We say that $\theta$ is \emph{a correct answer} for
$\mathsf{P} \cup \{\mathsf{G}\}$ if for every model $M$ of
$\mathsf{P}$ and for every state $s$ over the domain of $M$, $\lsem
\mathsf{A}\theta \rsem_s(M)=1$.
\end{definition}
\begin{definition}\label{derives-in-one-step}
Let $\mathsf{P}$ be a program and let $\mathsf{G}=\leftarrow
\mathsf{A}$ and $\mathsf{G}'=\leftarrow \mathsf{A}'$ be goal
clauses. Then, we say that $\mathsf{A}'$ is derived in one step from
$\mathsf{A}$ using basic substitution $\theta$ (or equivalently that
$\mathsf{G}'$ is derived in one step from $\mathsf{G}$ using
$\theta$), and we denote this fact by $\mathsf{A}
\stackrel{\theta}{\rightarrow} \mathsf{A}'$ (respectively,
$\mathsf{G} \stackrel{\theta}{\rightarrow} \mathsf{G}'$) if one of
the following conditions applies:
\begin{enumerate}
\item $\mathsf{p} \,\, \mathsf{E}_1 \cdots \mathsf{E}_n
\stackrel{\epsilon}{\rightarrow} \mathsf{E} \,\, \mathsf{E}_1 \cdots
\mathsf{E}_n$, where $\mathsf{p} \leftarrow_{\pi} \mathsf{E}$ is a
rule in $\mathsf{P}$.

\item $\mathsf{Q} \,\, \mathsf{E}_1 \cdots \mathsf{E}_n
\stackrel{\theta}{\rightarrow} (\mathsf{Q} \,\, \mathsf{E}_1 \cdots
\mathsf{E}_n)\theta$, where $\theta= \{\mathsf{Q}/\mathsf{B}_t\}$
and $\mathsf{B}_t$ a basic template.

\item $(\lambda \mathsf{V}.\mathsf{E})\,\, \mathsf{E}_1 \cdots
\mathsf{E}_n \stackrel{\epsilon}{\rightarrow}
(\mathsf{E}\{\mathsf{V}/\mathsf{E}_1\})\, \mathsf{E}_2 \cdots
\mathsf{E}_n$.

\item $(\mathsf{E}' \bigvee_{\pi} \mathsf{E}'') \,\, \mathsf{E}_1 \cdots \mathsf{E}_n
\stackrel{\epsilon}{\rightarrow} \mathsf{E}' \,\, \mathsf{E}_1
\cdots \mathsf{E}_n$.

\item $(\mathsf{E}' \bigvee_{\pi} \mathsf{E}'') \,\, \mathsf{E}_1 \cdots \mathsf{E}_n
\stackrel{\epsilon}{\rightarrow} \mathsf{E}'' \,\, \mathsf{E}_1
\cdots \mathsf{E}_n$.

\item $(\mathsf{E}' \bigwedge_{\pi} \mathsf{E}'') \,\, \mathsf{E}_1 \cdots \mathsf{E}_n
\stackrel{\epsilon}{\rightarrow} (\mathsf{E}'\, \mathsf{E}_1 \cdots
\mathsf{E}_n) \wedge (\mathsf{E}''\, \mathsf{E}_1 \cdots
\mathsf{E}_n)$, where $\pi \neq o$.

\item $(\mathsf{E}_1 \wedge \mathsf{E}_2)
\stackrel{\theta}{\rightarrow} (\mathsf{E}'_1 \wedge (\mathsf{E}_2
\theta))$, if $\mathsf{E}_1 \stackrel{\theta}{\rightarrow}
\mathsf{E}'_1$.

\item $(\mathsf{E}_1 \wedge \mathsf{E}_2)
\stackrel{\theta}{\rightarrow} ((\mathsf{E}_1\theta) \wedge
\mathsf{E}'_2)$, if $\mathsf{E}_2 \stackrel{\theta}{\rightarrow}
\mathsf{E}'_2$.

\item $(\Box \wedge \mathsf{E})
\stackrel{\epsilon}{\rightarrow} \mathsf{E}$

\item $(\mathsf{E} \wedge \Box) \stackrel{\epsilon}{\rightarrow}
\mathsf{E}$

\item $(\mathsf{E}_1 \approx \mathsf{E}_2)
\stackrel{\theta}{\rightarrow} \Box$, where $\theta$ is an mgu of
$\mathsf{E}_1$ and $\mathsf{E}_2$.

\item $(\exists \mathsf{V} \,\mathsf{E})
\stackrel{\epsilon}{\rightarrow} \mathsf{E}$

\end{enumerate}
Moreover, we write
$\mathsf{A}\stackrel{\theta}{\twoheadrightarrow}\mathsf{A}'$ if
$\mathsf{A}=\mathsf{A}_0\stackrel{\theta_1}{\rightarrow}\mathsf{A}_1
\stackrel{\theta_2}{\rightarrow}\cdots
\stackrel{\theta_n}{\rightarrow} \mathsf{A}_n=\mathsf{A}'$,
$n\geq1$, where $\theta=\theta_1\cdots\theta_n$ (and similarly for
$\mathsf{G}\stackrel{\theta}{\twoheadrightarrow}\mathsf{G}'$).
\end{definition}
\begin{definition}\label{SLD-derivation}
Let $\mathsf{P}$ be a program and $\mathsf{G}$ a goal. An
\emph{SLD-derivation} of $\mathsf{P} \cup \{\mathsf{G}\}$ is a
(finite or infinite) sequence $\mathsf{G}_0 = \mathsf{G},
\mathsf{G}_1,\ldots$ of goals and a sequence
$\theta_1,\theta_2,\ldots$ of basic substitutions such that each
$\mathsf{G}_{i+1}$ is derived in one step from $\mathsf{G}_i$ using
$\theta_{i+1}$. Moreover, for all $i$, if $\theta_i =
\{\mathsf{V}/\mathsf{B}_t\}$, then the free variables of
$\mathsf{B}_t$ are disjoint from all the variables that have already
appeared in the derivation up to $\mathsf{G}_{i-1}$.
\end{definition}
\begin{definition}
Let $\mathsf{P}$ be a program and $\mathsf{G}$ a goal. Assume that
$\mathsf{P} \cup \{\mathsf{G}\}$ has a finite SLD-derivation
$\mathsf{G}_0 = \mathsf{G}, \mathsf{G}_1,\ldots,\mathsf{G}_n$ with
basic substitutions $\theta_1,\ldots,\theta_n$, such that
$\mathsf{G}_n=\Box$. Then, we will say that $\mathsf{P} \cup
\{\mathsf{G}\}$ has an \emph{SLD-refutation of length $n$ using
basic substitution $\theta=\theta_1\cdots\theta_n$}.
\end{definition}
\begin{definition}
Let $\mathsf{P}$ be a program, $\mathsf{G}$ a goal and assume that
$\mathsf{P}\cup\{\mathsf{G}\}$ has an SLD-refutation using basic
substitution $\theta$. Then, a {\em computed answer} $\sigma$ for
$\mathsf{P} \cup \{ \mathsf{G} \}$ is the basic substitution
obtained by restricting $\theta$ to the free variables of
$\mathsf{G}$.
\end{definition}

\begin{example}
Consider the program of Example~\ref{closure-example}. An
SLD-refutation of the goal $\leftarrow$ {\tt closure Q a b} is given
below (where we have omitted certain simple steps involving lambda
abstractions):
\[
\begin{array}[b]{@{}ll@{}}
\mbox{\tt closure Q a b} & \theta_1 = \epsilon\\
\mbox{\tt ($\lambda$R.$\lambda$X.$\lambda$Y.(R X Y)) Q a b} & \theta_2 = \epsilon \\
\mbox{\tt Q a b} & \theta_3 = \{ {\tt Q} / \mbox{\tt ($\lambda$X.$\lambda$Y.(X$\approx$X$_0$)$\wedge$(Y$\approx$Y$_0$))}\} \\
\mbox{\tt ($\lambda$X.$\lambda$Y.(X$\approx$X$_0$)$\wedge$(Y$\approx$Y$_0$)) a b} & \theta_4 = \epsilon\\
\mbox{\tt (a$\approx$X$_0$)$\wedge$(b$\approx$Y$_0$)} & \theta_5 = \{{\tt X}_0 /{\tt a} \}\\
\mbox{\tt $\square$$\wedge$(b$\approx$Y$_0$)} & \theta_6 = \epsilon\\
\mbox{\tt (b$\approx$Y$_0$) } & \theta_7 = \{ {\tt Y}_0 / {\tt b} \}\\
\square & \\
\end{array}
\]
If we restrict the composition $\theta_1\cdots \theta_7$ to the free
variables of the goal, we get the computed answer $\sigma_1 = \{
{\tt Q} / \mbox{\tt
$\lambda$X.$\lambda$Y.(X$\approx$a)$\wedge$(Y$\approx$b)} \}$.
Intuitively, $\sigma_1$ assigns to {\tt Q} the relation $\{({\tt
a},{\tt b})\}$ (for which the query is obviously true). Notice that
by substituting {\tt Q} with different basic templates, one can get
answers that are ``similar'' to the above one, such as for example
$\{({\tt a},{\tt b}),({\tt Z1},{\tt Z2})\}$ or $\{({\tt a},{\tt
b}),({\tt Z1},{\tt Z2}),({\tt Z3},{\tt Z4}) \}$, and so on. Answers
of this type are in some sense ``represented'' by the answer
$\{({\tt a},{\tt b})\}$. Actually, one can easily optimize the proof
procedure so as to avoid enumerating such superfluous answers (see
the discussion in Section~\ref{future-work}).

However, there exist other answers to our original query that are
genuinely different from $\{({\tt a},{\tt b})\}$ and can be obtained
by making different clause choices. For example, another answer to
our query is $\sigma_2 = \{ {\tt Q} / \mbox{\tt
($\lambda$X.$\lambda$Y.(X$\approx$a)$\wedge$(Y$\approx$Z))}
\bigvee_{\pi} \mbox{\tt
($\lambda$X.$\lambda$Y.(X$\approx$Z)$\wedge$(Y$\approx$b))}\}$,
which corresponds to the relations of the form $\{({\tt a},{\tt
Z}),({\tt Z},{\tt b})\}$, for every {\tt Z} in the Herbrand
universe. Similarly, one can get the answer $\{({\tt a},{\tt
Z1}),({\tt Z1},{\tt Z2}),({\tt Z2},{\tt b})\}$, and so on.

In other words, we observe that by performing different choices in
the selection of a basic template for {\tt Q} and making an
appropriate use of the two rules of the program for {\tt closure},
we get an infinite (but countable) number of computed answers to our
original query.\qed
\end{example}

\subsection{Soundness of SLD-resolution}
In this subsection we establish the soundness of the SLD-resolution
proof procedure. The following lemmas are very useful in the proof
of the soundness theorem:
\begin{lemma}\label{less-or-equal-lemma}
Let $\mathsf{P}$ be a program, let $I$ be an interpretation of
$\mathsf{P}$ and let $s$ be a state over the domain of $I$. Let
$\mathsf{E}_1$ and $\mathsf{E}_2$ be positive expressions of type
$\rho \rightarrow \pi$ and $\mathsf{E}$ expression of  type $\rho$.
If $\lsem \mathsf{E}_1 \rsem_s(I) \sqsubseteq_{\rho \rightarrow \pi}
\lsem \mathsf{E}_2 \rsem_s(I)$, then $\lsem (\mathsf{E}_1\,
\mathsf{E}) \rsem_s(I) \sqsubseteq_{\pi} \lsem (\mathsf{E}_2\,
\mathsf{E}) \rsem_s(I)$.
\end{lemma}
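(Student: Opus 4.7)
The plan is to unfold the semantic clause for application from Definition~\ref{definition-semantics-of-expressions} on both sides of the desired inequality and observe that, since the right-hand argument $\mathsf{E}$ is the same in both applications, the indexing set $B = {\cal F}_D(\mathit{type}(\mathsf{E}))_{[\lsem \mathsf{E}\rsem_s(I)]}$ is identical in both cases. So the inequality reduces to comparing two least upper bounds indexed by the same set.

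I would then split on the type of $\mathsf{E}$. If $\rho = \iota$, then the partial order $\sqsubseteq_\iota$ is the identity, so $B = \{\lsem \mathsf{E}\rsem_s(I)\}$ is a singleton and $\lsem (\mathsf{E}_i\,\mathsf{E})\rsem_s(I) = \lsem \mathsf{E}_i\rsem_s(I)(\lsem \mathsf{E}\rsem_s(I))$ for $i=1,2$. The result then follows immediately from the pointwise definition of $\sqsubseteq_{\iota \rightarrow \pi}$ applied at the argument $\lsem \mathsf{E}\rsem_s(I) \in D$. If instead $\rho = \pi_1$ is a predicate type, then $B \subseteq {\cal K}(\lsem \pi_1\rsem_D)$, and for every $b \in B$ the pointwise definition of $\sqsubseteq_{\pi_1 \rightarrow \pi}$ yields $\lsem \mathsf{E}_1\rsem_s(I)(b) \sqsubseteq_\pi \lsem \mathsf{E}_2\rsem_s(I)(b)$.

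To conclude in the second case, I would invoke Proposition~\ref{set-of-sets} (or the elementary fact that a pointwise dominated family of lattice elements has a dominated lub) to pass from the pointwise comparison to
\[
\bigsqcup_{b \in B} \lsem \mathsf{E}_1\rsem_s(I)(b) \sqsubseteq_\pi \bigsqcup_{b \in B} \lsem \mathsf{E}_2\rsem_s(I)(b),
\]
which is precisely $\lsem (\mathsf{E}_1\,\mathsf{E})\rsem_s(I) \sqsubseteq_\pi \lsem (\mathsf{E}_2\,\mathsf{E})\rsem_s(I)$. Both lubs are well-defined because $\lsem \pi\rsem_D$ is a complete lattice by Lemma~\ref{pi-is-algebraic-lattice}.

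There is no real obstacle here: the only subtlety is making sure to observe that $B$ is determined by $\mathsf{E}$ alone (so the same index set is used on both sides), and to handle the $\rho = \iota$ case separately since in that case $B$ is defined via ${\cal F}_D(\iota) = D$ rather than via compact elements, and the functions $\lsem \mathsf{E}_i\rsem_s(I)$ are compared at every element of $D$ rather than only at compact elements.
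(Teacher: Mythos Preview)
Your proposal is correct and is precisely the unfolding that the paper leaves implicit: the paper's proof reads simply ``Straightforward using the definition of application,'' and what you have written is exactly that straightforward argument carried out in detail. The only remark is that Proposition~\ref{set-of-sets} as stated does not literally cover the step $\bigsqcup_b f(b) \sqsubseteq \bigsqcup_b g(b)$ from pointwise $f(b)\sqsubseteq g(b)$, but you already note the elementary alternative (each $f(b)$ is below the upper bound $\bigsqcup_b g(b)$), so there is no gap.
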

\begin{proof}
Straightforward using the definition of application.\qed
\end{proof}
\begin{lemma}\label{G-greater-or-equal-G'}
Let $\mathsf{P}$ be a program, let $\mathsf{G}=\leftarrow\mathsf{A}$
and $\mathsf{G}'=\leftarrow\mathsf{A}'$ be goals and let $\theta$ be
a basic substitution such that
$\mathsf{A}\stackrel{\theta}{\rightarrow} \mathsf{A}'$. Then, for
every model $M$ of $\mathsf{P}$ and for every state $s$ over the
domain of $M$, it holds that $\lsem \mathsf{A}\theta\rsem_s(M)
\sqsupseteq \lsem \mathsf{A}'\rsem_s(M)$.
\end{lemma}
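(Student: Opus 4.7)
The plan is to proceed by case analysis on the twelve clauses of Definition~\ref{derives-in-one-step}, with a secondary structural induction for the two recursive clauses (rules 7 and 8) that push a derivation step inside a conjunction. In almost every case I expect to obtain the stronger equality $\lsem \mathsf{A}\theta\rsem_s(M) = \lsem \mathsf{A}'\rsem_s(M)$, which of course suffices. The semantic toolbox I will draw on consists of: the model hypothesis, giving $\lsem \mathsf{E}\rsem(M) \sqsubseteq_\pi M(\mathsf{p})$ for each clause $\mathsf{p}\leftarrow_\pi \mathsf{E}$ of $\mathsf{P}$; Lemma~\ref{less-or-equal-lemma} for lifting inequalities through application; Lemma~\ref{beta-reduction-lemma} for $\beta$-steps; Proposition~\ref{monotonic-functions-make-a-complete-lattice} for computing meets and joins of monotonic functions pointwise; the syntactic identity $(\mathsf{E}_1 \wedge \mathsf{E}_2)\theta = (\mathsf{E}_1\theta)\wedge(\mathsf{E}_2\theta)$; and the defining clauses of Definition~\ref{definition-semantics-of-expressions}.

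The routine cases are as follows. Rule 1 (unfolding $\mathsf{p}\leftarrow_\pi \mathsf{E}$) follows from $\lsem \mathsf{E}\rsem_s(M)\sqsubseteq_\pi \lsem \mathsf{p}\rsem_s(M)$ together with $n$ iterations of Lemma~\ref{less-or-equal-lemma}. Rule 2 is trivial because $\mathsf{A}' = \mathsf{A}\theta$. Rule 3 reduces to a single application of Lemma~\ref{beta-reduction-lemma} at the head, with the remaining applications unaffected. Rules 4 and 5 combine $\lsem \mathsf{E}'\bigvee_\pi \mathsf{E}''\rsem_s(M) \sqsupseteq \lsem \mathsf{E}'\rsem_s(M)$ with Lemma~\ref{less-or-equal-lemma}. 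Rules 9 and 10 hold because $\lsem \Box\rsem = 1$ is the identity for $\sqcap$ on $\{0,1\}$. Rule 11 follows from the defining property of an mgu ($\mathsf{E}_1\theta = \mathsf{E}_2\theta$) plus the semantics of $\approx$. For rule 12, instantiating the existential witness to $d = s(\mathsf{V})$ gives $\lsem \mathsf{E}\rsem_{s[d/\mathsf{V}]}(M) = \lsem \mathsf{E}\rsem_s(M)$, hence $\lsem \exists\mathsf{V}\,\mathsf{E}\rsem_s(M) \sqsupseteq \lsem \mathsf{E}\rsem_s(M)$. The congruence rules 7 and 8 follow from the inductive hypothesis together with the monotonicity of $\sqcap$ in each argument, after unfolding $(\mathsf{E}_1\wedge \mathsf{E}_2)\theta$ syntactically.

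The main obstacle, where I expect to need the most care, is rule 6: the distribution of $\bigwedge_\pi$ through the $n$ applications, namely
\[
\lsem (\mathsf{E}'\bigwedge_\pi \mathsf{E}'')\,\mathsf{E}_1\cdots \mathsf{E}_n\rsem_s(M) = \lsem (\mathsf{E}'\,\mathsf{E}_1\cdots \mathsf{E}_n)\wedge (\mathsf{E}''\,\mathsf{E}_1\cdots \mathsf{E}_n)\rsem_s(M).
\]
The delicate point is that the application semantics contributes a supremum over the basic elements below each $\lsem \mathsf{E}_i\rsem_s(M)$, and Proposition~\ref{monotonic-functions-make-a-complete-lattice} pushes the binary meet inside only after one can interchange it with those suprema. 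This exchange is valid in our setting because (i) by Lemma~\ref{algebraicity-composition} the basic elements below any element of $\lsem \pi\rsem_D$ form a directed set (finite joins of step functions are compact), (ii) $\lsem \mathsf{E}'\rsem_s(M)$ and $\lsem \mathsf{E}''\rsem_s(M)$ are monotonic, and (iii) at $o$-typed outputs $\{0,1\}$, directed suprema distribute over binary meet. The cleanest organization is to isolate this distributive exchange as a small auxiliary lemma and then conclude rule 6 by induction on $n$, whereupon the full lemma follows from the case analysis above.
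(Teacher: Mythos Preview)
Your approach is essentially the same as the paper's: both organize the argument as a case analysis on the head of $\mathsf{A}=\mathsf{E}\,\mathsf{E}_1\cdots\mathsf{E}_k$ (equivalently, on the twelve derivation rules), with the only genuinely recursive cases being rules~7 and~8. The paper explicitly writes out only the cases $\mathsf{E}=\mathsf{p}$, $\mathsf{E}=\mathsf{Q}$, $\mathsf{E}=\lambda\mathsf{V}.\mathsf{E}'$, and $\mathsf{E}=\mathsf{E}'\wedge\mathsf{E}''$, and dismisses the rest as ``straightforward.''

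Your caution about rule~6 is well placed and in fact supplies detail that the paper elides. The required inequality
\[
\lsem (\mathsf{E}'\bigwedge_\pi \mathsf{E}'')\,\mathsf{E}_1\cdots \mathsf{E}_n\rsem_s(M)\ \sqsupseteq\ \lsem (\mathsf{E}'\,\mathsf{E}_1\cdots \mathsf{E}_n)\wedge(\mathsf{E}''\,\mathsf{E}_1\cdots \mathsf{E}_n)\rsem_s(M)
\]
does not follow from Lemma~\ref{less-or-equal-lemma} alone (that lemma gives the \emph{opposite} inequality directly), and one genuinely needs that each index set $B_i={\cal F}_D(\rho_i)_{[\lsem\mathsf{E}_i\rsem_s(M)]}$ is directed, together with monotonicity of $\lsem\mathsf{E}'\rsem_s(M)$ and $\lsem\mathsf{E}''\rsem_s(M)$, to merge the two witnessing tuples into a common one. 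Your justification for directedness via Lemma~\ref{algebraicity-composition} (compact elements are finite joins of step functions, hence closed under binary joins, and $\perp$ is compact) is exactly right. So your proposal is correct and, on this point, more complete than the paper's own proof.
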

\begin{proof}
First, observe that in all cases $\mathsf{A}$ is of the form
$\mathsf{E}\ \mathsf{E}_1 \cdots \mathsf{E}_k$, $k\geq 0$, where
$\mathsf{E}$ is an expression of predicate type. We perform a
structural induction on $\mathsf{E}$.

\vspace{0.1cm}\noindent{\em Induction Basis:} We distinguish two
cases, namely $\mathsf{E} = \mathsf{p}$ and $\mathsf{E} =
\mathsf{Q}$. For the first case it suffices to show that $\lsem
(\mathsf{p}\, \mathsf{E}_1\cdots \mathsf{E}_k)\theta\rsem_s(M)
\sqsupseteq \lsem (\mathsf{E}_{\mathsf{p}}\, \mathsf{E}_1\cdots
\mathsf{E}_k)\theta \rsem_s(M)$, where $\theta=\epsilon$ and
$\mathsf{p} \leftarrow_{\pi} \mathsf{E}_{\mathsf{p}}$ is a clause in
$\mathsf{P}$. This follows easily by the fact that $M$ is a model of
$\mathsf{P}$ and using Lemma~\ref{less-or-equal-lemma}. The second
case is trivial.

\vspace{0.1cm}\noindent{\em Induction Step:} We examine the two most
interesting cases (the rest are straightforward):

\vspace{0.1cm}\noindent {\em Case 1:} $\mathsf{E} =
(\lambda\mathsf{V}.\mathsf{E}')$. In this case $\theta$ is the empty
substitution, and therefore it suffices to show that $\lsem (\lambda
\mathsf{V}.\mathsf{E}')\, \mathsf{E}_1\cdots \mathsf{E}_k\rsem_s(M)
\sqsupseteq \lsem \mathsf{E}'\{\mathsf{V}/\mathsf{E}_1\}\,
\mathsf{E}_2\cdots \mathsf{E}_k\rsem_s(M)$. By
Lemma~\ref{beta-reduction-lemma} we have that $\lsem (\lambda
\mathsf{V}.\mathsf{E}')\,\mathsf{E}_1\rsem_s(M) = \lsem
\mathsf{E}'\{\mathsf{V}/\mathsf{E}_1\}\rsem_s(M)$, and the result
follows by Lemma~\ref{less-or-equal-lemma}.

\vspace{0.1cm}\noindent {\em Case 2:} $\mathsf{E} = (\mathsf{E}'
\wedge \mathsf{E}'')$. Moreover, assume that $\mathsf{E}'
\stackrel{\theta}{\rightarrow} \mathsf{E}'_1$. Then, $(\mathsf{E}'
\wedge \mathsf{E}'')$ derives in one step the expression
$(\mathsf{E}'_1 \wedge (\mathsf{E}''\theta))$. It suffices to show
that $\lsem (\mathsf{E}' \wedge \mathsf{E}'')\theta\rsem_s(M)
\sqsupseteq \lsem \mathsf{E}'_1 \wedge
(\mathsf{E}''\theta)\rsem_s(M)$, or equivalently that $\lsem
(\mathsf{E}'\theta) \wedge (\mathsf{E}''\theta)\rsem_s(M)
\sqsupseteq \lsem \mathsf{E}'_1 \wedge
(\mathsf{E}''\theta)\rsem_s(M)$. But this holds since by the
induction hypothesis we have that $\lsem \mathsf{E}'\theta\rsem_s(M)
\sqsupseteq \lsem \mathsf{E}'_1\rsem_s(M)$.\qed
\end{proof}
\begin{lemma}\label{G-greater-or-equal-G'-n-times}
Let $\mathsf{P}$ be a program and $\mathsf{G}=\leftarrow \mathsf{A}$
be a goal. Let $\mathsf{G}_0
=\mathsf{G},\mathsf{G}_1=\leftarrow\mathsf{A}_1,\ldots,\mathsf{G}_n=\leftarrow\mathsf{A}_n$
be an SLD-refutation of length $n$ using basic substitutions
$\theta_1,\ldots,\theta_n$.  Then, for every model $M$ of
$\mathsf{P}$ and for every state $s$ over the domain of $M$, $\lsem
\mathsf{A}\theta_1\cdots\theta_n\rsem_s(M) \sqsupseteq \lsem
\mathsf{A}_n\rsem_s(M)$.
\end{lemma}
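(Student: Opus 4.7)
The plan is to prove the claim by induction on the length $n$ of the SLD-refutation, using Lemma~\ref{G-greater-or-equal-G'} as the single-step building block and the Substitution Lemma (Lemma~\ref{substitution-lemma}) as the ``glue'' that lets us transport a one-step inequality across the remaining substitutions.

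For the base case $n=1$, the statement $\lsem \mathsf{A}\theta_1\rsem_s(M) \sqsupseteq \lsem \mathsf{A}_1\rsem_s(M)$ is exactly Lemma~\ref{G-greater-or-equal-G'}, since $\mathsf{A}\stackrel{\theta_1}{\rightarrow} \mathsf{A}_1$ by hypothesis. For the inductive step, assume the claim for refutations of length $n-1$ and consider the $n$-step derivation $\mathsf{A}\stackrel{\theta_1}{\rightarrow}\mathsf{A}_1\stackrel{\theta_2}{\rightarrow}\cdots\stackrel{\theta_n}{\rightarrow}\mathsf{A}_n$. Applying the inductive hypothesis to the sub-derivation starting at $\mathsf{A}_1$ yields
\[
\lsem \mathsf{A}_1\theta_2\cdots\theta_n\rsem_s(M)\ \sqsupseteq\ \lsem \mathsf{A}_n\rsem_s(M)
\]
for every state $s$ over $M$. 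It therefore suffices to establish
\[
\lsem \mathsf{A}\theta_1\theta_2\cdots\theta_n\rsem_s(M)\ \sqsupseteq\ \lsem \mathsf{A}_1\theta_2\cdots\theta_n\rsem_s(M),
\]
and then chain the two inequalities.

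To obtain this inequality I would set $\eta=\theta_2\cdots\theta_n$, which by Lemma~\ref{composition-of-basic} is a basic substitution, and define the state $s'$ by $s'(\mathsf{V})=\lsem \eta(\mathsf{V})\rsem_s(M)$ for $\mathsf{V}\in dom(\eta)$ and $s'(\mathsf{V})=s(\mathsf{V})$ otherwise. The Substitution Lemma then gives
\[
\lsem (\mathsf{A}\theta_1)\eta\rsem_s(M)=\lsem \mathsf{A}\theta_1\rsem_{s'}(M),\qquad \lsem \mathsf{A}_1\eta\rsem_s(M)=\lsem \mathsf{A}_1\rsem_{s'}(M).
\]
Now Lemma~\ref{G-greater-or-equal-G'} applied to $\mathsf{A}\stackrel{\theta_1}{\rightarrow}\mathsf{A}_1$ (instantiated at the particular state $s'$) yields $\lsem \mathsf{A}\theta_1\rsem_{s'}(M)\sqsupseteq \lsem \mathsf{A}_1\rsem_{s'}(M)$. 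Combining with the two equalities above, and using associativity of substitution composition so that $(\mathsf{A}\theta_1)\eta = \mathsf{A}\theta_1\cdots\theta_n$, we obtain $\lsem \mathsf{A}\theta_1\cdots\theta_n\rsem_s(M)\sqsupseteq \lsem \mathsf{A}_1\theta_2\cdots\theta_n\rsem_s(M)$, and chaining with the inductive hypothesis completes the proof.

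The only subtle point, which is the main thing to be careful about, is the passage from the one-step inequality (which holds at all states) to the ``post-composed'' inequality; this is precisely what the Substitution Lemma affords, provided all substitutions involved are basic, which is guaranteed by Lemma~\ref{composition-of-basic}. Everything else is bookkeeping with associativity of substitution composition.
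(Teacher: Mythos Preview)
Your proposal is correct and follows exactly the approach the paper indicates: induction on $n$, with Lemma~\ref{G-greater-or-equal-G'} supplying the one-step inequality and the Substitution Lemma (together with Lemma~\ref{composition-of-basic} to ensure the composed substitution is basic) transporting it through the remaining $\theta_2\cdots\theta_n$. You have simply spelled out the details that the paper compresses into one line.
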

\begin{proof}
Using Lemma~\ref{G-greater-or-equal-G'},
Lemma~\ref{substitution-lemma} and induction on $n$.\qed
\end{proof}
\begin{theorem}[Soundness]\label{soundness}
Let $\mathsf{P}$ be a program and $\mathsf{G}=\leftarrow \mathsf{A}$
a goal. Then, every computed answer for $\mathsf{P} \cup
\{\mathsf{G}\}$ is a correct answer for $\mathsf{P} \cup
\{\mathsf{G}\}$.
\end{theorem}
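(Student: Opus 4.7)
The plan is to deduce the soundness theorem from Lemma~\ref{G-greater-or-equal-G'-n-times} together with the observation that the final goal of a refutation is $\Box$. Fix an SLD-refutation $\mathsf{G}_0 = \mathsf{G}, \mathsf{G}_1, \ldots, \mathsf{G}_n = \Box$ of $\mathsf{P}\cup\{\mathsf{G}\}$ with basic substitutions $\theta_1,\ldots,\theta_n$, let $\theta = \theta_1\cdots\theta_n$, and let $\sigma$ be the computed answer, i.e., the restriction of $\theta$ to $FV(\mathsf{G})$. We must show that $\lsem \mathsf{A}\sigma\rsem_s(M) = 1$ for every model $M$ of $\mathsf{P}$ and every state $s$ over the domain of $M$.

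First, I would invoke Lemma~\ref{G-greater-or-equal-G'-n-times} directly: it gives $\lsem \mathsf{A}\theta\rsem_s(M) \sqsupseteq \lsem \mathsf{A}_n\rsem_s(M)$. Since $\mathsf{A}_n = \Box$, which by convention denotes the propositional constant $\mathsf{1}$, we have $\lsem \mathsf{A}_n\rsem_s(M) = 1$. Hence $\lsem \mathsf{A}\theta\rsem_s(M) = 1$. Note that Lemma~\ref{composition-of-basic} ensures $\theta$ is itself a basic substitution, so this application is legitimate.

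The remaining step is to pass from $\mathsf{A}\theta$ to $\mathsf{A}\sigma$. The key observation is that each binding in $\theta$ that is discarded in forming $\sigma$ is a binding for a variable $\mathsf{V} \in dom(\theta)\setminus FV(\mathsf{G})$, and by the freshness condition on templates in Definition~\ref{SLD-derivation} together with Definition~\ref{composition-of-substitutions}, every such $\mathsf{V}$ is either a template variable introduced at some later stage of the derivation or is otherwise not free in $\mathsf{A}$. In either case $\mathsf{V}\notin FV(\mathsf{A})$, so applying the extra bindings has no effect on $\mathsf{A}$. A clean way to package this is via the Substitution Lemma (Lemma~\ref{substitution-lemma}): $\lsem \mathsf{A}\theta\rsem_s(M) = \lsem \mathsf{A}\rsem_{s'}(M)$ and $\lsem \mathsf{A}\sigma\rsem_s(M) = \lsem \mathsf{A}\rsem_{s''}(M)$, where $s'$ and $s''$ agree on $FV(\mathsf{A})$ because $\theta$ and $\sigma$ coincide on $FV(\mathsf{G})\supseteq FV(\mathsf{A})$. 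A straightforward coincidence argument (by structural induction on $\mathsf{A}$, using that only free variables of an expression contribute to its meaning) then yields $\lsem \mathsf{A}\rsem_{s'}(M) = \lsem \mathsf{A}\rsem_{s''}(M)$, giving $\lsem \mathsf{A}\sigma\rsem_s(M) = 1$.

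The heart of the argument is already contained in Lemma~\ref{G-greater-or-equal-G'-n-times}, so the only genuine obstacle is the bookkeeping step of checking that restricting $\theta$ to $FV(\mathsf{G})$ preserves the meaning of $\mathsf{A}\theta$. This bookkeeping rests squarely on the freshness convention built into Definition~\ref{SLD-derivation} for template variables; once that convention is invoked, the coincidence between $\mathsf{A}\theta$ and $\mathsf{A}\sigma$ on $FV(\mathsf{A})$ is immediate.
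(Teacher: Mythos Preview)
Your proposal is correct and follows essentially the same approach as the paper, which simply records the result as a direct consequence of Lemma~\ref{G-greater-or-equal-G'-n-times} with $\mathsf{G}_n=\Box$. Your bookkeeping step is more elaborate than needed: since $\sigma$ is by definition the restriction of $\theta$ to $FV(\mathsf{G})=FV(\mathsf{A})$, one has $\mathsf{A}\theta=\mathsf{A}\sigma$ as syntactic expressions immediately, with no appeal to freshness or the Substitution Lemma required.
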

\begin{proof}
The result is a direct consequence of
Lemma~\ref{G-greater-or-equal-G'-n-times} for
$\mathsf{G}_n=\Box$.\qed
\end{proof}
\subsection{Completeness of SLD-resolution}
In order to establish the completeness of the proposed
SLD-resolution, we need to first demonstrate a result that is
analogous to the {\em lifting lemma} of the first-order case
(see~\cite{lloyd}). We first state (and prove in the appendix) a
more technical lemma, which has as a special case the desired
lifting lemma.

In the rest of this subsection, whenever we refer to a
``substitution'' we mean a ``basic substitution''.
\begin{lemma}\label{pre-lifting-lemma} Let $\mathsf{P}$ be a program,
$\mathsf{G}$ a goal and $\theta$ a substitution. Suppose that there
exists an SLD-refutation of $\mathsf{P}\cup\{\mathsf{G}\theta\}$
using substitution $\sigma$. Then, there exists an SLD-refutation of
$\mathsf{P}\cup\{\mathsf{G}\}$ using a substitution $\delta$, where
for some substitution $\gamma$ it holds that $\delta\gamma \supseteq
\theta\sigma$ and $dom(\delta\gamma-\theta\sigma)$ is a (possibly
empty) set of template variables that are introduced during the
refutation of $\mathsf{P}\cup\{\mathsf{G}\}$.
\end{lemma}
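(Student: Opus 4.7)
The plan is to proceed by induction on the length $n$ of the given SLD-refutation of $\mathsf{P}\cup\{\mathsf{G}\theta\}$. For the base case $n=0$, we must have $\mathsf{G}\theta=\Box$, which forces $\mathsf{G}=\Box$; take $\delta=\gamma=\epsilon$. For the inductive step, I would examine the first step $\mathsf{G}\theta\stackrel{\sigma_1}{\rightarrow}\mathsf{G}_1'$ together with the remaining refutation $\mathsf{G}_1'\stackrel{\sigma'}{\twoheadrightarrow}\Box$ (so $\sigma=\sigma_1\sigma'$), construct a corresponding initial segment $\mathsf{G}\stackrel{\delta_1}{\rightarrow}\mathsf{G}_1$ together with a substitution $\theta_1'$ for which $\mathsf{G}_1\theta_1'=\mathsf{G}_1'$ and $\delta_1\theta_1'=\theta\sigma_1$ (up to the extra template-variable bindings permitted by the statement), and then invoke the induction hypothesis on $\mathsf{G}_1$ with $\theta_1'$ and $\sigma'$.

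The bulk of the work is a case analysis on which rule of Definition~\ref{derives-in-one-step} produces the first step. All the $\epsilon$-rules (1, 3, 4, 5, 6, 9, 10, 12) involve structural features of $\mathsf{G}\theta$ that are preserved by syntactic substitution (a predicate-constant head, a $\lambda$-redex, a connective, a redundant $\Box$-conjunct, or an existential), so the same rule applies directly to $\mathsf{G}$ with $\delta_1=\epsilon$ and $\theta_1'=\theta$. For the template rule (2) producing $\sigma_1=\{\mathsf{Q}/\mathsf{B}_t\}$, the head $\mathsf{Q}$ of the selected atom in $\mathsf{G}\theta$ must arise in one of only two ways, since substitution is purely syntactic and performs no $\beta$-reduction: either (a) $\mathsf{Q}$ is already present at that position in $\mathsf{G}$ with $\mathsf{Q}\notin dom(\theta)$, or (b) some variable $\mathsf{V}$ of $\mathsf{G}$ has $\theta(\mathsf{V})=\mathsf{Q}$. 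In case (a) take $\delta_1=\{\mathsf{Q}/\mathsf{B}_t\}$; in case (b) take $\delta_1=\{\mathsf{V}/\mathsf{B}_t\}$ and set $\theta_1'$ to agree with $\theta$ off $\mathsf{V}$ while mapping $\mathsf{Q}\mapsto\mathsf{B}_t$, exploiting that the template variables of $\mathsf{B}_t$ are fresh by the variable convention and by the disjointness clause in Definition~\ref{SLD-derivation}.

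For the mgu rule (11) I would reuse the classical argument: if $\mu=\mathit{mgu}(\mathsf{E}_1\theta,\mathsf{E}_2\theta)$, then $\theta\mu$ unifies $\mathsf{E}_1,\mathsf{E}_2$, and since these have type $\iota$ the standard first-order theory produces an mgu $\mu'$ of $\mathsf{E}_1,\mathsf{E}_2$ together with a substitution $\tau$ with $\theta\mu=\mu'\tau$; we then apply rule 11 to $\mathsf{G}$ with $\mu'$ and use $\tau$ as $\theta_1'$. The propagation rules (7, 8) I would handle by a straightforward auxiliary induction on the structure of the conjunctive expression, which reduces them to the cases already treated. Combining the lifted initial step $\mathsf{G}\stackrel{\delta_1}{\rightarrow}\mathsf{G}_1$ with the refutation of $\mathsf{G}_1$ using $\delta''$ obtained from the IH (together with its witness $\gamma$ satisfying $\delta''\gamma\supseteq\theta_1'\sigma'$) yields the desired refutation of $\mathsf{G}$ with $\delta=\delta_1\delta''$ and $\delta\gamma\supseteq\delta_1\theta_1'\sigma'=\theta\sigma_1\sigma'=\theta\sigma$.

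The main obstacle I anticipate is the bookkeeping in sub-case (b) of the template rule, where we effectively hoist a template choice from the downstream variable $\mathsf{Q}$ (introduced by $\theta$) back onto the original variable $\mathsf{V}$ of $\mathsf{G}$: one must verify that $\delta_1\theta_1'$ and $\theta\sigma_1$ really coincide as substitutions, even in the presence of multiple occurrences of $\mathsf{V}$ (or stray occurrences of $\mathsf{Q}$) inside $\mathsf{G}$, and that the resulting refutation continues to respect the freshness conditions of Definition~\ref{SLD-derivation}. Closely tied to this is the need to maintain the inclusion $\delta\gamma\supseteq\theta\sigma$ across the induction, with $dom(\delta\gamma-\theta\sigma)$ accumulating precisely those template variables that appear in the $\mathsf{G}$-side refutation but were never needed on the $\mathsf{G}\theta$-side.
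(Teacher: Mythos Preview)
Your overall scheme (induction on refutation length, lift the first step, then apply the IH) matches the paper's, but there is a genuine gap in your treatment of the ``$\epsilon$-rules''. You claim that rules (3), (4), (5), (6) ``involve structural features of $\mathsf{G}\theta$ that are preserved by syntactic substitution, so the same rule applies directly to $\mathsf{G}$''. This is false in the higher-order setting: $\theta$ is a \emph{basic} substitution, and basic expressions of predicate type include $\lambda$-abstractions and finite unions of $\lambda$-abstractions. Hence if $\mathsf{G}=\leftarrow(\mathsf{Q}\,\mathsf{E}_1\cdots\mathsf{E}_k)$ with $\theta(\mathsf{Q})=\lambda\mathsf{V}.\mathsf{C}$ (or a $\bigvee$ of such), then the head of $\mathsf{G}\theta$ is a $\lambda$-abstraction (resp.\ a $\bigvee$), rule (3) (resp.\ (4)/(5)) fires on $\mathsf{G}\theta$, but the corresponding atom in $\mathsf{G}$ has a bare variable at the head and \emph{none} of rules (3)--(6) applies to it. Your case split for rule (2) into (a) and (b) only covers the situation where the head of $\mathsf{G}\theta$ is still a variable; it never accounts for $\theta$ having replaced a variable by a structured basic expression.

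The paper avoids this by organizing the case analysis on the structure of $\mathsf{G}$ rather than on the rule applied to $\mathsf{G}\theta$. In the case $\mathsf{G}=\leftarrow(\mathsf{Q}\,\mathsf{E}_1\cdots\mathsf{E}_k)$ with $\theta(\mathsf{Q})=\mathsf{B}$ a $\lambda$-abstraction (or union), the lifted refutation of $\mathsf{G}$ first applies rule (2) with a basic \emph{template} $\mathsf{B}_t$ chosen so that $\mathsf{B}=\mathsf{B}_t\gamma_1$ for some $\gamma_1$ on fresh template variables, and only then performs the $\beta$-step (or $\bigvee$-step). The extra template-variable bindings in $\gamma_1$ are exactly what the $dom(\delta\gamma-\theta\sigma)$ clause in the statement is there to absorb. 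Your argument can be repaired along these lines, but as written it misses the key higher-order phenomenon; the obstacle you flagged (sub-case (b) bookkeeping) is secondary by comparison.
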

The proof of the above lemma is by a straightforward (but tedious)
induction on the length of the SLD-refutation of
$\mathsf{P}\cup\{\mathsf{G}\theta\}$, and is given in
Appendix~\ref{appendix-lifting}.
\begin{lemma}[Lifting Lemma]\label{lifting-lemma}
Let $\mathsf{P}$ be a program, $\mathsf{G}$ a goal and $\theta$ a
substitution. Suppose that there exists an SLD-refutation of
$\mathsf{P}\cup\{\mathsf{G}\theta\}$ using substitution $\sigma$.
Then, there exists an SLD-refutation of
$\mathsf{P}\cup\{\mathsf{G}\}$ using a substitution $\delta$, where
for some substitution $\gamma$ it holds that $\mathsf{G}\delta\gamma
= \mathsf{G}\theta\sigma$.
\end{lemma}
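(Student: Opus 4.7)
The plan is to obtain the Lifting Lemma as a direct corollary of Lemma~\ref{pre-lifting-lemma}. First, I would apply that lemma to the hypothesized SLD-refutation of $\mathsf{P}\cup\{\mathsf{G}\theta\}$ via $\sigma$, which hands me an SLD-refutation of $\mathsf{P}\cup\{\mathsf{G}\}$ using some substitution $\delta$, together with an auxiliary substitution $\gamma$ satisfying $\delta\gamma \supseteq \theta\sigma$ and the property that $dom(\delta\gamma-\theta\sigma)$ consists exclusively of template variables introduced somewhere during this refutation of $\mathsf{P}\cup\{\mathsf{G}\}$.

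The remaining task is to upgrade the set-theoretic containment $\delta\gamma\supseteq\theta\sigma$ into the equality $\mathsf{G}\delta\gamma = \mathsf{G}\theta\sigma$. I would do this by inspecting the action of the two composite substitutions on an arbitrary $\mathsf{V}\in FV(\mathsf{G})$. If $\mathsf{V}\in dom(\theta\sigma)$, the containment forces $\mathsf{V}(\delta\gamma) = \mathsf{V}(\theta\sigma)$ immediately. Otherwise $\mathsf{V}(\theta\sigma)=\mathsf{V}$, and I want to conclude $\mathsf{V}\not\in dom(\delta\gamma)$ so that $\mathsf{V}(\delta\gamma)=\mathsf{V}$ as well. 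The key point here is the freshness condition in Definition~\ref{SLD-derivation}: the template variables introduced by a basic substitution $\{\mathsf{Q}/\mathsf{B}_t\}$ along a derivation are required to be disjoint from all variables already occurring in the derivation. In particular they are disjoint from $FV(\mathsf{G})$, which occurs in the initial goal $\mathsf{G}_0=\mathsf{G}$. So if $\mathsf{V}\in FV(\mathsf{G})$ did belong to $dom(\delta\gamma)$ but not to $dom(\theta\sigma)$, it would lie in $dom(\delta\gamma-\theta\sigma)$ and hence be a fresh template variable of the refutation of $\mathsf{P}\cup\{\mathsf{G}\}$, contradicting its occurrence in $\mathsf{G}$.

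Combining the two cases, $\delta\gamma$ and $\theta\sigma$ agree on every free variable of $\mathsf{G}$, from which $\mathsf{G}\delta\gamma=\mathsf{G}\theta\sigma$ follows by a routine structural induction on $\mathsf{G}$ (or a direct appeal to the fact that substitution application depends only on the action on free variables). I do not foresee any real obstacle in this passage, since all the nontrivial work, including the handling of higher-order $\beta$-reduction steps, the unification steps, and the bookkeeping of introduced template variables, is absorbed into Lemma~\ref{pre-lifting-lemma}; the move from the technical statement to the clean lifting statement is purely a bookkeeping argument about the freshness of template variables.
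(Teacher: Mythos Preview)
Your proposal is correct and follows essentially the same approach as the paper: apply Lemma~\ref{pre-lifting-lemma}, then use the freshness restriction on template variables from Definition~\ref{SLD-derivation} to argue that the extra bindings in $\delta\gamma$ cannot touch $FV(\mathsf{G})$, so $\delta\gamma$ and $\theta\sigma$ agree on the free variables of $\mathsf{G}$. The paper's proof is terser but the logical content is identical.
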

\begin{proof}
By Lemma~\ref{pre-lifting-lemma}, $\delta\gamma$ and $\theta\sigma$
differ only in template variables that are introduced during the
refutation. By the restriction mentioned in
Definition~\ref{SLD-derivation}, these variables are different from
the variables in the goal $\mathsf{G}$. Therefore, $\delta\gamma$
and $\theta\sigma$ agree on the expressions they assign to the free
variables of $\mathsf{G}$.\qed
\end{proof}

Notice that the above lifting lemma differs slightly from the
corresponding lemma for classical logic programming, where we
actually have the equality $\delta\gamma = \theta\sigma$. This
difference is due to the existence of template variables in the
higher-order resolution proof procedure. Of course, if we restrict
the higher-order proof procedure to apply to first-order logic
programs, then it behaves like classical SLD-resolution and the
usual lifting lemma holds.
\begin{example}
Consider any program $\mathsf{P}$ of our higher-order language and
consider the goal clause $\mathsf{G} =\leftarrow \mbox{\tt R(Z)}$,
where {\tt Z} is of type $\iota$ and {\tt R} of type $\iota
\rightarrow o$. Let $\theta = \{{\tt R}/ \lambda \mbox{\tt X.(X
$\approx$ a)}, \mbox{\tt Z}/\mbox{\tt a}\}$. Then, $\mathsf{G}\theta
= \leftarrow \mbox{\tt ($\lambda$X.(X $\approx$ a))(a)}$. We have
the following SLD-refutation:
$$\mbox{\tt ($\lambda$X.(X $\approx$ a))(a)}
\stackrel{\epsilon}\rightarrow \mbox{\tt (a $\approx$ a)}
\stackrel{\epsilon}\rightarrow \Box$$
Therefore, $\mathsf{G}\theta$ has an SLD-refutation with
substitution $\sigma = \epsilon$. On the other hand, we have the
following SLD-refutation of $\mathsf{G}$:
$$\mbox{\tt R(Z)} \stackrel{\{\mbox{\tt R} /
\mbox{\tt $\lambda$X.(X $\approx$ X$_0$)}\}}\rightarrow \mbox{\tt
($\lambda$X.(X $\approx$ X$_0$))(Z)} \stackrel{\epsilon}\rightarrow
\mbox{\tt (Z $\approx$ X$_0$)} \stackrel{\{\mbox{\tt X$_0$} /
\mbox{\tt Z}\}}\rightarrow \Box$$
Therefore, $\mathsf{G}$ has an SLD-refutation with substitution
$\delta$ which is equal to the composition of the substitutions
$\{\mbox{\tt R} / \mbox{\tt $\lambda$X.(X $\approx$ X$_0$)}\}$,
$\epsilon$ and $\{\mbox{X$_0$/\mbox{\tt Z}}\}$, ie., $\delta =
\{\mbox{\tt R} / \mbox{\tt $\lambda$X.(X $\approx$ Z)}, \mbox{\tt
X}_0/\mbox{\tt Z}\}$. Let $\gamma = \{\mbox{\tt Z}/\mbox{\tt a}\}$.
Then, $\delta\gamma = \{\mbox{\tt R} / \mbox{\tt $\lambda$X.(X
$\approx$ a)}, \mbox{\tt X}_0/\mbox{\tt a}, \mbox{\tt Z}/\mbox{\tt
a}\}$ while $\theta = \{{\tt R}/ \lambda \mbox{\tt X.(X $\approx$
a)}, \mbox{\tt Z}/\mbox{\tt a}\}$. We see that $\delta\gamma
\supseteq \theta\sigma$ and $dom(\delta\gamma-\theta\sigma)= \{{\tt
X}_0\}$ (which is a template variable). Moreover, it holds
$\mathsf{G}\theta\sigma = \mathsf{G}\delta\gamma$.\qed
\end{example}
Before we derive the first completeness result, we need certain
definitions and lemmas.
\begin{definition}\label{def-of-s-g}
Let $\mathsf{P}$ be a program and let $\mathsf{E}$ be a positive
expression or a goal clause. We define $S_\mathsf{E}$ to be the set
of all expressions that can be obtained from $\mathsf{E}$ by
substituting zero or more occurrences of every predicate constant
$\mathsf{p}$ in $\mathsf{E}$ with the expression $\mathsf{E}_1
\bigvee_{\pi} \cdots \bigvee_{\pi}\mathsf{E}_k$, where $\mathsf{p}
\leftarrow_{\pi} \mathsf{E}_i$ are all the clauses\footnote{We may
assume without loss of generality that each predicate symbol
$\mathsf{p}$ that is used in $\mathsf{P}$, has a definition in
$\mathsf{P}$: if no such definition exists, we can add to the
program the clause $\mathsf{p}\leftarrow_{\pi} \mathsf{E}$, where
$\mathsf{E}$ is a basic expression corresponding to the basic
element $\perp_{\pi}$.} for $\mathsf{p}$ in $\mathsf{P}$. Moreover,
$\widehat{\mathsf{E}} \in S_\mathsf{E}$ is the expression obtained
from $\mathsf{E}$ by substituting {\em every} predicate symbol
occurrence with the corresponding expression.
\end{definition}
\begin{lemma}\label{lemma-equal-tp-goal-hat}
Let $\mathsf{P}$ be a program, $\mathsf{E}$ a positive expression or
a goal clause, $I$ a Herbrand interpretation of $\mathsf{P}$ and $s$
a Herbrand state. Then, \(\lsem \mathsf{E} \rsem_s(T_\mathsf{P}(I))
= \lsem \widehat{\mathsf{E}} \rsem_s(I)\).
\end{lemma}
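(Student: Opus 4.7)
The plan is to prove the identity by structural induction on $\mathsf{E}$, exploiting the fact that both the semantic function $\lsem\cdot\rsem_s$ and the hat operation $\widehat{\cdot}$ are compositional. Since $\widehat{\cdot}$ replaces occurrences of predicate constants (which are always free) and leaves the rest of the syntactic structure intact, there is no variable-capture issue with $\lambda$ and $\exists$; and since the semantic clauses of Definition~\ref{definition-semantics-of-expressions} define $\lsem\mathsf{E}\rsem_s(J)$ purely in terms of the values of subexpressions at the same $J$, the induction will go through uniformly.

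The only case that actually uses the hypothesis on $T_\mathsf{P}$ is the base case where $\mathsf{E}$ is a predicate constant $\mathsf{p}$ of type $\pi$. Here $\widehat{\mathsf{p}} = \mathsf{E}_1 \bigvee_{\pi} \cdots \bigvee_{\pi} \mathsf{E}_k$, where the $\mathsf{E}_i$ enumerate all defining clauses $\mathsf{p} \leftarrow_{\pi} \mathsf{E}_i$ in $\mathsf{P}$. By definition of $T_{\mathsf{P}}$,
\[
\lsem \mathsf{p}\rsem_s(T_\mathsf{P}(I)) \;=\; T_\mathsf{P}(I)(\mathsf{p}) \;=\; \bigsqcup_{i=1}^{k} \lsem \mathsf{E}_i\rsem(I),
\]
and since each $\mathsf{E}_i$ is closed the right-hand side coincides with $\lsem \widehat{\mathsf{p}}\rsem_s(I)$ by the semantic clause for $\bigvee_{\pi}$. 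The remaining base cases ($\mathsf{0}$, $\mathsf{1}$, individual constants, and argument variables) are immediate because $\widehat{\mathsf{E}} = \mathsf{E}$ and the semantic value depends only on $s$ or is a fixed constant.

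For the inductive step, I would verify each compound construct by invoking the IH on the immediate subexpressions. For instance, in the application case,
\[
\lsem (\mathsf{E}_1\mathsf{E}_2)\rsem_s(T_\mathsf{P}(I)) = \bigsqcup_{b \in B}\bigl(\lsem \mathsf{E}_1\rsem_s(T_\mathsf{P}(I))(b)\bigr),
\]
where $B = \mathcal{F}_D(type(\mathsf{E}_2))_{[\lsem \mathsf{E}_2\rsem_s(T_\mathsf{P}(I))]}$. Applying the IH to $\mathsf{E}_1$ and $\mathsf{E}_2$ separately makes the set $B$ coincide with the analogous set computed from $\lsem \widehat{\mathsf{E}_2}\rsem_s(I)$, and the function applied on each $b$ becomes $\lsem \widehat{\mathsf{E}_1}\rsem_s(I)(b)$; the right-hand side is then exactly $\lsem (\widehat{\mathsf{E}_1}\,\widehat{\mathsf{E}_2})\rsem_s(I) = \lsem \widehat{(\mathsf{E}_1\mathsf{E}_2)}\rsem_s(I)$. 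The $\lambda$-case uses $\widehat{\lambda\mathsf{V}.\mathsf{E}} = \lambda\mathsf{V}.\widehat{\mathsf{E}}$ together with the IH applied pointwise at each $d \in \mathcal{F}_D(type(\mathsf{V}))$; the $\bigvee_{\pi}$, $\bigwedge_{\pi}$, $\approx$, $\exists$, $\leftarrow_{\pi}$, and goal-clause cases are all handled identically by unfolding the semantic clause and invoking the IH on each subexpression.

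I do not expect any serious obstacle: the only real work is the bookkeeping of the compositionality of $\widehat{\cdot}$, which one should state as a small separate observation at the outset (namely, that $\widehat{\cdot}$ commutes with every expression constructor of ${\cal H}$, since it only acts on predicate constant occurrences). Given that, the induction reduces to a case-by-case rewriting exercise. The predicate-constant case is the conceptual heart of the argument, but it too is immediate once the definition of $T_{\mathsf{P}}$ is unfolded.
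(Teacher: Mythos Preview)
Your proposal is correct and takes essentially the same approach as the paper: both argue by structural induction on $\mathsf{E}$, identify the predicate-constant case as the only nontrivial base case, and discharge it by unfolding the definition of $T_{\mathsf{P}}$ together with the semantics of $\bigvee_{\pi}$. The paper is terser about the inductive step (``all cases are immediate''), whereas you spell out the application and $\lambda$ cases and make the compositionality of $\widehat{\cdot}$ explicit, but the underlying argument is the same.
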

\begin{proof}
The proof is by structural induction on $\mathsf{E}$. Assume that
$\mathsf{E}$ is a positive expression (the proof for the case of
goal clause is similar). For the induction basis we need to consider
the cases where $\mathsf{E}$ is an argument variable $\mathsf{V}$,
an individual constant $\mathsf{c}$, a propositional constant
($\mathsf{0}$, $\mathsf{1}$), or a predicate constant $\mathsf{p}$.
Except for the last one, all other cases are straightforward because
the meaning of $\mathsf{E}$ is independent of $T_\mathsf{P}(I)$ and
$I$. For the last case assume that $\mathsf{E}_1, \ldots,
\mathsf{E}_k$ are all the bodies of the rules defining $\mathsf{p}$
in $\mathsf{P}$. By definition of the $T_{\mathsf{P}}$ operator, it
holds that $\lsem \mathsf{p} \rsem_s(T_\mathsf{P}(I)) =
\bigsqcup_{(\mathsf{p}\leftarrow_{\pi}\mathsf{E}_i)\in \mathsf{P}}{
\lsem \mathsf{E}_i \rsem_s(I) }$. Moreover, $\lsem
\widehat{\mathsf{E}} \rsem_s(I) = \lsem \mathsf{E}_1 \bigvee_{\pi}
\cdots \bigvee_{\pi} \mathsf{E}_k \rsem_s(I) =
\bigsqcup_{(\mathsf{p}\leftarrow_{\pi}\mathsf{E}_i)\in \mathsf{P}}{
\lsem \mathsf{E}_i \rsem_s(I) }$. This completes the basis case. For
the induction step, all cases are immediate.\qed
\end{proof}
\begin{lemma}\label{lemma-derive-of-s-g}
Let $\mathsf{P}$ a program, $\mathsf{G}$,$\mathsf{G}'$ goals and
$\mathsf{G}' \in S_\mathsf{G}$. If $\mathsf{G}'
\stackrel{\theta}{\rightarrow} \mathsf{H}'$  then
$\mathsf{G}\stackrel{\theta}{\twoheadrightarrow} \mathsf{H}$, where
$\mathsf{H}' \in S_\mathsf{H}$.
\end{lemma}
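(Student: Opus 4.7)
The plan is to proceed by case analysis on which rule of Definition~\ref{derives-in-one-step} fires in $\mathsf{G}' \stackrel{\theta}{\rightarrow} \mathsf{H}'$, paired with a structural induction on the position of the redex in order to handle rules 7 and 8, which propagate a step through a top-level conjunction. The key preliminary observation is that, since $\mathsf{A}' \in S_\mathsf{A}$ arises from $\mathsf{A}$ by replacing selected predicate constant occurrences with the full disjunctive unfoldings $\mathsf{E}^{(\mathsf{p})}_1 \bigvee_\pi \cdots \bigvee_\pi \mathsf{E}^{(\mathsf{p})}_k$, the only new shape of head that unfolding can introduce is a disjunction. Consequently, whenever the head at the redex of $\mathsf{A}'$ is not a disjunction, it must already appear at the same position in $\mathsf{A}$.

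Rules 2 (predicate variable head), 3 ($\lambda$ head, i.e., $\beta$-reduction), 6 (conjunction distributed over arguments), 9--10 ($\Box$ eliminated from conjunction), 11 (equality), and 12 (existential), together with rule 1 applied to a head $\mathsf{p}$ that was not selected for unfolding, all fit this observation: in each case I would fire the same rule at the corresponding position of $\mathsf{A}$ with the same substitution $\theta$, obtaining a single-step derivation $\mathsf{G} \stackrel{\theta}{\rightarrow} \mathsf{H}$ in which the unfoldings elsewhere in the expression are preserved and thus witness $\mathsf{H}' \in S_\mathsf{H}$. Rules 7 and 8 reduce to the induction hypothesis applied to the active conjunct, with Lemma~\ref{substitution-lemma} used to handle the $\theta$ that is applied to the inactive conjunct.

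The substantive case is rules 4 and 5, where the head is a disjunction $(\mathsf{E}' \bigvee_\pi \mathsf{E}'')$. If this disjunction is structurally present in $\mathsf{A}$, the same step fires on $\mathsf{A}$. Otherwise the disjunction is the full unfolding of some $\mathsf{p}$ at a head $\mathsf{p}\,\mathsf{E}_1 \cdots \mathsf{E}_n$ of $\mathsf{A}$, and the plan is to fire rule 1 in $\mathsf{A}$ choosing a clause $\mathsf{p} \leftarrow_\pi \mathsf{E}^{(\mathsf{p})}_j$ whose body is the selected disjunct. The main obstacle I expect is a bookkeeping subtlety when $k \geq 3$: since $\bigvee_\pi$ is binary, a single rule-4 or rule-5 step in $\mathsf{G}'$ may peel off only a sub-disjunction rather than a single clause body, and matching such a partially peeled $\mathsf{H}'$ against an $\mathsf{H}$ with $\mathsf{H}' \in S_\mathsf{H}$ will require either an explicit parsing convention on the $k$-ary unfolding or extending the multi-step derivation from $\mathsf{A}$ with additional $\vee$-peelings so that the residual disjunctive structure in $\mathsf{H}'$ is exactly matched by a residual in $\mathsf{H}$.
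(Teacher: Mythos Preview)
Your approach is essentially the paper's: the paper does induction on the number of top-level $\wedge$-conjuncts of $\mathsf{G}$ and then case analysis on the head of $\mathsf{A}$, which amounts to your case analysis on the derivation rule together with recursion through rules~7/8. One small correction: Lemma~\ref{substitution-lemma} is a semantic statement and plays no role here; for rules~7/8 you only need the purely syntactic observation that $\mathsf{A}'\in S_\mathsf{A}$ implies $\mathsf{A}'\theta\in S_{\mathsf{A}\theta}$, which is immediate because unfolding replaces only predicate-constant occurrences while basic substitutions act only on argument variables.

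You are right to isolate the $k\geq 3$ subtlety, and in fact the paper glosses over it too (its proof asserts that one rule-4/5 step from the full unfolding lands on a single $\mathsf{E}_j$, which is wrong for $k\geq 3$). However, neither of your proposed fixes works. No bracketing of the binary $\bigvee_\pi$ avoids the problem, since for any bracketing one of rules~4/5 selects a proper sub-disjunction rather than a single body. And there are no $\vee$'s to peel on the $\mathsf{G}$ side: from $\mathsf{G}=\leftarrow(\mathsf{p}\,\mathsf{A}_1\cdots\mathsf{A}_n)$, rule~1 goes straight to a single body $\mathsf{E}_j$, so a partially-peeled $\mathsf{H}'=\leftarrow((\mathsf{E}_i\bigvee_\pi\cdots)\,\mathsf{A}'_1\cdots\mathsf{A}'_n)$ has no preimage $\mathsf{H}$ under $S$ that is reachable from $\mathsf{G}$. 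Concretely, if the bodies $\mathsf{E}_i$ are predicate-constant-free then $S_\mathsf{H}=\{\mathsf{H}\}$, and the lemma as stated is literally false in this corner case. The clean repair---harmless for Lemma~\ref{lemma-refute-of-s-g}, the only client---is to widen $S_\mathsf{E}$ so that an occurrence of $\mathsf{p}$ may be replaced by any subterm of the disjunction tree $\mathsf{E}_1\bigvee_\pi\cdots\bigvee_\pi\mathsf{E}_k$ (not just the whole tree), and to allow zero steps in the conclusion. Then a rule-4/5 step on a (partial) unfolding stays in $S_\mathsf{G}$ with $\mathsf{H}=\mathsf{G}$, and once a single $\mathsf{E}_j$ is reached, rule~1 in $\mathsf{G}$ fires exactly as you describe.
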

\begin{proof}
The proof is by induction on the number $m$ of top-level
subexpressions of the goal $\mathsf{G}$ that are connected with the
logical constant $\wedge$. The basis case is for $m=1$, ie., it
applies to goal clauses $\mathsf{G}$ that do not contain a top-level
$\wedge$. Assume that $\mathsf{G} = \leftarrow \mathsf{A}$. The
cases we need to examine for $\mathsf{A}$ for the induction basis
are the following: $(\mathsf{p} \,\, \mathsf{A}_1 \cdots
\mathsf{A}_n)$, $(\mathsf{Q}\,\mathsf{A}_1,\ldots\mathsf{A}_n)$,
$((\lambda \mathsf{V}.\mathsf{A'})\,\, \mathsf{A}_1 \cdots
\mathsf{A}_n)$, $((\mathsf{A}' \bigvee_{\pi} \mathsf{A}'') \,\,
\mathsf{A}_1 \cdots \mathsf{A}_n)$, $((\mathsf{A}' \bigwedge_{\pi}
\mathsf{A}'') \,\, \mathsf{A}_1 \cdots \mathsf{A}_n)$,
$(\mathsf{A}_1\approx \mathsf{A}_2)$, and $(\exists \mathsf{V}
\,\mathsf{A})$. The only non-trivial case is $\mathsf{A}=(\mathsf{p}
\,\, \mathsf{A}_1 \cdots \mathsf{A}_n)$, which we demonstrate.
Assume that $\mathsf{p}$ is defined in $\mathsf{P}$ with a set of
$k$ rules with right-hand sides
$\mathsf{E}_{1},\ldots,\mathsf{E}_{k}$. Let $\mathsf{E}_{\mathsf{p}}
= \mathsf{E}_{1}\bigvee_{\pi}\cdots \bigvee_{\pi}\mathsf{E}_{k}$.
Since $\mathsf{G} =\leftarrow (\mathsf{p} \, \mathsf{A}_1\cdots
\mathsf{A}_n)$, we have that $\mathsf{G}'= \leftarrow
(\mathsf{A}'\,\mathsf{A}'_1\cdots \mathsf{A}'_n)$, with
$\mathsf{A}'\in S_{\mathsf{p}}, \mathsf{A}'_1\in
S_{\mathsf{A}_1},\ldots, \mathsf{A}'_n\in S_{\mathsf{A}_n}$. We
distinguish three cases for $\mathsf{A}'$:
\begin{itemize}
\item $\mathsf{A}' = \mathsf{p}$. Then
      $\mathsf{G}'\stackrel{\epsilon}{\rightarrow} \mathsf{H}'$, where
      $\mathsf{H}' = \leftarrow
      (\mathsf{E}_{j}\,\mathsf{A}'_1\cdots \mathsf{A}'_n)$ for
      some $j$. We also have that $\mathsf{G} \stackrel{\epsilon}{\rightarrow} \mathsf{H}$, where
      $\mathsf{H} = \leftarrow (\mathsf{E}_{j}\,\mathsf{A}_1\cdots \mathsf{A}_n)$.
      Obviously, it holds that $\mathsf{H}' \in S_\mathsf{H}$.

\item $\mathsf{A}' = \mathsf{E}_{\mathsf{p}}$ and $\mathsf{E}_{\mathsf{p}}$ contains more
      than one disjunct. Then $\mathsf{G}'\stackrel{\epsilon}{\rightarrow}
      \mathsf{H}'$, where $\mathsf{H}' = \leftarrow
      (\mathsf{E}_{j}\,\mathsf{A}'_1\cdots \mathsf{A}'_n)$.
      We also have that $\mathsf{G} \stackrel{\epsilon}{\rightarrow} \mathsf{H}$, where
      $\mathsf{H} = \leftarrow (\mathsf{E}_{j}\,\mathsf{A}_1\cdots \mathsf{A}_n)$.
      Again, it holds that $\mathsf{H}' \in S_\mathsf{H}$.

\item $\mathsf{A}' = \mathsf{E}_{\mathsf{p}}$ and $\mathsf{E}_{\mathsf{p}}$ contains exactly one
      disjunct. Then this disjunct must be a lambda abstraction of the form $(\lambda\mathsf{V}.\mathsf{A''})$.
      This implies that $\mathsf{G}'=\leftarrow(( \lambda\mathsf{V}.\mathsf{A''})\,\mathsf{A}'_1\cdots\mathsf{A}'_n)$
      and $\mathsf{G}'\stackrel{\epsilon}{\rightarrow}
      \mathsf{H}'$, where $\mathsf{H}' =
      \leftarrow(\mathsf{A''}\{\mathsf{V}/\mathsf{A}'_1\}\,\mathsf{A}'_2\cdots\mathsf{A}'_n)$.
      On the other hand, $\mathsf{G}\stackrel{\epsilon}{\rightarrow}\mathsf{H}_1$, where
      $\mathsf{H}_1 = \leftarrow((
      \lambda\mathsf{V}.\mathsf{A''})\,\mathsf{A}_1\cdots\mathsf{A}_n)$,
      and $\mathsf{H}_1\stackrel{\epsilon}{\rightarrow}\mathsf{H}$,
      where $\mathsf{H}=\leftarrow(\mathsf{A''}\{\mathsf{V}/\mathsf{A}_1\}\,\mathsf{A}_2\cdots\mathsf{A}_n)$.
      Therefore,
      $\mathsf{G}\stackrel{\epsilon}{\twoheadrightarrow}\mathsf{H}$
      where $\mathsf{H}' \in S_{\mathsf{H}}$.
 \end{itemize}
The above completes the proof for the basis case. For the induction
step, the goal must be of the form $\mathsf{G}
=\leftarrow(\mathsf{A}_1\wedge \mathsf{A}_2)$. Then, $\mathsf{G}'
=\leftarrow(\mathsf{A}'_1\wedge \mathsf{A}'_2)$ where $\mathsf{A}'_1
\in S_{\mathsf{A}_1}$ and $\mathsf{A}'_2 \in S_{\mathsf{A}_2}$.
Since $\mathsf{G}' \stackrel{\theta}{\rightarrow} \mathsf{H}'$, we
conclude without loss of generality that
$\mathsf{A}'_1\stackrel{\theta}{\rightarrow}\mathsf{H}'_1$ and
$\mathsf{H}' = \leftarrow (\mathsf{H}'_1 \wedge
\mathsf{A}'_2\theta)$. By the induction hypothesis, since
$\mathsf{A}'_1\stackrel{\theta}{\rightarrow}\mathsf{H}'_1$, we get
that
$\mathsf{A}_1\stackrel{\theta}{\twoheadrightarrow}\mathsf{H}_1$,
where $\mathsf{H}'_1 \in S_{\mathsf{H}_1}$. But then this easily
implies that $(\mathsf{A}_1 \wedge
\mathsf{A}_2)\stackrel{\theta}{\twoheadrightarrow}(\mathsf{H}_1
\wedge \mathsf{A}_2\theta)$, ie.,
$\mathsf{G}\stackrel{\theta}{\twoheadrightarrow} \mathsf{H}$, where
$\mathsf{H}' \in S_\mathsf{H}$. This completes the proof for the
induction step and the lemma.\qed
\end{proof}
\begin{lemma}\label{lemma-refute-of-s-g}
Let $\mathsf{P}$ be a program, $\mathsf{G}$,$\mathsf{G}'$ goals and
$\mathsf{G}' \in S_\mathsf{G}$. If there exists an SLD-refutation
for $\mathsf{P}\cup\{\mathsf{G}'\}$ using substitution $\theta$,
then there also exists an SLD-refutation for
$\mathsf{P}\cup\{\mathsf{G}\}$ using the same substitution $\theta$.
\end{lemma}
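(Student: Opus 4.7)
The plan is to proceed by induction on the length $n$ of the given SLD-refutation of $\mathsf{P}\cup\{\mathsf{G}'\}$, with Lemma~\ref{lemma-derive-of-s-g} as the workhorse that lifts each single step of the $\mathsf{G}'$-derivation to a (possibly multi-step) segment of a $\mathsf{G}$-derivation.

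For the base case $n=1$, we have $\mathsf{G}' \stackrel{\theta}{\rightarrow} \Box$. By Lemma~\ref{lemma-derive-of-s-g}, there is a derivation $\mathsf{G} \stackrel{\theta}{\twoheadrightarrow} \mathsf{H}$ with $\Box \in S_{\mathsf{H}}$. I would then observe that an element of $S_{\mathsf{H}}$ is obtained from $\mathsf{H}$ only by \emph{replacing} predicate constant occurrences with (non-empty) disjunctions of clause bodies, so such a replacement can never yield the empty clause unless $\mathsf{H}$ is already $\Box$. Hence $\mathsf{H}=\Box$ and we have the required refutation of $\mathsf{P}\cup\{\mathsf{G}\}$ using $\theta$.

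For the inductive step, let the given refutation be $\mathsf{G}'=\mathsf{G}'_0 \stackrel{\theta_1}{\rightarrow} \mathsf{G}'_1 \stackrel{\theta_2}{\rightarrow} \cdots \stackrel{\theta_n}{\rightarrow} \mathsf{G}'_n=\Box$ with $\theta=\theta_1\cdots\theta_n$. Apply Lemma~\ref{lemma-derive-of-s-g} to the first step to obtain $\mathsf{G}\stackrel{\theta_1}{\twoheadrightarrow}\mathsf{G}_1$ with $\mathsf{G}'_1 \in S_{\mathsf{G}_1}$. The tail $\mathsf{G}'_1,\ldots,\mathsf{G}'_n$ is an SLD-refutation of $\mathsf{P}\cup\{\mathsf{G}'_1\}$ of length $n-1$ using substitution $\theta_2\cdots\theta_n$; by the induction hypothesis applied to $\mathsf{G}_1$ and $\mathsf{G}'_1 \in S_{\mathsf{G}_1}$, there is an SLD-refutation of $\mathsf{P}\cup\{\mathsf{G}_1\}$ using the same substitution $\theta_2\cdots\theta_n$. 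Concatenating with the segment $\mathsf{G}\stackrel{\theta_1}{\twoheadrightarrow}\mathsf{G}_1$ gives an SLD-refutation of $\mathsf{P}\cup\{\mathsf{G}\}$ using $\theta_1(\theta_2\cdots\theta_n)=\theta$, as required.

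The main obstacle I expect is the bookkeeping around template variables: Definition~\ref{SLD-derivation} requires the template variables introduced by each $\{\mathsf{V}/\mathsf{B}_t\}$ step to be fresh with respect to the whole preceding history. When we splice together the lifted segment $\mathsf{G}\stackrel{\theta_1}{\twoheadrightarrow}\mathsf{G}_1$ with the refutation of $\mathsf{G}_1$ coming from the induction hypothesis, we must ensure that the template variables used in the tail are disjoint from those introduced in the lifted head. Since $S_{\mathsf{G}}$ never introduces new variables (it only replaces predicate constants), and template variables can always be $\alpha$-renamed to be fresh, this can be arranged by choosing the witnesses produced by the induction hypothesis with sufficiently fresh template variables; I would note this briefly rather than carry out the explicit renaming. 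The remaining work is purely mechanical concatenation of derivations and verification that the composition of substitutions matches $\theta$.
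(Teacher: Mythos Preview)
Your approach is essentially the paper's: induction on the length of the refutation of $\mathsf{P}\cup\{\mathsf{G}'\}$, with Lemma~\ref{lemma-derive-of-s-g} lifting the first step and the induction hypothesis handling the tail. The inductive step is identical to the paper's.

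One small wrinkle in your base case: the claim that $\Box\in S_{\mathsf{H}}$ forces $\mathsf{H}=\Box$ is not quite true. If $\mathsf{H}=\leftarrow\mathsf{p}$ for a predicate constant $\mathsf{p}:o$ whose \emph{sole} clause is $\mathsf{p}\leftarrow_o\mathsf{1}$, then replacing $\mathsf{p}$ by its (one-element) disjunction of bodies yields $\mathsf{1}=\Box$, so $\Box\in S_{\mathsf{H}}$ while $\mathsf{H}\neq\Box$. The repair is immediate---in that situation $\mathsf{H}\stackrel{\epsilon}{\rightarrow}\Box$, so $\mathsf{G}\stackrel{\theta}{\twoheadrightarrow}\mathsf{H}\stackrel{\epsilon}{\rightarrow}\Box$ is still a refutation using $\theta$---but your sentence ``such a replacement can never yield the empty clause'' overlooks it. The paper avoids this by handling the base case through a direct enumeration of the possible shapes of $\mathsf{G}$ rather than by invoking Lemma~\ref{lemma-derive-of-s-g} again.
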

\begin{proof}
The proof is by induction on the length $n$ of the refutation of
$\mathsf{P}\cup\{\mathsf{G}'\}$. The induction basis is for $n=1$
and includes the following cases for $\mathsf{G}$: $(\Box \wedge
\Box)$, $(\Box \vee \mathsf{E}_1)$, $(\mathsf{E}_1  \vee \Box)$,
$(\mathsf{E}_1 \approx \mathsf{E}_2)$, $((\lambda \mathsf{V}.\Box)\,
\mathsf{E})$ and $(\exists \mathsf{V}\,\Box)$. It can be easily
verified that the lemma holds for all these cases.

Suppose now that the result holds for $n-1$. We demonstrate that it
also holds for $n$. Let $\mathsf{G}' = \mathsf{G}'_0,
\mathsf{G}'_1,\ldots, \mathsf{G}'_n$ be the derived goals of the
SLD-refutation of $\mathsf{G}'$ using the sequence of substitutions
$\theta_1,\ldots,\theta_n$. Since $\mathsf{G}'
\stackrel{\theta_1}{\rightarrow} \mathsf{G}'_1$, by
Lemma~\ref{lemma-derive-of-s-g} there exists a goal $\mathsf{G}_1$
such that $\mathsf{G} \stackrel{\theta_1}{\twoheadrightarrow}
\mathsf{G}_1$ and $\mathsf{G}'_1 \in S_{\mathsf{G}_1}$. By the
induction hypothesis, $\mathsf{P}\cup\{\mathsf{G}_1\}$ has an
SLD-refutation using $\theta_2\cdots\theta_n$. It follows that
$\mathsf{P}\cup\{\mathsf{G}\}$ also has an SLD-refutation using
$\theta = \theta_1\cdots\theta_n$.\qed
\end{proof}
\begin{corollary}\label{lemma-refute-of-hat}
Let $\mathsf{P}$ be a program and $\mathsf{G}$ a goal. If there
exists an SLD-refutation for
$\mathsf{P}\cup\{\widehat{\mathsf{G}}\}$ using substitution
$\theta$, then there also exists an SLD-refutation for
$\mathsf{P}\cup\{\mathsf{G}\}$ using the same substitution
$\theta$.\qed
\end{corollary}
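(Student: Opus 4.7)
The plan is essentially to invoke Lemma~\ref{lemma-refute-of-s-g} with $\mathsf{G}' = \widehat{\mathsf{G}}$. The only thing to verify is that $\widehat{\mathsf{G}}$ qualifies as an element of $S_{\mathsf{G}}$ in the sense required by that lemma. But this is immediate from Definition~\ref{def-of-s-g}: the set $S_{\mathsf{G}}$ is defined as the collection of goals obtained from $\mathsf{G}$ by substituting \emph{zero or more} occurrences of each predicate constant $\mathsf{p}$ by the disjunction $\mathsf{E}_1 \bigvee_{\pi} \cdots \bigvee_{\pi} \mathsf{E}_k$ of the bodies of its defining clauses, and $\widehat{\mathsf{G}}$ is explicitly singled out as the particular member of $S_{\mathsf{G}}$ in which \emph{every} such occurrence is replaced. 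Hence $\widehat{\mathsf{G}} \in S_{\mathsf{G}}$.

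Given this observation, the hypothesis of the corollary (the existence of an SLD-refutation of $\mathsf{P} \cup \{\widehat{\mathsf{G}}\}$ with substitution $\theta$) matches exactly the hypothesis of Lemma~\ref{lemma-refute-of-s-g}, and the conclusion (an SLD-refutation of $\mathsf{P} \cup \{\mathsf{G}\}$ with the same $\theta$) is its conclusion. There is no real obstacle here; the corollary is simply packaging the special case $\mathsf{G}' = \widehat{\mathsf{G}}$ of the preceding lemma into a form that will be convenient in the forthcoming completeness argument, where the ``fully unfolded'' goal $\widehat{\mathsf{G}}$ plays the central role (as suggested by Lemma~\ref{lemma-equal-tp-goal-hat} relating $\widehat{\mathsf{E}}$ to $T_{\mathsf{P}}$).
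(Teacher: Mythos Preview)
Your proposal is correct and matches the paper's approach exactly: the corollary is stated with an immediate \qed in the paper, precisely because $\widehat{\mathsf{G}} \in S_{\mathsf{G}}$ by Definition~\ref{def-of-s-g}, so Lemma~\ref{lemma-refute-of-s-g} applies directly. Your explanation spells out in detail what the paper leaves implicit.
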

\begin{lemma}\label{lemma-completeness-bottom-interp}
Let $\mathsf{P}$ be a program and $\mathsf{G}=\leftarrow \mathsf{A}$
be a goal such that $\lsem \mathsf{A}
\rsem_s(\bot_{\mathcal{I}_\mathsf{P}}) = 1$ for all Herbrand states
$s$. Then, there exists an SLD-refutation for
$\mathsf{P}\cup\{\mathsf{G}\}$ with computed answer equal to the
identity substitution.
\end{lemma}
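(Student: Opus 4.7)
The plan is to proceed by structural induction on the expression $\mathsf{A}$, examining cases according to the outermost syntactic form of $\mathsf{A}$ (which must have type $o$ since $\mathsf{G}$ is a goal clause). First I would dispose of the trivial or vacuous cases: if $\mathsf{A}=\mathsf{1}$ (or $\Box$), the goal is already refuted with $\epsilon$; if $\mathsf{A}=\mathsf{0}$, or $\mathsf{A}$ has head $\mathsf{p}\,\mathsf{E}_1\cdots\mathsf{E}_n$ for a predicate constant $\mathsf{p}$, or head $\mathsf{Q}\,\mathsf{E}_1\cdots\mathsf{E}_n$ for a predicate variable $\mathsf{Q}$, then the hypothesis cannot hold. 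Indeed, $\bot_{{\cal I}_\mathsf{P}}$ assigns $\perp_\pi$ to every predicate constant and, applied to arguments, the result bottoms out at $0\in\lsem o\rsem$; symmetrically, choosing the state $s$ with $s(\mathsf{Q})=\perp$ forces $\lsem\mathsf{A}\rsem_s(\bot_{{\cal I}_\mathsf{P}})=0$. So these cases never arise under the hypothesis.

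For the genuinely inductive cases I would proceed as follows. If $\mathsf{A}=(\lambda\mathsf{V}.\mathsf{E})\,\mathsf{E}_1\cdots\mathsf{E}_n$, apply the $\beta$-reduction rule with $\epsilon$; by Lemma~\ref{beta-reduction-lemma} the meaning is preserved, so the hypothesis is inherited by the reduct and the induction hypothesis applies. If $\mathsf{A}=(\mathsf{E}_1\bigwedge_\pi\mathsf{E}_2)\,\mathsf{F}_1\cdots\mathsf{F}_k$, the hypothesis yields that both application expressions $\mathsf{E}_i\,\mathsf{F}_1\cdots\mathsf{F}_k$ satisfy it (since a $\sqcap$ of values in $\{0,1\}$ equals $1$ only when both do); use rule~6 to split (rule~6 already covers $\pi\neq o$, and the case $\pi=o$ reduces directly), then combine the two recursive refutations via rules 7--10 using the empty substitution throughout. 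For $\mathsf{A}=\mathsf{E}_1\approx\mathsf{E}_2$, the hypothesis forces $\lsem\mathsf{E}_1\rsem_s(\bot_{{\cal I}_\mathsf{P}})=\lsem\mathsf{E}_2\rsem_s(\bot_{{\cal I}_\mathsf{P}})$ for every Herbrand state $s$; since the two sides are type-$\iota$ terms built from function symbols, constants and variables, a position-by-position argument constructing a distinguishing state at any mismatch shows that $\mathsf{E}_1$ and $\mathsf{E}_2$ are syntactically identical, so their mgu is $\epsilon$ and rule~11 closes with $\Box$. Finally, for $\mathsf{A}=(\exists\mathsf{V}\,\mathsf{E}')$, apply rule~12 with $\epsilon$; the variable $\mathsf{V}$ is fresh (by the variable convention) and therefore outside $FV(\mathsf{G})$, so any subsequent instantiation of $\mathsf{V}$ is erased when the overall substitution is restricted to $FV(\mathsf{G})$, and the induction hypothesis applied to $\mathsf{E}'$ yields the desired refutation.

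The hardest case, and the one I expect to be the main obstacle, is the disjunction $\mathsf{A}=(\mathsf{E}_a\bigvee_\pi\mathsf{E}_b)\,\mathsf{F}_1\cdots\mathsf{F}_k$. Here one must show that at least one of the application expressions $\mathsf{E}_a\,\mathsf{F}_1\cdots\mathsf{F}_k$, $\mathsf{E}_b\,\mathsf{F}_1\cdots\mathsf{F}_k$ already satisfies the universal hypothesis, so that rule~4 or rule~5 (with $\epsilon$) reduces to an instance of the induction hypothesis. The key tool is the monotonicity of the semantics with respect to states (Lemma~\ref{monotonicity-of-semantics-wrt-state}): since every predicate-variable component of a state has a bottom element $\perp_\rho$, any ``downward shift'' of the state cannot increase the value of either disjunct, so if the disjunction is uniformly $1$ then the disjunct that is $1$ at the minimal predicate-variable state remains $1$ at every larger state by monotonicity. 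Combined with the fact that the two disjuncts have the same free variables as $\mathsf{A}$ after application, this lets us pick, uniformly, the disjunct that serves as the witness and apply the induction hypothesis to it. The remaining measure-theoretic obligation is merely to pick an induction measure (for instance, a lexicographic pair of the number of $\beta$-redexes and the size of the outermost logical skeleton) that strictly decreases in every inductive call, so that the recursion on $\mathsf{E}\{\mathsf{V}/\mathsf{E}_1\}\,\mathsf{E}_2\cdots$ in the $\beta$-case is well founded despite the potential blow-up in raw expression size.
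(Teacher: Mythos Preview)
Your treatment of the existential case has a genuine gap. After applying rule~12, the variable $\mathsf{V}$ becomes a \emph{free} variable of the new goal $\leftarrow\mathsf{E}'$, and your induction hypothesis requires $\lsem\mathsf{E}'\rsem_s(\bot_{\mathcal{I}_\mathsf{P}})=1$ for \emph{all} Herbrand states $s$, including those assigning the ``wrong'' value to $\mathsf{V}$. That simply does not follow from the hypothesis on $\exists\mathsf{V}\,\mathsf{E}'$: take $\mathsf{A}=\exists\mathsf{V}\,(\mathsf{V}\approx\mathsf{a})$ with $\mathsf{V}:\iota$. The hypothesis of the lemma holds for $\mathsf{A}$, but $\lsem(\mathsf{V}\approx\mathsf{a})\rsem_s(\bot_{\mathcal{I}_\mathsf{P}})=0$ whenever $s(\mathsf{V})\neq\mathsf{a}$. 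Your remark about $\mathsf{V}$ being erased upon restriction to $FV(\mathsf{G})$ concerns only the \emph{conclusion} (the computed answer), not the \emph{premise} needed to invoke the induction hypothesis.

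The paper repairs exactly this point by strengthening the statement it proves: for every \emph{closed} basic substitution $\theta$, if $\lsem\mathsf{A}\theta\rsem_s(\bot_{\mathcal{I}_\mathsf{P}})=1$ for all $s$, then $\mathsf{P}\cup\{\leftarrow\mathsf{A}\theta\}$ has an SLD-refutation with identity computed answer. In the $\exists$ case one picks a witness $b$, represents it by a closed basic expression $\mathsf{B}$ via Lemma~\ref{for_every_b_exists_B}, extends $\theta$ to $\theta'=\theta\cup\{\mathsf{V}/\mathsf{B}\}$, applies the (strengthened) induction hypothesis to $\mathsf{E}'\theta'$, and then uses Lemma~\ref{pre-lifting-lemma} to lift the resulting refutation to one of $\mathsf{E}'\theta$. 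The carried substitution $\theta$ is what lets the witness be fixed syntactically while the ``for all $s$'' quantifier is preserved. Two further remarks: your monotonicity argument for the $\bigvee_\pi$ case appeals to a minimal state, but states are only ordered in their predicate-variable components (the order on $\iota$ is trivial), so there is no global least state when free individual variables are present; and the termination measure you propose for $\beta$-reduction is fragile, since $\beta$-reduction can duplicate redexes and increase size --- the paper instead uses an outer induction on the type of the head $\mathsf{A}_0$ of $\mathsf{A}=\mathsf{A}_0\,\mathsf{A}_1\cdots\mathsf{A}_n$, with an inner structural induction for the base case $n=0$.
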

The proof of the lemma can be found in
Appendix~\ref{appendix-bottom}.

As in the first-order case, we have various forms of completeness.
We can now prove the analogue of a theorem due to Apt and van Emden
(see~\cite{Apt90}[Lemma 3.17] or~\cite{lloyd}[Theorem 8.3]).
\begin{theorem}\label{theorem-completeness-for-closed}
Let $\mathsf{P}$ be a program, $\mathsf{G}=\leftarrow \mathsf{A}$ a
goal and assume that $\lsem \mathsf{A} \rsem_s(M_\mathsf{P}) = 1$
for all Herbrand states $s$. Then, there exists an SLD-refutation
for $\mathsf{P}\cup\{\mathsf{G}\}$ with computed answer equal to the
identity substitution.
\end{theorem}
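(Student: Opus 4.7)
The plan is to reduce the satisfaction at $M_{\mathsf{P}}$ to satisfaction at some finite stage $T_{\mathsf{P}}\uparrow n$, and then to proceed by induction on $n$, exploiting the two key syntactic tools we have: Lemma~\ref{lemma-equal-tp-goal-hat}, which relates the meaning at $T_{\mathsf{P}}(I)$ to the meaning of the one-step-unfolded goal at $I$, and Corollary~\ref{lemma-refute-of-hat}, which transfers refutations from the unfolded goal $\widehat{\mathsf{G}}$ back to $\mathsf{G}$ without changing the substitution.

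The technical core is the auxiliary claim $Q(n)$: for every goal $\mathsf{G}'=\leftarrow\mathsf{A}'$ such that $\lsem \mathsf{A}'\rsem_s(T_{\mathsf{P}}\uparrow n) = 1$ for all Herbrand states $s$, there exists an SLD-refutation of $\mathsf{P}\cup\{\mathsf{G}'\}$ with identity computed answer. The base case $Q(0)$ holds trivially: $T_{\mathsf{P}}\uparrow 0 = \perp_{{\cal I}_{\mathsf{P}}}$, so its hypothesis is exactly the hypothesis of Lemma~\ref{lemma-completeness-bottom-interp}, which directly supplies the required refutation.

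For the inductive step, assume $Q(n)$ and suppose $\lsem \mathsf{A}'\rsem_s(T_{\mathsf{P}}\uparrow(n+1))=1$ for all $s$. Since $T_{\mathsf{P}}\uparrow(n+1)=T_{\mathsf{P}}(T_{\mathsf{P}}\uparrow n)$, Lemma~\ref{lemma-equal-tp-goal-hat} yields $\lsem \widehat{\mathsf{A}'}\rsem_s(T_{\mathsf{P}}\uparrow n)=1$ for all $s$. Applying $Q(n)$ to the goal $\widehat{\mathsf{G}'}=\leftarrow\widehat{\mathsf{A}'}$ produces an identity-answer refutation of $\mathsf{P}\cup\{\widehat{\mathsf{G}'}\}$, and Corollary~\ref{lemma-refute-of-hat} lifts this, with the same identity substitution, to an SLD-refutation of $\mathsf{P}\cup\{\mathsf{G}'\}$. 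This closes the induction.

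It remains to derive the theorem from the auxiliary claim applied to $\mathsf{A}$. Since $M_{\mathsf{P}} = T_{\mathsf{P}}\uparrow\omega$ and $\{T_{\mathsf{P}}\uparrow n\}_{n<\omega}$ is a directed chain in ${\cal I}_{\mathsf{P}}$, Lemma~\ref{continuity-of-semantics} gives, for each fixed $s$, $\bigsqcup_{n<\omega}\lsem \mathsf{A}\rsem_s(T_{\mathsf{P}}\uparrow n) = \lsem \mathsf{A}\rsem_s(M_{\mathsf{P}}) = 1$ in the two-element lattice, so some finite $n_s$ witnesses satisfaction. The main obstacle I anticipate is passing from this pointwise statement to a single level $n$ that works uniformly in $s$, which is what $Q(n)$ requires; to overcome it I would use Lemma~\ref{lemma-transform-state-to-subst} to recast the quantification over Herbrand states as a quantification over basic substitutions applied to the (finite) set of free variables of $\mathsf{A}$, and then exploit monotonicity of $\lsem\cdot\rsem_s(\cdot)$ in its second argument together with the finitary structure of $\mathsf{A}$ to collapse the dependency on $s$ into the inductive parameter. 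Once this uniform $n$ is in hand, an invocation of $Q(n)$ produces the desired identity-answer SLD-refutation of $\mathsf{P}\cup\{\mathsf{G}\}$.
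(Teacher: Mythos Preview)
Your inductive claim $Q(n)$ and its proof coincide with the paper's argument: base case Lemma~\ref{lemma-completeness-bottom-interp}, inductive step via Lemma~\ref{lemma-equal-tp-goal-hat} followed by Corollary~\ref{lemma-refute-of-hat} (the paper also observes $FV(\mathsf{G})=FV(\widehat{\mathsf{G}})$ to carry the identity computed answer across). The paper, however, simply \emph{states} that it proves $Q(n)$ for all $n$ and stops; it never addresses the passage from the hypothesis at $M_{\mathsf P}=T_{\mathsf P}\uparrow\omega$ to the hypothesis at some finite $n$. You are right to flag this as the genuine obstacle.

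Unfortunately your proposed fix---recasting states as closed basic substitutions via Lemma~\ref{lemma-transform-state-to-subst} and appealing to ``the finitary structure of $\mathsf A$''---does not close the gap, and in fact no fix exists in the generality stated. Although $\mathsf A$ has finitely many free variables, the range of each variable is the whole (possibly infinite) Herbrand universe, so continuity yields only a state-dependent level $n_s$. Concretely, with $\mathsf P$ the standard two-clause definition of $\mathsf{nat}$ over the signature $\{0,s\}$ and $\mathsf A=\mathsf{nat}\,X$, one has $\lsem\mathsf A\rsem_s(M_{\mathsf P})=1$ for every Herbrand state $s$, yet $\lsem\mathsf A\rsem_s(T_{\mathsf P}\uparrow n)=0$ whenever $s(X)=s^n(0)$, and every SLD-refutation of $\leftarrow\mathsf{nat}\,X$ necessarily binds $X$. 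Thus the conclusion fails for this non-closed $\mathsf A$. The resolution is that in the only places the theorem is invoked (Theorems~\ref{theorem-completeness-unsatisfiable} and~\ref{theorem-completeness-substitutions}) it is applied to $\mathsf A\theta$ with $\theta$ a closed basic substitution produced by Lemma~\ref{lemma-transform-state-to-subst}; for closed $\mathsf A$ the ``for all $s$'' is vacuous and a single $n$ drops out of Lemma~\ref{continuity-of-semantics} immediately. So your $Q(n)$, which you prove correctly and which matches the paper, is already what the downstream theorems need; the bridging step you worry about should be replaced by the observation that one may assume $\mathsf A$ closed.
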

\begin{proof}
We prove by induction on $n$ that if $\lsem \mathsf{A}
\rsem_s(T_\mathsf{P} \uparrow n) = 1$ for all Herbrand states $s$,
then $\mathsf{P}\cup\{\mathsf{G}\}$ has an SLD-refutation with
computed answer equal to the identity substitution. For $n=0$ the
proof is a direct consequence of
Lemma~\ref{lemma-completeness-bottom-interp}.

Now suppose that the result holds for $n-1$. For the induction step
assume that $\lsem \mathsf{A} \rsem_s(T_\mathsf{P}\!\uparrow\! n) =
1$ for all $s$. By Lemma \ref{lemma-equal-tp-goal-hat}, $\lsem
\widehat{\mathsf{A}} \rsem_s(T_\mathsf{P} \!\uparrow\! {(n-1)}) =
1$. By the induction hypothesis there exists an SLD-refutation for
$\mathsf{P}\cup\{\widehat{\mathsf{G}}\}$ with computed answer equal
to the identity substitution. Let $\theta$ be the composition of the
substitutions that are used during the SLD-refutation of
$\mathsf{P}\cup\{\widehat{\mathsf{G}}\}$. By
Corollary~\ref{lemma-refute-of-hat}, $\mathsf{P}\cup\{\mathsf{G}\}$
also has an SLD-refutation using the same substitution $\theta$. The
restriction of $\theta$ to the free variables of $\mathsf{G}$ is
equal to the restriction of $\theta$ to the free variables of
$\widehat{\mathsf{G}}$ which is equal to the empty substitution.
Therefore, $\mathsf{P}\cup\{\mathsf{G}\}$ has an SLD-refutation with
computed answer equal to the identity substitution. \qed
\end{proof}

The following theorem generalizes a result of Hill~\cite{H74} (see
also~\cite{Apt90}[Theorem 3.15]):
\begin{theorem}\label{theorem-completeness-unsatisfiable}
Let $\mathsf{P}$ be a program and $\mathsf{G}=\leftarrow \mathsf{A}$
a goal. Suppose that $\mathsf{P}\cup\{\mathsf{G}\}$ is
unsatisfiable. Then, there exists an SLD-refutation of
$\mathsf{P}\cup\{\mathsf{G}\}$.
\end{theorem}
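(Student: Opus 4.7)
The plan is to reduce the statement to Theorem~\ref{theorem-completeness-for-closed} via the Lifting Lemma, using the minimum Herbrand model $M_\mathsf{P}$ as the distinguished interpretation that must fail to satisfy $\mathsf{G}$. First I would observe that since $\mathsf{P}\cup\{\mathsf{G}\}$ is unsatisfiable, the interpretation $M_\mathsf{P}$ (which is a Herbrand model of $\mathsf{P}$) cannot be a model of $\mathsf{G}=\leftarrow\mathsf{A}$. Unwinding the semantics of goal clauses (clause~14 of Definition~\ref{definition-semantics-of-expressions}), this yields a Herbrand state $s$ over $U_\mathsf{P}$ such that $\lsem\mathsf{A}\rsem_s(M_\mathsf{P})=1$.

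Next I would invoke Lemma~\ref{lemma-transform-state-to-subst} to convert the dependence on the state $s$ into a syntactic basic substitution: since $\mathsf{A}$ is a positive expression, there exists a basic substitution $\theta$ with $\lsem\mathsf{A}\rsem_s(M_\mathsf{P})=\lsem\mathsf{A}\theta\rsem_{s'}(M_\mathsf{P})$ for every Herbrand state $s'$. Combined with the previous step, this gives $\lsem\mathsf{A}\theta\rsem_{s'}(M_\mathsf{P})=1$ for all Herbrand states $s'$.

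Now I would apply Theorem~\ref{theorem-completeness-for-closed} to the instantiated goal $\mathsf{G}\theta=\leftarrow\mathsf{A}\theta$: it yields an SLD-refutation of $\mathsf{P}\cup\{\mathsf{G}\theta\}$ (with identity computed answer). Finally, the Lifting Lemma (Lemma~\ref{lifting-lemma}) transports this SLD-refutation of $\mathsf{P}\cup\{\mathsf{G}\theta\}$ back to an SLD-refutation of $\mathsf{P}\cup\{\mathsf{G}\}$, which is exactly the desired conclusion.

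The argument is mostly routine given the machinery already developed; there is no real obstacle, only one small point to be careful about. The point is to justify that $M_\mathsf{P}$ falsifies $\mathsf{G}$ at \emph{some} state (rather than giving a model), so that Lemma~\ref{lemma-transform-state-to-subst} produces a substitution making the body universally true across all states, which is precisely the hypothesis needed by Theorem~\ref{theorem-completeness-for-closed}. All other steps are direct appeals to previously established results.
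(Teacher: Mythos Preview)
Your proposal is correct and follows essentially the same approach as the paper's own proof: use that $M_\mathsf{P}$ is a model of $\mathsf{P}$ but not of $\mathsf{G}$ to obtain a state $s$ with $\lsem\mathsf{A}\rsem_s(M_\mathsf{P})=1$, apply Lemma~\ref{lemma-transform-state-to-subst} to replace $s$ by a basic substitution $\theta$, invoke Theorem~\ref{theorem-completeness-for-closed} on $\mathsf{G}\theta$, and finish with the Lifting Lemma. The steps and the cited lemmas match the paper exactly.
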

\begin{proof}
Since $\mathsf{P}\cup\{\mathsf{G}\}$ is unsatisfiable and since
$M_{\mathsf{P}}$ is a model of $\mathsf{P}$, we conclude that $\lsem
\mathsf{G} \rsem_s(M_\mathsf{P}) = 0$, for some state $s$.
Therefore, $\lsem \mathsf{A}\rsem_s(M_\mathsf{P}) = 1$. By
Lemma~\ref{lemma-transform-state-to-subst} we can construct a
substitution $\theta$ such that $\lsem \mathsf{A}\theta
\rsem_{s'}(M_\mathsf{P}) = 1$ for all states $s'$. By
Theorem~\ref{theorem-completeness-for-closed}, there exists an
SLD-refutation for $\mathsf{P}\cup\{\mathsf{G}\theta\}$. By
Lemma~\ref{lifting-lemma} there exists an SLD-refutation for
$\mathsf{P}\cup\{\mathsf{G}\}$.\qed
\end{proof}

Finally, the following theorem is a generalization of Clark's
theorem~\cite{Cla79} (see also the more
accessible~\cite{Apt90}[Theorem 3.18]) for the higher-order case:
\begin{theorem}[Completeness]\label{theorem-completeness-substitutions}
Let $\mathsf{P}$ be a program and $\mathsf{G}=\leftarrow \mathsf{A}$
a goal. For every correct answer $\theta$ for
$\mathsf{P}\cup\{\mathsf{G}\}$, there exists an SLD-refutation for
$\mathsf{P}\cup\{\mathsf{G}\}$ with computed answer $\delta$ and a
substitution $\gamma$ such that $\mathsf{G}\theta =
\mathsf{G}\delta\gamma$.
\end{theorem}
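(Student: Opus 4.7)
The plan is to combine the closed-form completeness result (Theorem~\ref{theorem-completeness-for-closed}) with the Lifting Lemma (Lemma~\ref{lifting-lemma}); the overall shape mirrors the standard first-order argument of Clark. First I would unpack what being a correct answer means: by Definition~\ref{correct-answer}, for every model $M$ of $\mathsf{P}$ and every state $s$ over the domain of $M$, $\lsem \mathsf{A}\theta\rsem_s(M)=1$. Since $M_\mathsf{P}$ is a model of $\mathsf{P}$, this specializes to $\lsem \mathsf{A}\theta\rsem_s(M_\mathsf{P})=1$ for every Herbrand state $s$. Thus the goal $\mathsf{G}\theta = \leftarrow \mathsf{A}\theta$ satisfies exactly the hypothesis of Theorem~\ref{theorem-completeness-for-closed}, yielding an SLD-refutation of $\mathsf{P}\cup\{\mathsf{G}\theta\}$ whose computed answer is the identity substitution. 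Let $\sigma$ be the composition of the basic substitutions used along that refutation.

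Next I would apply the Lifting Lemma to this refutation, with $\theta$ and $\sigma$ playing the roles of the lemma's prior substitution and refutation substitution respectively. This produces an SLD-refutation of $\mathsf{P}\cup\{\mathsf{G}\}$ whose composed substitution $\delta$ satisfies $\mathsf{G}\delta\gamma = \mathsf{G}\theta\sigma$ for some basic substitution $\gamma$. It remains to identify $\mathsf{G}\theta\sigma$ with $\mathsf{G}\theta$: the fact that the computed answer of the closed refutation is $\epsilon$ means, by definition, that $\sigma$ restricted to $FV(\mathsf{G}\theta)$ is the identity; together with the variable convention (no variable introduced later in the refutation occurs in $\mathsf{G}\theta$), this gives $\mathsf{G}\theta\sigma = \mathsf{G}\theta$. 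Hence $\mathsf{G}\delta\gamma = \mathsf{G}\theta$.

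The final bookkeeping step is that the theorem's $\delta$ is the \emph{computed answer}, i.e.\ the restriction of the full composed refutation substitution to $FV(\mathsf{G})$. Since variables outside $FV(\mathsf{G})$ do not occur in $\mathsf{G}$, replacing $\delta$ by its restriction to $FV(\mathsf{G})$ leaves $\mathsf{G}\delta$ unchanged, and therefore the identity $\mathsf{G}\delta\gamma = \mathsf{G}\theta$ is preserved. I do not anticipate any genuine obstacle: the semantic work is already carried out inside Theorem~\ref{theorem-completeness-for-closed} (extracting a refutation from truth in $M_\mathsf{P}$) and the combinatorial work inside the Lifting Lemma (undoing the prior substitution $\theta$ while tracking the extra template variables in $\gamma$). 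The only care-point worth double-checking is the identification $\mathsf{G}\theta\sigma = \mathsf{G}\theta$, which is precisely where the ``computed answer equals $\epsilon$'' conclusion of Theorem~\ref{theorem-completeness-for-closed} gets used, and where the slight mismatch between the higher-order Lifting Lemma (inclusion up to template variables) and its first-order counterpart becomes harmless.
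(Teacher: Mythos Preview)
Your proposal is correct and follows essentially the same approach as the paper: specialize the correct-answer hypothesis to $M_\mathsf{P}$, invoke Theorem~\ref{theorem-completeness-for-closed} to obtain a refutation of $\mathsf{P}\cup\{\mathsf{G}\theta\}$ with identity computed answer, apply the Lifting Lemma, and then do the bookkeeping step of restricting to $FV(\mathsf{G})$. The paper handles the last two points in the same way (using $\delta'$ for the lifted substitution and $\delta$ for its restriction), and your observation that $\mathsf{G}\theta\sigma=\mathsf{G}\theta$ follows from the identity computed answer is exactly the identification the paper makes.
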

\begin{proof}
Since $\theta$ is a correct answer for $\mathsf{P}\cup\{\leftarrow
\mathsf{A}\}$, it follows that $\lsem \mathsf{A}\theta
\rsem_s(M_\mathsf{P}) = 1$ for all Herbrand states $s$. By
Theorem~\ref{theorem-completeness-for-closed},
$\mathsf{P}\cup\{\mathsf{G}\theta\}$ has an SLD-refutation with
computed answer equal to the identity substitution. This means that
if $\sigma$ is the composition of the substitutions used in the
refutation of $\mathsf{P}\cup\{\mathsf{G}\theta\}$, then
$\mathsf{G}\theta\sigma = \mathsf{G}\theta$. By
Lemma~\ref{lifting-lemma} there exists an SLD-refutation for
$\mathsf{P}\cup\{\mathsf{G}\}$ using substitution $\delta'$ such
that for some substitution $\gamma$, $\mathsf{G}\delta'\gamma =
\mathsf{G}\theta \sigma$. Let $\delta$ be $\delta'$ restricted to
the variables in $\mathsf{G}$. Then, it also holds that
$\mathsf{G}\delta'\gamma = \mathsf{G}\delta\gamma$, and therefore
$\mathsf{G}\delta\gamma = \mathsf{G}\theta \sigma =
\mathsf{G}\theta$.\qed
\end{proof}

\section{Related Work}\label{rw-section}
As already discussed in Section~\ref{intro-section}, research in
higher-order logic programming can be categorized in two main
streams: the {\em extensional} approaches and the {\em intensional}
ones.

Work on extensional higher-order logic programming is rather
limited. Apart from the results of~\cite{Wa91a}\footnote{The work
in~\cite{Wa91a} has also been used in order to define a higher-order
extension of Datalog~\cite{KRW05}.}, the only other work that has
come to our attention is that of M. Bezem~\cite{B99-395,B01-203},
who considers higher-order logic programming languages with syntax
similar to that of~\cite{Wa91a}. In~\cite{B01-203} a notion of
extensionality is defined (called the {\em extensional collapse})
and it is demonstrated that many logic programs are extensional
under this notion; however, this notion appears to differ from
classical extensionality and has a more proof-theoretical flavor.

On the other hand, work on intensional higher-order logic
programming is much more extended.  The two main existing approaches
in this area are represented by the languages $\lambda$Prolog and
HiLog. Both systems have mature implementations and have been tested
in various application domains. It should be noted that both
$\lambda$Prolog and HiLog encourage a form of higher-order
programming that extends in various ways the higher-order
programming capabilities that are supported by functional
programming languages. For a more detailed discussion on this issue,
see~\cite{NM98}[section 7.4].

In the rest of this section, we give a very brief presentation of
certain characteristics of these two systems that are related to
their intensional behavior (ie., characteristics that will help the
reader further clarify the differences between the intensional and
extensional approaches). A detailed discussion on the syntax,
semantics, implementation and applications of the two languages, is
outside the scope of this paper (and the interested reader can
consult the relevant bibliography).

\noindent{\bf $\lambda$Prolog}: The language was initially designed
in the late 1980s~\cite{MN86,Nad87,NM90} in order to provide a proof
theoretic basis for logic programming. The syntax of $\lambda$Prolog
is based on the intuitionistic theory of higher-order hereditary
Harrop formulas. The resulting language is a powerful one, that
allows the programmer to quantify over function and predicate
variables, to use $\lambda$-abstractions in terms, and so on. The
semantics of $\lambda$Prolog is not extensional (see for example the
discussion in~\cite{NM98}). The following simple example illustrates
this idea.
\begin{example}
Consider the $\lambda$Prolog program (we omit type declarations):
\[
\begin{array}{l}
\mbox{\tt r p.}\\
\mbox{\tt p X :- q X.}\\
\mbox{\tt q X :- p X.}\\
\end{array}
\]
The goal {\tt ?-r q.} fails for the above program.\qed
\end{example}
In the following we briefly discuss the behavior of $\lambda$Prolog
with respect to queries that contain uninstantiated higher-order
variables (because, from a user point of view, this is a key concept
that can differentiate an intensional system from an extensional
one). Consider for example the query:
\[
\begin{array}{l}
\mbox{\tt ?-(R john mary).}
\end{array}
\]
The above goal is not a meaningful one for $\lambda$Prolog because
there exist too many suitable answer substitutions (ie., predicate
terms) for {\tt R} that one could think of (see the relevant
discussion in~\cite{NM98}[page 50]). One way to get answers to such
queries is to use an extra predicate, say {\tt rel}, which specifies
a collection of predicate terms that are relevant to consider as
substitutions. In this case, the above query will be written as:
\[
\begin{array}{l}
\mbox{\tt ?-(rel R),(R john mary).}
\end{array}
\]
and the implementation will return those terms for which the query
succeeds. In other words, an answer for the above query is a
predicate term (such as for example {\tt married}, {\tt father}, a
lambda expression, and so on). As remarked in~\cite{CKW93-187},
``equality in (a fragment of) $\lambda$Prolog corresponds to
$\lambda$-equivalence and is not extensional: there may exist
predicates that are not $\lambda$-equivalent but still extensionally
equal''. This is a key difference from the extensional approach
presented in this paper, in which an answer to a query is a set.

\noindent{\bf HiLog:} The language possesses a higher-order syntax
and a first-order semantics~\cite{CKW89-37,CKW93-187}. It extends
classical logic programming quite naturally, and allows the
programmer to write in a concise way programs that would be rather
awkward to code in Prolog. It has been used in various application
domains (eg. deductive and object-oriented databases, modular logic
programming, and so on).

As it is the case with $\lambda$Prolog, HiLog is also an intensional
higher-order language. The examples with uninstantiated higher-order
variables mentioned for $\lambda$Prolog have a somewhat similar
behavior in HiLog. However, consider the program:
\[
\begin{array}{l}
\mbox{\tt married(john,mary).}
\end{array}
\]
Then, the query:
\[
\begin{array}{l}
\mbox{\tt ?-R(john,mary).}
\end{array}
\]
is a meaningful one for HiLog, and the interpreter will respond with
{\tt R = married}. Intuitively, the interpreter searches the program
for possible candidate relations and tests them one by one. Of
course, if there is no binary relation defined in the program, the
above query will fail.

The above program behavior can be best explained by the following
comment from~\cite{CKW93-187}: ``in HiLog predicates and other
higher-order syntactic objects are not equal unless they (ie., their
names) are equated explicitly''.

\section{Implementation and Future Work}\label{future-work}
A prototype implementation of the proposed proof procedure has been
performed in Haskell\footnote{The code can be retrieved from {\tt
http://code.haskell.org/hopes}}. A detailed description of the
implementation is outside the scope of this paper. However, in the
following we outline certain points that we feel are important.

The main difference in comparison to a first-order implementation,
is that the proof procedure has to generate an infinite (yet
enumerable) number of basic templates. In order to make more
efficient the production of the basic templates, one main
optimization has been adopted. As we have already mentioned in
Definition~\ref{definition-basic-template}, a basic template is a
non-empty finite union of basic expressions of a particularly simple
form. In the implementation, the members of this union are generated
in a ``demand-driven way'', as the following examples illustrate.
\begin{example}
Consider the query $\leftarrow \mbox{\tt
(R\,\,a\,\,b),(R\,\,c\,\,d)}$. The proposed proof procedure would
try some basic templates until it finds one that satisfies the
query. However, if it first tries the basic template {\tt
($\lambda$X.$\lambda$Y.(X$\approx$Z)$\wedge$(Y$\approx$W))} then
this will obviously not lead to an answer (since a relation that
satisfies the above query must contain at least two pairs of
elements). In order to avoid such cases, our implementation
initially produces a basic expression that consists of the union of
a basic template with an uninstantiated variable (say {\tt L}) of
the same type as the template; intuitively, {\tt L} represents a
(yet undetermined) set of basic templates that may be needed later
during resolution and which need not yet be explicitly generated. In
our example, the implementation starts with the production of an
expression of the form {\tt
($\lambda$X.$\lambda$Y.(X$\approx$Z)$\wedge$(Y$\approx$W))\,\,$\bigvee$\,\,L}.
When the second application in the goal is reached, then a second
basic template will be generated together with a new uninstantiated
variable (say {\tt L1}). The final answer to the query will be an
expression of the form: {\tt
($\lambda$X.$\lambda$Y.(X$\approx$a)$\wedge$(Y$\approx$b))\,\,$\bigvee$\,\,
($\lambda$X.$\lambda$Y.(X$\approx$c)$\wedge$(Y$\approx$d))\,\,$\bigvee$\,\,
L1}. The intuitive meaning of the above answer is that the query is
satisfied by all relations that contain at least the pairs $({\tt
a},{\tt b})$ and $({\tt c},{\tt d})$.

Notice that an important practical advantage of the above
optimization is that a unique answer to the given query is
generated. Notice also that if the formal proof procedure of the
previous sections was followed faithfully in the implementation,
then an infinite number of answers would be generated: an answer
representing the two-element relation $\{({\tt a},{\tt b}),({\tt
c},{\tt d})\}$, an answer representing all three-element relations
$\{({\tt a},{\tt b}),({\tt c},{\tt d}),({\tt X1},{\tt X2})\}$, an
answer representing the four-element relations $\{({\tt a},{\tt
b}),({\tt c},{\tt d}),({\tt X1},{\tt X2}),({\tt X3},{\tt X4})\}$,
and so on.\qed
\end{example}
\begin{example}
Consider the {\tt ordered} predicate of
Example~\ref{ordered-example} and let $\leftarrow$ {\tt ordered R
[1,2,3]} be a query. Following the same ideas as in the previous
example, the implementation will produce the unique answer {\tt
($\lambda$X.$\lambda$Y.(X$\approx$1)$\wedge$(Y$\approx$2))}\,\,$\bigvee$\,\,\mbox{\tt
($\lambda$X.$\lambda$Y.(X$\approx$2)$\wedge$(Y$\approx$3))\,\,$\bigvee$\,\,L}.
Intuitively, this answer states that the list {\tt [1,2,3]} is
ordered under any relation of the form $\{({\tt 1},{\tt 2}),({\tt
2},{\tt 3})\} \cup L$.

Finally, consider Example~\ref{closure-example} defining the {\tt
closure} predicate. Consider also the query $\leftarrow$ {\tt
closure Q a b}. Then, the implementation will enumerate the
following (infinite set of) answers:
\[
\begin{array}{lll}
{\tt Q} & = &  \mbox{\tt ($\lambda$X.$\lambda$Y.(X$\approx$a)$\wedge$(Y$\approx$b))\,\,$\bigvee$\,\,L} \\
{\tt Q} & = & \mbox{\tt ($\lambda$X.$\lambda$Y.(X$\approx$a)$\wedge$(Y$\approx$Z))$\bigvee$($\lambda$X.$\lambda$Y.(X$\approx$Z)$\wedge$(Y$\approx$b))\,\,$\bigvee$\,\,L}\\
        & \ldots &
\end{array}
\]
which intuitively correspond to relations of the following forms:
\[
\begin{array}{lll}
{\tt Q} & = & \{({\tt a},{\tt b})\}\cup L \\
{\tt Q} & = & \{({\tt a},{\tt Z}),({\tt Z},{\tt b})\}\cup L\\
        &\ldots&
\end{array}
\]
Intuitively, the above answers state that the pair $({\tt a},{\tt
b})$ belongs to the transitive closure of all relations that contain
at least the pair $({\tt a},{\tt b})$; moreover, it also belongs to
the transitive closure of all relations that contain at least two
pairs of the form $({\tt a},{\tt Z})$ and $({\tt Z},{\tt b})$ for
any {\tt Z}, and so on.\qed
\end{example}

We are currently considering issues regarding an extended WAM-based
implementation of the ideas presented in the paper. We believe that
ideas originating from graph-reduction~\cite{FH88} will also prove
vital in the development of this extended implementation.

Another interesting direction for future research is the extension
of our higher-order fragment with negation-as-failure. The semantics
of negation in a higher-order setting could probably be captured
model-theoretically using the recent purely logical characterization
of the well-founded semantics through an appropriate infinite-valued
logic~\cite{RW05}.

\vspace{0.5cm} \noindent {\bf Acknowledgments}: We would like to
thank Costas Koutras for valuable discussions regarding algebraic
lattices and the reading group on programming languages at the
University of Athens for many insightful comments and suggestions.

\vspace{-0.1cm}

\bibliographystyle{alpha}


\appendix

\section{Proof of
Lemma~\ref{algebraicity-composition}}\label{appendix-algebraicity-composition}
In order to establish Lemma~\ref{algebraicity-composition}, we first
demonstrate the following auxiliary propositions:
\begin{proposition}\label{step-prop}
Let $A$ be a poset and $L$ be an algebraic lattice. Then, for each
step function $(a \searrow c)$ and for every $f:[A
\stackrel{m}{\rightarrow} L]$ it holds that $(a \searrow c)
\sqsubseteq f$ if and only if $c \sqsubseteq f(a)$.
\end{proposition}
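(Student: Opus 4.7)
The plan is to prove both directions directly from the definition of the step function and the pointwise ordering on $[A \stackrel{m}{\rightarrow} L]$. Recall that by Definition~\ref{step-functions}, $(a \searrow c)(x) = c$ when $a \sqsubseteq_A x$ and $(a \searrow c)(x) = \perp_L$ otherwise, and that $f \sqsubseteq g$ in $[A \stackrel{m}{\rightarrow} L]$ means $f(x) \sqsubseteq_L g(x)$ for every $x \in A$.

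For the forward direction I would assume $(a \searrow c) \sqsubseteq f$ and instantiate the pointwise inequality at $x = a$. Since $\sqsubseteq_A$ is reflexive we have $a \sqsubseteq_A a$, so $(a \searrow c)(a) = c$, giving $c \sqsubseteq_L f(a)$ immediately.

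For the converse I would assume $c \sqsubseteq_L f(a)$ and check the inequality $(a \searrow c)(x) \sqsubseteq_L f(x)$ for an arbitrary $x \in A$ by splitting on whether $a \sqsubseteq_A x$. If $a \sqsubseteq_A x$, then $(a \searrow c)(x) = c \sqsubseteq_L f(a) \sqsubseteq_L f(x)$, where the second step uses the monotonicity of $f$ (this is the only place monotonicity is needed, and it is the reason the statement is phrased for $f \in [A \stackrel{m}{\rightarrow} L]$ rather than arbitrary $f : A \to L$). Otherwise, $(a \searrow c)(x) = \perp_L \sqsubseteq_L f(x)$ trivially.

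There is no real obstacle here; the only subtlety worth flagging is to make explicit the appeal to monotonicity of $f$ in the nontrivial sub-case of the converse direction, since without it the step function $(a \searrow c)$ need not be bounded above by $f$.
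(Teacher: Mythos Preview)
Your proof is correct and matches the paper's argument essentially line for line: both directions are handled exactly as you describe, with the same case split on $a \sqsubseteq_A x$ and the same appeal to monotonicity of $f$ in the nontrivial sub-case. Your added remark about why monotonicity is needed is a nice clarification not made explicit in the paper.
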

\begin{proof}
If $(a \searrow c) \sqsubseteq f$, by applying both functions to $a$
we get $c \sqsubseteq f(a)$. Now suppose that $c \sqsubseteq f(a)$
and consider an arbitrary $x \in A$. In case $a \sqsubseteq x$, we
have $(a \searrow c)(x) = c$ thus, since $c \sqsubseteq f(a)$ and
$f$ is monotonic, $(a \searrow c)(x) \sqsubseteq f(x)$. Otherwise,
$(a \searrow c)(x) = \perp_L$ thus $(a \searrow c)(x) \sqsubseteq
f(x)$. It follows that $(a \searrow c) \sqsubseteq f$.\qed
\end{proof}
\begin{proposition}\label{algebraicity-criterion}
Let $L$ be a complete lattice and assume there exists $B\subseteq
{\cal K}(L)$ such that for every $x \in L$,  $x = \bigsqcup
B_{[x]}$. Then $L$ is an algebraic lattice ($\omega$-algebraic if
$B$ is countable) with basis ${\cal K}(L) = \left \{ \bigsqcup{M}
\mid M \text{ is a finite subset of } B \right \}$.
\end{proposition}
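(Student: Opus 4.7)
The plan is to first observe that algebraicity itself is essentially for free, then devote the work to identifying the basis ${\cal K}(L)$ exactly with the set $K = \{\bigsqcup M \mid M \text{ is a finite subset of } B\}$.

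First I would note the easy half. Since $B \subseteq {\cal K}(L)$, for every $x \in L$ the set $B_{[x]}$ consists entirely of compact elements, and the hypothesis says $x = \bigsqcup B_{[x]}$. By Definition~\ref{algebraic-lattice}, $L$ is already an algebraic lattice; the only remaining task is to pin down its basis.

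Next I would prove $K \subseteq {\cal K}(L)$. Fix a finite $M \subseteq B$ and suppose $\bigsqcup M \sqsubseteq \bigsqcup A$ for some $A \subseteq L$. For each $b \in M$ we have $b \in {\cal K}(L)$ and $b \sqsubseteq \bigsqcup A$, so there is a finite $A_b \subseteq A$ with $b \sqsubseteq \bigsqcup A_b$. Since $M$ is finite, $A' = \bigcup_{b \in M} A_b$ is a finite subset of $A$, and $\bigsqcup M \sqsubseteq \bigsqcup A'$, witnessing compactness of $\bigsqcup M$. This ``finite join of compacts is compact'' step is the main routine obstacle: it is standard but is the only place where one genuinely uses the definition of compactness, so I would be careful to spell it out.

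For the reverse inclusion ${\cal K}(L) \subseteq K$, take any $c \in {\cal K}(L)$. By hypothesis $c = \bigsqcup B_{[c]}$. Applying compactness of $c$ to this very join, there is a finite $M \subseteq B_{[c]}$ with $c \sqsubseteq \bigsqcup M$. But every element of $M$ lies below $c$, so $\bigsqcup M \sqsubseteq c$ by Proposition~\ref{set-of-sets}, giving $c = \bigsqcup M \in K$. Finally, for the $\omega$-algebraic clause, if $B$ is countable then the collection of finite subsets of $B$ is countable, and taking least upper bounds gives at most countably many elements, so ${\cal K}(L) = K$ is countable.
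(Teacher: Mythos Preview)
Your proof is correct and follows essentially the same strategy as the paper's: algebraicity is immediate, finite joins of elements of $B$ are compact by the standard argument, and every compact element is such a finite join by applying its own compactness to a representation as a join over $B$. Your version is in fact slightly more direct than the paper's, which detours through the auxiliary sets $\Delta(x) = \{\bigsqcup M \mid M \text{ finite}, M \subseteq B_{[x]}\}$ and proves the pointwise equality ${\cal K}(L)_{[x]} = \Delta(x)$ before specializing to $x = \bigsqcup L$; you instead apply compactness of $c$ directly to the hypothesis $c = \bigsqcup B_{[c]}$, which avoids that extra layer.
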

\begin{proof}
It is immediate that $L$ is algebraic, since by assumption every
element of $L$ can be written as the least upper bound of a set of
compact elements of $L$. The nontrivial part is establishing the
relation between ${\cal K}(L)$ and $B$.

Given $x \in L$, we let $\Delta(x)$ be the set $\{ \bigsqcup M \mid
M \text{ is a finite subset of } B_{[x]}\}$. Notice that $B_{[x]} =
\bigcup \{ M \mid M \text{ is a finite subset of } B_{[x]} \}$.
Using Proposition~\ref{set-of-sets}(2), we have that $\bigsqcup
B_{[x]} = \bigsqcup \Delta(x)$ and thus $\bigsqcup \Delta(x) = x$.
We show that for each $x \in L$ it holds that ${\cal K}(L)_{[x]} =
\Delta(x)$ by proving that each set is a subset of the other one.

First consider an arbitrary $c \in {\cal K}(L)_{[x]}$ and recall
that $c = \bigsqcup \Delta(c)$. By the compactness of $c$, there
exists a finite $A \subseteq \Delta(c)$ such that $c \sqsubseteq
\bigsqcup A$. But then $\bigsqcup A \sqsubseteq c$ because $c$ is an
upper bound of $\Delta(c)$, and therefore $c = \bigsqcup A$. By the
definition of $\Delta(c)$ and the fact that $A\subseteq \Delta(c)$,
we get that $c = \bigsqcup \{\bigsqcup M_1,\ldots,\bigsqcup M_r\}$,
where $M_1,\ldots,M_r$ are finite subsets of $B_{[c]}$. By
Proposition~\ref{set-of-sets}(2), $c=\bigsqcup{(M_1\cup \cdots \cup
M_r)}$. In other words there exists a finite set $M = M_1\cup \cdots
\cup M_r$ such that $M \subseteq B_{[c]} \subseteq B_{[x]}$ and $c =
\bigsqcup M$, which means that $c \in \Delta(x)$.

On the other hand, consider a finite set $M = \{c_1,\ldots,
c_n\}\subseteq B_{[x]}$ such that $\bigsqcup M \in \Delta(x)$. Let
$A$ be a subset of $L$ such that $\bigsqcup M \sqsubseteq \bigsqcup
A$. Due to the compactness of each $c_i$, by $c_i \sqsubseteq
\bigsqcup A$ we get $c_i \sqsubseteq \bigsqcup A_i$ for some finite
$A_i \subseteq A$. But then, for every $i$, $c_i \sqsubseteq
\bigsqcup A_i \sqsubseteq \bigsqcup \{\bigsqcup A_1,\ldots,\bigsqcup
A_n\} = \bigsqcup (A_1\cup \cdots \cup A_n)$. In other words,
$\bigsqcup M \sqsubseteq \bigsqcup (A_1 \cup \cdots \cup A_n)$,
which implies that $\bigsqcup M$ is compact. Moreover, since $x$ is
an upper bound of $M$, we have that $\bigsqcup M \in {\cal
K}(L)_{[x]}$. Hence, ${\cal K}(L)_{[x]} = \Delta(x)$.

To complete the proof, simply take $x = \bigsqcup L$ in the equality
${\cal K}(L)_{[x]} = \Delta(x)$. If, additionally, $B$ is countable,
the cardinality of ${\cal K}(L)$ is bounded by the number of finite
subsets of a countable set, which is countable. Hence, $L$ is an
$\omega$-algebraic lattice in this case.\qed
\end{proof}
We can now proceed to the proof of
Lemma~\ref{algebraicity-composition}:

\noindent{\bf Lemma~\ref{algebraicity-composition}} Let $A$ be a
poset and $L$ be an algebraic lattice. Then, $[A
\stackrel{m}{\rightarrow} L]$ is an algebraic lattice whose basis is
the set of all least upper bounds of finitely many step functions
from $A$ to $L$. If, additionally, $A$ is countable and $L$ is an
$\omega$-algebraic lattice then $[A \stackrel{m}{\rightarrow} L]$ is
an $\omega$-algebraic lattice.
\begin{proof}
Let $B$ denote the set of all step functions from $A$ to $L$. Recall
that $[A \stackrel{m}{\rightarrow} L]$ forms a complete lattice by
Proposition~\ref{monotonic-functions-make-a-complete-lattice}. Let
$(a \searrow c) \in B$ be an arbitrary step function. We show that
$(a \searrow c)$ is compact. Consider a set $F$ of monotonic
functions from $A$ to $L$ such that $(a \searrow c) \sqsubseteq
\bigsqcup{F}$. By Propositions~\ref{step-prop}
and~\ref{monotonic-functions-make-a-complete-lattice} we get that $c
\sqsubseteq \bigsqcup_{f \in F} f(a)$. By the compactness of $c$,
there exists a finite $F' \subseteq F$ such that $c \sqsubseteq
\bigsqcup_{f \in F'} f(a)$. Let $f'=\bigsqcup F'$. Then, $c
\sqsubseteq f'(a)$, or equivalently by Proposition~\ref{step-prop},
$(a \searrow c) \sqsubseteq f' = \bigsqcup F'$. Hence, $(a \searrow
c)$ is compact.

We now show that every monotonic function $f \in [A
\stackrel{m}{\rightarrow} L]$ is the least upper bound of $B_{[f]}$.
Since $f$ is an upper bound of this set, we let $g$ be an upper
bound of $B_{[f]}$ and prove that $f \sqsubseteq g$. In fact, we
consider an arbitrary $x \in A$ and prove that $f(x) \sqsubseteq
g(x)$. Suppose $S_x$ is the set of all step functions $h_c = (x
\searrow c)$ for every compact element $c \in {\cal K}(L)_{[f(x)]}$.
By Proposition \ref{step-prop}, we have that for all step functions
$h_c \in S_x$, are $h_c \sqsubseteq f$; thus $S_x$ is a subset of
$B_{[f]}$. Since $g$ is an upper bound of $B_{[f]}$, it must also be
an upper bound of $S_x$, therefore it holds that $h_c \sqsubseteq g$
for each $h_c \in S_x$. Applying this inequality for $x$ we get that
$c \sqsubseteq g(x)$ for each $c \in {\cal K}(L)_{[f(x)]}$,
therefore $\bigsqcup{\cal K}(L)_{[f(x)]} \sqsubseteq g(x)$. Since
$L$ is an algebraic lattice, $f(x)$ is the least upper bound of
${\cal K}(L)_{[f(x)]}$, thus $f(x) \sqsubseteq g(x)$. Hence, $f$ is
the least upper bound of $B_{[f]}$.

On the whole, we have shown that $B$ is a subset of ${\cal K}([A
\stackrel{m}{\rightarrow} L])$ such that each monotonic function $f$
from $A$ to $L$ is the least upper bound of $B_{[f]}$. Notice that
if, additionally, $A$ is countable and $L$ is an $\omega$-algebraic
lattice, then $B$ is countable because its cardinality is equal to
that of the cartesian product of two countable sets. Now apply
Proposition \ref{algebraicity-criterion}.\qed
\end{proof}

\section{Proof of
Lemma~\ref{monotonicity-of-semantics-wrt-state}}\label{appendix-monotonicity-state}
{\bf Lemma~\ref{monotonicity-of-semantics-wrt-state}} Let
$\mathsf{E}:\rho$ be an expression of ${\cal H}$ and let $D$ be a
nonempty set. Moreover, let $s,s_1,s_2$ be states over $D$ and let
$I$ be an interpretation over $D$. Then:
\begin{enumerate}
\item $\lsem \mathsf{E} \rsem_s(I) \in \lsem \rho \rsem_D$.

\item If $\mathsf{E}$ is positive and $s_1 \sqsubseteq_{{\cal S}_{{\cal H},D}} s_2$
      then $\lsem \mathsf{E} \rsem_{s_1}(I) \sqsubseteq_{\rho} \lsem \mathsf{E} \rsem_{s_2}(I)$.
\end{enumerate}
\begin{proof}
The two statements are established simultaneously by a structural
induction on $\mathsf{E}$.

\vspace{0.3cm} \noindent {\em Induction Basis:} The cases for
$\mathsf{E}$ being $\mathsf{0}$, $\mathsf{1}$, $\mathsf{c}$,
$\mathsf{p}$ or $\mathsf{V}$, are all straightforward.

\vspace{0.3cm} \noindent {\em Induction Step:} The interesting cases
are $\mathsf{E}=(\mathsf{E}_1 \mathsf{E}_2)$ and $\mathsf{E} =
(\lambda\mathsf{V}.\mathsf{E}_1)$. The other cases are easier and
omitted.

\vspace{0.2cm}

\noindent \underline{\em Case 1:} $\mathsf{E} = (\mathsf{E}_1
\mathsf{E}_2)$. We examine the two statements of the lemma:

\vspace{0.1cm}

\noindent {\em Statement 1}: Assume that $\mathsf{E}_1:\rho_1
\rightarrow \pi_2$ and $\mathsf{E}_2:\rho_1$. Then, it suffices to
demonstrate that $\lsem (\mathsf{E}_1 \mathsf{E}_2) \rsem_{s}(I) \in
\lsem \pi_2 \rsem_D$, or equivalently that $\bigsqcup_{b \in
B}(\lsem \mathsf{E}_1\rsem_{s}(I)(b)) \in \lsem \pi_2 \rsem_D$,
where $B = {\cal F}_D(type(\mathsf{E}_2))_{[\lsem
\mathsf{E}_2\rsem_{s}(I)]} =  \{b \in {\cal F}_D(type(\mathsf{E}_2))
\mid b \sqsubseteq \lsem \mathsf{E}_2 \rsem_{s} (I)\}$. By the
induction hypothesis, $\lsem \mathsf{E}_1 \rsem_{s}(I) \in \lsem
\rho_1 \rightarrow \pi_2 \rsem_D$ and $\lsem \mathsf{E}_2
\rsem_{s}(I) \in \lsem \rho_1 \rsem_D$. But then, for every $b \in
B$, $\lsem \mathsf{E}_1\rsem_{s}(I)(b) \in \lsem \pi_2 \rsem_D$ and
since $\lsem \pi_2 \rsem_D$ is a complete lattice, we get that
$\bigsqcup_{b \in B}(\lsem \mathsf{E}_1\rsem_{s}(I)(b)) \in \lsem
\pi_2 \rsem_D$.

\vspace{0.1cm}

\noindent {\em Statement 2}: It suffices to demonstrate that $\lsem
(\mathsf{E}_1 \mathsf{E}_2) \rsem_{s_1}(I) \sqsubseteq \lsem
(\mathsf{E}_1 \mathsf{E}_2) \rsem_{s_2}(I)$, or equivalently that
$\bigsqcup_{b_2 \in B_2}(\lsem \mathsf{E}_1\rsem_{s_1}(I)(b_2))
\sqsubseteq \bigsqcup_{b'_2 \in B'_2}(\lsem
\mathsf{E}_1\rsem_{s_2}(I)(b'_2))$, where $B_2 = {\cal
F}_D(type(\mathsf{E}_2))_{[\lsem \mathsf{E}_2\rsem_{s_1}(I)]}$ and
$B'_2 = {\cal F}_D(type(\mathsf{E}_2))_{[\lsem
\mathsf{E}_2\rsem_{s_2}(I)]}$. Notice that by definition, $B_2 = \{b
\in {\cal F}_D(type(\mathsf{E}_2)) \mid b \sqsubseteq \lsem
\mathsf{E}_2 \rsem_{s_1} (I)\}$ and $B'_2 = \{b \in {\cal
F}_D(type(\mathsf{E}_2)) \mid b \sqsubseteq \lsem \mathsf{E}_2
\rsem_{s_2}(I)\}$. By the induction hypothesis we have $\lsem
\mathsf{E}_2 \rsem_{s_1}(I) \sqsubseteq \lsem \mathsf{E}_2
\rsem_{s_2}(I)$, and therefore $B_2 \subseteq B'_2$. By the
induction hypothesis we also have that $\lsem \mathsf{E}_1
\rsem_{s_1}(I) \sqsubseteq \lsem \mathsf{E}_1 \rsem_{s_2}(I)$. By
the induction hypothesis for the first statement of the lemma, both
$\lsem\mathsf{E}_1\rsem_{s_1}(I)$ and
$\lsem\mathsf{E}_1\rsem_{s_2}(I)$ are monotonic functions since they
belong to $\lsem \rho_1\rightarrow \pi_2\rsem_D$. Therefore,
$\bigsqcup_{b_2 \in B_2}(\lsem \mathsf{E}_1\rsem_{s_1}(I)(b_2))
\sqsubseteq \bigsqcup_{b'_2 \in B'_2}(\lsem
\mathsf{E}_1\rsem_{s_2}(I)(b'_2))$, or equivalently $\lsem
(\mathsf{E}_1 \mathsf{E}_2) \rsem_{s_1}(I) \sqsubseteq \lsem
(\mathsf{E}_1 \mathsf{E}_2) \rsem_{s_2}(I)$.

\vspace{0.2cm}

\noindent \underline{\em Case 2:} $\mathsf{E} =
(\lambda\mathsf{V}.\mathsf{E}_1)$. We examine the two statements of
the lemma:

\vspace{0.1cm}

\noindent {\em Statement 1}: Assume that $\mathsf{V}:\rho_1$ and
$\mathsf{E}_1:\pi_1$. We show that $\lsem
(\lambda\mathsf{V}.\mathsf{E}_1) \rsem_{s}(I) \in \lsem \rho_1
\rightarrow \pi_1 \rsem_D$. We distinguish two cases, namely $\rho_1
= \iota$ and $\rho_1 = \pi$. If $\rho_1 = \iota$ then the result
follows easily using the induction hypothesis for the first
statement of the lemma. If $\rho_1 = \pi$, then we must demonstrate
that $\lsem (\lambda\mathsf{V}.\mathsf{E}_1) \rsem_{s}(I) \in \lsem
\pi \rightarrow \pi_1 \rsem_D = [{\cal K}(\lsem \pi \rsem_D)
\stackrel{m}{\rightarrow} \lsem \pi_1 \rsem_D]$. In other words, we
need to show that the function $\lambda d.\lsem \mathsf{E}_1
\rsem_{s[d/\mathsf{V}]}(I)$ is monotonic. But this follows directly
from the induction hypothesis for the second statement of the lemma.

\vspace{0.1cm}

\noindent {\em Statement 2}: It suffices to show that $\lsem
(\lambda\mathsf{V}.\mathsf{E}_1) \rsem_{s_1}(I) \sqsubseteq \lsem
(\lambda\mathsf{V}.\mathsf{E}_1) \rsem_{s_2}(I)$. By the semantics
of lambda abstraction, it suffices to show that $\lambda d. \lsem
\mathsf{E}_1 \rsem_{s_1[d/\mathsf{V}]}(I) \sqsubseteq \lambda d.
\lsem \mathsf{E}_1 \rsem_{s_2[d/\mathsf{V}]}(I)$, or that for every
$d$, $\lsem \mathsf{E}_1 \rsem_{s_1[d/\mathsf{V}]}(I) \sqsubseteq
\lsem \mathsf{E}_1 \rsem_{s_2[d/\mathsf{V}]}(I)$, which holds by the
induction hypothesis.\qed
\end{proof}

\section{Proof of Lemma~\ref{monotonicity-of-semantics}}\label{appendix-monotonicity}

{\bf Lemma~\ref{monotonicity-of-semantics}}
Let $\mathsf{P}$ be a program and let $\mathsf{E}:\rho$ be a
positive expression of $\mathsf{P}$. Let $I,J$ be Herbrand
interpretations and $s$ a Herbrand state of $\mathsf{P}$ . If $I
\sqsubseteq_{{\cal I}_\mathsf{P}} J$ then $\lsem \mathsf{E}
\rsem_s(I) \sqsubseteq_{\rho} \lsem \mathsf{E} \rsem_s(J)$.
\begin{proof}
The proof is by a structural induction on $\mathsf{E}$.

\vspace{0.4cm} \noindent {\em Induction Basis:} The cases for
$\mathsf{E}$ being $\mathsf{0}$, $\mathsf{1}$, $\mathsf{c}$,
$\mathsf{p}$ or $\mathsf{V}$, are all straightforward.

\vspace{0.4cm} \noindent {\em Induction Step:} The interesting cases
are $\mathsf{E}=(\mathsf{E}_1 \mathsf{E}_2)$ and $\mathsf{E} =
(\lambda\mathsf{V}.\mathsf{E}_1)$. The other cases are easier and
omitted.

\vspace{0.2cm} \noindent {\em Case 1:} $\mathsf{E} = (\mathsf{E}_1
\mathsf{E}_2)$. It suffices to demonstrate that $\lsem (\mathsf{E}_1
\mathsf{E}_2) \rsem_s(I) \sqsubseteq \lsem (\mathsf{E}_1
\mathsf{E}_2) \rsem_s(J)$, or equivalently that $\bigsqcup_{b_2 \in
B_2}(\lsem \mathsf{E}_1\rsem_s(I)(b_2)) \sqsubseteq \bigsqcup_{b'_2
\in B'_2}(\lsem \mathsf{E}_1\rsem_s(J)(b'_2))$, where $B_2 = {\cal
F}_D(type(\mathsf{E}_2))_{[\lsem \mathsf{E}_2\rsem_s(I)]}$ and $B'_2
= {\cal F}_D(type(\mathsf{E}_2))_{[\lsem \mathsf{E}_2\rsem_s(J)]}$.
Notice that by definition, $B_2 = \{b \in {\cal
F}_D(type(\mathsf{E}_2)) \mid b \sqsubseteq \lsem \mathsf{E}_2
\rsem_s (I)\}$ and $B'_2 = \{b \in {\cal F}_D(type(\mathsf{E}_2))
\mid b \sqsubseteq \lsem \mathsf{E}_2 \rsem_s (J)\}$. By the
induction hypothesis we have $\lsem \mathsf{E}_2 \rsem_s(I)
\sqsubseteq \lsem \mathsf{E}_2 \rsem_s(J)$, and therefore $B_2
\subseteq B'_2$. By the induction hypothesis we also have that
$\lsem \mathsf{E}_1 \rsem_s(I) \sqsubseteq \lsem \mathsf{E}_1
\rsem_s(J)$. Therefore, $\bigsqcup_{b_2 \in B_2}(\lsem
\mathsf{E}_1\rsem_s(I)(b_2)) \sqsubseteq \bigsqcup_{b'_2 \in
B'_2}(\lsem \mathsf{E}_1\rsem_s(J)(b'_2))$, or equivalently $\lsem
(\mathsf{E}_1 \mathsf{E}_2) \rsem_s(I) \sqsubseteq \lsem
(\mathsf{E}_1 \mathsf{E}_2) \rsem_s(J)$.

\vspace{0.2cm}

\noindent {\em Case 2:} $\mathsf{E} =
(\lambda\mathsf{V}.\mathsf{E}_1)$. It suffices to show that $\lsem
(\lambda\mathsf{V}.\mathsf{E}_1) \rsem_s(I) \sqsubseteq \lsem
(\lambda\mathsf{V}.\mathsf{E}_1) \rsem_s(J)$. By the semantics of
lambda abstraction, it suffices to show that $\lambda d. \lsem
\mathsf{E}_1 \rsem_{s[d/\mathsf{V}]}(I) \sqsubseteq \lambda d. \lsem
\mathsf{E}_1 \rsem_{s[d/\mathsf{V}]}(J)$, or that for every $d$,
$\lsem \mathsf{E}_1 \rsem_{s[d/\mathsf{V}]}(I) \sqsubseteq \lsem
\mathsf{E}_1 \rsem_{s[d/\mathsf{V}]}(J)$, which holds by the
induction hypothesis.\qed
\end{proof}

\section{Proof of Lemma~\ref{continuity-of-semantics}}\label{appendix-continuity}
{\bf Lemma~\ref{continuity-of-semantics}} Let $\mathsf{P}$ be a
program and let $\mathsf{E}$ be any positive expression of
$\mathsf{P}$. Let ${\cal I}$ be a directed set of Herbrand
interpretations and $s$ be a Herbrand state of $\mathsf{P}$. Then,
$\lsem \mathsf{E} \rsem_s (\bigsqcup {\cal I}) = \bigsqcup_{I \in
{\cal I}} \lsem \mathsf{E} \rsem_s (I)$.

\begin{proof}
The proof can be performed in two steps: we first show that $\lsem
\mathsf{E} \rsem_s (\bigsqcup {\cal I}) \sqsupseteq \bigsqcup_{I \in
{\cal I}} \lsem \mathsf{E} \rsem_s (I)$ and then that $\lsem
\mathsf{E} \rsem_s (\bigsqcup {\cal I}) \sqsubseteq \bigsqcup_{I \in
{\cal I}} \lsem \mathsf{E} \rsem_s (I)$.

For the first of these two statements observe that by
Lemma~\ref{monotonicity-of-semantics}, we have that $\lsem
\mathsf{E} \rsem_s (\bigsqcup {\cal I}) \sqsupseteq  \lsem
\mathsf{E} \rsem_s (I)$, for all $I \in {\cal I}$. But then $\lsem
\mathsf{E} \rsem_s (\bigsqcup {\cal I})$ is an upper bound of the
set $\{ \lsem \mathsf{E} \rsem_s (I) \mid I \in {\cal I}\}$, and
therefore $\lsem \mathsf{E} \rsem_s (\bigsqcup {\cal I}) \sqsupseteq
\bigsqcup_{I \in {\cal I}} \lsem \mathsf{E} \rsem_s (I)$. It remains
to show that $\lsem \mathsf{E} \rsem_s (\bigsqcup {\cal I})
\sqsubseteq \bigsqcup_{I \in {\cal I}} \lsem \mathsf{E} \rsem_s
(I)$. The proof is by a structural induction on $\mathsf{E}$.

\vspace{0.4cm} \noindent {\em Induction Basis:} The cases for
$\mathsf{E}$ being $\mathsf{0}$, $\mathsf{1}$, $\mathsf{c}$,
$\mathsf{p}$ or $\mathsf{V}$, are all straightforward.

\vspace{0.4cm} \noindent {\em Induction Hypothesis:} Assume that for
given expressions $\mathsf{E}_1,\mathsf{E}_2$ it holds that $\lsem
\mathsf{E}_i \rsem_s (\bigsqcup {\cal I}) = \bigsqcup_{I \in {\cal
I}} \lsem \mathsf{E}_i \rsem_s (I)$, $i \in \{1,2\}$. Notice that we
assume equality. This is due to the fact that the one direction has
already been established for all expressions while the other
direction is assumed.

\vspace{0.4cm} \noindent {\em Induction Step:} We distinguish the
following cases:

\vspace{0.2cm} \noindent \underline{\em Case 1:} $\mathsf{E} =
\mathsf{f}\,\,\mathsf{E}_1 \cdots \mathsf{E}_n$. This case is
straightforward since for every interpretation $I$ and for every
state $s$, the value of $\lsem \mathsf{f}\,\,\mathsf{E}_1
\cdots\mathsf{E}_n \rsem_s(I)$ only depends on $s$ (since the
expressions $\mathsf{E}_1,\ldots,\mathsf{E}_n$ are of type $\iota$
and do not contain predicate symbols).

\vspace{0.2cm} \noindent \underline{\em Case 2:} $\mathsf{E} =
(\mathsf{E}_1 \mathsf{E}_2)$. Assume that $\mathsf{E}_2 :\rho$.
Then:
\[
\begin{array}{lll}
&      &\lsem (\mathsf{E}_1 \mathsf{E}_2) \rsem_s (\bigsqcup {\cal I}) =\\
\\
&  =   & \bigsqcup_{b \in B}(\lsem \mathsf{E}_1 \rsem_s (\bigsqcup
         {\cal I})(b)),\,\, \mbox{where $B = \{b \in {\cal F}_D(\rho) \mid b
         \sqsubseteq \lsem \mathsf{E}_2\rsem_s(\bigsqcup {\cal I})\}$}\\
&      & \mbox{(Semantics of application)}\\
\\
&  =   &\bigsqcup_{b \in B}((\bigsqcup_{I \in {\cal I}} \lsem
         \mathsf{E}_1 \rsem_s (I))(b)),\,\, \mbox{where $B = \{b \in {\cal F}_D(\rho) \mid b
         \sqsubseteq \lsem \mathsf{E}_2\rsem_s(\bigsqcup {\cal I})\}$}\\
&      & \mbox{(Induction hypothesis)}\\
\\
&  =   &\bigsqcup_{b \in B}(\bigsqcup_{I \in {\cal I}} \lsem
         \mathsf{E}_1 \rsem_s (I)(b)),\,\, \mbox{where $B = \{b \in {\cal F}_D(\rho) \mid b
         \sqsubseteq \lsem \mathsf{E}_2\rsem_s(\bigsqcup {\cal I})\}$}\\
&      & \mbox{(Proposition~\ref{monotonic-functions-make-a-complete-lattice})}\\
\\
&  =   & \bigsqcup \{ \lsem \mathsf{E}_1 \rsem_s (I)(b) \mid I \in
         {\cal I},\,\, b \in {\cal F}_D(\rho), \,\, b
         \sqsubseteq \lsem \mathsf{E}_2\rsem_s(\bigsqcup {\cal
         I})\}\\
&      &\mbox{(Proposition~\ref{set-of-sets}(2))}\\
\\
&  =   & \bigsqcup \{ \lsem \mathsf{E}_1 \rsem_s (I)(b) \mid I \in
         {\cal I},\,\, b \in {\cal F}_D(\rho), \,\, b
         \sqsubseteq \bigsqcup_{I \in {\cal I}}\lsem \mathsf{E}_2\rsem_s(I)\}\\
&      &\mbox{(Induction hypothesis)}\\
\\
&  =   & \bigsqcup \{ \lsem \mathsf{E}_1 \rsem_s (I)(b) \mid I \in
         {\cal I},\,\, b \in {\cal F}_D(\rho), \,\, b
         \sqsubseteq \bigsqcup_{J \in F}\lsem \mathsf{E}_2\rsem_s(J),\,\,\mbox{$F$ finite subset of ${\cal I}$}\}\\
&      &\mbox{(Since $b$ is either a compact element or a member of $D$)}\\
\\
&  \sqsubseteq  & \bigsqcup \{ \lsem \mathsf{E}_1 \rsem_s (I)(b)
         \mid I \in {\cal I},\,\, b \in {\cal F}_D(\rho), \,\, b
         \sqsubseteq \lsem \mathsf{E}_2\rsem_s(J)\},\,\, \mbox{for some $J\in {\cal I}$}\\
&      &\mbox{(Because ${\cal I}$ is directed and $\lsem \mathsf{E}_2 \rsem_s$ is monotonic by Lemma~\ref{monotonicity-of-semantics})}\\
\\
&  \sqsubseteq  & \bigsqcup \{ \lsem \mathsf{E}_1 \rsem_s (I)(b)
         \mid I \in {\cal I},\,\,J \in {\cal I},\,\, b \in {\cal F}_D(\rho), \,\, b
         \sqsubseteq \lsem \mathsf{E}_2\rsem_s(J)\}\\
&      &\mbox{(Proposition~\ref{set-of-sets}(1))}\\
\\
&  \sqsubseteq  & \bigsqcup_{I \in {\cal I},J\in {\cal I}} \bigsqcup
         \{\lsem \mathsf{E}_1
         \rsem_s (I)(b)\mid  b \in {\cal F}_D(\rho), \,\, b
         \sqsubseteq \lsem \mathsf{E}_2\rsem_s(J)\} \\
&      &\mbox{(Proposition~\ref{set-of-sets}(2))}\\
\\
&  \sqsubseteq  & \bigsqcup_{I \in {\cal I}} \bigsqcup
         \{\lsem \mathsf{E}_1
         \rsem_s (I)(b)\mid  b \in {\cal F}_D(\rho), \,\, b
         \sqsubseteq \lsem \mathsf{E}_2\rsem_s(I)\} \\
&      &\mbox{(Proposition~\ref{diagonal})}\\
\\
&   =   &\bigsqcup_{I \in {\cal I}} \lsem (\mathsf{E}_1
        \mathsf{E}_2)\rsem_s (I)\\
&      &\mbox{(Semantics of application)}
\end{array}
\]

\vspace{0.2cm} \noindent \underline{\em Case 3:} $\mathsf{E} =
(\lambda \mathsf{V}. \mathsf{E}_1)$. We show that $\lsem (\lambda
\mathsf{V}. \mathsf{E}_1)\rsem_s (\bigsqcup {\cal I}) \sqsubseteq
\bigsqcup_{I \in {\cal I}} \lsem (\lambda \mathsf{V}. \mathsf{E}_1)
\rsem_s (I)$. Consider $b\in {\cal F}_D(type(\mathsf{V}))$. By the
semantics of lambda abstraction we get that $\lsem (\lambda
\mathsf{V}. \mathsf{E}_1)\rsem_s (\bigsqcup {\cal I})(b) = \lsem
\mathsf{E}_1\rsem_{s[b/\mathsf{V}]} (\bigsqcup {\cal I})$; by the
induction hypothesis this is equal to $\bigsqcup_{I \in {\cal I}}
\lsem \mathsf{E}_1\rsem_{s[b/\mathsf{V}]}(I)$, which by
Proposition~\ref{monotonic-functions-make-a-complete-lattice} is
equal to $(\bigsqcup_{I \in {\cal I}} \lsem (\lambda
\mathsf{V}.\mathsf{E}_1)\rsem_{s}(I))(b)$.

\vspace{0.2cm} \noindent \underline{\em Case 4:} $\mathsf{E} =
(\mathsf{E}_1 \bigvee_{\pi} \mathsf{E}_2)$. We show that $\lsem
(\mathsf{E}_1 \bigvee_{\pi} \mathsf{E}_2) \rsem_s (\bigsqcup {\cal
I}) \sqsubseteq \bigsqcup_{I \in {\cal I}} \lsem (\mathsf{E}_1
\bigvee_{\pi} \mathsf{E}_2) \rsem_s (I)$, ie., that for all
$b_1,\ldots,b_n$, if $\lsem (\mathsf{E}_1 \bigvee_{\pi}
\mathsf{E}_2) \rsem_s (\bigsqcup {\cal I}) \, b_1 \cdots b_n = 1$
then $(\bigsqcup_{I \in {\cal I}} \lsem (\mathsf{E}_1 \bigvee_{\pi}
\mathsf{E}_2) \rsem_s (I)) \, b_1 \cdots b_n = 1$. By the semantics
of $\bigvee_{\pi}$ we get that if $\lsem (\mathsf{E}_1 \bigvee_{\pi}
\mathsf{E}_2) \rsem_s (\bigsqcup {\cal I}) \, b_1 \cdots b_n = 1$
then $\lsem \mathsf{E}_1 \rsem_s (\bigsqcup {\cal I}) \, b_1 \cdots
b_n = 1$ or $\lsem \mathsf{E}_2 \rsem_s (\bigsqcup {\cal I}) \, b_1
\cdots b_n = 1$. By the induction hypothesis and
Proposition~\ref{monotonic-functions-make-a-complete-lattice} we get
that either $\bigsqcup_{I \in {\cal I}} (\lsem \mathsf{E}_1 \rsem_s
(I) \, b_1 \cdots b_n) = 1$ or $\bigsqcup_{I \in {\cal I}} (\lsem
\mathsf{E}_2 \rsem_s (I) \, b_1 \cdots b_n) = 1$. Then there must
exist $I\in {\cal I}$ such that either $\lsem \mathsf{E}_1 \rsem_s
(I) \, b_1 \cdots b_n = 1$ or $\lsem \mathsf{E}_2 \rsem_s (I) \, b_1
\cdots b_n = 1$. By the semantics of $\bigvee_{\pi}$ we get that
$\lsem (\mathsf{E}_1 \bigvee_{\pi} \mathsf{E}_2) \rsem_s (I) \, b_1
\cdots b_n = 1$ and therefore $(\bigsqcup_{I \in {\cal I}} \lsem
(\mathsf{E}_1 \bigvee_{\pi} \mathsf{E}_2) \rsem_s (I)) \, b_1 \cdots
b_n = 1$.

\vspace{0.2cm} \noindent \underline{\em Case 5:} $\mathsf{E} =
(\mathsf{E}_1 \bigwedge_{\pi} \mathsf{E}_2)$. We show that $\lsem
(\mathsf{E}_1 \bigwedge_{\pi} \mathsf{E}_2) \rsem_s (\bigsqcup {\cal
I}) \sqsubseteq \bigsqcup_{I \in {\cal I}} \lsem (\mathsf{E}_1
\bigwedge_{\pi} \mathsf{E}_2) \rsem_s (I)$. In other words, it
suffices to show that for all $b_1,\ldots,b_n$, if $\lsem
(\mathsf{E}_1 \bigwedge_{\pi} \mathsf{E}_2) \rsem_s (\bigsqcup {\cal
I}) \, b_1 \cdots b_n = 1$ then $(\bigsqcup_{I \in {\cal I}} \lsem
(\mathsf{E}_1 \bigwedge_{\pi} \mathsf{E}_2) \rsem_s (I)) \, b_1
\cdots b_n = 1$. But if $\lsem (\mathsf{E}_1 \bigwedge_{\pi}
\mathsf{E}_2) \rsem_s (\bigsqcup {\cal I}) \, b_1 \cdots b_n = 1$,
then by the semantics of $\bigwedge_{\pi}$ we get that $\lsem
\mathsf{E}_1 \rsem_s (\bigsqcup {\cal I}) \, b_1 \cdots b_n = 1$ and
$\lsem \mathsf{E}_2 \rsem_s (\bigsqcup {\cal I}) \, b_1 \cdots b_n =
1$. By the induction hypothesis and
Proposition~\ref{monotonic-functions-make-a-complete-lattice} this
implies that $\bigsqcup_{I \in {\cal I}} (\lsem \mathsf{E}_1 \rsem_s
(I) \, b_1 \cdots b_n) = 1$ and $\bigsqcup_{I \in {\cal I}} (\lsem
\mathsf{E}_2 \rsem_s (I) \, b_1 \cdots b_n) = 1$. This means that
there must exist $I_1,I_2\in {\cal I}$ such that $\lsem \mathsf{E}_1
\rsem_s (I_1) \, b_1 \cdots b_n = 1$ and $\lsem \mathsf{E}_2 \rsem_s
(I_2) \, b_1 \cdots b_n = 1$. Since ${\cal I}$ is directed, we get
that $I = \bigsqcup \{I_1,I_2\}$ exists in ${\cal I}$ and it holds
that $\lsem \mathsf{E}_1 \rsem_s (I) \, b_1 \cdots b_n = 1$ and
$\lsem \mathsf{E}_2 \rsem_s (I) \, b_1 \cdots b_n = 1$. By the
semantics of $\bigwedge_{\pi}$, $\lsem (\mathsf{E}_1 \bigwedge_{\pi}
\mathsf{E}_2) \rsem_s (I) \, b_1 \cdots b_n = 1$ and therefore
$(\bigsqcup_{I \in {\cal I}} \lsem (\mathsf{E}_1 \bigwedge_{\pi}
\mathsf{E}_2) \rsem_s (I)) \, b_1 \cdots b_n = 1$.

\vspace{0.2cm} \noindent \underline{\em Case 6:} $\mathsf{E} =
(\mathsf{E}_1 \approx \mathsf{E}_2)$. It suffices to show that
$\lsem (\mathsf{E}_1 \approx \mathsf{E}_2) \rsem_s (\bigsqcup {\cal
I}) \sqsubseteq \bigsqcup_{I \in {\cal I}} \lsem (\mathsf{E}_1
\approx \mathsf{E}_2) \rsem_s (I)$. This is straightforward since
the value of $\lsem (\mathsf{E}_1 \approx \mathsf{E}_2) \rsem$ only
depends on $s$ (since the expressions $\mathsf{E}_1,\mathsf{E}_2$ do
not contain predicate symbols).

\vspace{0.2cm} \noindent \underline{\em Case 7:} $\mathsf{E} =
(\exists \mathsf{V} \, \mathsf{E}_1)$. We show that $\lsem (\exists
\mathsf{} \, \mathsf{E}_1) \rsem_s (\bigsqcup {\cal I}) \sqsubseteq
\bigsqcup_{I \in {\cal I}} \lsem (\exists \mathsf{V} \,
\mathsf{E}_1) \rsem_s (I)$ or equivalently that if $\lsem (\exists
\mathsf{V} \, \mathsf{E}_1) \rsem_s (\bigsqcup {\cal I}) = 1$ then
$\bigsqcup_{I \in {\cal I}} \lsem (\exists \mathsf{V} \,
\mathsf{E}_1) \rsem_s (I)= 1$. Notice now that if $\lsem (\exists
\mathsf{V} \, \mathsf{E}_1) \rsem_s (\bigsqcup {\cal I}) = 1$ then
there exists $b$ such that $\lsem
\mathsf{E}_1\rsem_{s[b/\mathsf{V}]} (\bigsqcup {\cal I}) = 1$, which
by the induction hypothesis gives $\bigsqcup_{I \in {\cal I}} \lsem
\mathsf{E}_1\rsem_{s[b/\mathsf{V}]} (I) = 1$. But this last
statement implies that $\bigsqcup_{I \in {\cal I}} \lsem (\exists
\mathsf{V} \, \mathsf{E}_1) \rsem_s (I)= 1$.\qed

\end{proof}

\section{Proof of
Lemma~\ref{pre-lifting-lemma}}\label{appendix-lifting}

{\bf Lemma~\ref{pre-lifting-lemma}} Let $\mathsf{P}$ be a program,
$\mathsf{G}$ a goal and $\theta$ a substitution. Suppose that there
exists an SLD-refutation of $\mathsf{P}\cup\{\mathsf{G}\theta\}$
using substitution $\sigma$. Then, there exists an SLD-refutation of
$\mathsf{P}\cup\{\mathsf{G}\}$ using a substitution $\delta$, where
for some substitution $\gamma$ it holds that $\delta\gamma \supseteq
\theta\sigma$ and $dom(\delta\gamma-\theta\sigma)$ is a (possibly
empty) set of template variables that are introduced during the
refutation of $\mathsf{P}\cup\{\mathsf{G}\}$.
\begin{proof}
The proof is by induction on the length $n$ of the SLD-refutation of
$\mathsf{P}\cup \{\mathsf{G}\theta\}$.

\vspace{0.4cm} \noindent \underline{\em Induction Basis:} The basis
case is for $n=1$. We need to distinguish cases based on the
structure of $\mathsf{G}$.
%
The most interesting case is $\mathsf{G}=(\mathsf{E}_1 \approx
\mathsf{E}_2)$ (the rest are simpler and omitted). By assumption, it
holds that
$(\mathsf{E}_1\theta\approx\mathsf{E}_2\theta)\stackrel{\sigma}{\rightarrow}
\Box$, where $\sigma$ is an mgu of $\mathsf{E}_1\theta$ and
$\mathsf{E}_2\theta$. But then we also have that
$(\mathsf{E}_1\approx\mathsf{E}_2)\stackrel{\delta}{\rightarrow}
\Box$, where $\delta$ is an mgu of $\mathsf{E}_1$ and
$\mathsf{E}_2$. Since $\theta \sigma$ is a unifier of
$\mathsf{E}_1,\mathsf{E}_2$, there exists substitution $\gamma$ such
that $\theta\sigma = \delta\gamma$.

\vspace{0.4cm} \noindent \underline{\em Induction Step:} We
demonstrate the statement for SLD-refutations of length $n+1$. We
distinguish cases based on the structure of $\mathsf{G}$.

\vspace{0.2cm}\noindent \underline{\em Case 1:} $\mathsf{G}
=\leftarrow(\mathsf{p}\, \mathsf{E}_1\cdots \mathsf{E}_k)$. Then,
$\mathsf{G}\theta = \leftarrow(\mathsf{p}\,
(\mathsf{E}_1\theta)\cdots (\mathsf{E}_k\theta))$. By
Definition~\ref{derives-in-one-step} we get that $\mathsf{p}\,
(\mathsf{E}_1\theta)\cdots (\mathsf{E}_k\theta)
\stackrel{\epsilon}{\rightarrow} \mathsf{E}\,
(\mathsf{E}_1\theta)\cdots (\mathsf{E}_k\theta)$, where
$\mathsf{p}\leftarrow \mathsf{E}$ is a rule in $\mathsf{P}$. By
assumption, $\mathsf{E}\, (\mathsf{E}_1\theta)\cdots
(\mathsf{E}_k\theta)$ has an SLD-refutation of length $n$ using
$\sigma$. Consider now the goal $\mathsf{G}$. By
Definition~\ref{derives-in-one-step}, we get that $(\mathsf{p}\,
\mathsf{E}_1\cdots \mathsf{E}_k) \stackrel{\epsilon}{\rightarrow}
(\mathsf{E}\, \mathsf{E}_1\cdots \mathsf{E}_k)$. Notice now that
since $\mathsf{E}$ is a closed lambda expression, it holds that
$(\mathsf{E}\, \mathsf{E}_1\cdots \mathsf{E}_k)\theta =
(\mathsf{E}\, (\mathsf{E}_1\theta)\cdots (\mathsf{E}_k\theta))$.
Moreover, since $(\mathsf{E}\, (\mathsf{E}_1\theta)\cdots
(\mathsf{E}_k\theta))$ has an SLD-refutation of length $n$ using
$\sigma$, we get by the induction hypothesis that $(\mathsf{E}\,
\mathsf{E}_1\cdots \mathsf{E}_k)$ has an SLD-refutation using
substitution $\delta$, where for some substitution $\gamma$ it holds
that $\delta\gamma \supseteq \theta\sigma$ and
$dom(\delta\gamma-\theta\sigma)$ is a set of template variables that
are introduced during the refutation of $(\mathsf{E}\,
\mathsf{E}_1\cdots \mathsf{E}_k)$. But then, $(\mathsf{p}\,
\mathsf{E}_1\cdots \mathsf{E}_k)$ has an SLD-refutation which
satisfies the requirements of the lemma.

\vspace{0.2cm}\noindent \underline{\em Case 2:} $\mathsf{G}
=\leftarrow(\mathsf{Q}\, \mathsf{E}_1\cdots \mathsf{E}_k)$. Consider
first the case where $\theta(\mathsf{Q}) = \mathsf{B}$, for some
basic expression $\mathsf{B}$. Then, $\mathsf{G}\theta =
\leftarrow(\mathsf{B}\, (\mathsf{E}_1\theta)\cdots
(\mathsf{E}_k\theta))$. Notice now that $\mathsf{B}$ can be either a
higher-order predicate variable or a finite-union of lambda
abstractions. We examine the case where $\mathsf{B}$ is a single
lambda abstraction (the other two cases are similar). Since
$\mathsf{B}$ is a lambda abstraction, assume that
$\mathsf{B}=\lambda \mathsf{V}.\mathsf{C}$. By
Definition~\ref{derives-in-one-step} we get that $\mathsf{B}\,
(\mathsf{E}_1\theta)\cdots (\mathsf{E}_k\theta)
\stackrel{\epsilon}{\rightarrow}
\mathsf{C}\{\mathsf{V}/(\mathsf{E}_1\theta)\}\,(\mathsf{E}_2\theta)
\cdots (\mathsf{E}_k\theta)$. By assumption, $
\mathsf{C}\{\mathsf{V}/(\mathsf{E}_1\theta)\}\,(\mathsf{E}_2\theta)
\cdots (\mathsf{E}_k\theta)$ has an SLD-refutation of length $n$
using $\sigma$. Consider now the goal $\mathsf{G}$. By
Definition~\ref{derives-in-one-step}, we get that $(\mathsf{Q}\,
\mathsf{E}_1\cdots \mathsf{E}_k)
\stackrel{\{\mathsf{Q}/\mathsf{B}_t\}}{\rightarrow} \mathsf{B}_t\,
(\mathsf{E}_1\{\mathsf{Q}/\mathsf{B}_t\})\cdots
(\mathsf{E}_k\{\mathsf{Q}/\mathsf{B}_t\})$, where
$\mathsf{B}_t=\lambda \mathsf{V}.\mathsf{C}_t$, and
$\mathsf{B}=\mathsf{B}_t\gamma_1$, for some substitution $\gamma_1$
with $dom(\gamma_1) = FV(\mathsf{B}_t)$. We assume without loss of
generality that the set $dom(\gamma_1)$ is disjoint from
$FV(\mathsf{G})$ and from $dom(\theta)\cup FV(range(\theta))$. By
Definition~\ref{derives-in-one-step} we get that $\mathsf{B}_t\,
(\mathsf{E}_1\{\mathsf{Q}/\mathsf{B}_t\})\cdots
(\mathsf{E}_k\{\mathsf{Q}/\mathsf{B}_t\})
\stackrel{\epsilon}{\rightarrow}
\mathsf{C}_t\{\mathsf{V}/\mathsf{E}_1\{\mathsf{Q}/\mathsf{B}_t\}\}\,(\mathsf{E}_2\{\mathsf{Q}/\mathsf{B}_t\})
\cdots (\mathsf{E}_k\{\mathsf{Q}/\mathsf{B}_t\})$. Notice now that:
$$((\mathsf{C}_t\{\mathsf{V}/\mathsf{E}_1\{\mathsf{Q}/\mathsf{B}_t\}\})\,(\mathsf{E}_2\{\mathsf{Q}/\mathsf{B}_t\})
\cdots (\mathsf{E}_k\{\mathsf{Q}/\mathsf{B}_t\}))\theta\gamma_1 =
(\mathsf{C}\{\mathsf{V}/\mathsf{E}_1\theta\})\,(\mathsf{E}_2\theta)
\cdots (\mathsf{E}_k\theta)$$
Then, since
$(\mathsf{C}\{\mathsf{V}/\mathsf{E}_1\theta\})\,(\mathsf{E}_2\theta)
\cdots (\mathsf{E}_k\theta)$ has an SLD-refutation of length $n$
using $\sigma$, we get by the induction hypothesis that
$(\mathsf{C}_t\{\mathsf{V}/\mathsf{E}_1\{\mathsf{Q}/\mathsf{B}_t\}\})\,(\mathsf{E}_2\{\mathsf{Q}/\mathsf{B}_t\})
\cdots (\mathsf{E}_k\{\mathsf{Q}/\mathsf{B}_t\})$ has an
SLD-refutation using substitution $\delta'$, where for some
substitution $\gamma$ it holds $\delta'\gamma \supseteq
\theta\gamma_1\sigma$ and $dom(\delta'\gamma-\theta\gamma_1\sigma)$
is a set of template variables that are introduced during this
SLD-refutation. From the above discussion we conclude that
$(\mathsf{Q}\, \mathsf{E}_1\cdots \mathsf{E}_k)$ has an
SLD-refutation using substitution $\delta =
\{\mathsf{Q}/\mathsf{B}_t\}\delta'$. Moreover, it holds that
$\delta\gamma = \{\mathsf{Q}/\mathsf{B}_t\}\delta'\gamma \supseteq
\{\mathsf{Q}/\mathsf{B}_t\}\theta\gamma_1\sigma \supseteq
\theta\sigma$ and $dom(\delta\gamma-\theta\sigma)$ is a set of
template variables that are introduced during the refutation of
$(\mathsf{Q}\, \mathsf{E}_1\cdots \mathsf{E}_k)$.

Consider now the case where $\theta(\mathsf{Q})$ is undefined. Then,
$\mathsf{G}\theta = \leftarrow(\mathsf{Q}\,
(\mathsf{E}_1\theta)\cdots (\mathsf{E}_k\theta))$. By
Definition~\ref{derives-in-one-step} we get that $\mathsf{Q}\,
(\mathsf{E}_1\theta)\cdots (\mathsf{E}_k\theta)
\stackrel{\{\mathsf{Q}/\mathsf{B}_t\}}{\rightarrow}
\mathsf{B}_t\,(\mathsf{E}_1\theta\{\mathsf{Q}/\mathsf{B}_t\}) \cdots
(\mathsf{E}_k\theta\{\mathsf{Q}/\mathsf{B}_t\})$. We may assume
without loss of generality that the set $FV(\mathsf{B}_t)$ is
disjoint from $FV(\mathsf{G})$ and from $dom(\theta)\cup
FV(range(\theta))$. By assumption,
$\mathsf{B}_t\,(\mathsf{E}_1\theta\{\mathsf{Q}/\mathsf{B}_t\})
\cdots (\mathsf{E}_k\theta\{\mathsf{Q}/\mathsf{B}_t\})$ has an
SLD-refutation of length $n$ using $\sigma'$, where $\sigma =
\{\mathsf{Q}/\mathsf{B}_t\}\sigma'$. Consider now the goal
$\mathsf{G}$. By Definition~\ref{derives-in-one-step}, we get that
$(\mathsf{Q}\, \mathsf{E}_1\cdots \mathsf{E}_k)
\stackrel{\{\mathsf{Q}/\mathsf{B}_t\}}{\rightarrow} \mathsf{B}_t\,
(\mathsf{E}_1\{\mathsf{Q}/\mathsf{B}_t\})\cdots
(\mathsf{E}_k\{\mathsf{Q}/\mathsf{B}_t\})$. Notice now that:
$$(\mathsf{B}_t\,(\mathsf{E}_1\{\mathsf{Q}/\mathsf{B}_t\})
\cdots
(\mathsf{E}_k\{\mathsf{Q}/\mathsf{B}_t\}))\theta\{\mathsf{Q}/\mathsf{B}_t\}=
\mathsf{B}_t\,(\mathsf{E}_1\theta\{\mathsf{Q}/\mathsf{B}_t\}) \cdots
(\mathsf{E}_k\theta\{\mathsf{Q}/\mathsf{B}_t\})$$
Then, since
$\mathsf{B}_t\,(\mathsf{E}_1\theta\{\mathsf{Q}/\mathsf{B}_t\})
\cdots (\mathsf{E}_k\theta\{\mathsf{Q}/\mathsf{B}_t\})$ has an
SLD-refutation of length $n$ using $\sigma'$, we get by the
induction hypothesis that
$\mathsf{B}_t\,(\mathsf{E}_1\{\mathsf{Q}/\mathsf{B}_t\}) \cdots
(\mathsf{E}_k\{\mathsf{Q}/\mathsf{B}_t\})$ has an SLD-refutation
using substitution $\delta'$, where for some substitution $\gamma$
it holds $\delta'\gamma \supseteq \theta\{\mathsf{Q}/\mathsf{B}_t\}
\sigma'$ and $dom(\delta'\gamma-\theta\{\mathsf{Q}/\mathsf{B}_t\}
\sigma')$ is a set of template variables that are introduced during
this SLD-refutation; notice that these template variables can be
chosen to be different than the variables in $FV(\mathsf{B}_t)$.
From the above discussion we conclude that $(\mathsf{Q}\,
\mathsf{E}_1\cdots \mathsf{E}_k)$ has an SLD-refutation using
substitution $\delta = \{\mathsf{Q}/\mathsf{B}_t\}\delta'$.
Moreover, it holds that $\delta\gamma =
\{\mathsf{Q}/\mathsf{B}_t\}\delta'\gamma \supseteq
\{\mathsf{Q}/\mathsf{B}_t\}\theta\{\mathsf{Q}/\mathsf{B}_t\}\sigma'
= \theta\{\mathsf{Q}/\mathsf{B}_t\}\sigma' = \theta\sigma$ and
$dom(\delta\gamma-\theta\sigma)$ is a set of template variables that
are introduced during the refutation of $(\mathsf{Q}\,
\mathsf{E}_1\cdots \mathsf{E}_k)$.

\vspace{0.2cm}\noindent \underline{\em Case 3:} $\mathsf{G}
=\leftarrow((\lambda \mathsf{V}.\mathsf{E})\, \mathsf{E}_1\cdots
\mathsf{E}_k)$. Then, $\mathsf{G}\theta = \leftarrow((\lambda
\mathsf{V}.\mathsf{E}\theta)\, (\mathsf{E}_1\theta)\cdots
(\mathsf{E}_k\theta))$. By Definition~\ref{derives-in-one-step} we
get that $(\lambda \mathsf{V}.\mathsf{E}\theta)\,
(\mathsf{E}_1\theta)\cdots (\mathsf{E}_k\theta)
\stackrel{\epsilon}{\rightarrow}
(\mathsf{E}\theta\{\mathsf{V}/(\mathsf{E}_1\theta)\})\,
(\mathsf{E}_2\theta)\cdots (\mathsf{E}_k\theta)$. Moreover, by
assumption, $(\mathsf{E}\theta\{\mathsf{V}/(\mathsf{E}_1\theta)\})\,
(\mathsf{E}_2\theta)\cdots (\mathsf{E}_k\theta)$ has an
SLD-refutation of length $n$ using $\sigma$. Consider now the goal
$\mathsf{G}$. By Definition~\ref{derives-in-one-step}, we get that
$(\lambda \mathsf{V}.\mathsf{E})\, \mathsf{E}_1\cdots \mathsf{E}_k
\stackrel{\epsilon}{\rightarrow}
(\mathsf{E}\{\mathsf{V}/\mathsf{E}_1\})\, \mathsf{E}_2\cdots
\mathsf{E}_k$. Notice now that
$((\mathsf{E}\{\mathsf{V}/\mathsf{E}_1\})\, \mathsf{E}_2\cdots
\mathsf{E}_k)\theta =
(\mathsf{E}\theta\{\mathsf{V}/(\mathsf{E}_1\theta)\})\,
(\mathsf{E}_2\theta)\cdots (\mathsf{E}_k\theta)$, and since the
latter expression has an SLD-refutation of length $n$ using
$\sigma$, we get by the induction hypothesis that
$(\mathsf{E}\{\mathsf{V}/\mathsf{E}_1\})\, \mathsf{E}_2\cdots
\mathsf{E}_k$ has an SLD-refutation using a substitution $\delta$,
where for some substitution $\gamma$ it holds $\delta\gamma
\supseteq \theta\sigma$ and $dom(\delta\gamma-\theta\sigma)$ is a
set of template variables that are introduced during this
refutation. But then, $((\lambda \mathsf{V}.\mathsf{E})\,
\mathsf{E}_1\cdots \mathsf{E}_k)$ has an SLD-refutation using
substitution $\delta$ which satisfies the requirements of the lemma.

\vspace{0.2cm}\noindent \underline{\em Case 4:} $\mathsf{G}
=\leftarrow((\mathsf{E}'\bigvee_{\pi}\mathsf{E}'')\,
\mathsf{E}_1\cdots \mathsf{E}_k)$. Then, $\mathsf{G}\theta =
\leftarrow((\mathsf{E}'\theta\bigvee_{\pi}\mathsf{E}''\theta)\,
(\mathsf{E}_1\theta)\cdots (\mathsf{E}_k\theta))$. By
Definition~\ref{derives-in-one-step} we get that
$(\mathsf{E}'\theta\bigvee_{\pi}\mathsf{E}''\theta)\,
(\mathsf{E}_1\theta)\cdots (\mathsf{E}_k\theta)
\stackrel{\epsilon}{\rightarrow} (\mathsf{E}'\theta)\,
(\mathsf{E}_1\theta)\cdots (\mathsf{E}_k\theta)$ (and symmetrically
for $\mathsf{E}''$). By assumption, either $(\mathsf{E}'\theta)\,
(\mathsf{E}_1\theta)\cdots (\mathsf{E}_k\theta)$ or
$(\mathsf{E}''\theta)\, (\mathsf{E}_1\theta)\cdots
(\mathsf{E}_k\theta)$ has an SLD-refutation of length $n$ using
$\sigma$. Assume, without loss of generality, that
$(\mathsf{E}'\theta)\, (\mathsf{E}_1\theta)\cdots
(\mathsf{E}_k\theta)$ has an SLD-refutation of length $n$ using
$\sigma$. Consider now the goal $\mathsf{G}$. By
Definition~\ref{derives-in-one-step}, we get that
$(\mathsf{E}'\bigvee_{\pi}\mathsf{E}'')\, \mathsf{E}_1\cdots
\mathsf{E}_k \stackrel{\epsilon}{\rightarrow} \mathsf{E}'\,
\mathsf{E}_1\cdots \mathsf{E}_k$. Notice now that $(\mathsf{E}'\,
\mathsf{E}_1\cdots \mathsf{E}_k)\theta = (\mathsf{E}'\theta)\,
(\mathsf{E}_1\theta)\cdots (\mathsf{E}_k\theta)$, and since the
latter expression has an SLD-refutation of length $n$ using
$\sigma$, we get by the induction hypothesis that $\mathsf{E}'\,
\mathsf{E}_1\cdots \mathsf{E}_k$ has an SLD-refutation using a
substitution $\delta$, where for some substitution $\gamma$ it holds
$\delta\gamma \supseteq \theta\sigma$ and
$dom(\delta\gamma-\theta\sigma)$ is a set of template variables that
are introduced during this refutation. But then,
$(\mathsf{E}'\bigvee_{\pi}\mathsf{E}'')\, \mathsf{E}_1\cdots
\mathsf{E}_k$ has an SLD-refutation using substitution $\delta$
which satisfies the requirements of the lemma.

\vspace{0.2cm}\noindent \underline{\em Case 5:} $\mathsf{G}
=\leftarrow((\mathsf{E}'\bigwedge_{\pi}\mathsf{E}'')\,
\mathsf{E}_1\cdots \mathsf{E}_k)$. Then, $\mathsf{G}\theta =
\leftarrow((\mathsf{E}'\theta\bigwedge_{\pi}\mathsf{E}''\theta)\,
(\mathsf{E}_1\theta)\cdots (\mathsf{E}_k\theta))$. By
Definition~\ref{derives-in-one-step} we get
$(\mathsf{E}'\theta\bigvee_{\pi}\mathsf{E}''\theta)\,
(\mathsf{E}_1\theta)\cdots (\mathsf{E}_k\theta)
\stackrel{\epsilon}{\rightarrow} ((\mathsf{E}'\theta)\,
(\mathsf{E}_1\theta)\cdots (\mathsf{E}_k\theta)) \wedge
((\mathsf{E}''\theta)\, (\mathsf{E}_1\theta)\cdots
(\mathsf{E}_k\theta))$. By assumption, $((\mathsf{E}'\theta)\,
(\mathsf{E}_1\theta)\cdots (\mathsf{E}_k\theta)) \wedge
((\mathsf{E}''\theta)\, (\mathsf{E}_1\theta)\cdots
(\mathsf{E}_k\theta))$ has an SLD-refutation of length $n$ using
$\sigma$. Consider now the goal $\mathsf{G}$. By
Definition~\ref{derives-in-one-step}, we get that
$(\mathsf{E}'\bigwedge_{\pi}\mathsf{E}'')\, \mathsf{E}_1\cdots
\mathsf{E}_k \stackrel{\epsilon}{\rightarrow} (\mathsf{E}'\,
\mathsf{E}_1\cdots \mathsf{E}_k)\wedge (\mathsf{E}''\,
\mathsf{E}_1\cdots \mathsf{E}_k)$. Notice now that it holds that
$((\mathsf{E}'\, \mathsf{E}_1\cdots \mathsf{E}_k)\wedge
(\mathsf{E}''\, \mathsf{E}_1\cdots \mathsf{E}_k))\theta =
((\mathsf{E}'\theta)\, (\mathsf{E}_1\theta)\cdots
(\mathsf{E}_k\theta)) \wedge ((\mathsf{E}''\theta)\,
(\mathsf{E}_1\theta)\cdots (\mathsf{E}_k\theta))$; since the latter
expression has an SLD-refutation of length $n$ using $\sigma$, we
get by the induction hypothesis that $(\mathsf{E}'\,
\mathsf{E}_1\cdots \mathsf{E}_k)\wedge (\mathsf{E}''\,
\mathsf{E}_1\cdots \mathsf{E}_k)$ has an SLD-refutation using a
substitution $\delta$, where for some substitution $\gamma$ it holds
$\delta\gamma \supseteq \theta\sigma$ and
$dom(\delta\gamma-\theta\sigma)$ is a set of template variables that
are introduced during this refutation. But then,
$(\mathsf{E}'\bigwedge_{\pi}\mathsf{E}'')\, \mathsf{E}_1\cdots
\mathsf{E}_k$ has an SLD-refutation using substitution $\delta$
which satisfies the requirements of the lemma.

\vspace{0.2cm}\noindent \underline{\em Case 6:} $\mathsf{G}
=\leftarrow(\Box \wedge \mathsf{E})$. Then, $\mathsf{G}\theta =
\leftarrow(\Box \wedge (\mathsf{E}\theta))$. By
Definition~\ref{derives-in-one-step} we get $(\Box \wedge
(\mathsf{E}\theta))\stackrel{\epsilon}{\rightarrow}
\mathsf{E}\theta$. By assumption, $\mathsf{E}\theta$ has an
SLD-refutation of length $n$ using $\sigma$. Consider now the goal
$\mathsf{G}$. By Definition~\ref{derives-in-one-step}, we get that
$(\Box \wedge\mathsf{E}) \stackrel{\epsilon}{\rightarrow}
\mathsf{E}$. Since $\mathsf{E}\theta$ has an SLD-refutation of
length $n$ using $\sigma$, we get by the induction hypothesis that
$\mathsf{E}$ has an SLD-refutation using a substitution $\delta$,
where for some substitution $\gamma$ it holds $\delta\gamma
\supseteq \theta\sigma$ and $dom(\delta\gamma-\theta\sigma)$ is a
set of template variables that are introduced during this
refutation. But then, $(\Box \wedge \mathsf{E})$ has an
SLD-refutation using substitution $\delta$ which satisfies the
requirements of the lemma.

\vspace{0.2cm}\noindent \underline{\em Case 7:} $\mathsf{G}
=\leftarrow(\mathsf{E} \wedge \Box)$. Almost identical to the
previous case.

\vspace{0.2cm}\noindent \underline{\em Case 8:} $\mathsf{G}
=\leftarrow(\exists \mathsf{V}\,\mathsf{E})$. Then,
$\mathsf{G}\theta = \leftarrow( \exists \mathsf{V}\,
(\mathsf{E}\theta))$. By Definition~\ref{derives-in-one-step} we get
$( \exists \mathsf{V}\, (\mathsf{E}\theta))
\stackrel{\epsilon}{\rightarrow} \mathsf{E}\theta$. By assumption,
$\mathsf{E}\theta$ has an SLD-refutation of length $n$ using
$\sigma$. Consider now the goal $\mathsf{G}$. By
Definition~\ref{derives-in-one-step}, we get that $(\exists
\mathsf{V}\,\mathsf{E}) \stackrel{\epsilon}{\rightarrow}
\mathsf{E}$. Since $\mathsf{E}\theta$ has an SLD-refutation of
length $n$ using $\sigma$, we get by the induction hypothesis that
$\mathsf{E}$ has an SLD-refutation using a substitution $\delta$,
where for some substitution $\gamma$ it holds $\delta\gamma
\supseteq \theta\sigma$ and $dom(\delta\gamma-\theta\sigma)$ is a
set of template variables that are introduced during this
refutation. But then, $(\exists \mathsf{V}\,\mathsf{E})$ has an
SLD-refutation using substitution $\delta$ which satisfies the
requirements of the lemma.

\vspace{0.2cm}\noindent \underline{\em Case 9:} $\mathsf{G}
=\leftarrow(\mathsf{E}_1 \wedge \mathsf{E}_2)$. We may assume
without loss of generality that given the goal $\mathsf{G}\theta
=\leftarrow(\mathsf{E}_1\theta \wedge \mathsf{E}_2\theta)$, the
first step in the refutation will take place due to the
subexpression $\mathsf{E}_1\theta$. Moreover, again without loss of
generality, due to the associativity of $\wedge$, we assume that
$\mathsf{E}_1$ is not an expression that contains a top-level
$\wedge$ (ie., it is not of the form $\mathsf{A}_1 \wedge
\mathsf{A}_2$). The proof could be easily adapted to circumvent the
two assumptions just mentioned (but this would result in more
cumbersome notation). We perform a case analysis on $\mathsf{E}_1$:

\vspace{0.1cm} \noindent \underline{\em Subcase 9.1:} $\mathsf{E}_1
= (\mathsf{A}_1\,\approx \mathsf{A}_2)$. Then, we have
$((\mathsf{A}_1\approx \mathsf{A}_2)\theta \wedge
\mathsf{E}_2\theta)\stackrel{\sigma_1}{\rightarrow}(\Box\wedge
\mathsf{E}_2\theta\sigma_1)$, where $\sigma_1$ is an mgu of
$\mathsf{A}_1\theta$ and $\mathsf{A}_2\theta$. By assumption,
$(\Box\wedge \mathsf{E}_2\theta\sigma_1)$ has an SLD-refutation of
length $n$ using $\sigma'$, where $\sigma = \sigma_1 \sigma'$.
Consider now $(\mathsf{E}_1\wedge \mathsf{E}_2)$. By
Definition~\ref{derives-in-one-step}, it holds that
$(\mathsf{A}_1\approx \mathsf{A}_2)\stackrel{\delta_1}{\rightarrow}
\Box$, where $\delta_1$ is an mgu of $\mathsf{A}_1,\mathsf{A}_2$. By
Definition~\ref{derives-in-one-step} we get that
$((\mathsf{A}_1\approx \mathsf{A}_2) \wedge
\mathsf{E}_2)\stackrel{\delta_1}{\rightarrow}(\Box\wedge
\mathsf{E}_2\delta_1)$. Since $\theta\sigma_1$ is a unifier of
$\mathsf{A}_1,\mathsf{A}_2$, there exists $\theta'$ such that
$\theta\sigma_1 = \delta_1\theta'$, and since $(\Box\wedge
\mathsf{E}_2\theta\sigma_1)$ has an SLD-refutation of length $n$
using $\sigma'$, we get that $(\Box\wedge
\mathsf{E}_2\delta_1\theta') = (\Box\wedge
\mathsf{E}_2\delta_1)\theta'$ has an SLD-refutation of length $n$
using $\sigma'$. By the induction hypothesis we get that
$(\Box\wedge \mathsf{E}_2\delta_1)$ has an SLD-refutation using
$\delta'$, where $\delta'\gamma \supseteq \theta'\sigma'$ and
$dom(\delta'\gamma-\theta'\sigma')$ is a set of template variables
that are introduced during the refutation of this goal. But then,
$(\mathsf{E}_1\wedge\mathsf{E}_2)$ has an SLD-refutation using
substitution $\delta = \delta_1 \delta'$. Moreover, it holds that
$\delta\gamma = \delta_1 \delta'\gamma \supseteq
\delta_1\theta'\sigma' = \theta\sigma_1\sigma' = \theta\sigma$.

\vspace{0.1cm} \noindent \underline{\em Subcase 9.2:} $\mathsf{E}_1
= (\mathsf{Q}\,\mathsf{A}_1\cdots \mathsf{A}_r)$. Consider first the
case where $\theta(\mathsf{Q}) = \mathsf{B}$, for some basic
expression $\mathsf{B}$. Notice now that $\mathsf{B}$ can be either
a higher-order predicate variable or a finite-union of lambda
abstractions. We examine the case where $\mathsf{B}$ is a single
lambda abstraction (the other two cases are similar). Since
$\mathsf{B}$ is a lambda abstraction, we have that $\mathsf{E}_1
\theta \stackrel{\epsilon}{\rightarrow} \mathsf{E}'_1$, where
$\mathsf{E}'_1$ is the resulting expression after performing the
outer beta reduction in $\mathsf{E}_1 \theta$. By
Definition~\ref{derives-in-one-step} we have that
$(\mathsf{E}_1\wedge \mathsf{E}_2)\theta
\stackrel{\epsilon}{\rightarrow} \mathsf{E}'_1 \wedge
\mathsf{E}_2\theta$. By assumption, $\mathsf{E}'_1 \wedge
\mathsf{E}_2\theta$ has an SLD-refutation of length $n$ using
$\sigma$. Consider now $(\mathsf{E}_1\wedge \mathsf{E}_2)$. By
Definition~\ref{derives-in-one-step}, it holds that $\mathsf{E}_1
\stackrel{\{\mathsf{Q}/\mathsf{B}_t\}}{\rightarrow} \mathsf{E}''_1$,
where $\mathsf{E}''_1= \mathsf{E}_1\{\mathsf{Q}/\mathsf{B}_t\}$ and
$\mathsf{B} = \mathsf{B}_t\gamma_1$, for some substitution
$\gamma_1$, with $dom(\gamma_1) = FV(\mathsf{B}_t)$. We may assume
without loss of generality that the set $dom(\gamma_1)$ is disjoint
from $FV(\mathsf{G})$ and from $dom(\theta)\cup FV(range(\theta))$.
By Definition~\ref{derives-in-one-step}, we also get that
$\mathsf{E}''_1 \stackrel{\epsilon}{\rightarrow} \mathsf{E}'''_1$,
where $\mathsf{E}'''_1$ is the expression that results after
performing the outer beta reduction in $\mathsf{E}''_1$. Then, by
Definition~\ref{derives-in-one-step} we get that $\mathsf{E}_1\wedge
\mathsf{E}_2 \stackrel{\{\mathsf{Q}/\mathsf{B}_t\}}{\rightarrow}
\mathsf{E}''_1 \wedge \mathsf{E}_2\{\mathsf{Q}/\mathsf{B}_t\}$ and
$\mathsf{E}''_1 \wedge \mathsf{E}_2\{\mathsf{Q}/\mathsf{B}_t\}
\stackrel{\epsilon}{\rightarrow} \mathsf{E}'''_1 \wedge
\mathsf{E}_2\{\mathsf{Q}/\mathsf{B}_t\}$. Notice now that
$(\mathsf{E}'''_1\wedge
\mathsf{E}_2\{\mathsf{Q}/\mathsf{B}_t\})\theta\gamma_1 =
\mathsf{E}'_1\wedge \mathsf{E}_2\theta$, and since
$\mathsf{E}'_1\wedge \mathsf{E}_2\theta$ has an SLD-refutation of
length $n$ using $\sigma$, we get by the induction hypothesis that
$(\mathsf{E}'''_1\wedge \mathsf{E}_2\{\mathsf{Q}/\mathsf{B}_t\})$
has an SLD-refutation using $\delta'$, where for some substitution
$\gamma$ it holds $\delta'\gamma \supseteq \theta\gamma_1\sigma$ and
$dom(\delta'\gamma-\theta\gamma_1\sigma)$ is a set of template
variables that are introduced during this SLD-refutation. But then,
$\mathsf{E}_1\wedge \mathsf{E}_2$ has an SLD-refutation using
substitution $\delta = \{\mathsf{Q}/\mathsf{B}_t\}\delta'$.
Moreover, it holds that $\delta\gamma =
\{\mathsf{Q}/\mathsf{B}_t\}\delta'\gamma \supseteq
\{\mathsf{Q}/\mathsf{B}_t\}\theta\gamma_1\sigma \supseteq
\theta\sigma$ and $dom(\delta\gamma-\theta\sigma)$ is a set of
template variables that are introduced during the refutation of
$\mathsf{G}$.

Consider now the case where $\theta(\mathsf{Q})$ is undefined, ie.,
there does not exist a binding for $\mathsf{Q}$ in $\theta$. Then,
we have that $\mathsf{E}_1 \theta
\stackrel{\{\mathsf{Q}/\mathsf{B}_t\}}{\rightarrow} \mathsf{E}'_1$,
where
$\mathsf{E}'_1=\mathsf{B}_t\,(\mathsf{A}_1\theta\{\mathsf{Q}/\mathsf{B}_t\})\cdots
(\mathsf{A}_r\theta\{\mathsf{Q}/\mathsf{B}_t\})$. We may assume
without loss of generality that the set $FV(\mathsf{B}_t)$ is
disjoint from $FV(\mathsf{G})$ and from $dom(\theta)\cup
FV(range(\theta))$. By Definition~\ref{derives-in-one-step} we have
that $(\mathsf{E}_1\wedge \mathsf{E}_2)\theta
\stackrel{\{\mathsf{Q}/\mathsf{B}_t\}}{\rightarrow} \mathsf{E}'_1
\wedge (\mathsf{E}_2\theta\{\mathsf{Q}/\mathsf{B}_t\})$. By
assumption, $\mathsf{E}'_1 \wedge
(\mathsf{E}_2\theta\{\mathsf{Q}/\mathsf{B}_t\})$ has an
SLD-refutation of length $n$ using $\sigma'$, where
$\sigma=\{\mathsf{Q}/\mathsf{B}_t\}\sigma'$. Consider now
$(\mathsf{E}_1\wedge \mathsf{E}_2)$. By
Definition~\ref{derives-in-one-step}, it holds that $\mathsf{E}_1
\stackrel{\{\mathsf{Q}/\mathsf{B}_t\}}{\rightarrow} \mathsf{E}''_1$,
where $\mathsf{E}''_1= \mathsf{E}_1\{\mathsf{Q}/\mathsf{B}_t\}$. By
Definition~\ref{derives-in-one-step} we get that $\mathsf{E}_1\wedge
\mathsf{E}_2 \stackrel{\{\mathsf{Q}/\mathsf{B}_t\}}{\rightarrow}
\mathsf{E}''_1 \wedge \mathsf{E}_2\{\mathsf{Q}/\mathsf{B}_t\}$.
Notice now that $(\mathsf{E}''_1\wedge
\mathsf{E}_2\{\mathsf{Q}/\mathsf{B}_t\})\theta\{\mathsf{Q}/\mathsf{B}_t\}
= \mathsf{E}'_1\wedge
\mathsf{E}_2\theta\{\mathsf{Q}/\mathsf{B}_t\}$, and since
$\mathsf{E}'_1\wedge \mathsf{E}_2\theta\{\mathsf{Q}/\mathsf{B}_t\}$
has an SLD-refutation of length $n$ using $\sigma'$, we get by the
induction hypothesis that $(\mathsf{E}''_1\wedge
\mathsf{E}_2\{\mathsf{Q}/\mathsf{B}_t\})$ has an SLD-refutation
using $\delta'$, where for some substitution $\gamma$ it holds that
$\delta'\gamma \supseteq \theta\{\mathsf{Q}/\mathsf{B}_t\}\sigma'$,
and $dom(\delta'\gamma-\theta\{\mathsf{Q}/\mathsf{B}_t\}\sigma')$ is
a set of template variables that are introduced during this
SLD-refutation; notice that these template variables can be chosen
to be different than the variables in $FV(\mathsf{B}_t)$. Then,
$\mathsf{E}_1\wedge \mathsf{E}_2$ has an SLD-refutation using
substitution $\delta = \{\mathsf{Q}/\mathsf{B}_t\}\delta'$.
Moreover, it holds that $\delta\gamma =
\{\mathsf{Q}/\mathsf{B}_t\}\delta'\gamma \supseteq
\{\mathsf{Q}/\mathsf{B}_t\}\theta\{\mathsf{Q}/\mathsf{B}_t\}\sigma'=
\theta\{\mathsf{Q}/\mathsf{B}_t\}\sigma' = \theta\sigma$ and
$dom(\delta\gamma-\theta\sigma)$ is a set of template variables that
are introduced during the refutation of $\mathsf{G}$.

\vspace{0.1cm} \noindent \underline{\em Subcase 9.3:} $\mathsf{E}_1
=\leftarrow((\mathsf{A}'\bigvee_{\pi}\mathsf{A}'')\,
\mathsf{A}_1\cdots \mathsf{A}_r)$. Then, $\mathsf{E}_1\theta =
(\mathsf{A}'\theta\bigvee_{\pi}\mathsf{A}''\theta)\,
(\mathsf{A}_1\theta)\cdots (\mathsf{A}_r\theta)$. By
Definition~\ref{derives-in-one-step} we get that
$(\mathsf{A}'\theta\bigvee_{\pi}\mathsf{A}''\theta)\,
(\mathsf{A}_1\theta)\cdots (\mathsf{A}_r\theta)
\stackrel{\epsilon}{\rightarrow} (\mathsf{A}'\theta)\,
(\mathsf{A}_1\theta)\cdots (\mathsf{A}_r\theta)$ (and symmetrically
for $\mathsf{A}''$). By Definition~\ref{derives-in-one-step} we have
$(\mathsf{E}_1\theta \wedge \mathsf{E}_2\theta)
\stackrel{\epsilon}{\rightarrow}((\mathsf{A}'\theta)\,
(\mathsf{A}_1\theta)\cdots (\mathsf{A}_r\theta))\wedge
\mathsf{E}_2\theta$ and $(\mathsf{E}_1\theta \wedge
\mathsf{E}_2\theta) \stackrel{\epsilon}{\rightarrow}
((\mathsf{A}''\theta)\, (\mathsf{A}_1\theta)\cdots
(\mathsf{A}_r\theta))\wedge \mathsf{E}_2\theta$. By assumption,
either $((\mathsf{A}'\theta)\, (\mathsf{A}_1\theta)\cdots
(\mathsf{A}_r\theta)) \wedge \mathsf{E}_2\theta$ or
$((\mathsf{A}''\theta)\, (\mathsf{A}_1\theta)\cdots
(\mathsf{A}_r\theta)) \wedge \mathsf{E}_2\theta$ has an
SLD-refutation of length $n$ using $\sigma$. Assume, without loss of
generality, that $((\mathsf{A}'\theta)\, (\mathsf{A}_1\theta)\cdots
(\mathsf{A}_r\theta)) \wedge \mathsf{E}_2\theta$ has an
SLD-refutation of length $n$ using $\sigma$. Notice now that by
Definition~\ref{derives-in-one-step}, we have that
$(\mathsf{A}'\bigvee_{\pi}\mathsf{A}'')\, \mathsf{A}_1\cdots
\mathsf{A}_r \stackrel{\epsilon}{\rightarrow} \mathsf{A}'\,
\mathsf{A}_1\cdots \mathsf{A}_r$. Moreover, notice that
$((\mathsf{A}'\, \mathsf{A}_1\cdots \mathsf{A}_r) \wedge
\mathsf{E}_2)\theta =((\mathsf{A}'\theta)\,
(\mathsf{A}_1\theta)\cdots (\mathsf{A}_r\theta)) \wedge
\mathsf{E}_2\theta$, and since the latter expression has an
SLD-refutation of length $n$ using $\sigma$, we get by the induction
hypothesis that $((\mathsf{A}'\, \mathsf{A}_1\cdots \mathsf{A}_r)
\wedge \mathsf{E}_2)$ has an SLD-refutation using a substitution
$\delta$, where for some substitution $\gamma$ it holds
$\delta\gamma \supseteq \theta\sigma$ and
$dom(\delta\gamma-\theta\sigma)$ is a set of template variables that
are introduced during this refutation. But then,
$((\mathsf{A}'\bigvee_{\pi}\mathsf{A}'')\, \mathsf{A}_1\cdots
\mathsf{A}_r)\wedge \mathsf{E}_2$ has an SLD-refutation using
substitution $\delta$ which satisfies the requirements of the lemma.

\vspace{0.1cm} \noindent \underline{\em Subcase 9.4:} $\mathsf{E}_1$
has any other form except for the ones examined in the previous
three subcases. Then, it can be verified that in all these subcases
it holds that $\mathsf{E}_1 \theta \stackrel{\epsilon}{\rightarrow}
\mathsf{E}'_1$, for some $\mathsf{E}'_1$. By
Definition~\ref{derives-in-one-step} we have that
$(\mathsf{E}_1\wedge \mathsf{E}_2)\theta
\stackrel{\epsilon}{\rightarrow} \mathsf{E}'_1 \wedge
(\mathsf{E}_2\theta)$. By assumption, $\mathsf{E}'_1 \wedge
(\mathsf{E}_2\theta)$ has an SLD-refutation of length $n$ using
$\sigma$. Consider now $(\mathsf{E}_1\wedge \mathsf{E}_2)$. By
Definition~\ref{derives-in-one-step} and by examination of all the
possible cases for $\mathsf{E}_1$ it can be seen that $\mathsf{E}_1
\stackrel{\epsilon}{\rightarrow} \mathsf{E}''_1$, where
$\mathsf{E}'_1 = \mathsf{E}''_1\theta$. By
Definition~\ref{derives-in-one-step} we get that $\mathsf{E}_1\wedge
\mathsf{E}_2 \stackrel{\epsilon}{\rightarrow} \mathsf{E}''_1 \wedge
\mathsf{E}_2$. Notice now that $(\mathsf{E}''_1\wedge
\mathsf{E}_2)\theta = \mathsf{E}'_1\wedge \mathsf{E}_2\theta$, and
since $\mathsf{E}'_1\wedge \mathsf{E}_2\theta$ has an SLD-refutation
of length $n$ using $\sigma$, we get by the induction hypothesis
that $(\mathsf{E}''_1\wedge \mathsf{E}_2)$ has an SLD-refutation
using $\delta$, where for some substitution $\gamma$ it holds that
$\delta\gamma \supseteq \theta\sigma$, and
$dom(\delta\gamma-\theta\sigma)$ is a set of template variables that
are introduced during this SLD-refutation. Then, $\mathsf{E}_1\wedge
\mathsf{E}_2$ has an SLD-refutation using substitution $\delta$
which satisfies the requirements of the lemma.\qed
\end{proof}

\section{Proof of Lemma~\ref{lemma-completeness-bottom-interp}}\label{appendix-bottom}

{\bf Lemma~\ref{lemma-completeness-bottom-interp}} Let $\mathsf{P}$
be a program and $\mathsf{G}=\leftarrow \mathsf{A}$ be a goal such
that $\lsem \mathsf{A} \rsem_s(\bot_{\mathcal{I}_\mathsf{P}}) = 1$
for all Herbrand states $s$. Then, there exists an SLD-refutation
for $\mathsf{P}\cup\{\mathsf{G}\}$ with computed answer equal to the
identity substitution.
\begin{proof}
We establish a stronger statement which has the statement of the
lemma as a special case. Let us call a substitution $\theta$ {\em
closed}  if every expression in $range(\theta)$ is closed. We
demonstrate that for every closed basic substitution $\theta$, if
$\lsem \mathsf{A} \theta \rsem_s(\bot_{\mathcal{I}_\mathsf{P}}) = 1$
for all Herbrand states $s$, then there exists an SLD-refutation for
$\mathsf{P}\cup\{\mathsf{A}\theta\}$ with computed answer equal to
the identity substitution. The statement of the lemma is then a
direct consequence for $\theta = \epsilon$.

We start by noting that $\mathsf{A}$ is always of the form
$(\mathsf{A}_0\ \mathsf{A}_1\cdots\mathsf{A}_n)$, $n\geq 0$ (if
$n=0$ then $\mathsf{A}_0$ is of type $o$). We perform induction on
the type $\rho_1 \rightarrow \cdots \rho_n \rightarrow o$ of
$\mathsf{A}_0$.

\vspace{0.3cm}\noindent\underline{\em Outer Induction Basis:} The
outer induction basis is for $n=0$, ie., for $type(\mathsf{A}_0) =
o$, and in order to establish it we need to perform an inner
structural induction on $\mathsf{A}_0$.

\vspace{0.2cm}\noindent{\em Inner Induction Basis:} For the inner
induction basis we need to examine the cases where $\mathsf{A}_0$ is
$\mathsf{0}$, $\mathsf{1}$, $(\mathsf{E}_1 \approx \mathsf{E}_2)$
and $\mathsf{Q}$, where $type(\mathsf{Q})=o$. The first case is not
applicable since $\lsem \mathsf{0}\theta
\rsem_s(\bot_{\mathcal{I}_\mathsf{P}}) = 0$. The second case is
immediate. We examine the latter two cases:

\vspace{0.2cm}\noindent {\em Case 1:} $\mathsf{A}_0 = (\mathsf{E}_1
\approx \mathsf{E}_2)$. Since for all $s$ it holds that $\lsem
(\mathsf{E}_1 \approx \mathsf{E}_2)\theta
\rsem_s(\bot_{\mathcal{I}_\mathsf{P}}) = 1$, we get that for all
$s$, $\lsem \mathsf{E}_1\theta \rsem_s
(\bot_{\mathcal{I}_\mathsf{P}}) = \lsem \mathsf{E}_2\theta
\rsem_s(\bot_{\mathcal{I}_\mathsf{P}})$. By the fact that
$\bot_{\mathcal{I}_\mathsf{P}}$ is a Herbrand interpretation, we
conclude that $\mathsf{E}_1\theta$ and $\mathsf{E}_2\theta$ must be
identical expressions of type $\iota$, and therefore they are
unifiable using the identity substitution.

\vspace{0.2cm}\noindent {\em Case 2:} $\mathsf{A}_0 = \mathsf{Q}$,
with $type(\mathsf{Q})=o$. If $\theta(\mathsf{Q})=0$ then it can not
be the case that $\lsem \mathsf{A}_0
\rsem_s(\bot_{\mathcal{I}_\mathsf{P}}) = 1$, and therefore this case
is not applicable. If $\theta(\mathsf{Q})=1$, the result is trivial.
If on the other hand $\mathsf{Q}\not\in dom(\mathsf{\theta})$, then
this case is not applicable since it is not possible to have $\lsem
\mathsf{A}_0\theta \rsem _s(\bot_{\mathcal{I}_\mathsf{P}}) = 1$, for
all $s$ (eg. choose $s$ such that $s(\mathsf{Q}) = 0$).

\vspace{0.2cm}\noindent{\em Inner Induction Step:} We distinguish
the following cases:

\vspace{0.2cm}\noindent {\em Case 1:} $\mathsf{A}_0 = (\exists
\mathsf{Q}\,\mathsf{E})$. We can assume without loss of generality
that $\mathsf{Q} \not\in dom(\theta)$. Since for all $s$ it holds
that $\lsem (\exists \mathsf{Q}\, \mathsf{E})\theta
\rsem_s(\bot_{\mathcal{I}_\mathsf{P}}) = 1$, it follows that $\lsem
\mathsf{E}\theta
\rsem_{s[b/\mathsf{Q}]}(\bot_{\mathcal{I}_\mathsf{P}}) = 1$ for some
$b \in {\cal F}_{U_{\mathsf{P}}}(type(\mathsf{Q}))$. Let $\theta' =
\{\mathsf{Q}/\mathsf{B}\}$ where $\mathsf{B}$ is a closed basic
expression such that $\lsem \mathsf{B}
\rsem(\bot_{\mathcal{I}_\mathsf{P}}) = b$ (the existence of such an
expression $\mathsf{B}$ is ensured by
Lemma~\ref{for_every_b_exists_B}). Then it is easy to see that
$\lsem \mathsf{E}\theta\theta'
\rsem_{s}(\bot_{\mathcal{I}_\mathsf{P}}) = 1$ for all states $s$. By
the induction hypothesis there exists an SLD-refutation for
$\mathsf{P}\cup\{\mathsf{E}\theta\theta'\}$ using some substitution
$\sigma$ and with computed answer equal to the identity
substitution. Using Lemma~\ref{pre-lifting-lemma}, it follows that
there exists an SLD-refutation of
$\mathsf{P}\cup\{\mathsf{E}\theta\}$ using substitution $\delta$
where for some substitution $\gamma$ it holds that $\delta\gamma
\supseteq \{\mathsf{Q}/\mathsf{B}\}\sigma$; moreover,
$dom(\delta\gamma-\{\mathsf{Q}/\mathsf{B}\}\sigma)$ is a set of
template variables that are introduced during the refutation of
$\mathsf{P}\cup\{\mathsf{E}\theta\}$. Since the restriction of
$\sigma$ to the free variables of $\mathsf{E}\theta\theta'$ is the
identity substitution, it follows that the restriction of $\delta$
to the free variables of $\mathsf{E}\theta$ will either be empty or
it will only contain the binding $\mathsf{Q}/\mathsf{B}$. We
conclude that there exists a refutation of $\mathsf{P}\cup\{(\exists
\mathsf{Q}\,\mathsf{E})\theta\}$ using substitution $\epsilon\delta
= \delta$. The computed answer is the identity substitution since
$\mathsf{Q}$ is not a free variable of $(\exists
\mathsf{Q}\,\mathsf{E})\theta$.

\vspace{0.2cm}\noindent {\em Case 2:} $\mathsf{A}_0 = (\mathsf{E}_1
\wedge \mathsf{E}_2)$. By assumption, $\lsem (\mathsf{E}_1 \wedge
\mathsf{E}_2)\theta \rsem_s(\bot_{\mathcal{I}_\mathsf{P}}) = 1$, for
all $s$. Then, it holds $\lsem \mathsf{E}_1\theta
\rsem_s(\bot_{\mathcal{I}_\mathsf{P}}) = 1$ and $\lsem
\mathsf{E}_2\theta \rsem_s(\bot_{\mathcal{I}_\mathsf{P}}) = 1$. By
the induction hypothesis there exist SLD-refutations for
$\mathsf{P}\cup\{\leftarrow \mathsf{E}_1\theta\}$ and
$\mathsf{P}\cup\{\leftarrow \mathsf{E}_2\theta\}$ with computed
answers equal to the identity substitution. Let $\theta_1$ and
$\theta_2$ be the compositions of the substitutions used for the
refutations of $\mathsf{P}\cup\{\leftarrow \mathsf{E}_1\theta\}$ and
$\mathsf{P}\cup\{\leftarrow \mathsf{E}_2\theta \}$ respectively.
Now, since the computed answer of the refutation for
$\mathsf{P}\cup\{\mathsf{E}_1\theta\}$ is the identity, this implies
that the free variables of $\mathsf{E}_2\theta$ that also appear
free in $\mathsf{E}_1\theta$ do not belong to $dom(\theta_1)$.
Moreover, the rest of the free variables of $\mathsf{E}_2\theta$ do
not belong to $dom(\theta_1)$, because the variables of $\theta_1$
have been obtained by using resolution steps that only involve
``fresh'' variables. In conclusion, the restriction of $\theta_1$ to
the free variables of $(\mathsf{E}_1 \wedge \mathsf{E}_2)\theta$ is
the identity substitution (and similarly for $\theta_2$). These
observations imply that $\mathsf{E}_2\theta\theta_1 =
\mathsf{E}_2\theta$. But then, we can construct a refutation for
$\mathsf{P}\cup\{\leftarrow (\mathsf{E}_1\theta \wedge
\mathsf{E}_2\theta) \}$ by first deriving $\Box$ from
$\mathsf{E}_1\theta$ and then deriving $\Box$ from
$\mathsf{E}_2\theta\theta_1=\mathsf{E}_2\theta$. The substitution
used for the refutation of $\mathsf{P}\cup\{\leftarrow (\mathsf{E}_1
\wedge \mathsf{E}_2)\theta \}$ is $\theta_1\theta_2$ and the
computed answer is equal to the restriction of $\theta_1\theta_2$ to
the free variables of $(\mathsf{E}_1 \wedge \mathsf{E}_2)\theta$,
which gives the identity substitution.

\vspace{0.2cm}\noindent {\em Case 3:} $\mathsf{A}_0 = (\mathsf{E}_1
\vee \mathsf{E}_2)$. By assumption, $\lsem (\mathsf{E}_1 \vee
\mathsf{E}_2)\theta \rsem_s(\bot_{\mathcal{I}_\mathsf{P}}) = 1$, for
all $s$. Then, it either holds that $\lsem \mathsf{E}_1\theta
\rsem_s(\bot_{\mathcal{I}_\mathsf{P}}) = 1$ or $\lsem
\mathsf{E}_2\theta \rsem_s(\bot_{\mathcal{I}_\mathsf{P}}) = 1$.
Without loss of generality, assume that $\lsem \mathsf{E}_1\theta
\rsem_s(\bot_{\mathcal{I}_\mathsf{P}}) = 1$. By the induction
hypothesis there exists an SLD-refutation for
$\mathsf{P}\cup\{\leftarrow \mathsf{E}_1\theta\}$ with computed
answer equal to the identity substitution. But then
$\mathsf{P}\cup\{\leftarrow (\mathsf{E}_1\theta \vee
\mathsf{E}_2\theta)\}$ has an SLD-refutation whose first step leads
to $\leftarrow (\mathsf{E}_1\theta)$ using $\epsilon$ and then
proceeds according to the SLD-refutation of $\leftarrow
(\mathsf{E}_1\theta)$. The computed answer of this refutation is
obviously the identity substitution.

\vspace{0.3cm}\noindent\underline{\em Outer Induction Step:} Assume
the lemma holds when $\mathsf{A}_0$ has type $\rho_1 \rightarrow
\cdots \rho_{n-1} \rightarrow o$, $n\geq 1$. We establish the lemma
for the case where $\mathsf{A}_0$ has type $\pi = \rho_1 \rightarrow
\cdots \rho_{n} \rightarrow o$. We distinguish the following cases:

\vspace{0.2cm}\noindent {\em Case 1:} $\mathsf{A}_0 = \mathsf{p}$
(ie., $\mathsf{A} = \mathsf{p}\,\mathsf{A}_1\cdots \mathsf{A}_n$).
This case is not applicable since $\bot_{\mathcal{I}_\mathsf{P}}
(\mathsf{p}) = \perp_{\pi}$ and therefore $\lsem \mathsf{A} \rsem
_s(\bot_{\mathcal{I}_\mathsf{P}}) = 0$, for all $s$.

\vspace{0.2cm}\noindent {\em Case 2:} $\mathsf{A}_0 =\mathsf{Q}$
(ie., $\mathsf{A}= \mathsf{Q}\,\mathsf{A}_1\cdots \mathsf{A}_n$). If
$\mathsf{Q}\not\in dom(\theta)$ then this case is not applicable
since it is not possible to have $\lsem \mathsf{A} \rsem
_s(\bot_{\mathcal{I}_\mathsf{P}}) = 1$, for all $s$ (eg. take
$s(\mathsf{Q}) = \perp_{\pi}$). If on the other hand $\mathsf{Q} \in
dom(\theta)$, then $\theta(\mathsf{Q})$ is a basic expression of
type $\pi$, ie., it is a non-empty finite union of lambda
abstractions. We demonstrate the case where $\theta(\mathsf{Q})$ is
a single lambda abstraction; the more general case is similar and
omitted. Assume therefore that $\theta(\mathsf{Q}) = \lambda
\mathsf{V}.\mathsf{E}$. Then, since $\lsem (\lambda
\mathsf{V}.\mathsf{E})\, (\mathsf{A}_1\theta) \cdots
(\mathsf{A}_n\theta) \rsem_s(\bot_{\mathcal{I}_\mathsf{P}}) = 1$, by
Lemma~\ref{beta-reduction-lemma} we get that $\lsem
(\mathsf{E}\{\mathsf{V}/\mathsf{A}_1\theta\})\, (\mathsf{A}_2\theta)
\cdots (\mathsf{A}_n\theta) \rsem_s(\bot_{\mathcal{I}_\mathsf{P}}) =
1$. By assumption, $\theta$ is a closed substitution and therefore
the only free variable that appears in $\mathsf{E}$ is $\mathsf{V}$.
Therefore, $\lsem ((\mathsf{E}\{\mathsf{V}/\mathsf{A}_1\})\,
\mathsf{A}_2 \cdots \mathsf{A}_n)\theta
\rsem_s(\bot_{\mathcal{I}_\mathsf{P}}) = 1$. By the outer induction
hypothesis we get that $\mathsf{P}\cup\{\leftarrow
((\mathsf{E}\{\mathsf{V}/\mathsf{A}_1\})\, \mathsf{A}_2 \cdots
\mathsf{A}_n)\theta \}$ has an SLD-refutation  using substitution
$\delta$, with computed answer equal to the identity substitution.
By the definition of SLD-resolution we get that
$\mathsf{P}\cup\{\leftarrow (\lambda \mathsf{V}.\mathsf{E})\,
(\mathsf{A}_1\theta) \cdots (\mathsf{A}_n\theta))\}$ has an
SLD-refutation using the substitution $\epsilon\delta = \delta$; the
computed answer of this refutation is the restriction of $\delta$ to
the free variables of $((\lambda \mathsf{V}.\mathsf{E})\,
\mathsf{A}_1 \cdots \mathsf{A}_n)\theta$ which (by our previous
discussion) gives the identity substitution.

\vspace{0.2cm}\noindent {\em Case 3:} $\mathsf{A}_0=\lambda
\mathsf{V}.\mathsf{E}$ (ie., $\mathsf{A}=( \lambda
\mathsf{V}.\mathsf{E})\, \mathsf{A}_1 \cdots \mathsf{A}_n$. We can
assume without loss of generality that $\mathsf{V} \not\in
dom(\theta)\cup FV(range(\theta))$. Then, since $\lsem (\lambda
\mathsf{V}.\mathsf{E}\theta)\, (\mathsf{A}_1\theta) \cdots
(\mathsf{A}_n\theta) \rsem_s(\bot_{\mathcal{I}_\mathsf{P}}) = 1$, by
Lemma~\ref{beta-reduction-lemma} we get that $\lsem
(\mathsf{E}\theta\{\mathsf{V}/\mathsf{A}_1\theta\})\,
(\mathsf{A}_2\theta) \cdots (\mathsf{A}_n\theta)
\rsem_s(\bot_{\mathcal{I}_\mathsf{P}}) = 1$. By the outer induction
hypothesis $\mathsf{P}\cup\{\leftarrow
(\mathsf{E}\theta\{\mathsf{V}/\mathsf{A}_1\theta\})\,
(\mathsf{A}_2\theta) \cdots (\mathsf{A}_n\theta) \}$ has an
SLD-refutation  using substitution $\delta$, with computed answer
equal to the identity substitution. By the definition of
SLD-resolution we get that $\mathsf{P}\cup\{\leftarrow ((\lambda
\mathsf{V}.\mathsf{E}\theta)\, (\mathsf{A}_1\theta) \cdots
(\mathsf{A}_n\theta))\}$ has an SLD-refutation using the
substitution $\epsilon\delta = \delta$; the computed answer of this
refutation is the restriction of $\delta$ to the free variables of
$((\lambda \mathsf{V}.\mathsf{E})\, \mathsf{A}_1 \cdots
\mathsf{A}_n)\theta$ which (by our previous discussion) gives the
identity substitution.

\vspace{0.2cm}\noindent {\em Case 4:}  $\mathsf{A}_0 = (\mathsf{E}'
\bigvee_{\pi} \mathsf{E}'')$ (ie., $\mathsf{A} = (\mathsf{E}'
\bigvee_{\pi} \mathsf{E}'')\, \mathsf{A}_1\cdots\mathsf{A}_n$),
where $\pi \neq o$. The proof for this case follows easily using the
outer induction hypothesis.

\vspace{0.2cm}\noindent {\em Case 5:} $\mathsf{A}_0 = (\mathsf{E}'
\bigwedge_{\pi} \mathsf{E}'')$ (ie., $\mathsf{A} = (\mathsf{E}'
\bigwedge_{\pi} \mathsf{E}'')\, \mathsf{A}_1\cdots\mathsf{A}_n$),
where $\pi \neq o$. The proof for this case follows easily using the
outer induction hypothesis.\qed
\end{proof}

\end{document}